\documentclass[11pt]{article}

\usepackage{graphicx}
\usepackage{fullpage}
\usepackage[colorlinks=true]{hyperref}
\usepackage{xspace,amsmath,amssymb,amsthm}
\usepackage{algorithmicx,algorithm}
\usepackage{cite}
\usepackage{comment}
\usepackage[noend]{algpseudocode}
\usepackage{multirow}
\usepackage{bm}

\algnewcommand\algorithmicinput{\textbf{INPUT:}}
\algnewcommand\INPUT{\item[\algorithmicinput]}
\algnewcommand\algorithmicoutput{\textbf{OUTPUT:}}
\algnewcommand\OUTPUT{\item[\algorithmicoutput]}

\newcommand{\problem}[1]{\textsc{#1}\xspace}

\newcommand{\NMC}{\problem{Node Multiway Cut}}
\newcommand{\SFVS}{\problem{Subset Feedback Vertex Set}}

\newcommand{\INTCOVER}{\problem{0/1/all Deletion}}

\newcommand{\set}[1]{\{#1\}}

\newcommand{\F}{\mathcal{F}}
\newcommand{\A}{\mathcal{A}}
\newcommand{\T}{\mathcal{T}}
\newcommand{\prev}{\mathrm{prev}}
\newcommand{\tail}{\mathrm{tail}}
\newcommand{\bbZ}{\mathbb{Z}}
\newcommand{\imp}{\mathrm{imp}}
\newcommand{\all}{\mathbf{all}}

\newcommand{\yynote}[1]{\textcolor{magenta}{(Yutaro: #1)}}
\newcommand{\ynote}[1]{\textcolor{red}{(Yuichi: #1)}}

\newcommand{\parameterizedproblem}[4]{
\vspace{5pt}
\noindent\fbox{\begin{minipage}{0.95\linewidth}
\noindent  \textsc{#1} \hfill \textbf{Parameter:} #3\\
\textbf{Input:} #2\\
\textbf{Question:} #4
\end{minipage}}
\vspace{5pt}
}

\newtheorem{theorem}{Theorem}
\newtheorem{lemma}{Lemma}
\newtheorem{corollary}{Corollary}
\newtheorem{claim}{Claim}

\theoremstyle{definition}
\newtheorem{definition}{Definition}

\title{0/1/all CSPs, Half-Integral $A$-path Packing,\\ and Linear-Time FPT Algorithms}

\author{Yoichi Iwata\thanks{Supported by JSPS KAKENHI Grant Number JP17K12643.}\\
  National Institute of Informatics\\
  \texttt{yiwata@nii.ac.jp}
  \and
  Yutaro Yamaguchi\thanks{Supported by JSPS KAKENHI Grant Number JP16H06931 and JST ACT-I Grant Number JPMJPR16UR.}\\
  Osaka University\\
  \texttt{yutaro\_yamaguchi@ist.osaka-u.ac.jp}
  \and
  Yuichi Yoshida\thanks{Supported by JST ERATO Grant Number JPMJER1305 and JSPS KAKENHI Grant Number JP17H04676.}\\
  National Institute of Informatics\\
  \texttt{yyoshida@nii.ac.jp}
}

\date{}

\begin{document}
\maketitle

\begin{abstract}
  A recent trend in the design of FPT algorithms is exploiting the half-integrality of LP relaxations.
  In other words, starting with a half-integral optimal solution to an LP relaxation, we assign integral values to variables one-by-one by branch and bound.
  This technique is general and the resulting time complexity has a low dependency on the parameter.
  However, the time complexity often becomes a large polynomial in the input size because we need to compute half-integral optimal LP solutions.

  In this paper, we address this issue by providing an $O(km)$-time algorithm for solving the LPs arising from various FPT problems, where $k$ is the optimal value and $m$ is the number of edges/constraints.
  Our algorithm is based on interesting connections among 0/1/all constraints, which has been studied in the field of constraints satisfaction,
  $A$-path packing, which has been studied in the field of combinatorial optimization, and the LPs used in FPT algorithms.
  With the aid of this algorithm, we obtain improved FPT algorithms for various problems, including \textsc{Group Feedback Vertex Set}, \textsc{Subset Feedback Vertex Set}, \textsc{Node Multiway Cut}, \textsc{Node Unique Label Cover}, and \textsc{Non-monochromatic Cycle Transversal}.
  The obtained running time for each of these problems is linear in the input size and has the current smallest dependency on the parameter.
  In particular, these algorithms are the first linear-time FPT algorithms for problems including \textsc{Group Feedback Vertex Set} and \textsc{Non-monochromatic Cycle Transversal}.


\end{abstract}

\thispagestyle{empty}
\setcounter{page}{0}
\newpage

\tableofcontents

\thispagestyle{empty}
\setcounter{page}{0}

\newpage


\section{Introduction}

\subsection{FPT Algorithms using Half-Integral LP Relaxations}

Parameterized complexity is the subject of studying the complexity of parameterized problems.
A parameterized problem with a parameter $k$ is \emph{fixed parameter tractable (FPT)} if we can solve the problem in $f(k)\mathrm{poly}(n)$ time, where $n$ is the input size.
Various parameterized problems are known to be FPT.
See~\cite{Cygan:2015bf,Downey:2012vk} and references therein for a comprehensive list of FPT problems.

One of the motivations of studying parameterized complexity is to understand tractable subclasses of (NP-)hard problems; hence, the primary interest has been which parameterized problems admit FPT algorithms.
However, from a practical point of view, the running time with respect to the input size must also be small.
Indeed, linear-time FPT algorithms (i.e., FPT algorithms whose running times are linear in the input size) are proposed for several problems including
\textsc{Treewidth}~\cite{Bodlaender:2006kk}, \textsc{Almost 2-SAT}~\cite{Iwata:2014tq,RamanujanS14}, \textsc{Feedback Vertex Set (FVS)}~\cite{BarYehuda:2000iv}, \textsc{Subset FVS}~\cite{Lokshtanov:2015jt}, \textsc{Directed FVS}~\cite{Lokshtanov:2016tu}, and \textsc{Node Unique Label Cover}~\cite{Lokshtanov:2016ue}.
These works focused on reducing the running time with respect to the input size, and hence dependency on the parameter is often suboptimal.
For example, \textsc{Subset FVS} admits an FPT algorithm running in $O^*(4^k)$\footnote{
$O^*(\cdot)$ hides a polynomial dependency on the input size. When focusing on reducing the $f(k)$ part, the $\mathrm{poly}(n)$ part is often ignored using this notation.} time~\cite{Iwata:2016ja}
whereas the current best linear-time FPT algorithm has time complexity $O(25.6^k m)$~\cite{Lokshtanov:2015jt}, where $m$ is the number of edges in the input graph.

The half-integrality of the LP relaxations has recently been used to design FPT algorithms for a broad range of problems~\cite{Guillemot11a,Cygan:2013jv,LokshtanovNRRS14,Iwata:2016ja,Wahlstrom17,Iwata17}.
To see the idea, let us consider a minimization problem whose goal is to find a solution of size $k$, and suppose that it admits a half-integral LP relaxation\footnote{
Most of the LPs used in the FPT algorithms are not natural LP relaxations of the original problems, but are LP relaxations of \emph{rooted} problems.
For example, the rooted version of \textsc{FVS} is a problem of finding a minimum vertex subset $S$, such that the graph obtained by removing $S$ contains no cycles reachable from a prescribed vertex $s$.
Note that the existence of a half-integral LP relaxation to the rooted problem does not imply a 2-approximability of the original problem.
}, that is, an LP relaxation with an optimal LP solution that only uses values in $\{0,\frac{1}{2},1\}$.
The algorithm is based on the standard branch-and-bound framework.
First, we compute a half-integral (optimal) LP solution.
We can stop if all the variables have integral values or the sum of the values in the LP solution exceeds $k$.
Otherwise, we fix variables with values $1$.
We then pick an arbitrary variable with value $\frac{1}{2}$, and branch into the case that its value is fixed to $0$ and the case that its value is fixed to $1$.
This approach has several big advantages.
It can be applicable to various problems just by changing the LP, and has a small time complexity with respect to parameter $k$.
For many problems including \textsc{Almost 2-SAT}~\cite{LokshtanovNRRS14}, \textsc{Node Multiway Cut}~\cite{Cygan:2013jv}, and \textsc{Group FVS}~\cite{Iwata:2016ja}, the current smallest dependency on the parameter is indeed achieved by this approach.

A drawback of the abovementioned approach is that it is not trivial how to efficiently compute half-integral LP solutions.
Iwata, Wahlstr\"om, and Yoshida~\cite{Iwata:2016ja} viewed half-integrality as a discrete relaxation.
They showed that one can compute half-integral LP solutions for \textsc{Almost 2-SAT} and \textsc{(Edge) Unique Label Cover} in time linear in the input size by reducing them to the $s$-$t$ cut problem.
Moreover, they showed that we can compute LP solutions with an extremal condition, which we call the \emph{farthest condition} herein (see Section~\ref{sec:intcover} for details), in the same running time.
Using farthest solutions, the LP lower bound strictly increases for each branching.
Consequently, they obtained linear-time FPT algorithms for these problems.

The $s$-$t$ cut approach unfortunately does not work well for other problems such as \textsc{Group FVS}, \textsc{Subset FVS}, \textsc{Node Multiway Cut}, and \textsc{Node Unique Label Cover}
because it is essentially applicable only to edge-deletion problems\footnote{Note that edge-deletion problems are easily reducible to the corresponding vertex-deletion problems in most cases by subdividing the edges and creating $k$ copies of the original vertices.}
and because the auxiliary network size is generally not linear in the input size (e.g., for \textsc{Subset Feedback Edge Set}, the size of the network becomes $2^{O(m)}$).
We need to resort to solving linear programs for these problems, and hence the resulting FPT algorithms have large dependencies on the input size.
Among these problems, linear-time FPT algorithms have been obtained for \textsc{Node Multiway Cut}~\cite{ChenLL09,Iwata:2014tq}, \textsc{Subset FVS}~\cite{Lokshtanov:2015jt}, and \textsc{Node Unique Label Cover}~\cite{Lokshtanov:2016ue} by problem-specific arguments without using LP relaxations.
However, they have larger dependencies on $k$ (Table~\ref{tab:results}).

The main contribution of this study is the development of an algorithm that computes farthest half-integral solutions to those LPs in time linear in the input size.
Using the connection of computing half-integral LP solutions and FPT algorithms, we obtain linear-time FPT algorithms for various problems, which are summarized in Table~\ref{tab:results}.
In particular, for \textsc{Subset FVS}, \textsc{Node Multiway Cut}, and \textsc{Node Unique Label Cover}, we substantially improve the dependency on the parameter.
For the other problems including \textsc{Group FVS} and \textsc{Non-monochromatic Cycle Transversal}, we obtain the first linear-time FPT algorithms.
We note that, for every problem in the table, the $f(k)$ part in the running time of our algorithm matches or improves the smallest known.
All of these results are obtained by the same approach, i.e., the branch-and-bound framework combined with the efficient computation of half-integral LPs, which demonstrates its generality.

\begin{table}[t!]
  \centering
  \caption{Summary of our FPT results.
  Here, $d$ denotes the maximum domain size, $\Sigma$ is the alphabet set; $m$ denotes the number of edges/constraints in the input; and $T_\Gamma$ denotes the time complexity for performing group operations.
  All the algorithms are deterministic except for the $O(25.6^k m)$-time algorithm for \textsc{Subset FVS}.
  See Section~\ref{sec:fpt:formulation} for the problem definitions.}\label{tab:results}
  \begin{tabular}{|c|c|c|c|}
  \hline
  Problem & Smallest $f(k)$ & Existing linear-time FPT & Our result \\
  \hline
  \textsc{0/1/all Deletion} & $O^*(d^{2k})$~\cite{Iwata:2016ja} & --- & $O(d^{2k} k m)$ \\
  \hline
  \textsc{Node Unique Label Cover} & $O^*(|\Sigma|^{2k})$~\cite{Iwata:2016ja} & $|\Sigma|^{O(k|\Sigma|)}m$~\cite{Lokshtanov:2016ue}&$O(|\Sigma|^{2k}k m)$\\
  \hline
  \textsc{Two-fan Deletion} & $O^*(9^k)$~\cite{Iwata:2016ja} & --- & $O(4^k k m)$ \\
  \hline
  \textsc{Monochromatically} & \multirow{2}{*}{---} & \multirow{2}{*}{---} & \multirow{2}{*}{$O(4^k k m)$} \\
  \textsc{Orientable Deletion} & & & \\
  \hline
  \textsc{Subset Pseudoforest} & \multirow{2}{*}{---} & \multirow{2}{*}{---} & \multirow{2}{*}{$O(4^k k m)$} \\
  \textsc{Deletion} & & & \\
  \hline
  \textsc{Node Multiway Cut} & $O^*(2^k)$~\cite{Cygan:2013jv} & $O(4^km)$~\cite{ChenLL09,Iwata:2014tq}&$O(2^k k m)$ \\
  \hline
  \textsc{Group FVS} & $O^*(4^k T_\Gamma)$~\cite{Iwata:2016ja} & --- & $O(4^k k m T_{\Gamma})$ \\
  \hline
  \multirow{2}{*}{\textsc{Subset FVS}} &  \multirow{2}{*}{$O^*(4^k)$~\cite{Iwata:2016ja}} & $O(25.6^k m)$ (randomized)~\cite{Lokshtanov:2015jt} & \multirow{2}{*}{$O(4^k k m)$} \\
  & & $2^{O(k\log k)}m$ (deterministic)~\cite{Lokshtanov:2015jt} & \\
  \hline
  \textsc{Non-monochromatic} & \multirow{2}{*}{$O^*(4^k)$~\cite{Wahlstrom17}} & \multirow{2}{*}{---} & \multirow{2}{*}{$O(4^k k m)$}  \\
  \textsc{Cycle Transversal} & & & \\
  \hline
  \end{tabular}
\end{table}

\subsection{\INTCOVER and $A$-path Packing}\label{sec:01all_A-path}
We consider the following problem, called \INTCOVER, to establish a unified framework and provide linear-time FPT algorithms for the abovementioned problems using half-integral LP relaxations.
Let $V$ be a set of variables.
Each variable $v\in V$ has an individual domain $D(v)$.
A function $\varphi$ on $V$ that maps each variable $v\in V$ to a value $\varphi(v)\in D(v)$ is called an \emph{assignment} for $V$.
We consider the following two types of binary constraints on $(u,v)\in V\times V$, called \emph{0/1/all constraints}~\cite{Cooper:1994gt}\footnote{Precisely speaking, the permutation and two-fan constraints together with empty and complete constraints are obtained by enforcing arc-consistency on the 0/1/all constraints introduced in~\cite{Cooper:1994gt}.}.
\begin{enumerate}
  \item Permutation constraint $\pi(\varphi(u))=\varphi(v)$, where $\pi:D(u)\to D(v)$ is a bijection.
  \item Two-fan constraint $(\varphi(u)=a)\vee (\varphi(v)=b)$, where $a\in D(u)$ and $b\in D(v)$.
\end{enumerate}

Let $C$ be a set of 0/1/all constraints on $V$.
We assume that $C$ contains at most one constraint for each pair of distinct variables $u, v \in V$.
A constraint on $(u, v)$ in $C$ is denoted by $C_{uv}$.
For a subset $U\subseteq V$, we denote the set $\{C_{uv}\in C\mid u,v\in U\}$ by $C[U]$.

\parameterizedproblem{\INTCOVER}
{A set of variables $V$ each of which has a domain of size at most $d$, a set $C$ of 0/1/all constraints each of which is given as a table of size $O(d)$, a partial assignment $\varphi_A$ for a subset $A\subseteq V$, and an integer $k$.}
{$k,d$}
{Is there a pair of a set $X \subseteq V$ of at most $k$ variables and a partial assignment $\varphi$ for $V\setminus X$ such that
(1) $\varphi(v)=\varphi_A(v)$ holds for every $v\in A\setminus X$ and (2) $\varphi$ satisfies all of $C[V\setminus X]$?}
\\The set $X$ in the question is called a \emph{deletion set}.
The task of the optimization version of this problem is to compute the size of a minimum deletion set.
Various FPT problems can be expressed as \INTCOVER (by using a large $d$).
Note that for several problems, we need exponential-size domains, and hence a linear-time FPT algorithm for \INTCOVER does not directly imply linear-time FPT algorithms for such problems.
We show that, by giving constraints not as a table but as an oracle and by using a specialized branching strategy, we can obtain linear-time FPT algorithms even for such problems in a unified way.
See Section~\ref{sec:fpt:formulation} for details.


The \emph{primal graph} for $C$ is a simple undirected graph $G=(V,E)$
such that an edge $uv \in E$ exists if and only if a constraint $C_{uv}$ on $(u, v)$ exists.
An important property of the 0/1/all constraints is that, when fixing the value of a variable $u \in V$ to $p \in D(u)$, the set of values of $v \in V$ satisfying the constraint $C_{uv}$ is either $D(v)$ (when the constraint is a two-fan with $a=p$) or a singleton $\{q\}$ (when the constraint is a permutation with $q =\pi(p)$ or a two-fan with $a\neq p$ and $b=q$).
The latter-type implication is called a {\em unit propagation}.

When we are given a partial assignment $\varphi_A$ for a subset $A \subseteq V$,
unit propagations occur along walks in the primal graph $G$ starting at the vertices in $A$.
If the unit propagations along two different walks starting at (possibly the same) vertices in $A$
lead to a contradiction (i.e., implicate distinct singletons for the same variable), then at least one variable on the two walks must be contained in the deletion set.
The concatenation of such two walks is a walk between the vertices in $A$ (called an {\em $A$-walk}) that is said to be {\em conflicting} (see Section~\ref{sec:intcover} for a formal definition).

This observation provides a lower bound on the minimum size of a deletion set as follows.
Suppose that a deletion set $X \subseteq V$ with $|X| = k$ exists,
and let $\F = \F_{C,\varphi_A}$ be the set of all conflicting $A$-walks in the primal graph $G$
for $C$ with respect to $\varphi_A$.
Then, the remaining graph $G - X$ cannot have a walk in $\F$ (i.e., $X$ is a cover (or a hitting set) of $\F$).
Hence, the minimum size of such a cover is at most $k$.

We now consider an LP relaxation of finding a minimum cover of the set $\F$ of all conflicting $A$-walks in $G$,
called the {\em $\F$-covering problem}:
we are required to find a function $x : V \to {\mathbb R}_{\geq 0}$
minimizing the total value $|x| := \sum_{v \in V}x(v)$ under the constraint
that $\sum_{v \in W}x(v) \geq 1$ for every $W \in \F$,
where ``$\sum_{v \in W}$'' means the summation over the occurrences of vertices $v$ in $W$ considering the multiplicity (e.g., $x(v)$ is summed twice if $W$ intersects $v$ twice).
The dual of this LP is written down as follows, as the {\em $\F$-packing problem}:
we are required to find a function $y: \F \to \mathbb{R}_{\geq 0}$
maximizing the total value $|y| := \sum_{W \in \F}y(W)$ subject to $\sum_{W \in \F \colon v \in W}y(W) \leq 1$ for every $v \in V$, where we also consider the multiplicity of the occurrences of vertices in a walk in the summation ``$\sum_{W \in \F \colon v \in W}$" (e.g., $y(W)$ is summed twice if $W$ intersects $v$ twice).

We then propose an $O(kmT)$-time algorithm for finding
a pair of a {\em half-integral} $\F$-cover $x$ and a {\em half-integral} $\F$-packing $y$
with $|x| = |y| \leq \frac{k}{2}$ (if exists, and otherwise correctly concluding it),
where $m = |C|$ and $T$ is the running time of a certain oracle for simulating unit propagations
(see Section~\ref{sec:oracle} for the detail).
Note that, by LP duality, these $x$ and $y$ are both optimal solutions
to the $\F$-covering and $\F$-packing problems, respectively.

Combining our algorithm with the result in \cite{Iwata:2016ja} (see also Theorem~\ref{thm:persistency} in Section~\ref{sec:intcover}), we can obtain an FPT algorithm for \INTCOVER and, hence, FPT algorithms for other various problems.
Section~\ref{sec:linear} provides the detailed discussion for each specific problem.

\subsection{Related Work on Half-Integral $A$-path Packing}\label{sec:comparison}
When $C$ contains only the permutation constraints, the integral version of the $\F$-packing problem has been studied under the name of {\em non-returning $A$-path packing}\cite{pap2007packing,pap2008packing,yamaguchi2016packing}.
This is the current most general case of tractable (integral) path packing problems.
Our half-integral result suggests a conjecture that the integral packing of conflicting paths will also be tractable.
We also believe that ideas behind our half-integral algorithm will be useful for obtaining a faster algorithm for integral path packing problems.

For a further special case (of non-returning $A$-path packing), called \emph{internally disjoint\footnote{They can share terminals, but not inner vertex or edge.}$A$-path packing},
the integral version of the dual covering problem coincides with \textsc{Node Multiway Cut},
and its LP relaxation is used in the branch-and-bound FPT algorithm~\cite{Cygan:2013jv} and a 2-approximation algorithm~\cite{garg2004multiway} for this problem.

Several previous works in the field of combinatorial optimization can be found as regards the half-integral version of internally disjoint $A$-path packing.
Garg~et~al.~\cite{garg2004multiway} and Pap~\cite{pap2007some,pap2008strongly} found that both LPs always enjoy half-integral optimal solutions
even if each non-terminal vertex has an individual integral capacity instead of $1$.
Hirai~\cite{hirai2015dual} and Pap~\cite{pap2007some,pap2008strongly} developed algorithms
for finding such half-integral optimal solutions,
which both run in strongly polynomial time
(i.e., the number of elementary operations performed through each algorithm
does not depend on the capacity values).
One makes use of a sophisticated algorithm for the submodular flow problem~\cite{fujishige1992new} whereas the other relies on the ellipsoid method to solve LPs whose coefficient matrices only have $0, \pm1$ entries~\cite{frank1987application}.
Specializing on the uncapacitated case, Babenko~\cite{babenko2010fast} provided an $O(knm)$-time algorithm
for finding a maximum half-integral packing, where $n$ and $m$ are the numbers of vertices and edges, respectively, and $k$ denotes the optimal value, which is at most $O(n)$.
Our algorithm for the $\F$-packing/covering problems improves the previous best running time even against this (internally disjoint) special case.

\subsection{Proof Sketch}

Basically, we iteratively augment a half-integral $\F$-packing $y$,
and construct a half-integral $\F$-cover $x$ of the same size when no augmentation is possible.
Since the $\F$-covering and $\F$-packing problems are the dual LPs to each other,
this implies that $x$ and $y$ are optimal solutions to these problems.
In the case of the maximum $s$-$t$ flow, when we failed to find an augmenting path, we can construct a minimum $s$-$t$ cut by taking the set of edges on the flow that separates the vertices visited by the failed search from the unvisited vertices.
Similarly, we can construct a minimum half-integral $\F$-cover by taking the set of vertices on the half-integral $\F$-packing that separates visited vertices from the unvisited vertices.

We describe the idea behind our algorithm for $\F$-packing by showing a relation to computing half-integral (non-bipartite) matchings,
which are often called \emph{$2$-matchings} in the field of combinatorial optimization (see~\cite[Chapter 30]{schrijver2002combinatorial} for the basics).
Although we can easily obtain a maximum half-integral matching by a reduction to maximum bipartite matching~\cite{NemhauserT75}, we propose a different approach herein.
The idea behind this approach can be used for the half-integral $\F$-packing.
We focus on special half-integral matchings that consist of vertex-disjoint edges of weight $1$ and odd cycles\footnote{The number of edges in the cycle is odd.} with each edge having weight $\frac{1}{2}$.
A maximum half-integral matching with this special structure is known to always exist~\cite{Balinski65}.
In each step, we search for an alternating path (a path that alternately uses edges of weight $0$ and~$1$ and never uses the edges of weight $\frac{1}{2}$)
from the vertices not used in the current matching. 
Augmentation can be categorized into three types (see Figure~\ref{fig:matching}).

\begin{figure}[t]
  \centering
  \includegraphics[scale=0.6]{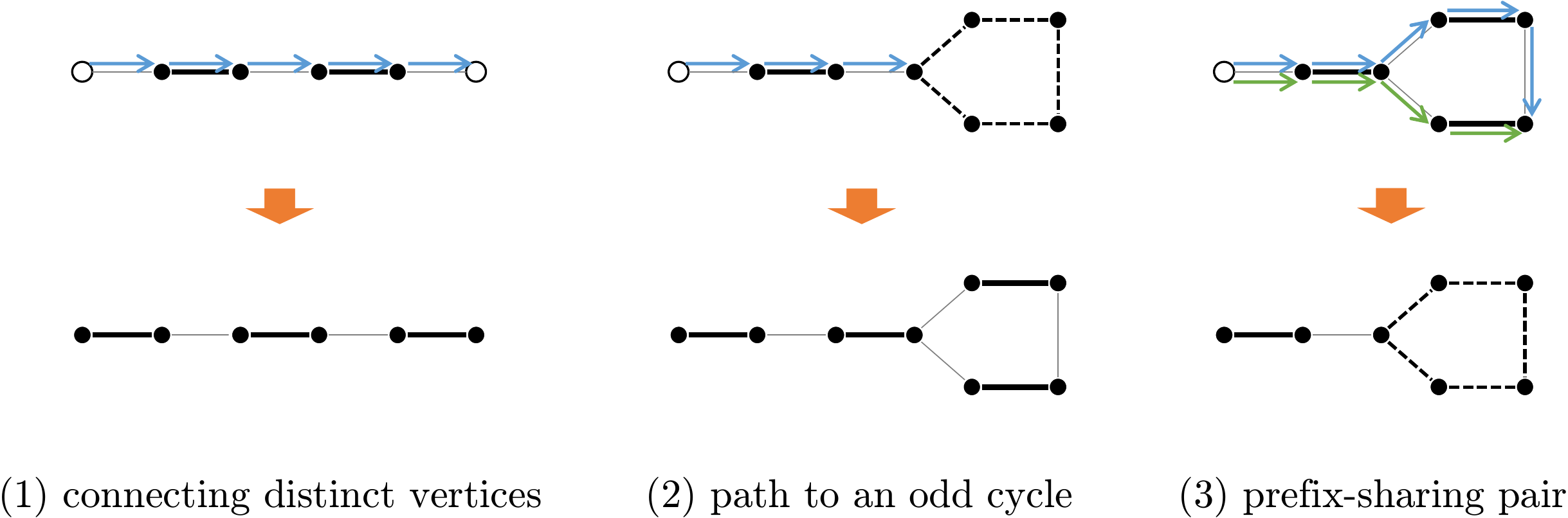}
  \caption{Augmentations for half-integral matching. Thick solid lines denote edges of weight $1$, dashed lines denote edges of weight $\frac{1}{2}$, and thin solid lines denote edges of weight $0$}
  \label{fig:matching}
\end{figure}

The first case is when we found an alternating path $P$ connecting two distinct vertices not used in the current matching.
The current matching uses all the even edges in $P$.
We can augment the matching size by $1$ by taking all the odd edges and discarding all the even edges.

The second case is when we found an alternating path $P$ of length $2a+1$ ending at a vertex on an odd cycle $C$ of length $2b+1$.
The current matching uses all the even edges in $P$ with weight $1$ and all the edges in $C$ with weight $\frac{1}{2}$.
Thus, the size of the current matching induced by $P$ and $C$ is $a+b+\frac{1}{2}$.
We can easily augment this matching to an integral matching of size $a+b+1$ by alternately taking the edges in $P$ and $C$.

The third case is when we found a pair of prefix-sharing alternating paths $P$ of odd length and $Q$ of even length ending at the same vertex.
This case corresponds to a \emph{blossom} in Edmonds' algorithm for the maximum (integral) matching problem~\cite{Edmonds:1965up}.
In Edmonds' algorithm, the alternation is applied to the common prefix and then the vertices on the cycles induced by $P$ and $Q$ are shrunk.
In our approach, we can augment the matching size by $\frac{1}{2}$ by applying the alternation to the common prefix, then by transforming the cycle induced by $P$ and $Q$ to an odd cycle with each edge having weight $\frac{1}{2}$.

We use a similar approach in our algorithm for $\F$-packing/covering.
We focus on a special type of half-integral $\F$-packings that consist of disjoint paths in $\F$ of weight $1$, called \emph{integral paths}, and \emph{wheels}, 
which are the sums of an odd number of walks in $\F$ of weight $\frac{1}{2}$ and correspond to odd cycles for matching.
See Section~\ref{sec:hi:basic} for the precise definition.
Although a maximum half-integral packing with a similar special structure for a very special case of $\F$-packing, internally disjoint $A$-path packing, is known to always exist~\cite{pap2006constructive},
the existence of such a special solution was previously unknown for any other cases.
The correctness of our algorithm provides a constructive proof of the existence.

In each step, we search for an alternating path from the vertices in $A$ not used in the current packing.
As opposed to the case of computing half-integral matchings, the alternating paths may use edges contained in wheels.
However, for simplicity, we ignore such a case in the explanation that follows.
Roughly speaking, an alternating path is a sequence of paths $P_1,\ldots,P_\ell$, where $P_i$ is a path internally disjoint from any integral path and wheel for odd $i$ and is fully contained in an integral path for even $i$.
In the alternation operation, we replace the integral path containing $P_2$ with the path obtained by concatenating $P_1$ and the prefix of the integral path containing $P_2$.
We then replace the integral path containing $P_4$ with \ldots, and so on.
The definition of alternating paths is rather complicated in ensuring that each introduced path is in $\F$.
However, it essentially plays the same role as that for the half-integral matching case.

Augmentation can be categorized into three types.
Each of which corresponds to the one for the half-integral matching case.
When we find an alternating path connecting two distinct vertices in $A$, we can augment the $\F$-packing by $1$ by applying the alternation.
When we find an alternating path ending at a vertex on a wheel that is the sum of $2a+1$ walks in $\F$ of weight $\frac{1}{2}$, we can augment the $\F$-packing by $\frac{1}{2}$ by decomposing the wheel into $a$ integral paths and introducing a new integral path.
We call these two types of alternating path as an \emph{augmenting path}.
When we find a pair of prefix-sharing alternating paths $P$ and $Q$ ending at the  same vertex (with some additional conditions), we can augment the $\F$-packing by $\frac{1}{2}$ by applying the alternation to the common prefix and then by introducing a new wheel.
We call this type of a pair of alternating paths as an \emph{augmenting pair}.

We need to test the membership in $\F$ in constant time to achieve the linear time complexity.
To this end, we exploit an observation that the algorithm only tests membership against some special walks. 

To obtain linear-time FPT algorithms, we need to compute a farthest minimum half-integral $\F$-cover, which is an $\F$-cover satisfying some extremal condition, in linear time.
In the existing work using the $s$-$t$ cut approach~\cite{Iwata:2016ja}, this is achieved by exploiting a structural property of all the minimum $s$-$t$ cuts~\cite{PQ80}.
We use a different approach because we do not have the corresponding structural property for all the minimum half-integral $\F$-covers.
Naively, we can find a farthest minimum half-integral $\F$-cover by at most $n$ computations of the minimum half-integral $\F$-cover.
We show that we can regard the whole sequence of computations as at most $2k$ computations of augmenting path/pair search by properly deciding the order of these computations using the maximum half-integral $\F$-packing.
Thus, it runs in linear time in total.

Another obstacle for obtaining linear-time FPT algorithms is the existence of two-fan constraints.
Because \textsc{2-SAT} can be expressed as \INTCOVER with $k=0$ and $d=2$, any linear-time FPT algorithm for \INTCOVER must be able to solve \textsc{2-SAT} in linear time.
The standard linear-time algorithm for \textsc{2-SAT} uses the strongly connected component decomposition of the implication graph,
and the existing linear-time FPT algorithms for \textsc{Almost 2-SAT}, a parameterized version of \textsc{2-SAT}, also use the strongly connected component decomposition of an auxiliary network~\cite{Iwata:2014tq,RamanujanS14,Iwata:2016ja}.
We cannot use this approach because the size of the auxiliary network becomes super-linear for our problems.
In our algorithm, we do not use the strongly connected component decomposition, but instead use a \emph{parallel unit-propagation}, which is an alternative linear-time algorithm for \textsc{2-SAT}.

\subsection{Comparison to Babenko's algorithm}
Because our algorithm for the $\F$-packing/covering problems improves the previous best running time even against the internally-disjoint special case,
we compare our algorithm with Babenko's algorithm to clarify the reason that we obtain such an improvement.
While both the algorithms iteratively augment a packing, the approaches are completely different.
The main difference is the existence of augmenting pairs and the algorithm for computing augmenting paths/pairs.

In our algorithm, we focus on packings with a special structure.
Both the definition of alternating paths and the algorithm for searching augmenting paths/pairs strongly rely on this structure.
While the existence of a maximum half-integral packing with the special structure was already known for internally disjoint $A$-path packing~\cite{pap2006constructive},
Babenko's algorithm does not directly exploit the structure, but uses a much weaker structure.
This is because his augmentation strategy does not preserve the special structure because it does not consider a notion corresponding to augmenting pairs of our algorithm.

In our algorithm, we directly compute an augmenting path/pair in $O(m)$ time and hence we can compute a maximum packing of size $k$ in $O(km)$ time.
In contrast, in Babenko's algorithm, an auxiliary network and its $s$-$t$ flow $f$ are constructed from the current packing, then an $f$-augmenting path for the standard maximum flow problem is computed.
From the obtained $f$-augmenting path, we can either augment the current packing or we can find a set of vertices that can be safely contracted to some vertex in $A$.
Because contractions occur at most $O(n)$ time per augmentation, the running time becomes $O(knm)$.

\subsection{Organization}
We introduce the notions used throughout the paper in Section~\ref{sec:def}.
Section~\ref{sec:hi} shows a fast algorithm that computes a maximum half-integral $\F$-packing and transforms it into a minimum half-integral $\F$-cover.
Section~\ref{sec:farthest} presents a fast algorithm for computing a farthest minimum half-integral $\F$-cover.
We provide linear-time FPT algorithms in Section~\ref{sec:linear} using this algorithm.


\section{Definitions}\label{sec:def}

\subsection{Basic Notations}

The \emph{multiplicity function} $\mathbf{1}_S:U \to \bbZ_{\geq 0}$ for a multiset $S$ on the ground set $U$ is defined such that $\mathbf{1}_S(a)$ is the number of times that $a \in U$ appears in $S$.
For two multisets $A$ and $B$ on the same ground set $U$, we denote by $A\setminus B$ the multiset such that $\mathbf{1}_{A\setminus B}(a)=\max\set{\mathbf{1}_A(a)-\mathbf{1}_B(a),0}$ holds for any element $a \in U$.
For a function $f:U\to\mathbb{R}$ and a multiset $S$ on the ground set $U$, we define
$f(S):=\sum_{a\in U}\mathbf{1}_S(a)f(a)$.
For a value $i\in\mathbb{R}$, we define $f^{-1}(i):=\{a\in U\mid f(a)=i\}$.

All the graphs in this study are undirected.
However, we sometimes need to take care of the direction of edges.
For an undirected graph $G=(V,E)$, we use the symbol $\hat{E}$ when we take care of the direction of the edges, i.e.,
$uv=vu$ for $uv\in E$ but $uv\neq vu$ for $uv\in \hat{E}$.
For simplicity, we assume that the graphs are simple; if a graph contains multiple
edges or self-loops, we can easily obtain an equivalent simple graph by subdividing the edges.
For vertex $v\in V$, we denote the set of incident edges by $\delta(v)$.
For a subset $U\subseteq V$, we denote the induced subgraph by $G[U]=(U,E[U])$.

For an undirected graph $G=(V,E)$, we define a \emph{walk} in $G$ as an ordered list $W=(v_0,\ldots,v_\ell)$ of vertices such that
$v_{i-1}v_{i}\in E$ for all $i = 1, \dots, \ell$.
The integer $\ell$ is called the \emph{length} of the walk.
We denote the first and last vertices of $W$ by $s(W)=v_0$ and by $t(W)=v_\ell$, respectively,
and we say that $W$ \emph{starts from} $s(W)$ and \emph{ends at} $t(W)$.
We denote the multisets of vertices, inner vertices, and (undirected) edges appeared in $W$ by $V(W)=\{v_0,\ldots,v_\ell\}$, by $V_{\rm in}(W) = \{v_1, \ldots, v_{\ell-1}\} = V(W) \setminus \{s(W), t(W)\}$,
and by $E(W)=\{v_0v_1,\ldots,v_{\ell-1}v_\ell\}$, respectively.
For an edge $e=uv\in \hat{E}$, we simply use the same symbol $e$ to denote the walk $(u,v)$.
A walk $W$ is called a \emph{(simple) path} if $\mathbf{1}_{V(W)}(v) \leq 1$ for every $v \in V$.
A walk $W$ is called a \emph{closed walk} if $s(W)=t(W)$.
A closed walk $W$ is called a \emph{(simple) cycle} if $\ell \geq 3$ and $\mathbf{1}_{V(W) \setminus \{s(W)\}}(v) \leq 1$ for every $v \in V$
(which implies $\mathbf{1}_{V(W)}(s(W)) = 2$).
We may regard a walk $W$ as a subgraph by ignoring the direction and the multiplicity.
We say that a walk $W$ is \emph{internally disjoint from} a subgraph $G'=(V',E')$ if none of the inner vertices of $W$ are in $V'$ and none of the edges of $W$ are in $E'$.
For a walk $W=(v_0,\ldots,v_\ell)$, we define the \emph{reversed walk} as $W^{-1}=(v_\ell,\ldots,v_0)$.
For a walk $W_1=(v_0,\ldots,v_{\ell'})$ and a walk $W_2=(v_{\ell'},\ldots,v_\ell)$ (where $0 \leq \ell' \leq \ell$),
we define the \emph{concatenation} of the two walks as $W_1\circ W_2=(v_0,\ldots,v_\ell)$.
The notation $W_1\circ W_2$ implicitly implies $t(W_1) = s(W_2)$. 

\subsection{\INTCOVER and Half-Integral Relaxation}\label{sec:intcover}
We first recall \INTCOVER defined in Section~\ref{sec:01all_A-path}.
We are given a set $V$ of variables $v$ with individual domains $D(v)$,
a set $C$ of 0/1/all constraints (permutation and two-fan constraints)
that can be represented by a simple undirected graph $G$, called the primal graph,
and a partial assignment $\varphi_A$ for a subset $A \subseteq V$.
The task is to determine whether a deletion set $X \subseteq V$
of at most $k$ variables exists, for which there exists an assignment $\varphi$ for $V \setminus X$
such that (1) $\varphi(v) = \varphi_A(v)$ holds for every $v \in A \setminus X$,
and (2) $\varphi$ satisfies every constraint $C_{uv} \in C[V \setminus X]$.

As described in Section~\ref{sec:01all_A-path},
an important property of the 0/1/all constraints is that,
when fixing the value of a variable $u \in V$ to $p \in D(u)$,
the set of values of $v \in V$ satisfying the constraint $C_{uv}$ is either $D(v)$ or a singleton $\{q\}$.
We define $C_{uv}(p):=\all$ in the former case and define $C_{uv}(p):=q$ in the latter case.
We extend this definition to walks in the primal graph as follows.
For a walk $(s)$ of length zero and an element $p\in D(s)$, we define $C_{(s)}(p):=p$.
For a walk $W=W'\circ e$ starting from $s\in V$ and an element $p\in D(s)$, we define $C_W(p):=\all$ when $C_{W'}(p)=\all$ and $C_W(p):=C_e(C_{W'}(p))$ when $C_{W'}(p)\neq\all$.
Suppose that an assignment $\varphi$ satisfying all the constraints exists.
Then, either $C_W(\varphi(s(W)))=\all$ or $\varphi(t(W))=C_W(\varphi(s(W)))$ holds for any walk $W$ in the primal graph.

Let $\varphi_A$ be a partial assignment for a subset $A\subseteq V$.
For a walk $W$ with $s(W)\in A$, we define $\imp_{\varphi_A}(W):=C_W(\varphi_A(s(W)))$, which represents the set of  assignments for $t(W)$ induced by $W$.
A walk $W$ is called a \emph{$\varphi_A$-implicational walk} if $s(W)\in A$ and $\imp_{\varphi_A}(W)\neq\all$.
A $\varphi_A$-implicational walk $W$ is called \emph{$\varphi_A$-conflicting} if $t(W)\in A$ and $\imp_{\varphi_A}(W)\neq\varphi_A(t(W))$ hold.
We omit the prefix/subscript $\varphi_A$ if it is clear from the context.
We use the following lemma.
\begin{lemma}\label{lem:def:imp}
For any two $\varphi_A$-implicational walks $P$ and $Q$ ending at the same vertex, $P\circ Q^{-1}$ is $\varphi_A$-conflicting if and only if $\imp_{\varphi_A}(P)\neq \imp_{\varphi_A}(Q)$.
\end{lemma}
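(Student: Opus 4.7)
The plan is to track the implication $\imp_{\varphi_A}(P \circ Q^{-1})$ step-by-step along the walk. Set $v := t(P) = t(Q)$, $p := \varphi_A(s(P))$, $q := \varphi_A(s(Q))$, and write $a := \imp_{\varphi_A}(P) \neq \all$ and $b := \imp_{\varphi_A}(Q) \neq \all$. Since $\imp_{\varphi_A}(P) = a \neq \all$, the inductive definition of $C_{\,\cdot\,}$ gives $\imp_{\varphi_A}(P \circ Q^{-1}) = C_{Q^{-1}}(a)$, namely the value obtained by starting at $v$ with value $a$ and walking backwards along $Q^{-1}$. Writing $Q = (u_0, u_1, \dots, u_\ell)$ with $u_0 = s(Q)$ and $u_\ell = v$, and letting $b_i := C_{(u_0, \dots, u_i)}(q)$ (so $b_0 = q$, $b_\ell = b$, and every $b_i \neq \all$ because $\imp_{\varphi_A}(Q) = b \neq \all$), the task reduces to understanding how the backward walk transforms an initial value at $v$.

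The technical heart will be an inductive dichotomy for the sequence $B_i$ obtained by walking $(u_\ell, u_{\ell-1}, \dots, u_{\ell-i})$ starting at an arbitrary $B_0 = c \in D(v)$: if $c = b$ then each $B_i \in \{b_{\ell-i}, \all\}$ (so $B_\ell \in \{q, \all\}$), whereas if $c \neq b$ then $B_i \in D(u_{\ell-i}) \setminus \{b_{\ell-i}\}$ for all $i$ (so in particular $B_\ell \notin \{q, \all\}$). The induction step branches on the type of the edge $e = u_{\ell-i-1}u_{\ell-i}$ about to be traversed: for a permutation, backward traversal is inversion of a bijection, which clearly preserves the dichotomy; for a two-fan $(\varphi(u_{\ell-i-1}) = a') \vee (\varphi(u_{\ell-i}) = b')$, the forward assumption $b_{\ell-i} \neq \all$ forces $b_{\ell-i-1} \neq a'$ and $b_{\ell-i} = b'$, and one then checks that the backward step produces $\all$ precisely when the incoming value equals $b' = b_{\ell-i}$, and otherwise produces $a' \neq b_{\ell-i-1}$.

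Both directions of the equivalence will then follow by instantiating this dichotomy with $c = a$. If $\imp_{\varphi_A}(P) = a \neq b = \imp_{\varphi_A}(Q)$, then $\imp_{\varphi_A}(P \circ Q^{-1}) = B_\ell \notin \{q, \all\}$, so $P \circ Q^{-1}$ is an implicational walk whose endpoints both lie in $A$ and whose implication differs from $\varphi_A(s(Q)) = q$, hence is conflicting. Conversely, if $a = b$ then $B_\ell \in \{q, \all\}$, so $P \circ Q^{-1}$ is either not implicational (when $B_\ell = \all$) or its implication matches $\varphi_A(s(Q))$ (when $B_\ell = q$), and in neither case is it conflicting. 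The one nontrivial step is the two-fan case of the induction, because two-fan constraints are not invertible at the value level as permutations are; the key observation that keeps the induction clean is that along any walk for which forward propagation has avoided $\all$, the ``weak side'' value of every two-fan edge agrees exactly with the forward value at that endpoint, so matching that value is the unique way to trigger $\all$ on the backward pass.
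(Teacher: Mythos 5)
Your proposal is correct and is essentially the paper's own argument: the paper likewise inducts along $Q$, peeling off its last edge so that the induction hypothesis tracks exactly your comparison between the backward-propagated value and the forward implication $b_{\ell-i}$, with the same permutation/two-fan case split and the same observation that a value of $\all$ on the backward pass can arise only when the two implications agree (its Case 1). Your explicit dichotomy on the sequence $B_i$ is just that induction unrolled at the value level, so there is nothing missing.
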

\begin{proof}
We prove the lemma by induction on the length of $Q$.
When the length of $Q$ is zero, $P\circ Q^{-1}=P$ is conflicting if and only if $\imp(P)\neq \varphi_A(t(P))=\imp(Q)$.
When $Q=Q'\circ vu$, we consider three cases.

(Case 1) If $P\circ uv$ is not implicational, $C_{uv}$ must be a two-fan $(\varphi(u)=a) \vee (\varphi(v)=b)$ and $\imp(P)=a$.
Because $Q$ is implicational, we have $\imp(Q)=a$. Therefore, we have $\imp(P)=\imp(Q)$.
Because any prefix of a conflicting walk is implicational, $P\circ Q^{-1}$ is not conflicting.
Thus, the lemma holds for this case.

(Case 2) If $P\circ uv$ is implicational and $C_{uv}$ is a permutation $\pi(\varphi(u))=\varphi(v)$,
we have $\imp(P\circ uv)=\pi(\imp(P))$ and $\imp(Q')=\pi(\imp(Q))$.
Therefore, $\imp(P\circ uv)=\imp(Q')$ if and only if $\imp(P)=\imp(Q)$ holds.
Thus, from the induction hypothesis, the lemma holds for this case.

(Case 3) If $P\circ uv$ is implicational and $C_{uv}$ is a two-fan $(\varphi(u)=a)\vee (\varphi(v)=b)$,
we have $\imp(P)\neq a$, $\imp(Q)=a$, $\imp(P\circ uv)=b$, and $\imp(Q')\neq b$.
Therefore, from the induction hypothesis, $P\circ Q^{-1}$ is conflicting.
Thus, the lemma holds for this case.
\end{proof}
For two walks $P$ and $Q$ ending at the same vertex, we write $P\not\equiv Q$ if $P\circ Q^{-1}$ is conflicting and $P\equiv Q$ if $P\circ Q^{-1}$ is not conflicting, but both $P$ and $Q$ are implicational%
\footnote{When at least one of $P$ or $Q$ is not implicational, neither $P\equiv Q$ nor $P\not\equiv Q$ holds.}.
The abovementioned lemma implies that $(\equiv)$ is an equivalence relation.
This is a key property in our algorithms.
Appendix~\ref{sec:axiomatic} presents an alternative axiomatic definition of implicational/conflicting walks and show that
any path systems admitting this key property can be expressed as the set of implicational/conflicting walks with 0/1/all constraints.
We obtain the following corollaries from Lemma~\ref{lem:def:imp}.
\begin{corollary}\label{cor:def:rev}
If a walk $W$ is conflicting, then $W^{-1}$ is also conflicting.
\end{corollary}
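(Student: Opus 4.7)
The plan is to prove a strengthened helper by induction on the length $\ell$ of a walk: for every walk $W = (v_0, \ldots, v_\ell)$ in the primal graph and every $p \in D(v_0)$ with $C_W(p) \neq \all$, writing $q := C_W(p)$, every $q' \in D(v_\ell) \setminus \{q\}$ satisfies $C_{W^{-1}}(q') \neq \all$ and $C_{W^{-1}}(q') \neq p$. Granting this, the corollary is immediate: since $W$ is conflicting it is in particular implicational, so $q := \imp_{\varphi_A}(W)$ is a singleton, and the conflicting hypothesis gives $q \neq \varphi_A(t(W))$. Applying the helper with $p = \varphi_A(s(W))$ and $q' = \varphi_A(t(W))$ then yields that $W^{-1}$ is implicational with $\imp_{\varphi_A}(W^{-1}) \neq \varphi_A(s(W)) = \varphi_A(t(W^{-1}))$, which together with $s(W^{-1}), t(W^{-1}) \in A$ is exactly the definition of $W^{-1}$ being conflicting.

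The base case $\ell = 0$ is trivial, since $W^{-1} = W$ and $C_W$ acts as the identity. For the inductive step, decompose $W = W' \circ v_{\ell-1}v_\ell$ and split on the type of $C_{v_{\ell-1}v_\ell}$, mirroring the case analysis in the proof of Lemma~\ref{lem:def:imp}. If it is a permutation $\pi$, then $q = \pi(C_{W'}(p))$, and for $q' \neq q$ backward propagation across the last edge produces the singleton $\pi^{-1}(q') \neq C_{W'}(p)$, so the induction hypothesis on $W'$ closes the step. If instead it is a two-fan $(\varphi(v_{\ell-1}) = a) \vee (\varphi(v_\ell) = b)$, then $C_W(p) \neq \all$ forces $C_{W'}(p) \neq a$ and $q = b$; for any $q' \neq b$ the backward propagation yields the singleton $a$, and the forward implicational condition $C_{W'}(p) \neq a$ is exactly what lets us invoke the induction hypothesis on $W'$ with the new value $a$.

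The main obstacle I foresee is the two-fan case, which is asymmetric: a unit propagation across a two-fan fires only when the source value is not the ``free'' element on its side, and always produces the free element on the other side. One must check that the backward-propagated value $a$ is actually different from the forward intermediate value $C_{W'}(p)$, and this turns out to be equivalent to the very condition ($C_{W'}(p) \neq a$) that made the forward step implicational to begin with. Once this bookkeeping is observed, both constraint types fit uniformly into the induction and the corollary follows.
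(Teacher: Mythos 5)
Your proof is correct, but it takes a different route from the paper. The paper obtains this corollary as an immediate consequence of Lemma~\ref{lem:def:imp}: taking $P$ to be the zero-length walk $(t(W))$ and $Q:=W$ (both implicational, ending at $t(W)$), one has $P\circ Q^{-1}=W^{-1}$, and the lemma says this is conflicting iff $\imp_{\varphi_A}(P)=\varphi_A(t(W))\neq\imp_{\varphi_A}(W)=\imp_{\varphi_A}(Q)$, which is exactly the hypothesis that $W$ is conflicting --- a one-line derivation. You instead re-prove the statement from scratch via a strengthened helper (``if $C_W(p)=q\neq\all$ and $q'\neq q$, then $C_{W^{-1}}(q')$ is a singleton different from $p$'') by induction on the walk length, with the same permutation/two-fan case split as in the lemma's proof; your key observation in the two-fan case, that backward propagation yields $a$ and that $C_{W'}(p)\neq a$ (forced by forward implicationality) is precisely what lets the induction hypothesis fire, is the right one, and the argument goes through (the only implicit step is the routine fact that $C$ composes along concatenations, $C_{W'\circ e}{}^{-1}$ evaluated by first crossing $e^{-1}$). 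What the paper's route buys is economy --- it reuses work already done in Lemma~\ref{lem:def:imp} rather than duplicating its induction; what your route buys is a self-contained statement that is marginally more informative (backward propagation produces singletons at every step), though that extra strength is not needed here.
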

\begin{corollary}\label{cor:def:three}
For three implicational walks $P$, $Q$, and $R$ ending at the same vertex, if $P\circ Q^{-1}$ is conflicting, then at least one of $P\circ R^{-1}$ and $R\circ Q^{-1}$ is conflicting.
\end{corollary}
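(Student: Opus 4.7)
The plan is to translate everything via Lemma~\ref{lem:def:imp} into statements about the singleton implicated values at the common endpoint. Let $v$ denote the common terminal vertex of $P$, $Q$, and $R$. Since all three walks are assumed implicational, the values $\imp_{\varphi_A}(P)$, $\imp_{\varphi_A}(Q)$, and $\imp_{\varphi_A}(R)$ are all well-defined elements of $D(v)$ (i.e., none of them is $\all$). For each of the three pairs among $\{P,Q,R\}$, Lemma~\ref{lem:def:imp} then applies and characterizes conflictingness of the corresponding concatenation purely by inequality of these singleton values.

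Under this translation, the hypothesis that $P\circ Q^{-1}$ is conflicting becomes $\imp_{\varphi_A}(P)\neq\imp_{\varphi_A}(Q)$, while the desired conclusion becomes the disjunction $\imp_{\varphi_A}(P)\neq\imp_{\varphi_A}(R)$ or $\imp_{\varphi_A}(R)\neq\imp_{\varphi_A}(Q)$. I would argue contrapositively: if both equalities $\imp_{\varphi_A}(P)=\imp_{\varphi_A}(R)$ and $\imp_{\varphi_A}(R)=\imp_{\varphi_A}(Q)$ held simultaneously, then transitivity of equality on $D(v)$ would yield $\imp_{\varphi_A}(P)=\imp_{\varphi_A}(Q)$, contradicting the hypothesis. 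Hence at least one of the two inequalities must hold, and invoking Lemma~\ref{lem:def:imp} in the reverse direction for that pair produces the required conflicting walk.

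There is essentially no obstacle: the corollary is simply the contrapositive of transitivity for the relation $(\equiv)$ restricted to implicational walks ending at a common vertex, which is exactly the equivalence relation observation already highlighted in the text immediately after Lemma~\ref{lem:def:imp}. The only care required is to confirm that Lemma~\ref{lem:def:imp} is applicable to all three pairs $(P,Q)$, $(P,R)$, $(R,Q)$, and this is guaranteed by the standing assumption that $P$, $Q$, $R$ are all implicational.
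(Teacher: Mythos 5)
Your proof is correct and follows the same route the paper intends: the corollary is derived directly from Lemma~\ref{lem:def:imp} by translating conflictingness into inequality of the implicated values at the common endpoint and applying transitivity of equality (equivalently, the fact that $(\equiv)$ is an equivalence relation). Nothing further is needed.
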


We now introduce a half-integral relaxation\footnote{Originally we should define this as an LP relaxation, but we discuss it by assuming its half integrality, which is proved via our algorithm and LP duality.} of \INTCOVER.
Let $\F_{C,\varphi_A}$ denote the set of all $\varphi_A$-conflicting walks whose internal vertices do not intersect with $A$%
\footnote{This constraint is only for simplicity. From Corollary~\ref{cor:def:three}, if $P\circ Q^{-1}$ is conflicting for some walks $P$ and $Q$ ending in $A$, at least one of $P$ and $Q$ is conflicting.
Therefore, a maximum packing of conflicting walks that uses none of such walks always exists.}.
When $C$ and $\varphi_A$ are clear from the context, we simply write $\F$ to refer to $\F_{C,\varphi_A}$.
Note that from Corollary~\ref{cor:def:rev}, we can ignore the direction of walks in $\F$.
A function $x: V\to\{0,\frac{1}{2},1\}$ is called a \emph{half-integral $\F$-cover} if $x(V(W))\geq 1$ for every $W\in \F$.
The size of $x$ is defined as $|x|=x(V)$.
A function $y:\F\to\{0,\frac{1}{2},1\}$ is called a \emph{half-integral $\F$-packing} if for every vertex $v\in V$,
it holds that $\sum_{W\in \F} \mathbf{1}_{V(W)}(v)y(W)\leq 1$.
The size of $y$ is defined as $|y|=y(\F)$.
From the LP-duality, we have $|x|\geq |y|$ for any pair of half-integral $\F$-cover $x$ and half-integral $\F$-packing $y$.
Any deletion set $X$ for \INTCOVER must intersect every $\varphi_A$-conflicting walk; hence $\mathbf{1}_X$ is an integral $\F$-cover%
\footnote{Note that the converse may not hold.
For example, we have $\F=\emptyset$ when $A=\emptyset$. Therefore, $X=\emptyset$ is the minimum integral $\F$-cover.
In contrast, $X=\emptyset$ may not be a deletion set.
Thus, the half-integral relaxation does not always lead to a 2-approximation algorithm.}.
Therefore, the size of the minimum half-integral $\F$-cover provides a lower bound on the size of the minimum deletion set.
For a half-integral $\F$-cover $x$, let $R(x)$ denote the set of vertices $t$ such that an implicational walk $W$ with $x(V(W))=0$ ending at $t$ exists.

We can prove the following property called \emph{persistency} by a careful consideration of the results in~\cite{Iwata:2016ja} (see Appendix~\ref{sec:persistency} for a detailed discussion).

\begin{theorem}\label{thm:persistency}
Let $C$ be a set of 0/1/all constraints on a variable set $V$ and $\varphi_A$ be a partial assignment for a subset $A\subseteq V$.
For any minimum half-integral $\F_{C,\varphi_A}$-cover $x$, there exists a minimum deletion set $X$ containing every vertex $u$ with $x(u)=1$ but avoiding every vertex in $R(x)$.
\end{theorem}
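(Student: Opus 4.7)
The plan is to reduce the theorem to the LP-persistency argument of Iwata, Wahlstr\"om, and Yoshida~\cite{Iwata:2016ja}. First, I would identify the relevant LP: minimize $|x|$ over $x : V \to \mathbb{R}_{\geq 0}$ subject to $x(V(W)) \geq 1$ for every $W \in \F$. Because the main algorithm of this paper constructs, for every minimum half-integral $\F$-cover $x$, a matching half-integral $\F$-packing $y$, LP duality yields that the minimum half-integral $\F$-covers coincide with the optimal half-integral LP solutions, placing us squarely in the framework of~\cite{Iwata:2016ja}.

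The proof then splits into two parts. Part (a) --- $1$-persistency --- produces a minimum deletion set $X$ containing $x^{-1}(1)$. This is the standard LP-persistency exchange: for any minimum deletion set $X^*$ and any $u \in x^{-1}(1) \setminus X^*$, complementary slackness with the optimal packing $y$ forces every $W \in \F$ through $u$ with $y(W) > 0$ to be tight, and in particular $X^*$ must hit each such $W$ at some vertex other than $u$; a careful swap along such a walk converts $X^*$ into a minimum deletion set that contains $u$, and iterating over $x^{-1}(1)$ produces $X$.

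Part (b) --- avoidance of $R(x)$ --- argues that $X' := X \setminus R(x)$ is still a minimum deletion set. I would build an explicit assignment $\varphi$ for $V \setminus X'$ by using, for $v \in V \setminus X$, the assignment witnessing that $X$ is a deletion set, and for $v \in R(x) \cap X$, setting $\varphi(v) := \imp_{\varphi_A}(W_v)$ for any implicational walk $W_v$ from $A$ to $v$ with $x(V(W_v)) = 0$. Consistency reduces to two observations: (i) the value $\imp_{\varphi_A}(W_v)$ is independent of the choice of $W_v$, since if two such walks gave distinct implications, Lemma~\ref{lem:def:imp} would put their concatenation in $\F$ with total $x$-value zero, contradicting that $x$ is an $\F$-cover; and (ii) every constraint $C_{uv}$ with $v \in R(x) \cap X$ and $u \in V \setminus X'$ is satisfied under $\varphi$, which follows by extending $W_v$ through the edge $vu$ and invoking Corollary~\ref{cor:def:three} to reconcile the resulting implication with $\varphi(u)$. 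Since $R(x) \subseteq x^{-1}(0)$, Part (b) does not disturb the $1$-persistency from Part (a).

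The main obstacle will be the consistency check (ii): two-fan constraints do not propagate implications symmetrically across edges, so the extended walk $W_v \circ vu$ may cease to be implicational, and one has to analyze the structure of the offending two-fan together with Corollary~\ref{cor:def:three} to certify that no $\varphi_A$-conflicting walk avoided by $X$ exists in $G - X'$. I expect this is precisely the ``careful consideration'' alluded to in the statement, with the detailed case analysis deferred to Appendix~\ref{sec:persistency}.
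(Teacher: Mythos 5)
Your Part~(b) is the step that breaks, and it breaks for a structural reason, not a technical one. You take a \emph{minimum} deletion set $X\supseteq x^{-1}(1)$ from Part~(a) and try to show that $X':=X\setminus R(x)$ is still a deletion set. But whenever this step would do any work, i.e.\ whenever $X\cap R(x)\neq\emptyset$, we have $|X'|<|X|$, so $X'$ being a deletion set would contradict the minimality of $X$; hence your consistency construction must fail exactly in the cases it is needed. A concrete instance: take $A=\{s_1,s_2\}$ with $\varphi_A(s_1)=1$, $\varphi_A(s_2)=2$, and equality (identity permutation) constraints along the path $s_1-a-b-s_2$. The minimum half-integral $\F$-cover $x$ with $x(a)=x(b)=\frac{1}{2}$ has $x^{-1}(1)=\emptyset$ and $R(x)=\{s_1,s_2\}$, and $X=\{s_1\}$ is a perfectly legitimate output of your Part~(a) (minimum, contains $x^{-1}(1)$), yet $X\setminus R(x)=\emptyset$ is not a deletion set. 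So the containment of $x^{-1}(1)$ and the avoidance of $R(x)$ cannot be obtained sequentially by post-processing an arbitrary minimum deletion set; the set $X$ has to be chosen so that both hold simultaneously, which is precisely the content of the theorem. Your proposed fix via Lemma~\ref{lem:def:imp} and Corollary~\ref{cor:def:three} only controls conflicting walks, but a deletion set must also leave the \emph{unanchored} part of the instance satisfiable, which the $\F$-cover does not certify (this is exactly why the relaxation is not a $2$-approximation), and your assignment for $V\setminus X'$ mixes $\imp_{\varphi_A}(W_v)$ with a witnessing assignment for $G-X$ that need not agree with the implied values, since zero-weight walks may pass through zero-valued vertices of $X$.

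Part~(a) is also not yet a proof: persistency of this kind is not a consequence of LP duality or complementary slackness, and the ``careful swap along such a walk'' is the entire difficulty (for \VC{} this is already the Nemhauser--Trotter theorem, and no walk-exchange argument is given or obvious here). The paper's route is different and does both halves at once: it encodes the instance as minimization of a $k$-submodular function $f$ on variables indexed by directed edges (a sum of the building blocks $p_\pi$, $t_{a,b}$, and the vertex-cost function $e$ of Lemma~\ref{lem:review:ksub}), proves a value-preserving correspondence between finite-valued inputs of $f$ and half-integral $\F$-covers, and then applies the persistency lemma for $k$-submodular relaxations (Lemma~\ref{lem:review:persistency}). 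Crucially, the relaxed minimizer $\bm{b}$ built from $x$ assigns equal non-$\perp$ labels (the propagated values $\varphi_{R(x)}$) around every vertex of $R(x)$ and conflicting non-$\perp$ labels around every vertex of $x^{-1}(1)$, so the integral minimizer agreeing with $\bm{b}$ on its non-$\perp$ coordinates yields a minimum deletion set that simultaneously contains $x^{-1}(1)$ and avoids $R(x)$. If you want to salvage your outline, you would have to strengthen Part~(a) to an exchange argument that never places an $R(x)$-vertex into the deletion set, which effectively amounts to reproving this $k$-submodular persistency.
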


We say that a minimum half-integral $\F$-cover $x'$ \emph{dominates} a minimum half-integral $\F$-cover $x$ if $R(x)\subsetneq R(x')$ holds.
A minimum half-integral $\F$-cover $x$ is called \emph{farthest} if there exists no minimum half-integral $\F$-cover dominating $x$.
Suppose that we have an $O(T)$-time algorithm for computing a farthest minimum half-integral $\F$-cover.
From Theorem~\ref{thm:persistency}, it is not difficult to obtain an $O(d^{2k}T)$-time FPT algorithm for \INTCOVER.
Moreover, the base $d$ of the exponent can be improved to a constant for several special cases.
We give a detailed discussion in Section~\ref{sec:fpt:algorithm}.

\subsection{Single-Branching Pair and Incremental-Test Oracle}\label{sec:oracle}
As shown in Section~\ref{sec:fpt:formulation}, various important NP-hard problems can be expressed as a special case of \INTCOVER.
However, the domain size $d$ is often $\omega(1)$ (or even $\exp(m)$ for several cases, where $m$ is the number of edges/constraints).
Hence, if every constraint is given as the table of size $d$, the total size of these tables already becomes super-linear.
For obtaining linear-time FPT algorithms, we use oracles instead of the explicit expression of the constraints to efficiently check whether a given walk is implicational/conflicting or not.

A pair $(P,Q)$ of implicational walks ending at the same vertex is called \emph{single-branching} if either $P\circ Q^{-1}$ forms a simple path or
they can be written as $P=R\circ P'$ and $Q=R\circ Q'$ for a (possibly zero-length) path $R$ and two walks $P'$ and $Q'$ for which $P'\circ Q'^{-1}$ forms a simple cycle that is internally disjoint from $R$.
In our algorithm, we need to test whether a given walk is implicational and whether $P\circ Q^{-1}$ is conflicting for a given single-branching pair $(P,Q)$.
To efficiently answer these queries, we use a tuple $(U, \mathcal{I}, \mathcal{A}, \mathcal{T})$, called an \emph{incremental-test oracle}, of a set $U$ and functions $\mathcal{I}$, $\mathcal{A}$, and $\mathcal{T}$ satisfying the following.
\begin{itemize}
  \item \emph{Init} $\mathcal{I}: A\to U$.
  \item \emph{Append} $\mathcal{A}: U\times\hat{E}\to U\cup\{\all\}$.
  		Let $\mathcal{A}^*$ be a function such that
  		\[
  			\mathcal{A}^*((s))=\begin{cases}
  			\mathcal{I}(s)&(s\in A)\\
  			\all&(s\not\in A)
  			\end{cases}\quad\mathrm{and}\quad
  			\mathcal{A}^*(W\circ e)=\begin{cases}
  			\mathcal{A}(\mathcal{A}^*(W),e)&(\mathcal{A}^*(W)\neq\all)\\
  			\all&(\mathcal{A}^*(W)=\all)
  			\end{cases}.
  		\]
  		Then, for any walk $W$, $\mathcal{A}^*(W)\neq\all$ if and only if $W$ is implicational.
  \item \emph{Test} $\mathcal{T}: U\times U\to\{\mathbf{true},\mathbf{false}\}$.
  		For any single-branching pair $(P, Q)$, $\mathcal{T}(\mathcal{A}^*(P),\mathcal{A}^*(Q))=\mathbf{true}$ if and only if $P\circ Q^{-1}$ is conflicting.
\end{itemize}

The running time of the incremental-test oracle is defined as the maximum running time of the three functions.
In general, we can naively implement the incremental-test oracle by setting $U:=\bigcup_{u\in V} D(u)$, $\mathcal{I}(s):=\varphi_A(s)$, $\mathcal{A}(x,e)=C_e(x)$, and $\mathcal{T}(x,y):=\mathbf{true}$ iff $x\neq y$.
We can obtain a constant-time oracle by this naive implementation for several cases including \NMC.
Meanwhile, the naive implementation takes $O(m)$ time for several other cases including \SFVS.
We will see in Section~\ref{sec:fpt:formulation} that we can implement a constant-time oracle for these cases by exploiting the constraint that the inputs to the test function are restrited to single-branching pairs.


\section{Half-Integral Packing and Covering}\label{sec:hi}
We prove the following theorem in this section.
\begin{theorem}\label{thm:hi}
Let $C$ be a set of 0/1/all constraints on variables $V$ and $\varphi_A$ be a partial assignment for a subset $A\subseteq V$.
Given the primal graph of $C$, the set $A$, an incremental-test oracle for $(C,\varphi_A)$, and an integer $k$,
we can compute a pair of minimum half-integral $\F_{C,\varphi_A}$-cover $x$ and maximum half-integral $\F_{C,\varphi_A}$-packing $y$ with $|x|=|y|\leq\frac{k}{2}$
or correctly conclude that the size of the minimum half-integral $\F_{C,\varphi_A}$-cover is at least $\frac{k+1}{2}$ in $O(kmT)$ time,
where $m$ is the number of constraints and $T$ is the running time of the incremental-test oracle.
\end{theorem}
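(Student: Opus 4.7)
The plan is to maintain a half-integral $\F$-packing $y$ of a prescribed canonical form and repeatedly augment it by $\frac{1}{2}$ until either $|y| > \frac{k}{2}$ (in which case we stop and report) or no augmentation is possible (in which case we build a matching cover $x$). Following the sketch in the introduction, the canonical form consists of a collection of pairwise vertex-disjoint \emph{integral paths} in $\F$ (each with weight $1$) together with a collection of \emph{wheels}, each of which decomposes into an odd number $2a+1 \geq 3$ of walks in $\F$ of weight $\frac{1}{2}$ whose vertex multiplicities jointly saturate a single ``rim'' subgraph. I would prove by induction on the number of augmentations that such a decomposition can always be maintained. Note that $|y| \leq \frac{k}{2}$ after at most $k$ augmentation rounds, so it suffices to bound the cost per round by $O(mT)$.

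For a given $y$ of this form, I would run a BFS-style search starting from the vertices of $A$ not used by $y$. An \emph{alternating path} is a concatenation $P_1 \circ P_2 \circ \cdots \circ P_\ell$ where the odd-indexed pieces are internally disjoint from the current packing and the even-indexed pieces lie inside integral paths (with a more delicate rule for traversing wheels, which I would define carefully so that every extension of the search remains implicational under the oracle). Three termination events for the search yield augmentations, mirroring the matching case: (i) a single alternating path between two distinct $A$-vertices not used by $y$ produces an augmenting path and increases $|y|$ by $1$ by the standard swap on the $P_i$'s; (ii) an alternating path that reaches a vertex of an existing wheel converts that wheel of $2a+1$ half-walks into $a$ new integral paths plus an $(a+1)$-st integral path obtained from the swap, raising $|y|$ by $\frac{1}{2}$; (iii) two alternating paths $P$ and $Q$ sharing a common prefix and meeting at the same terminal vertex with $\imp(P) \neq \imp(Q)$ (an \emph{augmenting pair}) allow us to swap along the common prefix and form a new wheel from $P \circ Q^{-1}$, again raising $|y|$ by $\frac{1}{2}$. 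In each case, I would verify that the new object is in $\F$ by applying Lemma~\ref{lem:def:imp} via the incremental-test oracle, and that the resulting $y$ is still in canonical form.

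If the search exhausts itself without finding any of the above, I would construct $x$ as follows. Let $R \subseteq V$ be the set of vertices ``touched'' by the search (roughly, those reachable by an alternating path from a free terminal), and let $B$ be the set of vertices saturated by the packing that lie on the frontier between $R$ and $V \setminus R$. I would set $x(v)=1$ on each integral path in $y$ that is cut by the frontier, $x(v)=\frac{1}{2}$ on the rim of each wheel whose rim crosses the frontier (exploiting the odd-cycle structure of wheels to charge the half-integral weight), and $x(v)=0$ elsewhere. The equality $|x| = |y|$ follows because every path/wheel contributes equally to both sides, and $x$ is an $\F$-cover because any uncovered walk in $\F$ would give rise to one of the three augmenting configurations, contradicting exhaustion. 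This is essentially the ``no augmenting path implies max-flow/min-cut'' step, adapted to the half-integral setting.

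For the running time, each augmentation round performs one BFS over the primal graph plus $O(m)$ queries to the oracle: the \emph{Append} function extends the state of the current alternating path along each traversed edge, and the \emph{Test} function is invoked only when two partial paths meet at a common endpoint. A key point is that the pairs $(P, Q)$ on which \emph{Test} is called are single-branching by construction—this is precisely why the oracle interface of Section~\ref{sec:oracle} suffices—so each round costs $O(mT)$, giving the claimed $O(kmT)$ bound. The main obstacle I anticipate is the bookkeeping for alternating paths that traverse wheels: because a wheel is not a simple structure, the definition of ``next edge'' in the alternating search must be engineered so that (a) the extracted path $P_i$ lies in $\F$, (b) every meeting of two branches of the search yields a single-branching pair, and (c) augmentations preserve the canonical form. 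Handling wheels correctly—and in particular proving that these three properties hold simultaneously—is the technical heart of the argument; everything else is a direct adaptation of the half-integral matching blueprint.
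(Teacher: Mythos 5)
Your plan follows the paper's own strategy (maintain a packing of integral paths and wheels, search for augmenting paths/pairs along alternating paths, extract a cover on failure, and charge each of the at most $k+1$ rounds $O(mT)$ oracle work on single-branching pairs), but the two steps you defer or state loosely are exactly where the real work lies, and as written they do not go through. First, the augmenting-pair step: ``swap along the common prefix and form a new wheel from $P\circ Q^{-1}$'' is only valid when, after the prefix swap, the two suffixes meet each integral path at most once between them and meet no spoke; otherwise the would-be spokes of the new wheel are not disjoint paths and the result is not a packing of your canonical form. The paper spends all of Section~\ref{sec:hi:augmentation} on this: Corollary~\ref{cor:alternating_sub} to simplify prefixes, Lemma~\ref{lem:augment_pair:special} for the clean case, Lemma~\ref{lem:augment_pair:detour} to eliminate two segments sharing an integral path, and Lemma~\ref{lem:augment_pair:shortcut} to remove segments lying in spokes (which may instead yield an augmenting path). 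You name ``wheel bookkeeping'' as the technical heart but supply no mechanism for it; relatedly, your canonical form (wheels made of $2a+1\geq 3$ half-walks) cannot even represent the degree-one wheel that the pair augmentation produces when $P$ and $Q$ are single segments, and reaching a spoke vertex is an augmentation only under the extra condition $T(P)\not\equiv B(t(P))$ --- otherwise the search must continue through the spoke rather than augment.

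Second, your cover construction is incorrect as stated. You assign weight only to integral paths ``cut by the frontier'' and to wheels ``whose rim crosses the frontier''; a path or wheel untouched by the search (e.g.\ when $A\setminus V(y)=\emptyset$) then receives weight $0$, so its own walks in $\F$ are uncovered and $|x|<|y|$, which by LP duality means this $x$ cannot be a cover at all. Moreover, after a failed search the half-integral cycles are never reached (reaching one is an augmenting path), so no rim crosses the frontier, and placing a wheel's weight on rim vertices neither totals $d/2$ in a controlled way nor blocks implicational walks that leave the visited region across a spoke. The placement that works is dictated by the search boundaries on \emph{every} path and spoke: $\frac{1}{2}$ at $v_{a(I)}$ and $v_{b(I)}$ on each integral path (a single $1$ if they coincide) and $\frac{1}{2}$ at $v_{a(S)}$ on each spoke, with nothing on the half-integral cycles. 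Finally, your one-line justification ``any uncovered walk would yield an augmenting configuration'' is itself the nontrivial part: it is the induction of Lemmas~\ref{lem:cover:s-path} and~\ref{lem:cover}, which relies on the precise alternating-path invariants of Lemma~\ref{lem:alg:invariants}. Without these two pieces your argument establishes neither the augmentation step nor the optimality certificate.
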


Our algorithm is based on a simple augmentation strategy summarized as follows.
Starting with $y(W) = 0$ for every $W \in \F$,
we repeatedly update a half-integral $\F$-packing $y$
that always consists of only two types of conflicting walks defined in Section~\ref{sec:hi:basic}.
We search an \emph{augmenting path/pair} (see Section~\ref{sec:hi:aug}) in each iteration
using Algorithm~\ref{alg:search} described in Section~\ref{sec:hi:search}.
If one is found, we can improve the current $\F$-packing $y$
in linear time by Lemmas~\ref{lem:augment_path} and~\ref{lem:augment_pair} (Augmentation);
otherwise, as shown in Section~\ref{sec:hi:cover},
we can naturally construct a half-integral $\F$-cover of size $|y|$,
which guarantees the optimality of $y$ with the aid of the LP-duality.
Since each augmentation increases $|y|$ by at least $1 \over 2$,
the number of iterations is bounded by $k+1$.
Algorithm~\ref{alg:search} can be implemented in linear time by Lemma~\ref{lem:search},
which concludes Theorem~\ref{thm:hi}.

\subsection{Preliminaries}
\subsubsection{Basic $\F$-Packing}\label{sec:hi:basic}
In what follows, we focus on $\F$-packings that consist of only two types of conflicting walks.
One is a simple path $I\in\F$ of weight $1$, which is called an \emph{integral path}.
The other is a \emph{wheel} defined as follows
and consists of an odd number of conflicting walks of weight $1 \over 2$.

\begin{definition}\label{def:wheel}
A pair of a simple cycle $C = H_1\circ \ldots \circ H_d$ of weight $\frac{1}{2}$
and (possibly zero-length) paths $\{S_1,\ldots,S_d\}$ of weight $1$ is called a \emph{wheel}
if it satisfies the following conditions (Figure~\ref{fig:wheel}).
\begin{enumerate}
  \item $d$ is an odd positive integer.
  \item For any $i$, $S_i$ is a path from $A$ to $s(H_i)=t(H_{i-1})$ (where $H_0 = H_d$)
    that is internally disjoint from $C$.
  \item For any distinct $i$ and $j$, $S_i$ and $S_j$ share no vertices.
  \item For any $i$, $S_i\circ H_i\circ S_{i+1}^{-1}\in\F$, where $S_{d+1} = S_1$.
\end{enumerate}
The integer $d$ is called the \emph{degree} of the wheel, the cycle $C$ is called the \emph{half-integral cycle} of the wheel, and the paths $\{S_1,\ldots,S_d\}$ are called the \emph{spokes} of the wheel.

\begin{figure}[t]
  \centering
  \includegraphics[scale=1]{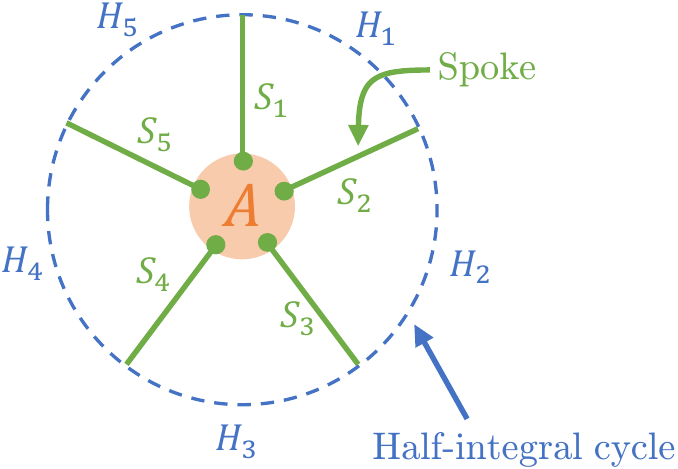}
  \caption{Wheel of degree 5.}
  \label{fig:wheel}
\end{figure}
\end{definition}

Note that a wheel of degree 1 is a closed walk $S_1\circ C\circ S_1^{-1}\in\F$ of weight $\frac{1}{2}$,
and a wheel of degree $d\geq 3$ is a sum of $d$ simple paths $\{S_1\circ H_1\circ S_2^{-1}, \ldots, S_d\circ H_d\circ S_1^{-1}\}\subseteq\F$ of weight $\frac{1}{2}$.

A half-integral $\F$-packing $y$ is called a \emph{basic $\F$-packing} if it is a sum of integral paths and wheels such that each vertex is contained in at most one of the integral paths and the wheels.
In our algorithm, a basic $\F$-packing $y$ is dealt with as a weighted graph\footnote{%
The weight is naturally defined for each vertex $v$ and edge $e$ by $\sum_{W \in \F}\mathbf{1}_{V(W)}(v)y(W)$ and $\sum_{W \in \F}\mathbf{1}_{E(W)}(e)y(W)$, respectively.}
so that we can efficiently update integral paths and wheels in $y$.
We denote by $V(y)$ and $E(y)$ the sets of vertices and (undirected) edges, respectively, that are contained in
some integral path or wheel in $y$ (i.e., of positive weights),
and particularly by $V_1(y)$ and $E_1(y)$,
the sets of those contained in some integral path or spoke in $y$ (i.e., are of weight $1$).
A walk is called internally disjoint from $y$ if it is internally disjoint from the subgraph $(V(y),E(y))$.

For a basic $\F$-packing $y$,
we define two functions $F_y$ (Forward) and $B_y$ (Backward) as follows (Figure~\ref{fig:FPB}).
Let $P$ be a positive-length path contained in an integral path $I$ in $y$.
From Corollary~\ref{cor:def:rev}, we can assume that $I$ has the same direction as $P$.
We then define paths $F_y(P)$ and $B_y(P)$ such that $I=F_y(P)\circ P\circ B_y(P)^{-1}$ holds.
For a vertex $v$ contained in a spoke $S$ in $y$, we denote by $F_y(v)$ the path from $t(S)$ to $v$ along $S$
and by $B_y(v)$ the path from $s(S)$ to $v$ along $S$ (i.e., $F_y(v)\circ B_y(v)^{-1}=S^{-1}$).
For a path $P$ contained in a spoke $S$ in $y$ in the opposite direction to $S$,
we define $F_y(P):=F_y(s(P))$ and $B_y(P):=B_y(t(P))$ (i.e., $F_y(P)\circ P\circ B_y(P)^{-1}=S^{-1}$).
We omit the subscript $y$ if it is clear from the context.

\begin{figure}[t]
  \centering
  \includegraphics[scale=0.7]{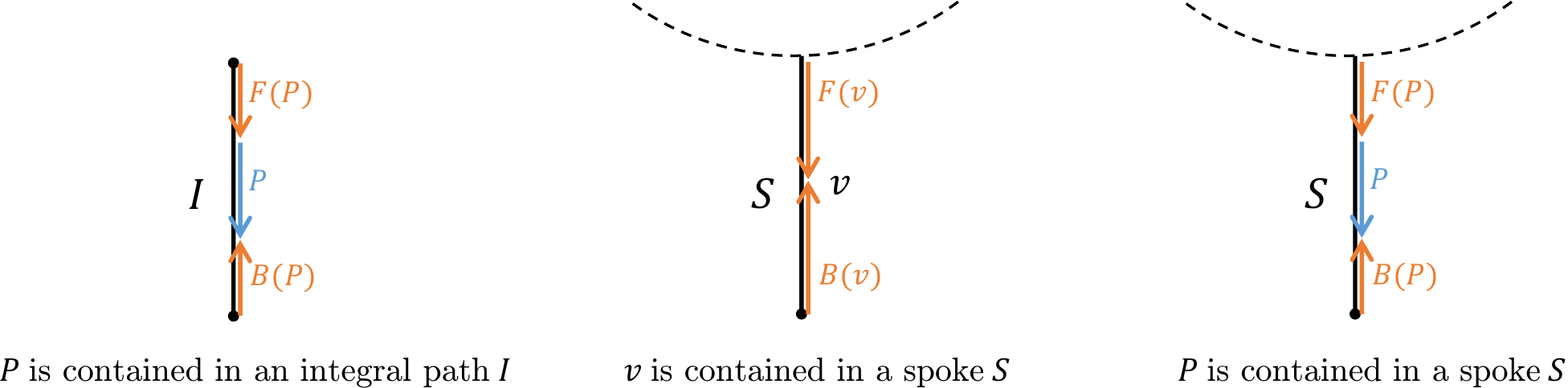}
  \caption{Definition of $F_y$ and $B_y$.}
  \label{fig:FPB}
\end{figure}

\subsubsection{Augmenting Path/Pair}\label{sec:hi:aug}
We first define an \emph{alternating path} to define our \emph{augmenting path/pair}.

\begin{definition}\label{def:alternating}
For a basic $\F$-packing $y$,
a concatenation of paths $P=P_1\circ\cdots\circ P_p$ is called a \emph{$y$-alternating path} if it satisfies all the following conditions.
\begin{enumerate}
  \item The edges in $E(P)$ are distinct (i.e., $\mathbf{1}_{E(P)}(e) \leq 1$ for every $e \in E$).\label{def:alternating:distinct}
  \item Every vertex in $P$ that is not contained in any integral path or spoke in $y$ appears in $P$ at most once%
  \footnote{From the other properties of the $y$-alternating paths, a vertex contained in an integral path or a spoke can appear twice in $P$. Hence, $P$ may
  not be a path in the precise sense.} (i.e., $\mathbf{1}_{V(P)}(v) \leq 1$ for every $v \in V \setminus V_1(y)$).\label{def:alternating:once}
  \item $s(P)\in A\setminus V(y)$.
  \item Each $P_i$ is a path of positive length satisfying the following conditions.\label{def:alternating:odd_even}
  	\begin{enumerate}
  	  \item For any odd $i$, $P_i$ is internally disjoint from $y$, and no internal vertex of $P_i$ is in $A$.
  	  \item For any even $i$, $P_i$ is contained in an integral path or a spoke in $y$. In the latter case, $P_i$ has the opposite direction to the spoke.
  	\end{enumerate}
  \item Let us define $B(P_0):=(s(P))$. The following conditions are satisfied for any $i$\footnote{%
  Technically, the conditions ``none of the $P_j$'s are contained in \ldots'' mean that $P$ does not admit a \emph{shortcut}
  (e.g., if there is some $P_j$ with $j>i$ contained in $B(P_i)$, we can obtain another $y$-alternating path $P_1\circ\cdots\circ P_{i-1}\circ W\circ P_{j+1}\circ\cdots P_p$,
  where $W$ is the path from $s(P_i)$ to $t(P_j)$ along the integral path or spoke).
  As in the case of matroid intersection, we need a shortcut-less alternating path.
  } (see Figure~\ref{fig:alternating}).\label{def:alternating:cycle_spoke}
  	\begin{enumerate}
  	  \item If $i$ is odd, $B(P_{i-1})\circ P_i$ is implicational.
  	  \item If $i$ is even and $P_i$ is contained in an integral path, then $B(P_{i-2})\circ P_{i-1}\not\equiv F(P_i)$, and none of the $P_j$'s are contained in $B(P_i)$ for $j>i$.
  	  		Moreover, if $B(P_{i-2})\circ P_{i-1}\not\equiv B(P_i)\circ P_i^{-1}$, none of the $P_j$'s are contained in $F(P_i)$ for $j>i$.\label{def:alternating:cycle}
  	  \item If $i$ is even and $P_i$ is contained in a spoke, then $B(P_{i-2})\circ P_{i-1}\equiv B(P_i)\circ P_i^{-1}$, and none of the $P_j$'s are contained in $B(P_i)$ for $j>i$.\label{def:alternating:spoke}
  	\end{enumerate}
  	\begin{figure}[t]
      \centering
      \includegraphics[scale=0.8]{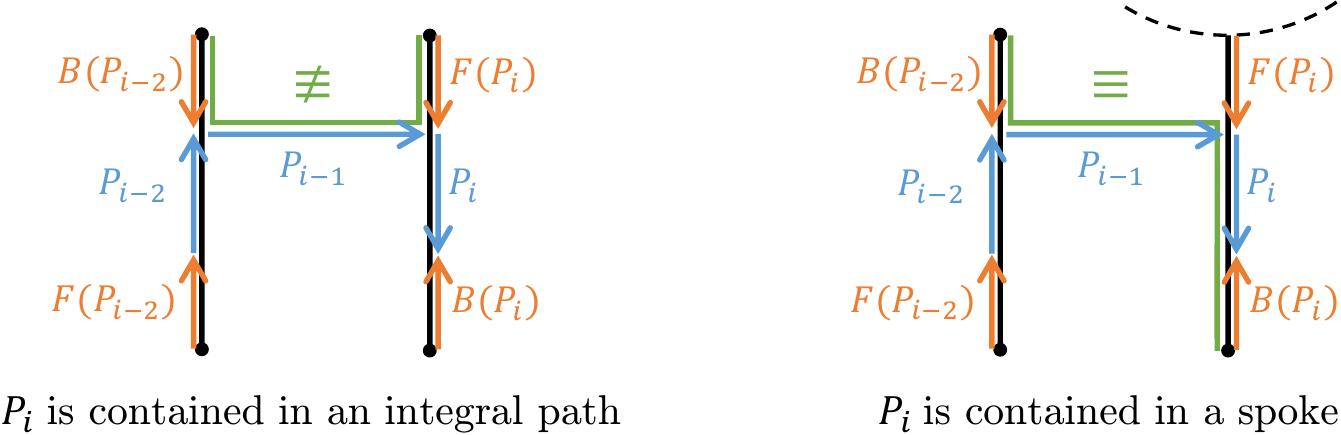}
      \caption{Conditions for alternating paths.}
      \label{fig:alternating}
	\end{figure}
\end{enumerate}
Each $P_i$ is called a \emph{segment} of $P$.
If $P$ consists of only a single segment, it is called \emph{single-segment}.
A zero-length path $P=(s)$ with $s\in A\setminus V(y)$ is considered as a $y$-alternating path with zero segments.
\end{definition}

We define $T_y(P)$ (Tail) for a $y$-alternating path $P=P_1\circ\cdots\circ P_p$ as follows:
$T_y(P):=B_y(P_{p-1})\circ P_p$ if $p$ is odd, and $T_y(P):=B_y(P_p)$ if $p$ is even.
Note that $T_y(P)$ is always implicational.
We omit the subscript $y$ if it is clear from the context.

\begin{definition}\label{def:augment_path}
A $y$-alternating path $P=P_1\circ \ldots \circ P_p$ is called a \emph{$y$-augmenting path} if $p$ is odd and one of the following conditions is satisfied.
\begin{enumerate}
  \item $t(P)\in A\setminus V(y)$, and $T(P)$ is conflicting.\label{def:augment_path:0}
  \item $t(P)$ is contained in a half-integral cycle, but in no spokes (i.e., $t(P) \in V(y) \setminus V_1(y)$).\label{def:augment_path:1}
  \item $t(P)$ is contained in a spoke $S$, and the following two conditions are satisfied.\label{def:augment_path:2}
  		\begin{enumerate}
  		  \item $T(P)\not\equiv B(t(P))$.\label{def:augment_path:21}
  		  \item For any $P_j$ contained in the spoke $S$, $t(P)$ is contained in $F(P_j)$.\label{def:augment_path:22}
  		\end{enumerate}
\end{enumerate}
\end{definition}

\begin{definition}\label{def:augment_pair}
Let $P=P_1\circ\cdots\circ P_p$ and $Q=Q_1\circ\cdots\circ Q_{q}$ be a pair of $y$-alternating paths ending at the same vertex.
$(P,Q)$ is called a \emph{$y$-augmenting pair} if it satisfies all the following conditions.
\begin{enumerate}
  \item $P$ and $Q$ can be written as $P=R\circ P'$ and $Q=R\circ Q'$, respectively, for some walk $R$ such that $P'$ and $Q'$ share no edges.\label{def:augment_pair:1}
  \item $T(P)\not\equiv T(Q)$.\label{def:augment_pair:2}
  \item At least one of $p$ and $q$ is odd, and if both of $p$ and $q$ are odd, $t(P)\not\in V(y)$.\label{def:augment_pair:3}
  \item For any $P_i$ contained in an integral path, none of the $Q_j$'s are contained in $B(P_i)$ in the opposite direction.
  		Moreover, none of the $Q_j$'s are contained in $F(P_i)$ in the same direction if $B(P_{i-2})\circ P_{i-1}\not\equiv B(P_i)\circ P_i^{-1}$.
  		The symmetric condition holds for any $Q_i$ contained in an integral path.\label{def:augment_pair:4}
\end{enumerate}
\end{definition}

Note that for any $y$-alternating path $P$, the condition~\ref{def:augment_pair:4} is always satisfied against $(P,P)$.
Therefore, in testing the condition~\ref{def:augment_pair:4} against $(P, Q)$,
we only need to test pairs $(P_i, Q_j)$ such that at least one of $P_i$ or $Q_j$ is not contained in the common prefix
$R$.

Using a $y$-augmenting path/pair,
we can improve a basic $\F$-packing $y$ in linear time by the following lemmas,
whose proofs are given in Section~\ref{sec:hi:augmentation}.

\begin{lemma}\label{lem:augment_path}
Given a basic $\F$-packing $y$ and a $y$-augmenting path,
a basic $\F$-packing of size at least $|y|+\frac{1}{2}$ can be constructed in linear time.
\end{lemma}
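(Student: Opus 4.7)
The plan is to construct the improved basic $\F$-packing $y'$ explicitly by performing an alternation along $P = P_1 \circ \cdots \circ P_p$ and then dispatching on the three cases of Definition~\ref{def:augment_path}. For each even segment $P_i$, let $W$ be the integral path or spoke in $y$ containing $P_i$; the alternation replaces $W$ by splicing in the adjacent odd segments, namely by concatenating a prefix of $W$, the odd segment $P_{i-1}$, and a suffix of the walk containing $P_{i+1}$ (with symmetric variants when $W$ is a spoke). Condition~\ref{def:alternating:cycle_spoke} of Definition~\ref{def:alternating}, combined with Lemma~\ref{lem:def:imp} and Corollary~\ref{cor:def:three}, guarantees that every rerouted walk remains in $\F$, while conditions~\ref{def:alternating:distinct} and~\ref{def:alternating:once} keep the vertex/edge uses of the modified integral paths and spokes consistent. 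The alternation itself leaves $|y|$ unchanged; the gain will come from the walks placed at $T(P)$ and around $t(P)$.

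In Case~\ref{def:augment_path:0}, both $s(P)$ and $t(P)$ lie in $A \setminus V(y)$, and the alternated walk $T(P) \in \F$ becomes a new integral path of weight $1$, so $|y'| = |y| + 1$. In Case~\ref{def:augment_path:1}, $t(P)$ lies inside some segment $H_j$ of the half-integral cycle $C$ of a wheel $\mathcal{W}$ of odd degree $d$ but in no spoke; we attach $T(P)$ as a new spoke at $t(P)$ and split $H_j$ at $t(P)$ into two sub-segments $H_j^{(1)}$ and $H_j^{(2)}$. Applying Corollary~\ref{cor:def:three} to $S_j \circ H_j^{(1)}$, $T(P)$, and $S_{j+1} \circ (H_j^{(2)})^{-1}$ at $t(P)$ (and using that $S_j \circ H_j \circ S_{j+1}^{-1} \in \F$), we obtain a conflicting walk by pairing $T(P)$ with either $S_j$ through $H_j^{(1)}$ or $S_{j+1}$ through $H_j^{(2)}$; combining this pair with $(d-1)/2$ adjacent pairings $(S_i,S_{i+1})$ among the remaining spokes (each directly in $\F$ by Definition~\ref{def:wheel}) produces $(d+1)/2$ integral paths of weight~$1$ in place of the original wheel's $d/2$, so $|y'| = |y| + \tfrac{1}{2}$.

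In Case~\ref{def:augment_path:2}, $t(P)$ lies on a spoke $S$ of a wheel $\mathcal{W}$ of degree $d$. Condition~\ref{def:augment_path:22} ensures that the alternation does not modify $B(t(P))$, so we can form the new integral path $T(P) \circ B(t(P))^{-1}$, which is in $\F$ by condition~\ref{def:augment_path:21} and Lemma~\ref{lem:def:imp} and whose endpoints both lie in $A$. We then discard $\mathcal{W}$ and repack its remaining $d-1$ spokes into $(d-1)/2$ pairs $(S_i, S_{i+1})$ of adjacent spokes in the cyclic order of $\mathcal{W}$ (skipping $S$); each such pair contributes the walk $S_i \circ H_i \circ S_{i+1}^{-1} \in \F$ directly from Definition~\ref{def:wheel}, promoted to an integral path. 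The total gain is $+1 + (d-1)/2 - d/2 = +\tfrac{1}{2}$.

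The main obstacle will be verifying in detail that every walk produced above remains in $\F$ and that the vertex- and edge-disjointness required by a basic $\F$-packing is preserved; both rely on the shortcut-avoidance parts of condition~\ref{def:alternating:cycle_spoke} of Definition~\ref{def:alternating} together with Corollary~\ref{cor:def:three}. Linear running time is then immediate: splitting integral paths and spokes at given vertices, pairing spokes around a cycle, and updating the weighted-graph representation of $y$ all run in time proportional to $|E(P)|$ plus the total size of the touched wheels and integral paths, which sums to $O(m)$.
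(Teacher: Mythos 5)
Your proposal is correct and follows essentially the same route as the paper: your global ``alternation along $P$'' is exactly the paper's iterated simplification (Lemma~\ref{lem:alternating} and Corollary~\ref{cor:alternating_full}) reducing to a single-segment augmenting path, and your three-case wheel surgery, including the Corollary~\ref{cor:def:three} argument for choosing which side of the split cycle segment to pair with $T(P)$ and the $(d+1)/2$ versus $d/2$ counting, matches the paper's proof. The verification you defer (that each rerouted walk stays in $\F$, that the spoke containing $t(P)$ survives with $B_{y'}(t)\equiv B_y(t)$, and that disjointness is preserved) is precisely the content of Lemma~\ref{lem:alternating}, so your sketch identifies the right obstacle and the right tools.
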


\begin{lemma}\label{lem:augment_pair}
Given a basic $\F$-packing $y$ and a $y$-augmenting pair,
a basic $\F$-packing of size $|y|+\frac{1}{2}$ can be constructed in linear time.
\end{lemma}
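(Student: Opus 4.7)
The plan is to mimic Edmonds' blossom construction, with the common prefix $R$ playing the role of the ``stem'' and the non-shared portions $P'$, $Q'$ forming the blossom-like cycle. First I would alternate along $R$---as in the proof of Lemma~\ref{lem:augment_path} but truncated at $t(R)$---to reroute the integral paths and spokes it crosses; this preserves the size of the $\F$-packing, yielding a basic $\F$-packing $\tilde{y}$ with $|\tilde{y}| = |y|$. Then, using the leftover tails $P'$ and $Q'$, I would construct a new wheel whose half-integral cycle traces $P' \circ (Q')^{-1}$ and whose spokes come from three sources: (i) the alternated stem $R$, producing one spoke from $s(R) \in A \setminus V(y)$ to $t(R)$; (ii) the integral paths of $\tilde{y}$ crossed by the even segments of $P'$ and $Q'$, whose remaining halves reach back to $A$ and become spokes at the cycle-attachment points; and (iii) possibly one further spoke at the common endpoint $t(P) = t(Q)$, depending on which sub-case of condition~\ref{def:augment_pair:3} applies (singleton spoke if $t(P) \in A \setminus V(y)$, or an existing spoke of an incident wheel if $t(P) \in V_1(\tilde{y})$, or no additional spoke if $t(P)$ lies on a half-integral cycle).

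The key correctness requirements are the four conditions of Definition~\ref{def:wheel}. Internal disjointness of the spokes from each other and from the cycle follows from condition~\ref{def:augment_pair:4} of Definition~\ref{def:augment_pair} (the no-shortcut prohibition) combined with property~\ref{def:alternating:once} of Definition~\ref{def:alternating}. Oddness of the degree $d$ follows from a parity argument: each even segment of $P'$ or $Q'$ contributes one spoke, and the endpoint contributions are forced to adjust the total parity to odd precisely because condition~\ref{def:augment_pair:3} rules out both the ``even/even'' and the ``odd/odd with $t(P) \in V(y)$'' combinations. The main technical step, which I expect to be the principal obstacle, is verifying that each walk $S_i \circ H_i \circ S_{i+1}^{-1}$ lies in $\F$. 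For this I would proceed by induction around the cycle using Corollary~\ref{cor:def:three}: starting from the hypothesis $T(P) \not\equiv T(Q)$ (condition~\ref{def:augment_pair:2}) at the common endpoint, a conflict must be present between some adjacent pair of spoke-to-spoke walks; because $(\equiv)$ is an equivalence relation (Lemma~\ref{lem:def:imp}) and $d$ is odd, this conflict is forced to propagate to every adjacent pair around the cycle, so all $d$ walks in the wheel are conflicting.

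The net size change is exactly $\tfrac12$: the new wheel of degree $d$ contributes $d/2$ to $|y'|$, while $(d-1)/2$ existing integral paths are absorbed as spokes (losing $(d-1)/2$), with the remaining spoke furnished by a fresh $A$-endpoint or by the alternated stem; a direct count gives $|y'| - |y| = \tfrac12$. Linear-time constructibility is straightforward since every modification is a local traversal of $P$, $Q$, $R$, and the integral paths or spokes they cross, whose total length is $O(m)$; the bookkeeping for the rerouting reuses the graph-manipulation primitives already developed for Lemma~\ref{lem:augment_path}. The secondary obstacles are the case split driven by the parity of $(p,q)$ and the location of $t(P)$ relative to $V_1(\tilde{y})$ and the half-integral cycles of $\tilde{y}$, each of which requires a slightly different identification of the additional endpoint spoke, but in every case the same propagation-of-conflict argument goes through because of the extremal role of condition~\ref{def:augment_pair:4}.
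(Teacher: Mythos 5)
Your construction is the natural first attempt, and it is in fact the starting point of the paper's own argument, but it stops exactly where the real work begins. The naive plan---alternate along the common prefix and turn $P'\circ Q'^{-1}$ into the half-integral cycle of a new wheel, with spokes coming from the stem and the crossed integral paths---only produces a valid wheel when (i) no segment of $P'$ or $Q'$ lies in a spoke of an existing wheel and (ii) no integral path contains more than one of these segments; this restricted situation is precisely Lemma~\ref{lem:augment_pair:special} of the paper. Neither restriction is guaranteed by Definitions~\ref{def:alternating} and~\ref{def:augment_pair}: condition~\ref{def:alternating:once} explicitly permits vertices of $V_1(y)$ to be revisited, and condition~\ref{def:augment_pair:4} only forbids certain orientations of containment, not multiple segments on the same integral path nor segments inside spokes. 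When an even segment sits inside a spoke, the ``remaining half'' of that path does not run back to $A$ as a free integral path---it belongs to a wheel whose structure your surgery destroys; when two even segments of $P'$ (or one of $P'$ and one of $Q'$) lie on the same integral path, the would-be spokes overlap and the $(d-1)/2$ accounting breaks. The bulk of the paper's proof (Lemma~\ref{lem:augment_pair:shortcut}, Lemma~\ref{lem:augment_pair:detour}, and the careful two-stage application with the truncation index $r$ in the final argument) is devoted precisely to eliminating these configurations, either by restructuring the packing or by extracting a $y$-augmenting path instead. Your proposal contains no idea for handling them, and the paper itself flags this as the principal obstacle (``This approach does not work when $P'$ or $Q'$ intersects with spokes or intersects with the same integral path multiple times'').

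There is also a local flaw in your verification of the fourth wheel condition. Knowing $T(P)\not\equiv T(Q)$ gives membership in $\F$ of exactly one of the $d$ walks $S_i\circ H_i\circ S_{i+1}^{-1}$; Corollary~\ref{cor:def:three} together with oddness of $d$ does not let a conflict ``propagate'' around the cycle---from one conflicting adjacent pair nothing follows about the others. What actually makes the remaining $d-1$ walks conflicting are the conditions wired into Definition~\ref{def:alternating}, condition~\ref{def:alternating:cycle_spoke}: for even $i$ the walk $F(P_i)\circ P_i\circ B(P_i)^{-1}$ is the original integral path, hence in $\F$, and for odd $i$ the walk is conflicting because $B(P_{i-1})\circ P_i\not\equiv F(P_{i+1})$. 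Any correct proof must invoke these segment-by-segment facts (as the paper does), not a parity-propagation argument.
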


\begin{figure}[t]
  \centering
  \includegraphics[scale=0.6]{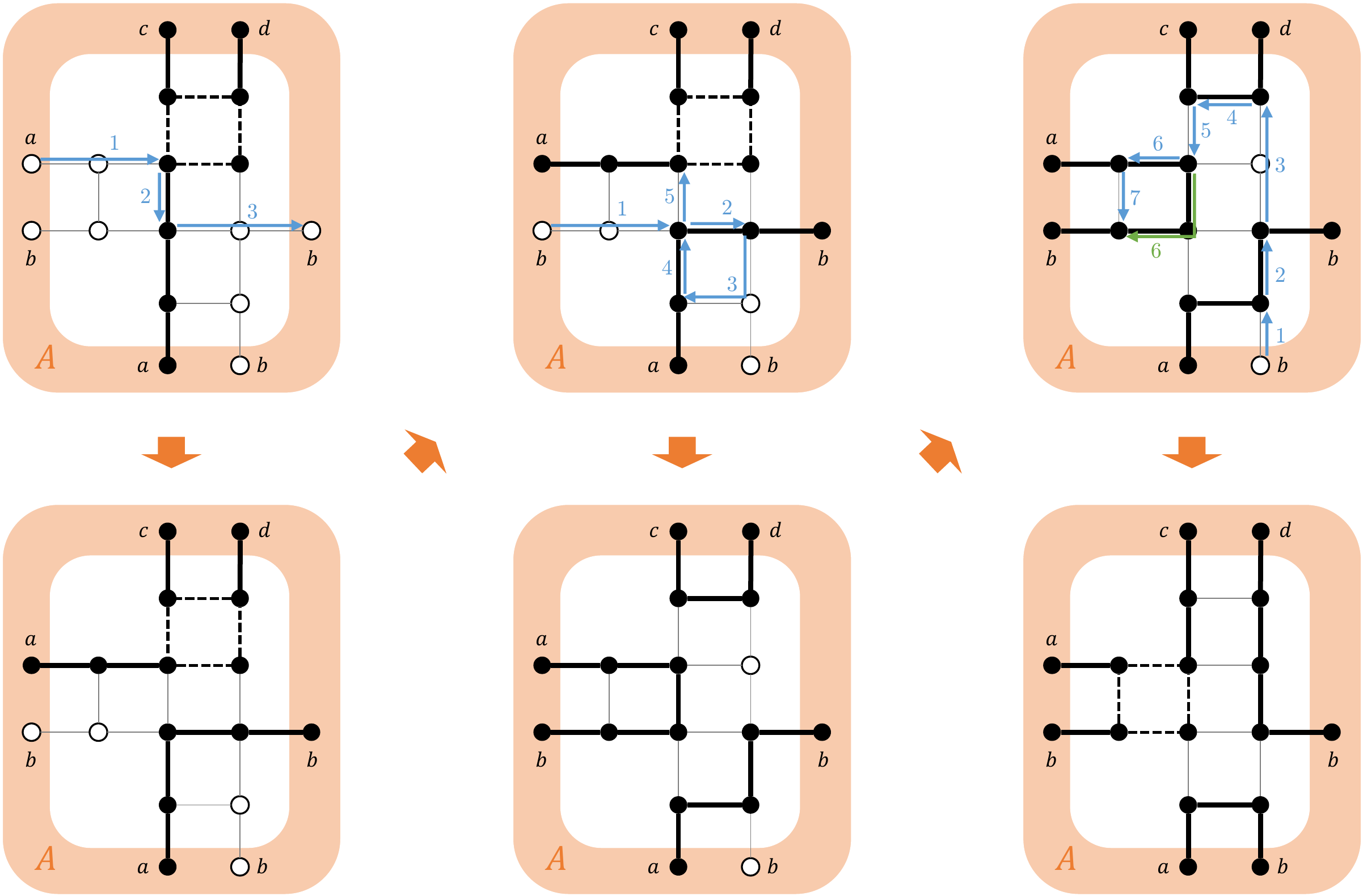}
  \caption{Example of augmentations. 
  Thick solid lines denote edges of weight $1$, dashed lines denote edges of weight $\frac{1}{2}$, and thin solid lines denote edges of weight $0$.
  Blue or green arrows denote segments of augmenting paths/pairs. A number $i$ beside an arrow means that it is the $i$-th segment.}
  \label{fig:example}
\end{figure}

Figure~\ref{fig:example} illustrates an example of augmentations.
We use the following example which corresponds to the internally-disjoint $A'$-paths with $A'=\{a,b,c,d\}$.
Every vertex $u$ has the same domain $D(u)=\{a,b,c,d\}$ and each edge $uv$ has an equality (identity permutation) constraint $\varphi(u)=\varphi(v)$.
An alphabet beside a vertex $u\in A$ shows the value $\varphi_A(u)$.
Any walks starting from $A$ is implicational, and a walk $W$ is conflicting if and only if $\varphi_A(s(W))\neq\varphi_A(t(W))$.
Roughly speaking, for an augmenting path $P$ connecting two distinct vertices in $A$ (i.e.,  when it satisfies the condition~\ref{def:augment_path:0} of Definition~\ref{def:augment_path}),
we take the symmetric difference of $E(y)$ and $E(P)$ (Figure~\ref{fig:example}, left).
For an augmenting path $P$ ending at a vertex on a wheel
(i.e., when it satisfies the condition~\ref{def:augment_path:1} or~\ref{def:augment_path:2} of Definition~\ref{def:augment_path}),
we take the symmetric difference and then decompose the wheel into integral paths (Figure~\ref{fig:example}, middle).
For an augmenting pair $(R\circ P',R\circ Q')$ with a common prefix $R$, we take the symmetric difference of $E(y)$ and $E(R)$, and then introduce a new wheel whose half-integral cycle is $P'\circ Q'^{-1}$ (Figure~\ref{fig:example}, right).
In addition to this basic augmentation strategy, we need several operations in ensuring that the obtained packing is a basic $\F$-packing.
See the proofs in Section~\ref{sec:hi:augmentation} for more detail.

\subsection{Finding Augmenting Path/Pair}\label{sec:hi:search}

\begin{algorithm}[t!]
\caption{Algorithm for computing a $y$-augmenting path/pair}
\label{alg:search}
\begin{algorithmic}[1]
\State Initialize $a(I)\gets 0$ and $b(I)\gets\ell$ for each integral path $I$ of length $\ell$.
\State Initialize $a(S)\gets 0$ for each spoke $S$.
\For{$s\in A\setminus V(y)$}\label{line:search:outer}
	\State $P(s)\gets (s)$ and $X\gets\{s\}$.\label{line:search:pickA}
	\While{$X\neq\emptyset$}\label{line:search:main}
		\State Pick a vertex $u\in X$ and remove $u$ from $X$.\label{line:search:pick}
		\For{$e=uv\in\delta(u)\setminus E(y)$ such that $T(P(u))\circ e$ is implicational}\label{line:search:loop}
			\If{$v$ is visited}
				\If{$T(P(u))\circ e\not\equiv T(P(v))$}
					\Return $(P(u)\circ e, P(v))$\label{line:search:pair}
				\EndIf
			\ElsIf{$v$ is contained in an integral path $I=(v_0,\ldots,v_\ell)$}
				\State Let $i$ be the index such that $v_i=v$.
				\If{$T(P(u))\circ e\not\equiv (v_\ell,\ldots,v_i)$}\label{line:search:cycle:begin1}
					\For{$j\in\{a(I),\ldots,i-1\}$}\label{line:search:cycle:loop1}
						\State $P(v_j)\gets (P(u)\circ e)\circ (v_i,\ldots,v_j)$ and $X\gets X\cup\{v_j\}$.\label{line:search:cycle}
					\EndFor
					\State $a(I)\gets i$.\label{line:search:cycle:end1}
				\EndIf
				\If{$T(P(u))\circ e\not\equiv (v_0,\ldots,v_i)$}\label{line:search:cycle:begin2}
					\For{$j\in\{i+1,\ldots,b(I)\}$}\label{line:search:cycle:loop2}
						\State $P(v_j)\gets (P(u)\circ e)\circ (v_i,\ldots,v_j)$ and $X\gets X\cup\{v_j\}$.
					\EndFor
					\State $b(I)\gets i$.\label{line:search:cycle:end2}
				\EndIf
			\ElsIf{$v$ is contained in a spoke $S=(v_0,\ldots,v_\ell)$}
				\State Let $i$ be the index such that $v_i=v$.
				\If{$T(P(u))\circ e\not\equiv (v_0,\ldots,v_i)$}
					\Return $P(u)\circ e$\label{line:search:path2}
				\EndIf
				\For{$j\in\{a(S),\ldots,i-1\}$}\label{line:search:spoke:begin}
					\State $P(v_j)\gets (P(u)\circ e)\circ (v_i,\ldots,v_j)$ and $X\gets X\cup\{v_j\}$.
				\EndFor
				\State $a(S)\gets i$.\label{line:search:spoke:end}
			\ElsIf{$v$ is contained in a half-integral cycle}
				\State \Return $P(u)\circ e$\label{line:search:path1}
			\ElsIf{$v\in A$}
				\If{$T(P(u))\circ e\not\equiv (v)$}
					\Return $P(u)\circ e$\label{line:search:path0}
				\EndIf
			\Else\enspace ($v\not\in A\cup V(y)$)
				\State $P(v)\gets P(u)\circ e$ and $X\gets X\cup\{v\}$.\label{line:search:odd}
			\EndIf
		\EndFor
	\EndWhile
\EndFor
\State \Return NO
\end{algorithmic}
\end{algorithm}

In this subsection, we propose an algorithm for computing a $y$-augmenting path or pair.
Algorithm~\ref{alg:search} describes a rough sketch of the algorithm.
We will describe later the details of an efficient implementation using the incremental-test oracle.
Note that in this subsection, we prove only the soundness of the algorithm (i.e., the algorithm never returns a path
or pair which is not $y$-augmenting).
The completeness of the algorithm (i.e., the algorithm always finds a $y$-augmenting path or pair if exists) follows
from Lemma~\ref{lem:cover} proved in the next subsection.

\begin{lemma}\label{lem:alg:soundness}
Any path returned by Algorithm~\ref{alg:search} is a $y$-augmenting path,
and any pair returned by the algorithm is a $y$-augmenting pair.
\end{lemma}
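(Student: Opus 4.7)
The plan is an induction on the execution of Algorithm~\ref{alg:search} maintaining the following invariant: whenever a vertex $u$ is inserted into $X$ with $P(u)$ just set, the walk $P(u)$ is a $y$-alternating path ending at $u$, and the monotone counters $a(I), b(I)$ for integral paths and $a(S)$ for spokes have been updated so that no subsequently created segment can sit inside the already-consumed $B(P_i)$ or $F(P_i)$ of a previously added even segment. The base case $P(s)=(s)$ on Line~\ref{line:search:pickA} is the zero-segment $y$-alternating path. For the inductive step we consider an edge $e=uv$ for which $T(P(u))\circ e$ is implicational (the guard on Line~\ref{line:search:loop} supplying the implicationality half of Definition~\ref{def:alternating}.\ref{def:alternating:cycle_spoke}.(a)) and split on where $v$ lies. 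When $v\notin A\cup V(y)$, Line~\ref{line:search:odd} extends the current odd segment by one edge. When $v$ lies on an integral path or spoke, the explicit inequivalence guards on Lines~\ref{line:search:cycle:begin1},~\ref{line:search:cycle:begin2}, and inside the spoke case are precisely the $\not\equiv$ conditions of Definition~\ref{def:alternating}.\ref{def:alternating:cycle} and~\ref{def:alternating:spoke} for the freshly inserted even segment, while the subsequent advance of $a(I), b(I), a(S)$ to $i$ rules out any future segment contained in the forbidden portion. Edge-distinctness (Definition~\ref{def:alternating}.\ref{def:alternating:distinct}) and the vertex-once condition outside $V_1(y)$ (Definition~\ref{def:alternating}.\ref{def:alternating:once}) follow because $e\notin E(y)$ and no vertex outside $V_1(y)$ is reassigned.

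With the invariant in hand, the three path-return lines directly realize the three clauses of Definition~\ref{def:augment_path}: Line~\ref{line:search:path0} returns when $v\in A\setminus V(y)$ and $T(P(u))\circ e\not\equiv (v)$, giving clause~\ref{def:augment_path:0}; Line~\ref{line:search:path1} returns when $v$ lies on a half-integral cycle but on no spoke (because the if-elseif chain handles spokes first), giving $v\in V(y)\setminus V_1(y)$ as in clause~\ref{def:augment_path:1}; and Line~\ref{line:search:path2} returns when $v$ lies on a spoke with $T(P(u))\circ e\not\equiv (v_0,\ldots,v_i) = B(v)$, giving clause~\ref{def:augment_path:21}, while clause~\ref{def:augment_path:22} follows from the monotonicity of $a(S)$, since any previously added even segment in the spoke consumed only the portion closer to $s(S)$, leaving $v$ inside $F$ of that segment.

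For the pair returned at Line~\ref{line:search:pair}, both $P(u)\circ e$ and $P(v)$ are $y$-alternating paths ending at $v$ by the invariant. Condition~\ref{def:augment_pair:1} of Definition~\ref{def:augment_pair} is supplied by the BFS-tree structure, which identifies a unique branching point and hence a common prefix of the two paths. Condition~\ref{def:augment_pair:2} is the explicit guard $T(P(u))\circ e\not\equiv T(P(v))$. Condition~\ref{def:augment_pair:3} holds because the parity of the last segment of $P(v)$ is determined by the line that defined $P(v)$, and both parities being odd forces $v$ to have been added via Line~\ref{line:search:odd}, hence $v\notin V(y)$. Condition~\ref{def:augment_pair:4} again reduces to the monotone counters, together with the branching logic.

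The main obstacle will be exactly this last condition: one must argue that when $P(u)\circ e$ and $P(v)$ diverge after the common prefix, the $a(I), b(I)$ updates performed inside one branch do not spuriously block an allowable segment in the other, and that the two sub-cases of the ``if $B(P_{i-2})\circ P_{i-1}\not\equiv B(P_i)\circ P_i^{-1}$'' refinement in Definition~\ref{def:alternating}.\ref{def:alternating:cycle} (which toggles whether $F(P_i)$ is also off-limits to later segments) match correctly on both sides. This requires a careful but routine case analysis on parity and direction of traversal, with no new combinatorial idea beyond the monotonicity of the counters.
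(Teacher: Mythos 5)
Your overall strategy---maintaining, by induction on the search, that every stored $P(v)$ is a $y$-alternating path and that the monotone counters certify the ``no later segment inside $B(P_i)$/$F(P_i)$'' conditions---is exactly the paper's strategy (its Lemma~\ref{lem:alg:invariants}), and your matching of the three path-return lines with the three clauses of Definition~\ref{def:augment_path} is sound. However, two genuine gaps remain. First, for the pair returned at line~\ref{line:search:pair} you assert that $P(u)\circ e$ is a $y$-alternating path ``by the invariant,'' but the invariant only covers $P(u)$: when $v$ is already visited, the edge $e=uv$ may a priori lie on $P(u)$ (traversed as $e^{-1}$), violating the edge-distinctness condition~\ref{def:alternating:distinct} of Definition~\ref{def:alternating}. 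Ruling this out is not automatic: the paper shows that if $e^{-1}$ were the last edge of $P(u)$ the guard $T(P(u))\circ e\not\equiv T(P(v))$ would fail, and if $e^{-1}$ were the last edge of an earlier odd segment $P_i$, then the common-prefix invariant forces $P(v)=P_1\circ\cdots\circ P_{i-1}\circ W$ with $W\circ e^{-1}=P_i$, and a chain of equivalences through condition~\ref{def:alternating:cycle_spoke} again contradicts the guard. Relatedly, your appeal to ``the BFS-tree structure'' for condition~\ref{def:augment_pair:1} ignores that the search is multi-source (the outer loop over $s\in A\setminus V(y)$); the case $s(P(u))\neq s(P(v))$ has to be excluded separately, as the paper does.

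Second, the step you defer as ``careful but routine case analysis \ldots\ no new combinatorial idea beyond the monotonicity of the counters'' is precisely where an extra idea is needed. Monotonicity of $a(I),b(I)$ alone does not verify condition~\ref{def:augment_pair:4} of Definition~\ref{def:augment_pair}, nor condition~\ref{def:alternating:cycle} for segments created later on the same integral path: one must know \emph{when} the stronger prohibition (that $F(P_i)$ is also off-limits) is in force. The paper encodes this in the strengthened invariant~\ref{lem:alg:invariants:cycle:segment}: whenever a stored segment $P_i$ on $I$ satisfies $B(P_{i-2})\circ P_{i-1}\not\equiv B(P_i)\circ P_i^{-1}$, then necessarily $s(P_i)=v_{a(I)}=v_{b(I)}$, i.e.\ both counters have collapsed to the same index, because in that situation \emph{both} inequivalence tests at lines~\ref{line:search:cycle:begin1} and~\ref{line:search:cycle:begin2} fire in the same iteration. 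Without formulating and propagating this stronger invariant, your induction does not close, so the deferred case analysis is not merely routine bookkeeping but the missing core of the argument.
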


In the algorithm, we hold two indices $a(I)$ and $b(I)$ for each integral path $I$ initialized as
$a(I)=0$ and $b(I)=\ell$, respectively, where $\ell$ is the length of $I$; an index $a(S)$ for each spoke $S$ initialized as
$a(S)=0$; a set of active vertices $X$ initialized as $X=\emptyset$; and a $y$-alternating path $P(v)$ for each \emph{visited} vertex~$v$,
where we call a vertex $v$ visited if $v$ has been ever pushed into $X$.
We define \emph{boundaries} as the set of vertices consisting of $v_{a(I)}$ and $v_{b(I)}$ for each integral path $I=(v_0,\ldots,v_\ell)$
and $v_{a(S)}$ for each spoke $S=(v_0,\ldots,v_\ell)$.
We preserve the following invariants during the execution of the algorithm.

\begin{lemma}\label{lem:alg:invariants}
The following invariants hold at any iteration of Algorithm~\ref{alg:search}.
\begin{enumerate}
  \item For any visited $v$, the following holds:\label{lem:alg:invariants:alt}
  \begin{enumerate}
    \item $P(v)$ is a $y$-alternating path,\label{lem:alg:invariants:alt:alt}
    \item $V(P(v))$ contains only visited vertices or boundaries, and\label{lem:alg:invariants:alt:visited}
    \item $P(v)$ has an odd number of segments if and only if $v\not\in A\cup V(y)$.\label{lem:alg:invariants:alt:odd}
  \end{enumerate}
  \item For any visited $u$ and $v$ with $s(P(u))=s(P(v))$, $(P(u), P(v))$ satisfies the condition~\ref{def:augment_pair:1} of $y$-augmenting pairs (Definition~\ref{def:augment_pair}).\label{lem:alg:invariants:pair}
  \item For any integral path $I=(v_0,\ldots,v_\ell)$, the following holds:\label{lem:alg:invariants:cycle}
  \begin{enumerate}
    \item $a(I)\leq b(I)$,\label{lem:alg:invariants:cycle:ab}
    \item $v_i$ is visited if and only if $i<a(I)$ or $b(I)<i$, and\label{lem:alg:invariants:cycle:visited}
	\item for any $P(v)$ with segments $P_1\circ\cdots\circ P_p$ and any segment $P_i$ contained in $I$, $P_i$ is contained in
	$(v_{a(I)},\ldots,v_0)$ or $(v_{b(I)},\ldots,v_\ell)$ in these directions;
	moreover, if $B(P_{i-2})\circ P_{i-1}\not\equiv B(P_i)\circ P_i^{-1}$ holds, then $s(P_i)=v_{a(I)}=v_{b(I)}$
	holds.\label{lem:alg:invariants:cycle:segment}
  \end{enumerate}
  \item For any spoke $S=(v_0,\ldots,v_\ell)$, the following holds:\label{lem:alg:invariants:spoke}
  \begin{enumerate}
    \item $v_i$ is visited if and only if $i<a(S)$, and
	\item for any $P(v)$, all the segments of $P(v)$ contained in $S$ are contained in $(v_{a(S)},\ldots,v_0)$.
  \end{enumerate}
\end{enumerate}
\end{lemma}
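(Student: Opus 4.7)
The plan is to prove invariants~\ref{lem:alg:invariants:alt}--\ref{lem:alg:invariants:spoke} simultaneously by induction on the sequence of basic operations performed by Algorithm~\ref{alg:search}, where a basic operation is either (i) the initialization of a fresh source $s\in A\setminus V(y)$ on lines~\ref{line:search:outer}--\ref{line:search:pickA}, or (ii) the processing of a single edge $uv$ inside the for-loop that starts at line~\ref{line:search:loop}. Before the first operation, nothing is visited, every $a(I)$ equals $0$ and $b(I)$ equals the length of $I$, and every $a(S)$ equals $0$, so all four invariants hold vacuously. For operation~(i), the zero-length path $P(s)=(s)$ is a $y$-alternating path with zero segments; since $s\in A\setminus V(y)$, invariant~\ref{lem:alg:invariants:alt:odd} holds, and the remaining invariants are unaffected.

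For operation~(ii), the easy sub-cases are when $v\notin A\cup V(y)$ (line~\ref{line:search:odd}), when $v$ is a new vertex in $A$ (line~\ref{line:search:path0}), or when $v$ is on a half-integral cycle (line~\ref{line:search:path1}): a direct check shows that $P(u)\circ e$ extends $P(u)$ by either continuing its current odd-indexed segment or starting a new one, using the implicational test on line~\ref{line:search:loop} to meet condition~\ref{def:alternating:cycle_spoke} of Definition~\ref{def:alternating}, and invariant~\ref{lem:alg:invariants:alt:visited} transfers from $P(u)$ to $P(v)$ because only $v$ is newly added. For the pair return on line~\ref{line:search:pair}, the common-prefix invariant~\ref{lem:alg:invariants:pair} combined with the failed equivalence test $T(P(u))\circ e\not\equiv T(P(v))$ supplies conditions~\ref{def:augment_pair:1} and~\ref{def:augment_pair:2} of Definition~\ref{def:augment_pair}; parity invariant~\ref{lem:alg:invariants:alt:odd} yields condition~\ref{def:augment_pair:3}, and the segment-location invariants~\ref{lem:alg:invariants:cycle:segment} and~\ref{lem:alg:invariants:spoke} yield condition~\ref{def:augment_pair:4}. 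The returns on lines~\ref{line:search:path0},~\ref{line:search:path1},~\ref{line:search:path2} similarly match one of the three disjuncts of Definition~\ref{def:augment_path}.

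The main obstacle is the integral-path branch of the inner loop (lines~\ref{line:search:cycle:begin1}--\ref{line:search:cycle:end2}), where the algorithm may extend $P$ along one or both of the two arms $(v_i,\ldots,v_0)$ and $(v_i,\ldots,v_\ell)$ of the hit integral path $I$, guarded independently by the two equivalence tests against $(v_\ell,\ldots,v_i)$ and $(v_0,\ldots,v_i)$. I would verify that (a) each newly defined $P(v_j)$ has its final segment sitting entirely in the arm declared unvisited by invariant~\ref{lem:alg:invariants:cycle:visited}; (b) the updates $a(I)\gets i$ and/or $b(I)\gets i$ preserve the order constraint~\ref{lem:alg:invariants:cycle:ab} because $v_i$ becomes the new common boundary and the visited-range invariant~\ref{lem:alg:invariants:cycle:visited} because exactly the formerly unvisited arms have been scanned; and (c) the subtle clause ``$B(P_{i-2})\circ P_{i-1}\not\equiv B(P_i)\circ P_i^{-1}$ implies $s(P_i)=v_{a(I)}=v_{b(I)}$'' in invariant~\ref{lem:alg:invariants:cycle:segment} matches precisely the case in which both equivalence tests succeed and both arms are extended from $v_i$; Corollary~\ref{cor:def:three} is used here to rule out inconsistent combinations of the two tests. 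The spoke branch (lines~\ref{line:search:spoke:begin}--\ref{line:search:spoke:end}) is an easier variant in which, by item~\ref{def:alternating:odd_even} of Definition~\ref{def:alternating}, only the arm from $v_{a(S)}$ towards $v_0$ is ever extended, so a single equivalence test and a single index update suffice. Collecting these case analyses closes the induction and establishes all four invariants throughout the execution of the algorithm.
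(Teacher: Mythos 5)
Your proposal is correct and follows essentially the same route as the paper: induction over the algorithm's operations, with the integral-path branch as the crux, using the guards on lines~\ref{line:search:cycle:begin1} and~\ref{line:search:cycle:begin2} to certify condition~\ref{def:alternating:cycle_spoke} for the new last segment and the collapse $a(I)=b(I)=i$ to justify the ``moreover'' clause of invariant~\ref{lem:alg:invariants:cycle:segment}, with invariants~\ref{lem:alg:invariants:cycle:segment} and~\ref{lem:alg:invariants:spoke} supplying condition~\ref{def:augment_pair:4}. One small correction: Corollary~\ref{cor:def:three} is not needed for the moreover clause (it is used later, in Lemma~\ref{lem:cover}); the paper's argument is simply that $B(P_{p-2})\circ P_{p-1}\not\equiv B(P_p)\circ P_p^{-1}$ is literally the guard of the opposite arm, so both $a(I)$ and $b(I)$ are set to $i$.
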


We now prove Lemma~\ref{lem:alg:soundness} and~\ref{lem:alg:invariants}.
All the invariants are clearly satisfied at the beginning.
In each iteration, we pick an arbitrary vertex $u$ from $X$ (line~\ref{line:search:pick}).
From the invariant~\ref{lem:alg:invariants:alt}, $P(u)$ is a $y$-alternating path.
We then iterate over the edges $e=uv\in\delta(u)\setminus E(y)$ such that $T(P(u))\circ e$ is implicational.

When $v$ is already visited and $T(P(u))\circ e\not\equiv T(P(v))$ holds, we return a pair $(P(u)\circ e,P(v))$
(line~\ref{line:search:pair}).
\begin{claim}
$(P(v),P(u)\circ e)$ returned at line~\ref{line:search:pair} is a $y$-augmenting pair.
\end{claim}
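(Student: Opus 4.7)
The plan is to verify each of the four conditions of Definition~\ref{def:augment_pair} for the pair $(P(v), P(u) \circ e)$, leveraging the invariants in Lemma~\ref{lem:alg:invariants}. Since both paths arise within the same outer iteration (line~\ref{line:search:outer}), they share a common starting vertex $s \in A \setminus V(y)$, and by invariant~\ref{lem:alg:invariants:alt:alt} both $P(u)$ and $P(v)$ are $y$-alternating. As a preliminary step I would show that $P(u) \circ e$ is itself $y$-alternating: a case split on whether $u \in A \cup V(y)$, guided by invariant~\ref{lem:alg:invariants:alt:odd}, shows that appending $e \in \delta(u) \setminus E(y)$ either opens a fresh single-edge odd segment (if $P(u)$ had an even number of segments) or extends the trailing odd one. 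In either case a short computation yields the identity $T(P(u) \circ e) = T(P(u)) \circ e$, so the loop guard ``$T(P(u)) \circ e$ is implicational'' is precisely condition~\ref{def:alternating:odd_even} of Definition~\ref{def:alternating} for the new or extended segment, while the edge- and vertex-repetition clauses transfer from invariants~\ref{lem:alg:invariants:alt:visited}, \ref{lem:alg:invariants:cycle:visited}, and~\ref{lem:alg:invariants:spoke}.

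With $P(u) \circ e$ established as alternating, condition~\ref{def:augment_pair:2} is immediate from the return test via the identity $T(P(u) \circ e) = T(P(u)) \circ e$ above. For condition~\ref{def:augment_pair:3}, note that the segment analysis also shows $P(u) \circ e$ always has an odd number of segments, while invariant~\ref{lem:alg:invariants:alt:odd} gives $P(v)$ an odd count iff $v \notin A \cup V(y)$; whenever both are odd we therefore have $t(P(u) \circ e) = t(P(v)) = v \notin V(y)$, as required. For condition~\ref{def:augment_pair:4}, invariant~\ref{lem:alg:invariants:cycle:segment} pins every integral-path segment of every alternating path maintained by the algorithm inside one of the outer parts $(v_{a(I)}, \ldots, v_0)$ or $(v_{b(I)}, \ldots, v_\ell)$, with the stronger $s(P_i) = v_{a(I)} = v_{b(I)}$ exactly in the case when the second clause of Definition~\ref{def:augment_pair}(4) is active; a direction-by-direction check on the endpoints of $B(P_i)$ and $F(P_i)$ translates this into the required non-containment statements between $P$-segments and $Q$-segments.

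The remaining condition~\ref{def:augment_pair:1} is where I expect the real work. Invariant~\ref{lem:alg:invariants:pair} provides a decomposition $P(u) = R \circ R_u$, $P(v) = R \circ R_v$ with $E(R_u) \cap E(R_v) = \emptyset$, and the natural plan is to promote it to the decomposition $P(v) = R \circ R_v$, $P(u) \circ e = R \circ (R_u \circ e)$. This succeeds iff $e \notin E(R_v)$, and ruling out the converse is the main obstacle. The plan is to argue that $e \in E(R_v)$ would force an extra occurrence of $u$ or $v$ on $R_v$ before its terminal one, which is incompatible with condition~\ref{def:alternating:once} of Definition~\ref{def:alternating} together with the outer-part constraints in invariants~\ref{lem:alg:invariants:cycle:segment} and~\ref{lem:alg:invariants:spoke}; more concretely, the placement of $e$ inside $R_v$ would contradict the bookkeeping of $a(I), b(I), a(S)$ that has $u$ still in the active set $X$ with the currently recorded $P(u)$. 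In the residual case where $R_v$'s tail already passes through $u$ along an integral-path or spoke segment, one lengthens $R$ to absorb the offending prefix and re-splits the suffixes, using the parity control of invariant~\ref{lem:alg:invariants:alt:odd} to keep the new $R_u, R_v$ consistent with the alternating-segment structure.
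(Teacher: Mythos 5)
There is a genuine gap at the heart of your argument: you never establish that $e\notin E(P(u))$, i.e.\ condition~\ref{def:alternating:distinct} of Definition~\ref{def:alternating} for $P(u)\circ e$. You assert that the ``edge- and vertex-repetition clauses transfer from invariants,'' but the invariants you cite only record which vertices are visited and where the boundaries $a(I),b(I),a(S)$ lie; they do not exclude the possibility that the undirected edge $e=uv$ already occurs on an odd segment of $P(u)$ (this is not ruled out by condition~\ref{def:alternating:once} either, since vertices of $V_1(y)$ may legitimately appear twice on an alternating path). The paper's proof spends its first, and essential, half on exactly this point: assuming $e\in E(P(u))$, the edge $e^{-1}$ must be the last edge of $P(u)$ or the last edge of some odd segment $P_i$, and in either case one derives, using invariant~\ref{lem:alg:invariants:pair} applied to $(P(u),P(v))$ and the equivalence calculus for tails, that $T(P(u))\circ e\equiv T(P(v))$ --- contradicting precisely the non-equivalence test that triggers the return at line~\ref{line:search:pair}. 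So the alternating-path property of $P(u)\circ e$ is not bookkeeping; it hinges on the returned test, which your proposal never invokes for this purpose. (The same fact also gives you condition~\ref{def:augment_pair:1} cleanly: once $e\notin E(P(u))$, and since $e$ can only ever enter a stored path at the moment it is processed from $u$ or $v$, one gets $e\notin E(P(v))$, so the old common prefix works; your fallback of ``lengthening $R$ and re-splitting'' is neither needed nor justified as stated.)

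A second, smaller gap is your opening assumption that $P(u)$ and $P(v)$ share the same start vertex because ``both paths arise within the same outer iteration.'' The visited marks, the tables, and the stored paths $P(\cdot)$ persist across iterations of the loop at line~\ref{line:search:outer}, so $v$ may have been visited from an earlier start vertex; invariant~\ref{lem:alg:invariants:pair} only applies when $s(P(u))=s(P(v))$. The paper rules out $s(P(u))\neq s(P(v))$ explicitly: in that case $v$ was already popped, which forces $P(u)=P(v)\circ e^{-1}$ and hence again $T(P(u))\circ e\equiv T(P(v))$, contradicting the test. Your verification of conditions~\ref{def:augment_pair:2}, \ref{def:augment_pair:3}, and~\ref{def:augment_pair:4} matches the paper, but without the two missing arguments above the proof does not go through.
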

\begin{proof}
First, we prove $e\not\in E(P(u))$, which implies that $P(u)\circ e$ is a $y$-alternating path.
If $e\in E(P(u))$, $e^{-1}$ is either the last edge of $P(u)$ or the last edge of an odd segment $P_i$ of
$P(u):=P_1\circ\cdots\circ P_p$.
In the former case, $T(P(u))\circ e=T(P(v))\circ e^{-1}\circ e\equiv T(P(v))$ holds, which is a contradiction.
In the latter case, from the invariant~\ref{lem:alg:invariants:pair} against $(P(u),P(v))$, we have $P(v)=P_1\circ\cdots\circ P_{i-1}\circ
W$, where $W$ is the path satisfying $W\circ e^{-1}=P_i$.
Because $P_{i+1}$ and $P_p$ are contained in the same integral path or spoke in the same direction,
we have $B(P_{i-1})\circ P_i\equiv B(P_{i+1})\circ P_{i+1}^{-1}$ from the condition~\ref{def:alternating:cycle_spoke} of $y$-alternating
paths.
Therefore, we have $T(P(u))\circ e=B(P_p)\circ e=B(P_{i+1})\circ P_{i+1}^{-1}\circ e\equiv B(P_{i-1})\circ P_i\circ e=B(P_{i-1})\circ
W\circ e^{-1}\circ e=T(P(v))\circ e^{-1}\circ e\equiv T(P(v))$, which is a contradiction.

Next, we prove that $(P(v),P(u)\circ e)$ satisfies all the conditions of the $y$-augmenting pairs (Definition~\ref{def:augment_pair}).
Suppose that $s(P(u))\neq s(P(v))$ holds.
Then, the vertex $v$ has already been popped from $X$ and, thus, we have $P(u)=P(v)\circ e^{-1}$, which is a contradiction.
Therefore, from the invariant~\ref{lem:alg:invariants:pair}, the condition~\ref{def:augment_pair:1} is satisfied.
The condition~\ref{def:augment_pair:2} is satisfied because $T(P(v))\not\equiv T(P(u))\circ e=T(P(u)\circ e)$.
Meanwhile, the condition~\ref{def:augment_pair:3} is satisfied because the number of segments of $P(u)\circ e$ is odd and because, from the invariant~\ref{lem:alg:invariants:alt},
$v\in V(y)$ implies that the number of segments of $P(v)$ is even.
The condition~\ref{def:augment_pair:4} follows from the invariant~\ref{lem:alg:invariants:cycle}.
\end{proof}

When $v$ is not visited, we consider five cases:
(Case 1) $v$ is contained in an integral path;
(Case 2) $v$ is contained in a spoke;
(Case 3) $v$ is contained in a half-integral cycle (but not in spokes);
(Case 4) $v\in A$ (but not in $V(y)$); or
(Case 5) $v\not\in A\cup V(y)$.
Note that $e$ is not contained in $E(P(u))$ when $v$ is not visited.
Therefore, $P(u)\circ e$ is a $y$-alternating path.
In case 3, we return $P(u)\circ e$, which is a $y$-augmenting path satisfying the condition~\ref{def:augment_path:1} of Definition~\ref{def:augment_path}.
In case 4, if $T(P(u))\circ e\not\equiv (v)$ holds, we return $P(u)\circ e$, which is a $y$-augmenting path satisfying the condition~\ref{def:augment_path:0} of Definition~\ref{def:augment_path}, and otherwise, we do nothing.
In case 5, we set $P(v)\gets P(u)\circ e$ and insert $v$ into $X$ (line~\ref{line:search:odd}).
Because $e\not\in E(y)$ holds, all the invariants are clearly preserved.
Finally, we consider the remaining two cases.

\paragraph{(Case 1)}
Let $I=(v_0,\ldots,v_\ell)$ be the integral path containing $v$ and let $i$ be the index such that $v_i=v$.
Because $v$ is not visited, $a(I)\leq i\leq b(I)$ holds from the invariant~\ref{lem:alg:invariants:cycle}.
If $T(P(u))\circ e\not\equiv (v_\ell,\ldots,v_i)$ holds,
we set $P(v_j)\gets (P(u)\circ e)\circ (v_i,\ldots,v_j)$ and insert $v_j$ into $X$ for each index $j\in\{a(I),\ldots,i-1\}$,
and then update $a(I)\gets i$ (lines~\ref{line:search:cycle:begin1}--\ref{line:search:cycle:end1}).
Similarly, if $T(P(u))\circ e\not\equiv (v_0,\ldots,v_i)$ holds,
we set $P(v_j)\gets (P(u)\circ e)\circ (v_i,\ldots,v_j)$ and insert $v_j$ into $X$ for each index $j\in\{i+1,\ldots,b(I)\}$,
and then update $b(I)\gets i$ (lines~\ref{line:search:cycle:begin2}--\ref{line:search:cycle:end2}).
\begin{claim}
The lines~\ref{line:search:cycle:begin1}--\ref{line:search:cycle:end2} preserve all the invariants.
\end{claim}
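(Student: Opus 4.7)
The plan is to verify each invariant of Lemma~\ref{lem:alg:invariants} separately, exploiting the symmetry between the two update blocks at lines~\ref{line:search:cycle:begin1}--\ref{line:search:cycle:end1} and lines~\ref{line:search:cycle:begin2}--\ref{line:search:cycle:end2}; I would treat only the first and observe that the second follows by the same argument after swapping the two halves of $I$. Invariant~\ref{lem:alg:invariants:spoke} is untouched since neither block modifies any spoke data.

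For invariant~\ref{lem:alg:invariants:alt}, I would first argue that $P(u)\circ e$ is a $y$-alternating path with an odd number of segments, using that $e\notin E(y)$ and $T(P(u))\circ e$ is implicational (the loop filter at line~\ref{line:search:loop}) together with the inductive hypothesis on $P(u)$. I would then verify that appending $(v_i,\ldots,v_j)$ yields a $y$-alternating path with an even number of segments. Regarding $I$ in the direction of this new segment (as $(v_\ell,\ldots,v_0)$), the required condition~\ref{def:alternating:cycle} for the last segment $P_i=(v_i,\ldots,v_j)$ reads $T(P(u))\circ e\not\equiv F(P_i)=(v_\ell,\ldots,v_i)$, which is precisely the guard at line~\ref{line:search:cycle:begin1}; the ``no $P_{j'}$ contained in $B(P_i)$ for $j'>i$'' requirement is preserved because after setting $a(I)\gets i$ no vertex of the left half of $I$ remains unvisited, so no later segment produced in this execution can lie in $(v_{i-1},\ldots,v_0)$. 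The parity condition~\ref{lem:alg:invariants:alt:odd} holds because each new $P(v_j)$ ends with an even-indexed segment and $v_j\in V_1(y)$, while invariant~\ref{lem:alg:invariants:alt:visited} holds because $v_i$ becomes the new boundary $v_{a(I)}$ and the vertices $v_{i-1},\ldots,v_{a(I)_{\mathrm{old}}}$ are exactly the freshly visited ones.

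For invariant~\ref{lem:alg:invariants:pair}, two newly set $P(v_j)$ and $P(v_{j'})$ share the common prefix $(P(u)\circ e)\circ(v_i,\ldots,v_{\max(j,j')})$ and then diverge without sharing edges. Between a new $P(v_j)$ and a previously visited $P(v')$ with the same starting vertex, the inductive invariant on $(P(u),P(v'))$ supplies a common prefix $R$ for which the remainders are edge-disjoint, and I would check that the extra edges added to form $P(v_j)$ do not appear in $P(v')$: the edge $e\notin E(y)$ cannot occur in $P(v')$'s suffix past $R$ since that suffix is edge-disjoint from $P(u)$'s and no earlier scan has traversed $e$, and the edges $v_{j'}v_{j'+1}$ of $I$ with $j'<i$ are incident only to previously unvisited vertices (by the inductive form of invariant~\ref{lem:alg:invariants:cycle:visited}), so they are absent from $P(v')$. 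Invariants~\ref{lem:alg:invariants:cycle:ab} and~\ref{lem:alg:invariants:cycle:visited} follow directly from the update rule, while invariant~\ref{lem:alg:invariants:cycle:segment} for previously assigned $P(v)$'s holds because their segments, already contained in $(v_{a(I)_{\mathrm{old}}},\ldots,v_0)\cup(v_{b(I)_{\mathrm{old}}},\ldots,v_\ell)$, lie \emph{a fortiori} within the narrower new ranges.

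The subtle point, and where I expect most of the effort, is the ``moreover'' clause of invariant~\ref{lem:alg:invariants:cycle:segment} (and the analogous clause in condition~\ref{def:alternating:cycle}): when both guards fire, every freshly assigned $P(v_j)$ has its last segment $P_i=(v_i,\ldots,v_j)$ satisfying $B(P_{i-2})\circ P_{i-1}\not\equiv B(P_i)\circ P_i^{-1}$, and the invariant then demands $s(P_i)=v_{a(I)}=v_{b(I)}$. This is indeed ensured because double firing collapses both indices to $i=s(P_i)$; carefully coordinating the update rule with the shortcut-less alternating-path condition is exactly what makes the two simultaneous sub-blocks indispensable, and I expect the bookkeeping there to be the main obstacle.
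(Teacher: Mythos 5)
There is a genuine gap in your verification that each newly assigned $P(v_j)$ is a $y$-alternating path. You check condition~\ref{def:alternating:cycle_spoke} of Definition~\ref{def:alternating} only for the \emph{last} segment $(v_i,\ldots,v_j)$, where the guard $T(P(u))\circ e\not\equiv (v_\ell,\ldots,v_i)=F(P_p)$ indeed does the work. But appending a segment contained in $I$ can also break condition~\ref{def:alternating:cycle} for an \emph{earlier} segment $P_k$ of the same path that lies in $I$: the new segment must not be contained in $B(P_k)$, and if $B(P_{k-2})\circ P_{k-1}\not\equiv B(P_k)\circ P_k^{-1}$ it must not be contained in $F(P_k)$ either. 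This is exactly where the paper invokes invariant~\ref{lem:alg:invariants:cycle:segment}: since $a(I)\leq j<i\leq b(I)$ forces $v_{a(I)}\neq v_{b(I)}$, the ``moreover'' alternative of that invariant cannot hold for $P_k$, so $B(P_{k-2})\circ P_{k-1}\equiv B(P_k)\circ P_k^{-1}$, and $P_k\subseteq (v_{a(I)},\ldots,v_0)$ or $(v_{b(I)},\ldots,v_\ell)$ places the new segment $(v_i,\ldots,v_j)$ inside $F(P_k)$ rather than $B(P_k)$. Your proposal never performs this check; the sentence about ``no later segment produced in this execution can lie in $(v_{i-1},\ldots,v_0)$'' concerns the new segment's own (vacuous) clause and misreads ``$P_j$ with $j>i$'' as referring to future iterations of the algorithm rather than later segments of the same stored path, so it does not substitute for the missing argument. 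Without it, the shortcut-freeness of $P(v_j)$ — the very property invariant~\ref{lem:alg:invariants:cycle:segment} exists to guarantee — is not established.

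The rest of your proposal is broadly in line with the paper: the items you label as routine (parity, visited/boundary bookkeeping, the common-prefix invariant, $a(I)\leq b(I)$, and monotonicity of the ranges — note they become \emph{larger}, not narrower, which is why containment is preserved) are dismissed as ``clearly preserved'' in the paper as well, and your fourth paragraph captures the correct mechanism for the ``moreover'' clause of invariant~\ref{lem:alg:invariants:cycle:segment} for fresh paths, namely that non-equivalence of $T(P(u))\circ e$ with $(v_0,\ldots,v_i)$, respectively $(v_\ell,\ldots,v_i)$, is precisely the firing of the other guard, so both indices collapse to $i$. Repairing the proof therefore only requires adding the missing verification for earlier segments in $I$ along the lines above.
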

\begin{proof}
The invariants~\ref{lem:alg:invariants:alt:visited},~\ref{lem:alg:invariants:alt:odd},~\ref{lem:alg:invariants:pair},
\ref{lem:alg:invariants:cycle:ab},~\ref{lem:alg:invariants:cycle:visited}, and~\ref{lem:alg:invariants:spoke} are clearly preserved.
Let $a(I)$ and $b(I)$ denote the indices before the updates.
For an index $j\in\{a(I),\ldots,i-1\}$, let $P_1\circ\cdots\circ P_p$ be the segments of $P(v_j)$.
If $B(P_{p-2})\circ P_{p-1}\not\equiv B(P_p)\circ P_p^{-1}$ holds, we have $T(P(u))\circ e=B(P_{p-2})\circ P_{p-1}\not\equiv B(P_p)\circ
P_p^{-1}=(v_0,\ldots,v_i)$.
Therefore, both $a(I)$ and $b(I)$ are updated to $i$.
The same argument applies to the case of $j\in\{i+1,\ldots,b(I)\}$.
Thus, the invariant~\ref{lem:alg:invariants:cycle:segment} is preserved.

Finally, we prove the invariant~\ref{lem:alg:invariants:alt:alt} by showing that $P(v_j)$ is a $y$-alternating path for any newly
visited~$v_j$.
The conditions~\ref{def:alternating:distinct}--\ref{def:alternating:odd_even} of $y$-alternating paths (Definition~\ref{def:alternating}) are clearly satisfied.
Consider the case of $j\in\{a(I),\ldots,i-1\}$. The proof for the case of $j\in\{i+1,\ldots,b(I)\}$ is symmetric.
Let $P_1\circ\cdots\circ P_p$ be the segments of $P(v_j)$.
Because $B(P_{p-2})\circ P_{p-1}=T(P(u))\circ e\not\equiv (v_\ell,\ldots,v_i)=F(P_p)$ holds, the condition~\ref{def:alternating:cycle_spoke} is satisfied for $p$.
Let $P_k$ be a segment with $k<p$ contained in the same integral path $I$.
From the invariant~\ref{lem:alg:invariants:cycle:segment} and because $a(I)\leq j<i\leq b(I)$, $P_p=(v_i,\ldots,v_j)$ is contained in
$F(P_k)$, and $B(P_{k-2})\circ P_{k-1}\equiv B(P_k)\circ P_k^{-1}$ holds.
Therefore, the condition~\ref{def:alternating:cycle_spoke} is satisfied for $k$.
\end{proof}

\paragraph{(Case 2)}
Let $S=(v_0,\ldots,v_\ell)$ be the spoke containing $v$ and let $i$ be the index such that $v_i=v$.
Because $v$ is not visited, $a(S)\leq i$ holds from the invariant~\ref{lem:alg:invariants:spoke}.
If $T(P(u))\circ e\not\equiv (v_0,\ldots,v_i)$ holds, we return a path $P(u)\circ e$ (line~\ref{line:search:path2});
otherwise, we set $P(v_j)\gets (P(u)\circ e)\circ (v_i,\ldots,v_j)$ and insert $v_j$ into $X$ for each index $j\in\{a(S),\ldots,i-1\}$,
and then update $a(S)\gets i$ (lines~\ref{line:search:spoke:begin}--\ref{line:search:spoke:end}).
\begin{claim}
$P(u)\circ e$ returned at line~\ref{line:search:path2} is a $y$-augmenting path.
\end{claim}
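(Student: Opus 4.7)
The plan is to verify that the returned path $P(u)\circ e$ satisfies condition~\ref{def:augment_path:2} of Definition~\ref{def:augment_path}, since $t(P(u)\circ e)=v=v_i$ lies on the spoke $S=(v_0,\ldots,v_\ell)$. First I note that $P(u)\circ e$ is a $y$-alternating path: $P(u)$ is $y$-alternating by invariant~\ref{lem:alg:invariants:alt:alt}, the edge $e\in\delta(u)\setminus E(y)$ is not in $E(P(u))$ (as remarked in the preceding text, since $v$ is not visited), and the line~\ref{line:search:loop} guard ensures $T(P(u))\circ e$ is implicational, which supplies the required condition~\ref{def:alternating:cycle_spoke}(a) for the last (odd) segment—either $e$ forms a new segment or it extends the last odd segment depending on whether $u\in A\cup V(y)$.

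Next I would count segments and compute the tail. By invariant~\ref{lem:alg:invariants:alt:odd}, $P(u)$ has an even number of segments precisely when $u\in A\cup V(y)$, and in either scenario $P(u)\circ e$ has an odd number of segments, so the $p$-odd requirement of Definition~\ref{def:augment_path} holds. Moreover $T(P(u)\circ e)=T(P(u))\circ e$: if $u\in A\cup V(y)$ then $e$ is a fresh odd segment and $T(P(u))=B(P_p)$, while if $u\notin A\cup V(y)$ the last odd segment is extended and $T(P(u))=B(P_{p-1})\circ P_p$ becomes $B(P_{p-1})\circ (P_p\circ e)$.

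Condition~\ref{def:augment_path:2}(a) is now immediate: by definition $B(t(P(u)\circ e))=B(v_i)=(v_0,\ldots,v_i)$, and the algorithm reached line~\ref{line:search:path2} exactly because $T(P(u))\circ e\not\equiv(v_0,\ldots,v_i)$, so $T(P(u)\circ e)\not\equiv B(t(P(u)\circ e))$. For condition~\ref{def:augment_path:2}(b) I would use invariant~\ref{lem:alg:invariants:spoke}. Any segment $P_j$ of $P(u)\circ e$ lying in $S$ must come from $P(u)$ (the new edge $e\notin E(y)$ cannot belong to an even segment), so by the invariant $P_j$ sits inside $(v_{a(S)},\ldots,v_0)$ in that direction; its starting vertex is some $v_{j'}$ with $j'\le a(S)$. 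Since $v_i$ is not visited, the invariant also gives $i\ge a(S)$, hence $j'\le a(S)\le i\le\ell$, which places $v_i$ on $(v_\ell,v_{\ell-1},\ldots,v_{j'})=F(s(P_j))=F(P_j)$, as required.

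The only mild obstacle is bookkeeping the two kinds of extension (new vs.\ extended last segment) together with the direction conventions for $F$ and $B$ on a spoke; once the spoke invariant and the line~\ref{line:search:path2} guard are in place, both parts of condition~\ref{def:augment_path:2} fall out essentially by unwinding definitions.
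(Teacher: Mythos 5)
Your proposal is correct and follows essentially the same route as the paper: condition~\ref{def:augment_path:21} comes directly from the guard $T(P(u))\circ e\not\equiv(v_0,\ldots,v_i)=B(t(P(u)\circ e))$ at line~\ref{line:search:path2}, and condition~\ref{def:augment_path:22} from invariant~\ref{lem:alg:invariants:spoke} together with $i\geq a(S)$. The extra bookkeeping you do (alternating-path verification, segment parity, and $T(P(u)\circ e)=T(P(u))\circ e$) is handled in the paper by the surrounding text rather than inside the claim's proof, but it is consistent with it.
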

\begin{proof}
Because $T(P(u))\circ e\not\equiv (v_0,\ldots,v_i)=B(t(P(u)\circ e))$ holds, the condition~\ref{def:augment_path:21} of the $y$-augment paths
(Definition~\ref{def:augment_path}) is satisfied.
From the invariant~\ref{lem:alg:invariants:spoke}, for any segment $P_k$ of $P(u)\circ e$ contained in the spoke $S$, $t(P(u)\circ
e)=v_i$ is contained in $F(P_k)$.
Therefore, the condition~\ref{def:augment_path:22} is satisfied.
\end{proof}

\begin{claim}
Lines~\ref{line:search:spoke:begin}--\ref{line:search:spoke:end} preserve all the invariants.
\end{claim}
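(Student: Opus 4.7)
The plan is to mirror the preceding Case~1 argument (for integral paths), reducing the task to two substantive checks: that each newly defined $P(v_j)$ is a $y$-alternating path (invariant~\ref{lem:alg:invariants:alt:alt}), and that the updated value $a(S)\gets i$ preserves invariant~\ref{lem:alg:invariants:spoke}. Invariants~\ref{lem:alg:invariants:alt:visited}, \ref{lem:alg:invariants:alt:odd}, \ref{lem:alg:invariants:pair}, and~\ref{lem:alg:invariants:cycle} are immediate: each pushed $v_j$ lies in $V_1(y)$, so an even segment count is required and the construction supplies it; the only new vertices in $V(P(v_j))$ are the newly pushed $v_{a(S)_{\mathrm{old}}},\ldots,v_{i-1}$ together with the new boundary $v_i$; no integral-path index is modified; and the common-prefix condition is preserved because every newly defined $P(v_j)$ is an extension of $P(u)\circ e$. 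The visitedness clause of invariant~\ref{lem:alg:invariants:spoke} is also immediate from $a(S)_{\mathrm{new}}=i$.

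To verify that $P(v_j)=(P(u)\circ e)\circ(v_i,\ldots,v_j)$ satisfies Definition~\ref{def:alternating}, I would check the conditions in turn. Edge-distinctness (\ref{def:alternating:distinct}) relies on two facts: $e\notin E(y)\cup E(P(u))$ because $v$ is unvisited, and for any earlier even segment $P_k$ of $P(u)$ lying in the spoke $S$ the old invariant~\ref{lem:alg:invariants:spoke} gives $P_k\subseteq(v_{a(S)_{\mathrm{old}}},\ldots,v_0)$, so its edge-index range sits strictly below $a(S)_{\mathrm{old}}$, whereas the new segment's edge-indices range from $j\geq a(S)_{\mathrm{old}}$ upward to $i-1$; the two ranges are disjoint. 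Conditions~\ref{def:alternating:once}--\ref{def:alternating:odd_even} are straightforward since the new segment lies entirely in $V_1(y)$ and runs in the direction opposite to $S$. The crucial point is condition~\ref{def:alternating:spoke} at the terminal segment $P_p:=(v_i,\ldots,v_j)$: using $B_y(v_j)=(v_0,\ldots,v_j)$ one computes $B(P_p)\circ P_p^{-1}=(v_0,\ldots,v_i)$, so the required equivalence $B(P_{p-2})\circ P_{p-1}\equiv B(P_p)\circ P_p^{-1}$ simplifies to $T(P(u))\circ e\equiv(v_0,\ldots,v_i)$, which is precisely the guard of the else-branch leading into this block (had it failed, the algorithm would have returned at line~\ref{line:search:path2}). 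The no-shortcut clause is vacuous at the terminal $P_p$, and for every earlier even spoke-segment $P_k$ with $k<p$, the vertex $v_i\in V(P_p)$ lies outside $B(P_k)\subseteq(v_0,\ldots,v_{a(S)_{\mathrm{old}}})$, whence $P_p\not\subseteq B(P_k)$.

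For the remaining clause of invariant~\ref{lem:alg:invariants:spoke}, each preexisting spoke-segment remains in $(v_{a(S)_{\mathrm{old}}},\ldots,v_0)\subseteq(v_i,\ldots,v_0)$ since $a(S)_{\mathrm{old}}\leq i$, and each freshly introduced $(v_i,\ldots,v_j)$ with $j\in\{a(S)_{\mathrm{old}},\ldots,i-1\}$ is by construction a subpath of $(v_i,\ldots,v_0)$ in the direction opposite to $S$. The main---though minor---obstacle throughout is keeping straight the two parsings of the spoke (as the walk $(v_0,\ldots,v_\ell)$ versus the required reversed form $(v_{a(S)},\ldots,v_0)$) and verifying that the edge-index ranges of old and new spoke-segments never overlap; once these are settled, the argument proceeds exactly as in the integral-path case.
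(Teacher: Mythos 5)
Your proof is correct and follows essentially the same route as the paper's: the only substantive checks are that each new $P(v_j)$ is a $y$-alternating path, with condition~\ref{def:alternating:spoke} at the last segment supplied by the failed test guarding line~\ref{line:search:path2} (i.e., $T(P(u))\circ e\equiv(v_0,\ldots,v_i)=B(P_p)\circ P_p^{-1}$), and earlier same-spoke segments handled via the old invariant~\ref{lem:alg:invariants:spoke}. You merely spell out in more detail the parts the paper dismisses as "clearly preserved."
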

\begin{proof}
All the invariants, except for the~\ref{lem:alg:invariants:alt:alt}, are clearly preserved.
We prove the invariant~\ref{lem:alg:invariants:alt:alt} by showing that $P(v_j)$ is a $y$-alternating path for any newly
visited~$v_j$.
The conditions~\ref{def:alternating:distinct}--\ref{def:alternating:odd_even} of the $y$-alternating paths (Definition~\ref{def:alternating}) are clearly satisfied.
Let $P_1\circ\cdots\circ P_p$ be the segments of $P(v_j)$.
Because $B(P_{p-2})\circ P_{p-1}=T(P(u)\circ e)\equiv (v_0,\ldots,v_i)=B(P_p)\circ P_p^{-1}$ holds, the
condition~\ref{def:alternating:cycle_spoke} is satisfied for $p$.
Let $P_k$ be a segment with $k<p$ contained in the same spoke $S$.
From the invariant~\ref{lem:alg:invariants:spoke}, $P_k$ must be contained in $(v_{a(S)},\ldots,v_0)$.
Therefore, $P_p$ is contained in $F(P_k)$.
Thus, the condition~\ref{def:alternating:cycle_spoke} is satisfied for $k$.
\end{proof}

\subsubsection*{Implementation Detail}

We exploit the incremental-test oracle as follows to achieve the linear-time complexity.
For each integral path $I=(v_0,\ldots,v_\ell)$ and for each index $i\in\{0,\ldots,\ell\}$,
we precompute $\A^*((v_0,\ldots,v_i))$ and $\A^*((v_\ell,\ldots,v_i))$.
For each spoke $S=(v_0,\ldots,v_\ell)$ and for each index $i\in\{0,\ldots,\ell\}$,
we precompute $\A^*((v_0,\ldots,v_i))$.
These precomputations can be done in $O(n T)$ time, where $T$ is the running time of the oracle.

For each visited vertex $v$, instead of explicitly holding $P(v)$, we hold (1) $\prev(v)$ that represents the edge picked at
line~\ref{line:search:loop} in the iteration when $P(v)$ is assigned and (2) $\tail(v):=\A^*(T(P(v)))$.
When the algorithm finds a $y$-augmenting path or pair, we restore $P(v)$ using the table of $\prev(v)$ in $O(n)$ time.
For an edge $e=uv\in\delta(u)\setminus
E(y)$, we have $\A^*(T(P(u))\circ e)=\A(\tail(u), e)$, which can be computed in $O(T)$ time.
For any index $j$ picked at line~\ref{line:search:cycle:loop1} or~\ref{line:search:spoke:begin},
we have $\tail(v_j)=\A^*(T(P(v_j)))=\A^*((v_0,\ldots,v_j))$, which has been precomputed.
For any index $j$ picked at line~\ref{line:search:cycle:loop2},
we have $\tail(v_j)=\A^*(T(P(v_j)))=\A^*((v_\ell,\ldots,v_j))$, which has been precomputed.
When $v\not\in A\cup V(y)$, we have $\tail(v)=\A(\tail(u),e)$.
Thus, we can compute $\tail(w)$ in $O(T)$ time for each newly visited vertex~$w$.

Next, we show that each of the equivalence tests in the algorithm can be done in $O(T)$ time.
Let $e=uv$ be the edge picked at line~\ref{line:search:loop}.
First, we consider the equivalence test at line~\ref{line:search:pair}.
If $\prev(v)=e^{-1}$ holds, we have $T(P(u))\circ e\equiv T(P(v))$;
otherwise, $(T(P(u))\circ e,T(P(v)))$ forms a single-branching pair from the invariant~\ref{lem:alg:invariants:pair},
for which we can test $T(P(u))\circ e\not\equiv T(P(v))$ by asking $\T(\A(\tail(u),e),\tail(v))$ in $O(T)$ time.
Next, we consider the equivalence test at line~\ref{line:search:cycle:begin1}.
Because $(T(P(u)\circ e),(v_\ell,\ldots,v_i))$ forms a single-branching pair, we can test
$T(P(u))\circ e\not\equiv (v_\ell,\ldots,v_i)$ by asking $\T(\A(\tail(u),e),\A^*((v_\ell,\ldots,v_i)))$.
This can be done in $O(T)$ time because $\A^*((v_\ell,\ldots,v_i))$ has been precomputed.
The same argument applies to the equivalence tests at lines~\ref{line:search:cycle:begin2} and~\ref{line:search:path2}.
The equivalence test at line~\ref{line:search:path0} can be done in $O(T)$ time by asking $\T(\A(\tail(u),e),\mathcal{I}(v))$.

Now, we have shown that $\tail(v)$ can be computed in $O(T)$ time for each visited vertex $v$,
and that all the equivalence tests can be done in $O(T)$ time for each edge picked at line~\ref{line:search:loop}.
Because each vertex is pushed into $X$ at most once and because each edge is processed at most twice (in both directions), we
obtain the following lemma.

\begin{lemma}\label{lem:search}
Given an $O(T)$-time incremental-test oracle, Algorithm~\ref{alg:search} runs in $O(m T)$ time.
\end{lemma}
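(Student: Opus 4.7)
The plan is to split the analysis into three parts — a one-time preprocessing phase, the cost of the main double loop (lines~\ref{line:search:outer}--\ref{line:search:main}), and the final reconstruction of the output — and show each is $O(mT)$.

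First I would handle the preprocessing. Walking once along each integral path and each spoke of the current basic $\F$-packing, I would compute both $\A^*((v_0,\ldots,v_i))$ and (for integral paths) $\A^*((v_\ell,\ldots,v_i))$ incrementally, using one $\A$-call per step. Since every vertex belongs to at most one integral path or spoke of $y$, this totals $O(nT)$, which is absorbed into $O(mT)$ after discarding isolated vertices (which can play no role in any conflicting walk).

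Next I would charge the main-loop work to edge traversals. The outer while-loop at line~\ref{line:search:main} pops each visited vertex at most once, and for a popped $u$ the for-loop at line~\ref{line:search:loop} iterates through $\delta(u)\setminus E(y)$, so the total number of body executions is at most $2m$. Each execution makes one call $\A(\tail(u),e)$ to represent $T(P(u))\circ e$ in $O(T)$, and a constant number of equivalence tests of the form $T(P(u))\circ e \not\equiv W$, where $W$ is either $T(P(v))$ (line~\ref{line:search:pair}), a precomputed prefix or suffix of an integral path or spoke (lines~\ref{line:search:cycle:begin1}, \ref{line:search:cycle:begin2}, \ref{line:search:path2}), or $\mathcal{I}(v)$ (line~\ref{line:search:path0}). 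To answer each test by a single $\T$-call in $O(T)$, I must verify that $(T(P(u))\circ e, W)$ is a single-branching pair; for $W = T(P(v))$ this is exactly invariant~\ref{lem:alg:invariants:pair}, and in the remaining cases it follows from invariants~\ref{lem:alg:invariants:cycle} and~\ref{lem:alg:invariants:spoke}, which tightly control how the tail of an alternating path may touch an integral path or spoke.

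The one delicate accounting concern is the inner for-loops at lines~\ref{line:search:cycle:loop1}, \ref{line:search:cycle:loop2}, and~\ref{line:search:spoke:begin}, whose trip counts can be as large as the length of an integral path or spoke. However, each trip pushes a distinct, previously-unvisited vertex into $X$: by invariants~\ref{lem:alg:invariants:cycle:visited} and~\ref{lem:alg:invariants:spoke}, the strictly-between-boundary vertices are unvisited before the trip, and the boundary pointers $a(I), b(I), a(S)$ move only monotonically inward. Hence the total work across all these inner loops is bounded by the number of insertions into $X$, which is $O(n)$. Finally, reconstructing the returned path or pair from the $\prev$-pointers in $O(n)$ time completes the algorithm, yielding the claimed $O(mT)$ bound. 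The main obstacle I foresee is rigorously justifying the single-branching property of each equivalence test, because without it we could not reduce each test to a single $\T$-call of cost $O(T)$, and the linear running time would collapse.
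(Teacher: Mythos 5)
Your proposal is correct and follows essentially the same route as the paper's proof: precomputing the oracle states along each integral path and spoke, maintaining $\prev(v)$ and $\tail(v)$ instead of explicit paths, answering each equivalence test with a single $\T$-call justified by the single-branching property (via the invariants of Lemma~\ref{lem:alg:invariants}), and charging all loop iterations to edge traversals and insertions into $X$. The only detail the paper adds that you omit is the degenerate case $\prev(v)=e^{-1}$ at line~\ref{line:search:pair}, where the pair need not be single-branching and the equivalence $T(P(u))\circ e\equiv T(P(v))$ is concluded directly without an oracle call.
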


\subsection{Constructing Half-Integral $\F$-Cover}\label{sec:hi:cover}

In this subsection,
we prove that if Algorithm~\ref{alg:search} fails to find a $y$-augmenting
path or pair, we can construct a half-integral $\F$-cover of the same size as follows.
Let $a$ and $b$ be the tables used in Algorithm~\ref{alg:search}.
First, we initialize $x(v)\gets 0$ for all $v\in V$.
For each integral path $I=(v_0,\ldots,v_\ell)$,
we set $x(v_{a(I)})\gets\frac{1}{2}$ and $x(v_{b(I)})\gets \frac{1}{2}$ if $a(I) \neq b(I)$,
and set $x(v_{a(I)})\gets 1$ if $a(I)=b(I)$.
We set $x(v_{a(S)})\gets\frac{1}{2}$ for each spoke $S=(v_0,\ldots,v_\ell)$.
From the construction, we have $|x|=|y|$.
We show that the function $x$ is an $\F$-cover.

\begin{lemma}\label{lem:cover}
If Algorithm~\ref{alg:search} fails to find a $y$-augmenting path or pair, the function $x$ constructed as
above is a half-integral $\F$-cover of size $|y|$.
\end{lemma}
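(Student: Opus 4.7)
The plan is to proceed by contradiction: assume there exists $W \in \F$ with $x(V(W)) < 1$ and derive that Algorithm~\ref{alg:search} should have produced a $y$-augmenting path or pair, contradicting its unsuccessful termination. First, I will verify $|x| = |y|$ by direct accounting: every integral path contributes $1$ to $|x|$ (either two $\frac{1}{2}$-boundaries or a single $1$-boundary), matching its unit weight in $y$; every wheel of degree $d$ contributes $\frac{d}{2}$ to $|x|$ via its $d$ spoke boundaries, matching the total weight $\frac{d}{2}$ of its $d$ constituent weight-$\frac{1}{2}$ conflicting walks.

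For the covering property, I will traverse $W=(u_0,u_1,\ldots,u_k)$ and inductively maintain the invariant that either weight at least $1$ has already been accumulated in $x$ along $u_0,\ldots,u_i$, or the vertex $u_i$ is visited with a $y$-alternating path $Q_i$ such that $T(Q_i)$ is equivalent to the implicational prefix $(u_0,\ldots,u_i)$ of $W$. At $i=0$, $u_0=s(W)\in A$ combined with $x(u_0)=0$ forces either $u_0\in A\setminus V(y)$ (so $P(u_0)=(u_0)$ is created at line~\ref{line:search:pickA}) or $u_0$ to be an $A$-endpoint of an integral path or spoke at a visited index by invariants~\ref{lem:alg:invariants:cycle:visited} and~\ref{lem:alg:invariants:spoke}; either way the invariant holds. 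The inductive step splits on the edge $u_iu_{i+1}$: if it lies outside $E(y)$, the inner loop at line~\ref{line:search:loop} must have processed it from $u_i$ using the oracle, so either the algorithm would have already returned an augmenting path/pair or $u_{i+1}$ was pushed into $X$ with an extended alternating path; if it lies inside an integral path or spoke, both endpoints belong to the same visited region by invariants~\ref{lem:alg:invariants:cycle:visited} and~\ref{lem:alg:invariants:spoke}, and the alternating path extends within that region; edges on a half-integral cycle outside any spoke cannot occur, since the first time the search touched such a vertex it would have returned at line~\ref{line:search:path1}.

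Upon arrival at $u_k=t(W)\in A$, because $W$ is conflicting, Lemma~\ref{lem:def:imp} forces the constructed tail to disagree with $\varphi_A(t(W))$, so the branching at that step yields either a $y$-augmenting path of type~\ref{def:augment_path:0} (when $t(W)\in A\setminus V(y)$, via the test at line~\ref{line:search:path0}) or, using invariant~\ref{lem:alg:invariants:pair}, a $y$-augmenting pair with the stored $P(t(W))$ (via line~\ref{line:search:pair}). The main obstacle will be the case $x(V(W))=\frac{1}{2}$, where $W$ crosses a single boundary vertex $w$ of $x$: here I must relay the alternating path across $w$ by splicing in the unvisited portion of the integral path or spoke containing $w$, applying Corollary~\ref{cor:def:three} to reconcile the equivalence classes on the two sides of $w$, and carefully verifying that the resulting path/pair actually satisfies the structural requirements of Definitions~\ref{def:augment_path} and~\ref{def:augment_pair}, in particular clauses~\ref{def:augment_path:22} and~\ref{def:augment_pair:4} governing how segments may be nested inside the same integral path or spoke. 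Once this relay is established, the two halves of $W$ join into a genuine augmenting object that the algorithm must have detected.
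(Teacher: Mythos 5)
Your accounting of $|x|=|y|$ and your inductive invariant for the zero-weight portion of $W$ are sound and essentially reproduce the paper's key helper statement (Lemma~\ref{lem:cover:s-path}): every implicational walk lying in $x^{-1}(0)$ ends at a visited vertex whose stored tail is equivalent to it, so a conflicting walk with $x(V(W))=0$ would have triggered the test at line~\ref{line:search:path0} or line~\ref{line:search:pair}. The genuine gap is in your handling of the case $x(V(W))=\frac{1}{2}$. Your plan is to ``relay'' an alternating path across the boundary vertex $w$ by splicing in the unvisited part of the integral path or spoke containing $w$, assemble an augmenting pair from the two halves of $W$, and conclude that the algorithm ``must have detected'' it. This is circular: the algorithm does not detect every augmenting path/pair that exists --- completeness of the search is exactly what Lemma~\ref{lem:cover} is establishing --- so existence of an augmenting object gives no contradiction unless you point to a concrete test, at a concrete line, that would have fired during the actual run given the final tables.

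Moreover, the relay itself cannot be carried out, and seeing why reveals the correct argument. Say $w=v_i$ lies on $I=(v_0,\ldots,v_\ell)$ with (w.l.o.g.) $a(I)=i<b(I)$, and split $W=Q_1\circ Q_2^{-1}$ at $w$. When the predecessor $u$ of $w$ along $Q_1$ was popped, the tests at lines~\ref{line:search:cycle:begin1}--\ref{line:search:cycle:end2} (or line~\ref{line:search:path2} for a spoke) were evaluated with $T(P(u))\circ uw\equiv Q_1$; since the run neither set $b(I)$ to $i$ (that would force $x(w)=1$) nor returned, the only possibility is $Q_1\equiv(v_0,\ldots,v_i)$, and the same reasoning (plus the separate treatment of the case where the last edge of $Q_1$ lies inside $I$, and of the degenerate case $Q_1=(v_0)$) gives $Q_2\equiv(v_0,\ldots,v_i)$ as well. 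These forced equivalences are precisely what block your proposed traversal into the unvisited side of $I$ and the construction of a pair with non-equivalent tails; at the same time they already finish the proof, because $Q_1\equiv Q_2$ contradicts $W=Q_1\circ Q_2^{-1}\in\F$ by Lemma~\ref{lem:def:imp}. So the half-weight case should be closed by this equivalence-forcing argument (the paper's route), not by appealing to Corollary~\ref{cor:def:three} or to detection of a spliced augmenting pair; your concern about verifying conditions~\ref{def:augment_path:22} and~\ref{def:augment_pair:4} then evaporates, since no new augmenting object needs to be built.
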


We use the following lemma to prove this lemma.
\begin{lemma}\label{lem:cover:s-path}
For any implicational walk $Q$ with $x(V(Q))=0$, the vertex $t(Q)$ is visited and $T(P(t(Q)))\equiv Q$ holds.
\end{lemma}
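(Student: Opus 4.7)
The plan is to proceed by induction on the length of $Q$. In the base case $|Q|=0$, we have $Q=(s)$ with $s\in A$ and $x(s)=0$. Since every boundary vertex $v_{a(I)}, v_{b(I)}, v_{a(S)}$ receives $x$-value at least $\frac{1}{2}$ in the construction preceding the lemma, $s$ cannot be a boundary. If $s\in A\setminus V(y)$, the outer loop at line~\ref{line:search:outer} visits $s$ with $P(s)=(s)$, so $T(P(s))=(s)\equiv Q$. Otherwise $s$ lies on an integral path or spoke, and since $s$ is not a boundary, invariant~\ref{lem:alg:invariants:cycle} (or its spoke analogue~\ref{lem:alg:invariants:spoke}) forces $a(\cdot)$ or $b(\cdot)$ to have moved past the index of $s$, so $s$ is visited; inspecting the formula $P(v_j)\gets(P(u)\circ e)\circ(v_i,\ldots,v_j)$ at the triggering iteration with $j$ set to the index of $s$ gives $T(P(s))=(s)$, confirming $T(P(s))\equiv Q$.

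For the inductive step, write $Q=Q'\circ e$ with $e=uv$. By the induction hypothesis applied to $Q'$, $u$ is visited and $T(P(u))\equiv Q'$, so $T(P(u))\circ e\equiv Q$ is implicational. I split on whether $e\in E(y)$. When $e\notin E(y)$, the edge $e$ is examined in the main loop at line~\ref{line:search:loop} when $u$ is popped. If $v$ is already visited at that moment, then failure to return a $y$-augmenting pair at line~\ref{line:search:pair} forces $T(P(u))\circ e\equiv T(P(v))$; if $v\in A\setminus V(y)$, the outer loop makes $T(P(v))=(v)$ and failure to return at line~\ref{line:search:path0} gives $T(P(u))\circ e\equiv(v)$; if $v$ is on a half-integral cycle but no spoke, line~\ref{line:search:path1} would have returned a $y$-augmenting path, contradicting our standing assumption. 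The remaining subcase is $v=v_i$ on an integral path or spoke: the two tests $T(P(u))\circ e\not\equiv(v_\ell,\ldots,v_i)$ and $T(P(u))\circ e\not\equiv(v_0,\ldots,v_i)$ cannot both fail (since $(v_0,\ldots,v_\ell)$ is conflicting, by Lemma~\ref{lem:def:imp} the two side-walks are inequivalent), so at least one boundary update fires; then $x(v_i)=0$ combined with invariant~\ref{lem:alg:invariants:cycle} (resp.~\ref{lem:alg:invariants:spoke}) forces a further update that visits $v_i$, and reading $T(P(v_i))$ off the assignment formula plus Lemma~\ref{lem:def:imp} yields $T(P(v_i))\equiv T(P(u))\circ e\equiv Q$.

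The harder case is $e\in E(y)$, where the algorithm never traverses $e$ at line~\ref{line:search:loop} and the visit to $v$ must be inferred indirectly. The plan is to locate the triggering iteration that first visited $u$ via a bulk update along the integral path or spoke containing $e$; invariant~\ref{lem:alg:invariants:cycle:segment} (or~\ref{lem:alg:invariants:spoke}) together with $x(V(Q))=0$---which prevents $Q$ from crossing any boundary---implies that the swept interval that absorbed $u$ also contains $v$, so $P(v)$ is assigned at the same iteration as $P(u)$, of the form $(P(u')\circ e')\circ(v_{i'},\ldots,v)$. Computing $T(P(v))$ and $T(P(u))$ from this common triggering event and using Lemma~\ref{lem:def:imp} together with the fact that the integral path (or enclosing spoke-walk) lies in $\F$ then yields $T(P(v))\equiv T(P(u))\circ e\equiv Q$. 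The main obstacle will be precisely this bookkeeping in the $e\in E(y)$ case: the freedom of $Q$ to use edges of $E(y)$ has to be reconciled with the rigid alternating structure of the algorithm's $P(\cdot)$ without violating the invariants; the key leverage is that $x(V(Q))=0$ confines $Q$ to contiguous pieces of integral paths or spokes already absorbed by a single bulk update, which should let the tails be read off directly from the assignment formula.
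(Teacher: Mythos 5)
Your overall skeleton (induction on the length of $Q$, case split on whether the new edge lies in $E(y)$, and the treatment of the $e\notin E(y)$ subcases via non-return at lines~\ref{line:search:pair}, \ref{line:search:path0}, \ref{line:search:path1} and the boundary updates) matches the paper's proof. The genuine problem is in your case $e\in E(y)$. Your crucial claim there --- that the bulk update which absorbed $u$ must also have absorbed $v$, because ``$x(V(Q))=0$ prevents $Q$ from crossing any boundary'' --- is false: the $x$-values record only the \emph{final} positions of $a(I)$, $b(I)$, $a(S)$, while during the run the boundaries move, so $Q$ may freely pass through vertices that were boundaries at intermediate stages. Concretely, with $u=v_j$ and $v=v_{j+1}$ on an integral path $I$, the sweep that visits $u$ may set $a(I)$ to exactly $j+1$; then $v$ is the (temporary) boundary, is not visited in that iteration, and only a later sweep triggered by a different edge probe visits it. Hence $P(v)$ need not be assigned at the same iteration as $P(u)$, and your plan of reading both tails off a ``common triggering event'' does not go through. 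The conclusion survives by a simpler route, which is the paper's: by the invariants, every visited vertex $v_j$ of an integral path or spoke has $T(P(v_j))$ equal to the prefix $(v_0,\ldots,v_j)$ or $(v_\ell,\ldots,v_j)$ from the end on whose side it was swept, regardless of \emph{when} it was swept; monotonicity of $a(I),b(I)$ with $a(I)\leq b(I)$ forces adjacent visited vertices to be swept from the same end (in the direction $i=j+1$ the very fact that $v$ is visited follows from $x(v)=0$, since otherwise $a(I)=j+1$ at termination and $x(v)\geq\frac{1}{2}$); so the two tails differ by exactly the edge $uv$ and $T(P(v))\equiv T(P(u))\circ uv\equiv Q$ is immediate, with no need to identify the sweeps.

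Two smaller points in your $e\notin E(y)$ case. First, you lump spokes with integral paths and justify ``the two tests cannot both fail'' by saying $(v_0,\ldots,v_\ell)$ is conflicting; for a spoke this is wrong (a spoke is not in $\F$, and $(v_\ell,\ldots,v_i)$ need not even be implicational). The algorithm performs only the single test at line~\ref{line:search:path2}, and non-return there already gives $T(P(u))\circ e\equiv(v_0,\ldots,v_i)$, after which the sweep fires unconditionally --- so the spoke case is easier, but your stated reason does not apply to it. Second, for integral paths you should make explicit that if \emph{both} tests fire then $a(I)=b(I)=i$ permanently and $x(v)=1$, contradicting $x(v)=0$; it is this exclusion that guarantees the side from which $v$ is later swept is exactly the side-walk to which $T(P(u))\circ e$ is equivalent, which is what the final equivalence rests on.
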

\begin{proof}
We prove the lemma by induction on the length of $Q$.
For any vertex $s\in A$ with $x(s)=0$, $s$ is visited, and $T(P(s))=(s)$ holds.
Therefore, the statement holds when the length of $Q$ is zero.
Let $Q=Q'\circ uv$ be an implicational walk with $x(V(Q))=0$.
From the induction hypothesis, $u$ is visited, and $T(P(u))\equiv Q'$ holds.
Therefore, we have $T(P(u))\circ uv\equiv Q'\circ uv=Q$.
If $uv\not\in E(y)$ and $v$ is contained in a half-integral cycle but in no spokes, the algorithm returns a $y$-augmenting path at line~\ref{line:search:path1}.
We consider the following four cases: (Case 1) $v\not\in V(y)$;
(Case 2) $uv\not\in E(y)$ and $v$ is contained in an integral path;
(Case 3) $uv\not\in E(y)$ and $v$ is contained in a spoke; or
(Case 4) $uv\in E(y)$.

\paragraph{(Case 1)}
Consider the iteration when $u$ is picked at line~\ref{line:search:pick}.
If $v$ is already visited, $T(P(u))\circ uv\equiv T(P(v))$ holds because, otherwise, the algorithm returns a $y$-augmenting pair.
Therefore, we have $T(P(v))\equiv T(P(u))\circ uv\equiv Q$.
If $v$ is not visited yet and $v\in A$, we have $T(P(u))\circ uv\equiv (v)$.
Therefore, we have $T(P(v))=(v)\equiv T(P(u))\circ uv\equiv Q$.
If $v$ is not visited yet and $v\not\in A$, then $v$ becomes visited, and we have $T(P(v))=T(P(u))\circ uv\equiv Q$.

\paragraph{(Case 2)}
Let $I=(v_0,\ldots,v_\ell)$ be the integral path containing $v$ and let $i$ be the index such that $v_i=v$.
Consider the iteration when $u$ is picked at line~\ref{line:search:pick}.
Let $a(I)$ and $b(I)$ denote the values at the beginning of this iteration (hence, $x(v_{a(I)})$ and $x(v_{b(I)})$ might be zero).
If $i<a(I)$ holds, then $v$ is already visited, and $T(P(v))=(v_0,\ldots,v_i)$ holds.
If $T(P(u))\circ uv \not\equiv (v_0,\ldots,v_i)$ additionally holds, the algorithm returns a $y$-augmenting pair at
line~\ref{line:search:pair}.
Therefore, we have $T(P(v))=(v_0,\ldots,v_i)\equiv T(P(u))\circ uv\equiv Q$.
The same argument applies to the case of $i>b(I)$.

We now consider the remaining case that $a(I)\leq i\leq b(I)$ holds.
Because $(v_0,\ldots,v_i)\not\equiv (v_\ell,\ldots,v_i)$ holds, at least one of
$T(P(u))\circ uv\not\equiv (v_0,\ldots,v_i)$ or $T(P(u))\circ uv\not\equiv (v_\ell,\ldots,v_i)$ holds.
If both hold, $a(I)$ and $b(I)$ are both set to $i$ after this iteration. Therefore, we have $x(v)=1$,
which is a contradiction.
If $T(P(u))\circ uv\equiv (v_0,\ldots,v_i)$ holds, $a(I)$ is set to $i$ after this
iteration.
Because $x(v)=0$, $a(I)$ must be greater than $i$ at the end of the algorithm.
Therefore, we have $T(P(v))=(v_0,\ldots,v_i)\equiv T(P(u))\circ uv\equiv Q$.
The same argument applies to the case of $T(P(u))\circ uv\equiv (v_\ell,\ldots,v_i)$.

\paragraph{(Case 3)}
Let $S=(v_0,\ldots,v_\ell)$ be the spoke containing $v$ and $i$ be the index such that $v_i=v$.
Consider the iteration when $u$ is picked at line~\ref{line:search:pick}.
Let $a(S)$ denote the value at the beginning of this iteration.
If $i<a(S)$ holds, then $v$ is already visited, and $T(P(v))=(v_0,\ldots,v_i)$ holds.
If $T(P(u))\circ uv\not\equiv (v_0,\ldots,v_i)$ additionally holds, the algorithm returns a $y$-augmenting pair at
line~\ref{line:search:pair}.
Therefore, we have $T(P(v))=(v_0,\ldots,v_i)\equiv T(P(u))\circ uv\equiv Q$.

We now consider the remaining case that $a(I)\leq i$ holds.
If $T(P(u))\circ uv\not\equiv (v_0,\ldots,v_i)$ holds, the algorithm returns a $y$-augmenting path at
line~\ref{line:search:path2}.
Therefore, $T(P(u))\circ uv\equiv (v_0,\ldots,v_i)$ holds and $a(I)$ is set to $i$ after this iteration.
Because $x(v)=0$, $a(I)$ must be greater than $i$ at the end of the algorithm.
Therefore, we have $T(P(v))=(v_0,\ldots,v_i)\equiv T(P(u))\circ uv\equiv Q$.

\paragraph{(Case 4)}
Note that in this case, $uv$ must be contained in an integral path or a spoke because the vertices contained in half-integral cycles are never visited.
Let $(v_0,\ldots,v_\ell)$ be the integral path or the spoke containing $uv$ and let $i$ and $j$ be the indices such that
$v_i=v$ and $v_j=u$.
Because $u$ is visited, $T(P(u))$ is either $(v_0,\ldots,v_j)$ or $(v_\ell,\ldots,v_j)$, and w.l.o.g., we can
assume the former case.
If $i=j-1$, then $v$ is also visited, and we have $T(P(v))=(v_0,\ldots,v_{j-1})\equiv (v_0,\ldots,v_j)\circ uv=T(P(u))\circ uv\equiv Q$.
If $i=j+1$ and $v$ is not visited, then we have $x(v)\geq \frac{1}{2}$, which is a contradiction.
Therefore, $v$ is also visited, and we have $T(P(v))=(v_0,\ldots,v_{j+1})=(v_0,\ldots,v_j)\circ uv=T(P(u))\circ uv\equiv Q$.\qedhere

\end{proof}

\begin{proof}[Proof of Lemma~\ref{lem:cover}]
Suppose that there exists a conflicting walk $W\in\F$ with $x(V(W))<1$.
Because $x(v)\in\{0,\frac{1}{2},1\}$ holds, $x(V(W))$ is either $0$ or $\frac{1}{2}$.
If $x(V(W))=0$, by applying Lemma~\ref{lem:cover:s-path} against $W$, we have $W\equiv T(P(t(W)))=(t(W))$, which is a
contradiction.
If $x(V(W))=\frac{1}{2}$, let $v$ be the vertex with $x(v)=\frac{1}{2}$ on $W$.
By splitting $W$ at $v$, we obtain two implicational walks $Q_1$ and $Q_2$ such that $W=Q_1\circ Q_2^{-1}$ and $t(Q_1)=t(Q_2)=v$.
Because $x(v)=\frac{1}{2}$, $v$ is contained in an integral path or a spoke $I=(v_0,\ldots,v_\ell)$.
Let $i$ be the index such that $v_i=v$ and, w.l.o.g., we can assume that $a(I)=i$.
If $I$ is an integral path, $b(I)$ must be greater than $i$ because, otherwise, we have $x(v)=1$.

If the length of $Q_1$ is zero, $i$ is either $0$ or $\ell$.
In the latter case, $I$ is an integral path, and $\ell=i=a(I)<b(I)\leq\ell$, which is a contradiction.
Thus, we have $Q_1=(v_0)$.

If $Q_1$ has a positive length, let $Q_1=R\circ uv$.
By applying Lemma~\ref{lem:cover:s-path} against $R$, $u$ is visited, and $Q_1=R\circ uv\equiv T(P(u))\circ uv$ holds.
If $uv$ is contained in $I$, $u$ must be $v_{i-1}$ because $v_{i+1}$ is not visited.
Therefore, we have $Q_1\equiv T(P(u))\circ uv=(v_0,\ldots,v_{i-1})\circ uv=(v_0,\ldots,v_i)$.
If $uv$ is not contained in $I$, consider the iteration when $u$ is picked at line~\ref{line:search:pick}.
We then have $Q_1\equiv T(P(u))\circ uv\equiv (v_0,\ldots,v_i)$ because, otherwise, $b(I)$ is set to $i$, or the algorithm returns a $y$-augmenting path at line~\ref{line:search:path2}.

We now have proved that $Q_1\equiv (v_0,\ldots,v_i)$ holds.
By applying the same argument to $Q_2$, we have $Q_2\equiv (v_0,\ldots,v_i)$.
Therefore, we have $Q_1\equiv Q_2$, which is a contradiction.
\end{proof}

\subsection{Augmentation}\label{sec:hi:augmentation}
\subsubsection{Simplification of Alternating Path}
Before the proofs of Lemmas~\ref{lem:augment_path} and~\ref{lem:augment_pair},
we introduce a useful procedure to simplify an augmenting path/pair obtained by Algorithm~\ref{alg:search}.
We first define such an operation in a formal manner
and prove the validity just after the definition.

\begin{definition}
  For a basic $\F$-packing $y$ and a $y$-alternating path $P=P_1\circ\cdots\circ P_p$ with $p\geq 2$,
  the \emph{simplification $(y', P')$ of $(y, P)$} is defined as follows (Figure~\ref{fig:alternation}).
  \begin{itemize}
  \item A function $y'$ is constructed from $y$ as follows:
    \begin{itemize}
    \item if $P_2$ is contained in an integral path $I$, replace $I$ with an integral path $P_1\circ F(P_2)^{-1}$;
    \item if $P_2$ is contained in a spoke $S$, replace $S$ with a spoke $P_1\circ F(P_2)^{-1}$.
    \end{itemize}
  \item A walk $P'$ is defined as a concatenation $(B_y(P_2)\circ P_3)\circ P_4\circ P_5\circ\cdots\circ P_p$ of $\max\{p - 2, 1\}$ paths\footnote{When $p=2$, $P' = B_y(P_2)$.}.
  \end{itemize}
\end{definition}


\begin{figure}[t]
  \centering
  \includegraphics[scale=0.8]{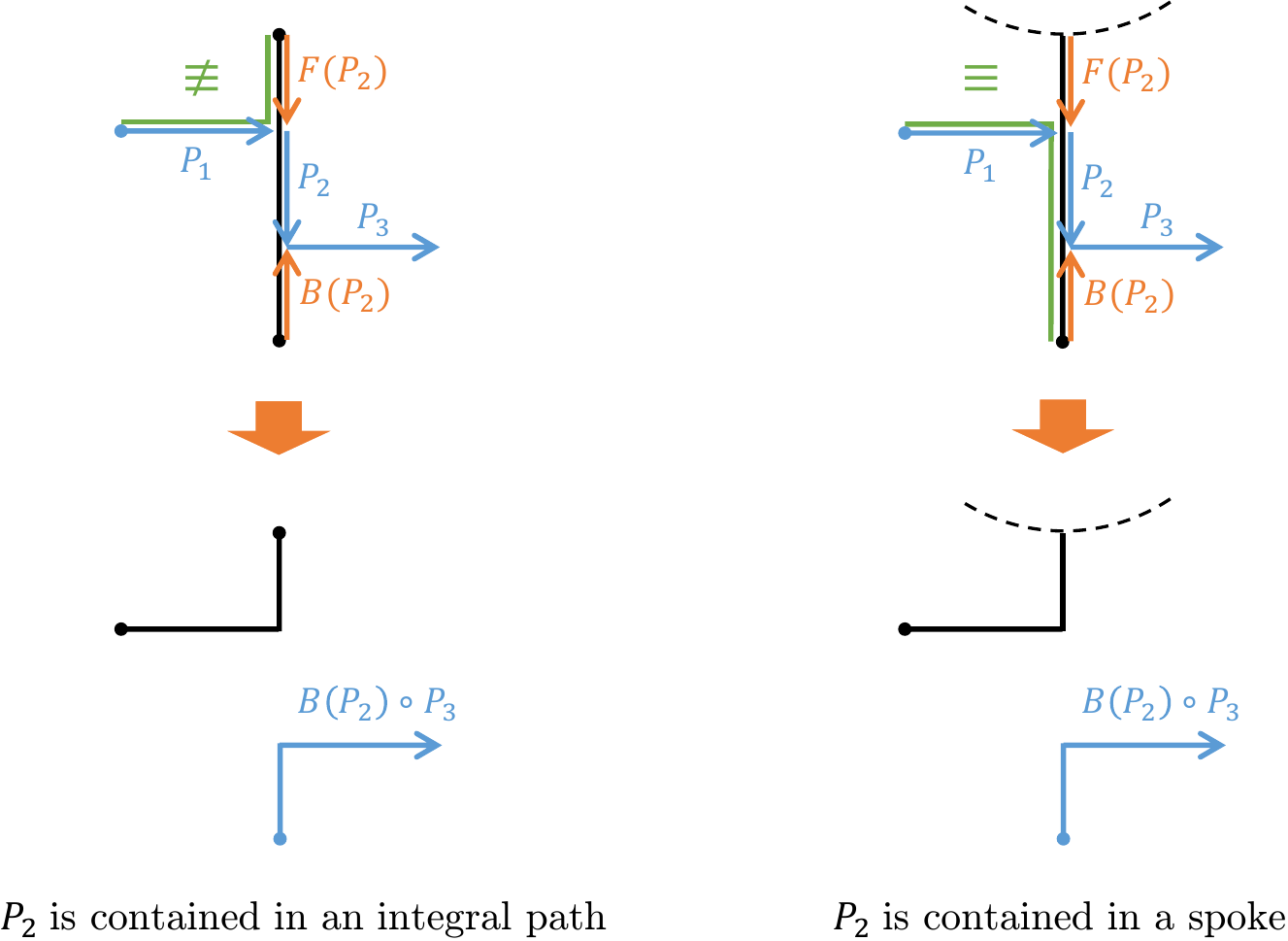}
  \caption{Simplification of a basic $\F$-packing and an alternating path.}
  \label{fig:alternation}
\end{figure}

\begin{lemma}\label{lem:alternating}
For any basic $\F$-packing $y$ and any $y$-alternating path $P=P_1\circ\cdots\circ P_p$ with $p\geq 2$,
the simplification $(y', P')$ of $(y, P)$ satisfies the following.
\begin{enumerate}
  \item $y'$ is a basic $\F$-packing with $|y'| = |y|$.
  \item For any even $i\geq 4$, the following holds.
  		\begin{enumerate}
  		  \item If $P_i$ is contained in an integral path in $y$, it is also contained in an integral path in $y'$. Moreover, $F_{y'}(P_i)\equiv F_y(P_i)$ and $B_{y'}(P_i)\equiv B_y(P_i)$ hold.
  		  \item If $P_i$ is contained in a spoke in $y$, it is also contained in a spoke in $y'$ in the direction toward $A$. Moreover, $F_{y'}(P_i)=F_y(P_i)$ and $B_{y'}(P_i)\equiv B_y(P_i)$ hold.
  		\end{enumerate}
  \item $P' = (B_y(P_2)\circ P_3)\circ P_4\circ P_5\circ\cdots\circ P_p$ is a $y'$-alternating path, where $B_y(P_2)\circ P_3$ is the first segment, and $P_i$ $(i\geq 4)$ is the $(i-2)$-th segment\,\footnote{When $p=2$, $P'$ consists of the single segment $B_y(P_2)$.}.
\end{enumerate}
\end{lemma}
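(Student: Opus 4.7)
The plan is to establish the three parts in order, tracking where each later segment $P_i$ of $P$ lies in the modified packing $y'$.

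For part~1, the only change is replacing the integral path $I$ (respectively spoke $S$) containing $P_2$ by $W := P_1 \circ F_y(P_2)^{-1}$. In the integral-path case, condition~\ref{def:alternating:cycle} of Definition~\ref{def:alternating} applied to $i=2$ gives $P_1 \not\equiv F_y(P_2)$, so $W$ is a conflicting walk; it is a simple path disjoint from the rest of $y$ because $I$ is simple and $P_1$ is internally disjoint from $y$. In the spoke case, condition~\ref{def:alternating:spoke} gives $P_1 \equiv B_y(P_2)\circ P_2^{-1}$; since $S = B_y(P_2)\circ P_2^{-1} \circ F_y(P_2)^{-1}$, substituting this prefix in each boundary walk of the wheel preserves its conflicting status, so $W$ remains a valid spoke of the same wheel. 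The identity $|y'| = |y|$ is immediate.

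For part~2, if $P_i$ lies in an integral path or spoke distinct from the one containing $P_2$, it is unaffected. Otherwise, condition~\ref{def:alternating:cycle_spoke} applied to $i=2$ together with condition~\ref{def:alternating:distinct} (which forbids $P_i$ from sharing an edge with $P_2$) forces $P_i \subseteq F_y(P_2)$, and in the integral-path case the auxiliary equivalence $P_1 \equiv B_y(P_2) \circ P_2^{-1}$ must additionally hold. Splitting $F_y(P_2)$ into three parts $A_1 \circ P_i \circ A_2$ along the appropriate direction, a direct computation shows $F_y(P_i) = F_{y'}(P_i) = A_1$, while $B_y(P_i) = B_y(P_2)\circ P_2^{-1} \circ A_2^{-1}$ and $B_{y'}(P_i) = P_1 \circ A_2^{-1}$, so the required equivalences reduce to $P_1 \equiv B_y(P_2) \circ P_2^{-1}$.

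For part~3, we check each condition of Definition~\ref{def:alternating} for $P'$. Conditions~\ref{def:alternating:distinct} and~\ref{def:alternating:once} are inherited from those on $P$ together with the observation that $V_1(y') \setminus V_1(y) \subseteq V(P_1)$. The new first segment $B_y(P_2) \circ P_3$ has its internal vertices in $V_{\rm in}(I) \cup V_{\rm in}(P_3)$, none of which is in $A$ or in $V(y')$, and it is implicational by $P$'s condition~\ref{def:alternating:cycle_spoke} for $i=3$. For the subsequent segments, condition~\ref{def:alternating:odd_even} follows from part~2, and the non-equivalences in condition~\ref{def:alternating:cycle_spoke} carry over from $P$ using part~2's equivalences (for instance, $B_{y'}(P_0')\circ P_1' = B_y(P_2)\circ P_3$ and $F_{y'}(P_i) \equiv F_y(P_i)$, so the first non-equivalence in condition~\ref{def:alternating:cycle} for $P'$ reduces to that in $P$).

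The main obstacle is the non-containment part of condition~\ref{def:alternating:cycle_spoke} for $P'$: although $F_{y'}(P_i)$ and $B_{y'}(P_i)$ are equivalent to their $y$-counterparts, they differ as walks, so the non-containment assertions ``no $P_j'$ is contained in $B_{y'}(P_i)$'' do not transfer automatically. The resolution is that the only structural change between $B_y(P_i)$ and $B_{y'}(P_i)$ is the substitution of $B_y(P_2)\circ P_2^{-1}$ by $P_1$, and no later segment $P_j$ can be contained in $P_1$, since $P_1$ is internally disjoint from $y$ while each later $P_j$ lying in an integral path or spoke is in $V_1(y)$. Hence the set of later segments contained in $B_{y'}(P_i)$ coincides with that in $B_y(P_i)$, and likewise for $F$.
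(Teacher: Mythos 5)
Your argument follows essentially the same route as the paper's proof: you use condition~\ref{def:alternating:cycle_spoke} of Definition~\ref{def:alternating} at $i=2$ (plus edge-disjointness) to force any later segment on the same integral path or spoke into $F_y(P_2)$ together with $P_1\equiv B_y(P_2)\circ P_2^{-1}$, you transfer the equivalences in condition~\ref{def:alternating:cycle_spoke} via substitution of equivalent prefixes, and you handle the non-containment clauses by noting that the only edges of $B_{y'}(P_i)$ or $F_{y'}(P_i)$ not already in $B_y(P_i)$ or $F_y(P_i)$ come from $P_1$, which no later segment can use --- exactly the paper's argument. One small point to tidy: your explicit formulas in part~2 presuppose that $P_i$ has the same direction as $P_2$ inside the integral path; in the opposite-direction case (which the paper treats as a separate subcase) it is $F_y(P_i)$, not $B_y(P_i)$, that carries the prefix $B_y(P_2)\circ P_2^{-1}$, so there one gets $F_{y'}(P_i)=P_1\circ W\equiv F_y(P_i)$ and $B_{y'}(P_i)=B_y(P_i)$, and the stated conclusion still holds.
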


\begin{proof}
First, we prove the first claim.
When $P_2$ is contained in an integral path $I$, from the condition~\ref{def:alternating:cycle} of the $y$-alternating paths (Definition~\ref{def:alternating}),
$P_1\not\equiv F(P_2)$ holds (i.e., $I':=P_1\circ F(P_2)^{-1}\in \F$).
Because $P_1$ is internally disjoint from $F(P_2)$ and $s(P_1)\not\in V(y)$, $I'$ is a simple path.
Thus, we can replace $I$ with the integral path $I'$.
When $P_2$ is contained in a spoke $S_i$ of a wheel, from the condition~\ref{def:alternating:spoke},
$P_1\equiv B(P_2)\circ P_2^{-1}$ holds.
From the definition of the wheel (Definition~\ref{def:wheel}), $S_{i-1}\circ H_{i-1}\circ S_i^{-1}$ and $S_i\circ H_i\circ S_{i+1}^{-1}$ are both in $\F$.
As $B(P_2)\circ P_2^{-1}\circ F(P_2)^{-1}=S_i$ holds, we have $S_i':=P_1\circ F(P_2)^{-1}\equiv S_i$.
When the degree of the wheel is at least three, $S_{i-1}\circ H_{i-1}\circ S_i'^{-1}$ and $S_i'\circ H_i\circ S_{i+1}^{-1}$ are both in $\F$.
When the degree of the wheel is one, $S_1'\circ H_1\circ S_1'^{-1}$ is in $\F$.
Thus, this replacement of the spoke preserves the condition for the wheel.

Next, we prove the second claim.
We can observe the following for any even $i\geq 4$.
\begin{itemize}
  \item If both of $P_2$ and $P_i$ are contained in the same integral path in $y$, from the condition~\ref{def:alternating:cycle} in Definition~\ref{def:alternating},
  		$P_i$ is contained in $F_y(P_2)$, and $P_1\equiv B_y(P_2)\circ P_2^{-1}$ holds.
  		Thus, $P_i$ is contained in the integral path $P_1\circ F_y(P_2)^{-1}$ in $y'$.
  		If they have the same direction, $P_2$ is contained in $B_y(P_i)$, and we can write $B_{y}(P_i)=(B_y(P_2)\circ P_2^{-1})\circ W$ for some subpath $W$.
  		Then, it holds that $F_{y'}(P_i)=F_y(P_i)$ and $B_{y'}(P_i)=P_1\circ W\equiv (B_y(P_2)\circ P_2^{-1})\circ W=B_y(P_i)$.
  		If they have the opposite direction, $P_2$ is contained in $F_y(P_i)$, and we can write $F_y(P_i)=(B_y(P_2)\circ P_2^{-1})\circ W$ for some subpath $W$.
  		Then, it holds that $F_{y'}(P_i)=P_1\circ W\equiv (B_y(P_2)\circ P_2^{-1})\circ W=F_y(P_i)$ and $B_{y'}(P_i)=B_y(P_i)$.
  \item If both $P_2$ and $P_i$ are contained in the same spoke in $y$, from the condition~\ref{def:alternating:spoke} in Definition~\ref{def:alternating},
  		$P_i$ is contained in $F_y(P_2)$, and $P_1\equiv B_y(P_2)\circ P_2^{-1}$ holds.
  		Thus, $P_i$ is contained in the spoke $P_1\circ F_y(P_2)^{-1}$ in $y'$.
  		Because both of them are directed toward $A$, we can write $B_y(P_i)=(B_y(P_2)\circ P_2^{-1})\circ W$ for some subpath $W$.
  		Then, it holds that $F_{y'}(P_i)=F_y(P_i)$ and $B_{y'}(P_i)=P_1\circ W\equiv(B_y(P_2)\circ P_2^{-1})\circ W=B_y(P_i)$.
  \item Otherwise, the integral path or the spoke containing $P_i$ does not change. Thus, we have $F_{y'}(P_i)=F_y(P_i)$ and $B_{y'}(P_i)=B_y(P_i)$.
\end{itemize}

Finally, we prove the third claim.
Because $B_y(P_2)$ does not contain any $P_i$, the conditions~\ref{def:alternating:distinct}--\ref{def:alternating:odd_even} in Definition~\ref{def:alternating} are satisfied.
From the second claim and the property that $A\circ W\equiv B\circ W$ holds for any implicational walks $A\circ W$ and $B$ satisfying $A\equiv B$, none of the three equivalence
relations appeared in the condition~\ref{def:alternating:cycle_spoke} change.
As we have seen in the proof of the second claim, $E(B_{y'}(P_i))\setminus E(B_y(P_i))\subseteq E(P_1)$ and
$E(F_{y'}(P_i))\setminus E(F_y(P_i))\subseteq E(P_1)$ hold.
Therefore, for any $i\geq 4$, none of the $P_j$'s with $j>i$ are newly contained in $B_{y'}(P_i)$ or $F_{y'}(P_i)$.
Thus, the condition~\ref{def:alternating:cycle_spoke} is satisfied.
\end{proof}

We obtain the following corollaries from Lemma~\ref{lem:alternating} by repeatedly applying the simplifying operation.
\begin{corollary}\label{cor:alternating_sub}
Given a basic $\F$-packing $y$, a $y$-alternating path $P=P_1\circ\cdots\circ P_p$, and an even integer $p'\leq p$,
a basic $\F$-packing $y'$ of the same size and a $y'$-alternating path $P'$ satisfying the following conditions can be constructed in linear time.
\begin{enumerate}
  \item $P'$ can be written as $P'=(B'\circ P_{p'+1})\circ P_{p'+2}\circ\cdots\circ P_p$ for some implicational path $B'\equiv B_y(P_{p'})$.
  \item For any even $i\geq p'+2$, the following holds.
  \begin{enumerate}
    \item If $P_i$ is contained in an integral path in $y$, it is also contained in an integral path in $y'$. Moreover, $F_{y'}(P_i)\equiv F_y(P_i)$ and $B_{y'}(P_i)\equiv B_y(P_i)$ hold.
    \item If $P_i$ is contained in a spoke in $y$, it is also contained in a spoke in $y'$ in the direction toward $A$. Moreover, $F_{y'}(P_i)=F_y(P_i)$ and $B_{y'}(P_i)\equiv B_y(P_i)$ hold.
  \end{enumerate}
\end{enumerate}
\end{corollary}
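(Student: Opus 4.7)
The plan is to iterate the simplification operation of Lemma~\ref{lem:alternating} exactly $p'/2$ times. Set $(y^{(0)}, P^{(0)}) := (y, P)$, and for $k = 1, \dots, p'/2$ let $(y^{(k)}, P^{(k)})$ be the simplification of $(y^{(k-1)}, P^{(k-1)})$. Part 3 of Lemma~\ref{lem:alternating} ensures that each $P^{(k)}$ is a $y^{(k)}$-alternating path, so the simplification is well-defined at the next step. A short induction on $k$, using that each simplification consumes exactly the first two segments, yields
\[
    P^{(k)} \;=\; \bigl(B_{y^{(k-1)}}(P_{2k}) \circ P_{2k+1}\bigr) \circ P_{2k+2} \circ \cdots \circ P_p.
\]
Setting $y' := y^{(p'/2)}$ and $B' := B_{y^{(p'/2-1)}}(P_{p'})$ then gives the desired pair (with the boundary convention $B' = B_y(P_0) = (s(P))$ handling the case $p' = 0$ trivially).

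Next I would verify the two conditions by induction on $k$. Part 1 of Lemma~\ref{lem:alternating} immediately gives that each $y^{(k)}$ is a basic $\F$-packing with $|y^{(k)}| = |y|$. For the equivalences, I would show by induction that for every even index $i \geq 2k+2$, the segment $P_i$ is still contained in an integral path (respectively a spoke in the direction toward $A$) of $y^{(k)}$, with $F_{y^{(k)}}(P_i) \equiv F_y(P_i)$ and $B_{y^{(k)}}(P_i) \equiv B_y(P_i)$ in the integral-path case, and $F_{y^{(k)}}(P_i) = F_y(P_i)$ and $B_{y^{(k)}}(P_i) \equiv B_y(P_i)$ in the spoke case. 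The inductive step is exactly one application of part~2 of Lemma~\ref{lem:alternating} to $(y^{(k-1)}, P^{(k-1)})$, using the transitivity of $\equiv$ (which holds by the remark following Lemma~\ref{lem:def:imp}) to chain the equivalences. Instantiating this at $k = p'/2 - 1$, $i = p'$ gives $B' \equiv B_y(P_{p'})$; instantiating at $k = p'/2$ and letting $i$ range over even indices $\geq p'+2$ gives the second conclusion.

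For the running time, each simplification alters only one integral path or spoke of $y$ (replacing it by $P_1 \circ F(P_2)^{-1}$) and rewrites the prefix of the alternating path to drop two segments. If $y$ and $P$ are represented by pointer-linked structures, a single simplification takes time proportional to the number of edges of $P$ it touches, so the total cost over all $p'/2$ iterations is linear in the length of $P$, hence linear in the input size.

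The main technical obstacle is the bookkeeping around the asymmetric equivalences in part~2 of Lemma~\ref{lem:alternating}: for spokes the forward paths are preserved on the nose ($=$) while backward paths are only preserved up to $\equiv$, whereas for integral paths both are only preserved up to $\equiv$. One must check, across the induction, that each invocation of part~2 is applied with the correct hypothesis (integral path vs.\ spoke, and correct direction), and that the chained equivalences never turn a required equality into a mere equivalence. This is a routine case analysis mirroring the proof of Lemma~\ref{lem:alternating} itself, and no new ideas are required beyond the lemma.
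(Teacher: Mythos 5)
Your proposal is correct and follows exactly the route the paper intends: the paper obtains this corollary by repeatedly applying the simplification operation of Lemma~\ref{lem:alternating}, and your iteration scheme, the induction chaining the part-2 equivalences via transitivity of $\equiv$ (with equality preserved on the nose in the spoke case), and the linear-time accounting are precisely that argument spelled out. No discrepancy with the paper's proof.
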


\begin{corollary}\label{cor:alternating_full}
Given a basic $\F$-packing $y$ and a $y$-alternating path $P=P_1\circ\cdots\circ P_p$,
a basic $\F$-packing $y'$ of the same size and a single-segment $y'$-alternating path $P'$ satisfying $P'\equiv T_y(P)$ can be constructed in linear time.
\end{corollary}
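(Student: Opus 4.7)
The plan is to derive the statement directly from Corollary~\ref{cor:alternating_sub} by choosing $p'$ so that only a single segment remains. The two nontrivial observations I have to verify are (i) that the leftover piece is genuinely a \emph{single} segment, and (ii) that it has the correct equivalence class $T_y(P)$.

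First, dispose of the boundary cases. If $p \leq 1$, take $y' := y$ and $P' := P$; by the definitions of $T_y$ and of a single-segment $y$-alternating path (treating $B_y(P_0) := (s)$ as the zero-length walk at $s=s(P)$), we immediately have $P' \equiv T_y(P)$ with no work.

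For $p \geq 2$, set
\[
  p' := \begin{cases} p-1 & (p \text{ odd}), \\ p & (p \text{ even}), \end{cases}
\]
so that $p'$ is always even and $0 \leq p' \leq p$. Apply Corollary~\ref{cor:alternating_sub} to $(y, P, p')$ to obtain in linear time a basic $\F$-packing $y'$ with $|y'| = |y|$ and a $y'$-alternating path
\[
  P' = (B' \circ P_{p'+1}) \circ P_{p'+2} \circ \cdots \circ P_p
\]
for some implicational path $B' \equiv B_y(P_{p'})$. When $p' = p-1$, this collapses to $P' = B' \circ P_p$, which is the single first segment $(B' \circ P_{p'+1})$ of the display above and is therefore single-segment by construction; when $p' = p$, it collapses to $P' = B'$, which is trivially at most one segment.

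For the equivalence, use the elementary fact that $\imp(U \circ W)$ depends only on $\imp(U)$ and $W$, so $U \equiv V$ implies $U \circ W \equiv V \circ W$ whenever both sides are implicational (this is immediate from Lemma~\ref{lem:def:imp}). In the odd case,
\[
  P' \;=\; B' \circ P_p \;\equiv\; B_y(P_{p-1}) \circ P_p \;=\; T_y(P),
\]
and in the even case,
\[
  P' \;=\; B' \;\equiv\; B_y(P_p) \;=\; T_y(P).
\]
Linear-time complexity and the equality $|y'| = |y|$ are inherited from Corollary~\ref{cor:alternating_sub}.

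The hard part has really been absorbed into Corollary~\ref{cor:alternating_sub} (and ultimately into Lemma~\ref{lem:alternating}): proving that each simplification step preserves basicness, size, and the equivalence classes of forward/backward segments of the surviving $P_i$. Once that is granted, Corollary~\ref{cor:alternating_full} is just a matter of picking $p'$. The only minor subtlety is the edge case in which $B_y(P_p)$ is a zero-length path in the even case; then $P'$ has zero segments rather than one, which is harmless under the natural reading of ``single-segment'' as ``at most one segment'' and is anyway ruled out unless $P$ itself already essentially terminates on a terminal in $A$.
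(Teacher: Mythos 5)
Your proposal is correct and is essentially the paper's own argument: the paper derives Corollary~\ref{cor:alternating_full} (like Corollary~\ref{cor:alternating_sub}) simply by repeatedly applying the simplification of Lemma~\ref{lem:alternating}, and your choice of an even $p'\in\{p-1,p\}$ together with the observation that $B'\equiv B_y(P_{p'})$ implies $B'\circ P_{p'+1}\equiv B_y(P_{p'})\circ P_{p'+1}$ is just an explicit bookkeeping of that same reduction. The zero-length/zero-segment edge cases you flag are glossed over in the paper as well (and never arise where the corollary is invoked), so your treatment is on par with the original.
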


\subsubsection{Augmentation by Augmenting Path}
The subsection aims to prove Lemma \ref{lem:augment_path}.

\begin{figure}[t]
  \centering
  \includegraphics[scale=0.8]{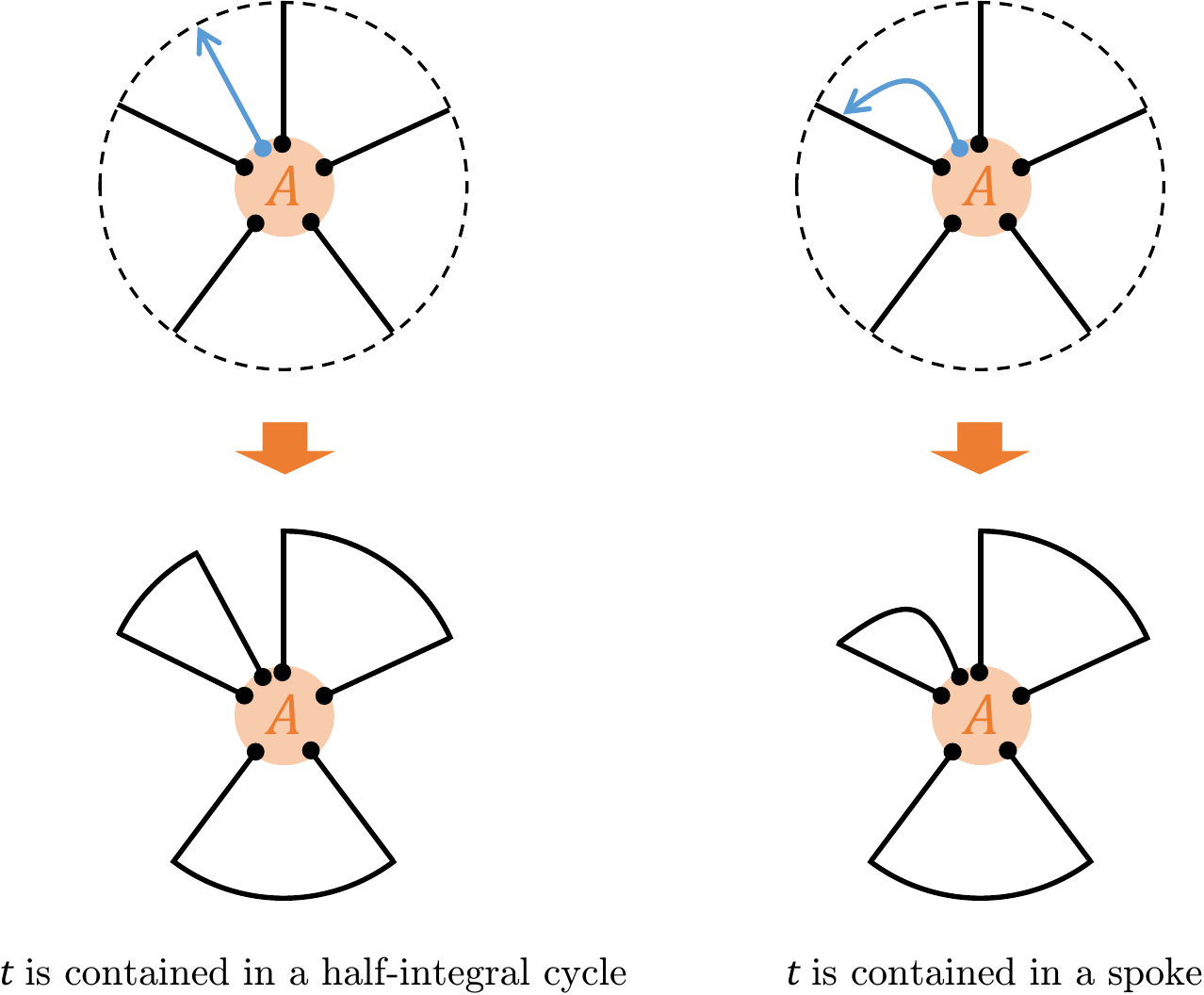}
  \caption{Augmentation by a single-segment augmenting path.}
  \label{fig:augment_path}
\end{figure}

\begin{proof}[Proof of Lemma~\ref{lem:augment_path}]
Let $P=P_1\circ\cdots\circ P_p$ be the given $y$-augmenting path and $t=t(P)$.
If $p>1$, by Corollary~\ref{cor:alternating_full},
we obtain in linear time a basic $\F$-packing $y'$ of the same size and a single-segment $y'$-alternating path $P'$ such that $t(P')=t$ and $P'\equiv T_y(P)$.
We now show that $P'$ is a $y'$-augmenting path.
If $P$ satisfies the condition~\ref{def:augment_path:0} or~\ref{def:augment_path:1} of the $y$-augmenting paths (Definition~\ref{def:augment_path}),
$P'$ is a $y'$-augmenting path because $t(P')=t$ is still contained in $A\setminus V(y')$ or $V(y')\setminus V_1(y')$.
If $P$ satisfies the condition~\ref{def:augment_path:2}, the spoke $S$ that contains $t$ in $y$ might not exist in $y'$.
However, from the condition~\ref{def:augment_path:22}, $t$ is still contained in a spoke $S'$ in $y'$, which may not be identical to $S$.
By the same argument as in the proof of the second statement of Lemma~\ref{lem:alternating}, we have $B_{y'}(t)\equiv B_y(t)$.
Therefore, we have $T_{y'}(P')=P'\equiv T_y(P)\not\equiv B_y(t)\equiv B_{y'}(t)$.
Thus, $P'$ is a $y'$-augmenting path.

We can now concentrate on the case when $p=1$ (Figure~\ref{fig:augment_path}).
If the condition~\ref{def:augment_path:0} is satisfied, $P$ is a conflicting path containing no vertices in $V(y)$.
Therefore, we can obtain a basic $\F$-packing of size $|y|+1$ by inserting the integral path $P$.
Otherwise, let $(\{S_1,\ldots,S_d\}, H_1\circ\cdots\circ H_d)$ be the wheel containing $t$.
W.l.o.g., we can assume that $t$ is contained in $H_d$ or $S_d$.

If $t$ is contained in $H_d$, let $F$ be the prefix subpath of $H_d$ to $t$ and $B$ be the suffix subpath of $H_d$ from $t$ (i.e., $F\circ B=H_d$).
We have $S_d\circ F\not\equiv S_1\circ B^{-1}$ because $S_d\circ H_d\circ S_1^{-1}\in\F$ Thus, at least one of $S_d\circ F\circ P^{-1}$ and $S_1\circ B^{-1}\circ P^{-1}$ is in $\F$, and w.l.o.g., we can assume the former case.
Then, we can obtain a basic $\F$-packing of size $|y|+\frac{1}{2}$ by decomposing the wheel into $(d-1)/2$ integral paths $\{S_1\circ H_1\circ S_2^{-1},S_3\circ H_3\circ S_4^{-1},\ldots,S_{d-2}\circ H_{d-2}\circ S_{d-1}^{-1}\}$
and by inserting an integral path $S_d\circ F\circ P^{-1}$.

If $t$ is contained in $S_d$, we have $P\not\equiv B(t)$. Then, we can obtain a basic $\F$-packing of size $|y|+\frac{1}{2}$ by decomposing the wheel into $(d-1)/2$ integral paths
$\{S_1\circ H_1\circ S_2^{-1},S_3\circ H_3\circ S_4^{-1},\ldots,S_{d-2}\circ H_{d-2}\circ S_{d-1}^{-1}\}$
and by inserting an integral path $P\circ B(t)^{-1}$.
\end{proof}

\subsubsection{Augmentation by Augmenting Pair}
This subsection aims to prove Lemma~\ref{lem:augment_pair}.

Intuitively, we want to augment $y$ as follows.
We first simplify the common prefix $R$ of $(P, Q)$ by applying Corollary~\ref{cor:alternating_sub}, and then obtain a new wheel whose half-integral cycle is $P'\circ Q'^{-1}$.
However, it is not that easy.
This approach does not work when $P'$ or $Q'$ intersects with spokes or intersects with the same integral path multiple times.
Therefore, we augment $y$ by gradually simplifying the augmenting pair.
First, we prove the lemma against a special case.

\begin{lemma}\label{lem:augment_pair:special}
Given a basic $\F$-packing $y$ and a $y$-augmenting pair $(P=P_1\circ\cdots\circ P_p, Q=Q_1\circ\cdots\circ Q_q)$ satisfying all the following conditions,
a basic $\F$-packing of size $|y|+\frac{1}{2}$ can be constructed in linear time.
\begin{enumerate}
  \item None of the $P_i$'s are contained in the spokes.\label{def:augment_pair:special:no_spokes}
  \item None of the integral paths contain multiple $P_i$'s.\label{def:augment_pair:special:unique_cycles}
  \item One of the following two conditions is satisfied:\label{def:augment_pair:special:short_prefix}
  \begin{enumerate}
    \item $p\geq q=1$ or\label{def:augment_pair:special:a}
    \item $p\geq q=2$, $P_1=Q_1$, and $P_2$ and $Q_2$ are contained in an integral path in the opposite direction.\label{def:augment_pair:special:b}
  \end{enumerate}
\end{enumerate}
\end{lemma}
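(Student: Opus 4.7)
My plan is to reduce both subcases to a manageable base case and then construct the new wheel directly, following the intuition described after Definition~\ref{def:augment_pair}: take the symmetric difference of $E(y)$ and the edges of the common prefix $R$, and introduce a new wheel whose half-integral cycle is $P'\circ Q'^{-1}$.

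\textbf{Base case ($p=q=1$).} Both $P$ and $Q$ are single-segment $y$-alternating paths from a common $s \in A \setminus V(y)$ to a common $t$, each internally disjoint from $y$. Choose the common prefix $R$ maximally, writing $P = R \circ P''$ and $Q = R \circ Q''$. Using condition~\ref{def:alternating:once} of Definition~\ref{def:alternating} together with the maximality of $R$, I would argue that $P''$ and $Q''$ are internally vertex-disjoint, so $C := P'' \circ Q''^{-1}$ is a simple cycle. Condition~\ref{def:augment_pair:2} yields $P \circ Q^{-1} = R \circ C \circ R^{-1} \in \F$, so $(\{R\}, C)$ satisfies Definition~\ref{def:wheel} with degree~$1$. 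Inserting this wheel into $y$ produces a basic $\F$-packing of size $|y| + \tfrac{1}{2}$ in linear time.

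\textbf{Reduction for case (a) with $p \geq 3$.} Since $q = 1$ is odd, condition~\ref{def:augment_pair:3} forces either $p$ odd with $t \notin V(y)$ or $p$ even. I would iteratively apply the simplification of Lemma~\ref{lem:alternating} to collapse the pairs of segments $(P_1, P_2), (P_3, P_4), \ldots$ of $P$, each time obtaining an updated basic $\F$-packing of the same size. The key verification is that each simplification preserves both the validity of $Q$ as an alternating path and the augmenting-pair structure against a suitably updated version of $P$. This relies on conditions~1 and~2 of the lemma: each simplification touches a single integral path that contains exactly one $P_{2i}$ (and no spoke at all), and its newly added vertices/edges come only from the odd-index segment $P_{2i-1}$, which is edge-disjoint from $Q$ beyond the common prefix $R$ by the augmenting-pair hypothesis. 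After all simplifications, we arrive at the base case.

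\textbf{Case (b) and main obstacle.} Here $p$ is odd with $p \geq 3$, $P_1 = Q_1$, and $P_2, Q_2$ lie in a common integral path $I$ in opposite directions from $v_i := t(P_1)$. I would absorb both $P_2$ and $Q_2$ simultaneously by replacing $I$ with the new integral path $P_1 \circ F_y(P_2)^{-1}$ as in Lemma~\ref{lem:alternating}; the opposite-direction structure together with the $\F$-membership facts about subpaths of $I$ ensures that, after this replacement, the tail of $Q$ collapses to a zero-segment walk ending at $t(Q)$ in the modified packing. Combining this with the case~(a) reduction on the tail $(P_3, \ldots, P_p)$ collapses the whole pair to the base case. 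The principal difficulty throughout is rigorously verifying that after each simplification, $Q$ is still a valid alternating path with respect to the new packing and that the resulting pair continues to satisfy all four conditions of Definition~\ref{def:augment_pair}; this reasoning is tightly bound to the special hypotheses~1 and~2, which confine each simplification to local, non-overlapping modifications. A secondary subtlety is ensuring simplicity of $P'' \circ Q''^{-1}$ in the base case, for which the maximality of $R$ combined with the per-path uniqueness must be invoked together.
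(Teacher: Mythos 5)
There is a genuine gap in your reduction strategy. The simplification of Lemma~\ref{lem:alternating} does not merely ``collapse'' $(P_1,P_2)$: it installs $P_1\circ F_y(P_2)^{-1}$ as a new \emph{integral path} of weight $1$, so the whole of $P_1$ --- including the common prefix $R$ and the shared start vertex $s(P)=s(Q)$ --- is absorbed into $V_1(y')$. After this first step $Q$ is no longer a $y'$-alternating path at all: condition~3 of Definition~\ref{def:alternating} fails because $s(Q)\in V(y')$, and for nontrivial $R$ its interior vertices now lie on an integral path, so your claim that each simplification ``preserves the validity of $Q$'' is false even in case~(a) with $q=1$. Moreover, the single-segment path produced by Corollary~\ref{cor:alternating_full} starts at the far endpoint of the first consumed integral path (it begins with $B_y(P_2)$), not at $s(Q)$, so the configuration you call the base case --- two single-segment alternating paths from a common $s\in A\setminus V(y)$ --- is never reached. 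Even if one ignored this, the degree-one wheel you would insert at the end has spoke $R$ and passes through $s(Q)$, but those vertices are already saturated (weight $1$) by the integral path created in the first simplification, so the result would violate the disjointness required of a basic $\F$-packing; the same objection applies to your sketch of case~(b), where the vertices of $Q_2\subseteq F_y(P_2)$ stay inside the replacement integral path. (Your secondary worry about simplicity of $P''\circ Q''^{-1}$ is legitimate but minor by comparison; edge-disjointness of $P'$ and $Q'$ alone does not forbid interior vertex crossings.)

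The paper avoids exactly this conflict by never simplifying away $P_1$. Instead it builds the new wheel \emph{directly} from the given pair: $R$ becomes the spoke $S_1$, the remainder $P_1'$ of $P_1$ and the even segments $P_i$ (together with $P_d\circ Q_1'^{-1}$, resp.\ $Q_2^{-1}$ in case~(b)) become the arcs $H_i$ of the half-integral cycle, and the pieces $F(P_i)$ and $B(P_i)$ of the $(d-1)/2$ integral paths traversed by $P$ become the remaining spokes; those integral paths are deleted and the degree-$d$ wheel is inserted, giving the net gain of $\frac12$. The wheel conditions are then verified using the equivalences guaranteed by conditions~\ref{def:alternating:cycle_spoke} of Definition~\ref{def:alternating} and \ref{def:augment_pair:2} of Definition~\ref{def:augment_pair}, and hypotheses~1 and~2 of the lemma are what make the spokes pairwise disjoint (Definition~\ref{def:wheel}). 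If you want to salvage an iterative argument, you would have to keep $R$ and the odd segments out of the weight-$1$ part of the packing throughout, which essentially forces the paper's one-shot construction.
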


\begin{proof}
When $(P, Q)$ satisfies the condition~\ref{def:augment_pair:special:a}, we further divide the case into the following two cases:
$p=1$ or $p\geq 2$.

If $p=1$, $(P, Q)$ is a single-branching pair such that $P\not\equiv Q$.
Therefore, we can obtain a basic $\F$-packing of size $|y|+\frac{1}{2}$ by introducing a new wheel $P\circ Q^{-1}$ of degree one.

\begin{figure}[t]
  \centering
  \includegraphics[scale=0.8]{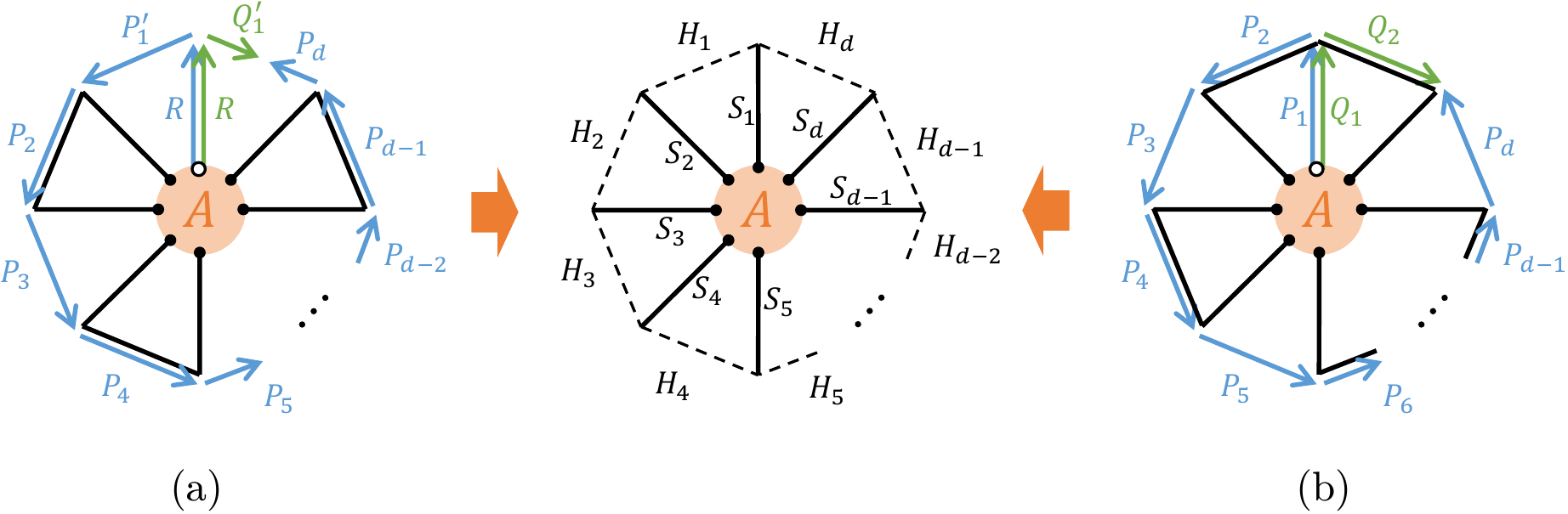}
  \caption{Applying Lemma~\ref{lem:augment_pair:special}.}
  \label{fig:pair_special}
\end{figure}

In $p\geq 2$, let $R$ be the common prefix and let us write $P_1=R\circ P_1'$ and $Q_1=R\circ Q_1'$.
We set $d:=p+1$ and $P_d:=(t(P))$ if $p$ is even;
otherwise, we set $d:=p$.
We define paths $\{H_1,\ldots,H_d\}$ and $\{S_1,\ldots,S_d\}$ as follows (Figure~\ref{fig:pair_special} (a)).
\begin{itemize}
  \item $H_1:=P_1'$.
  \item $H_i:=P_i$ for $i\in\{2,\ldots,d-1\}$.
  \item $H_d:=P_d\circ Q_1'^{-1}$.
  \item $S_1:=R$.
  \item $S_i:=F(P_i)$ for even $i\in\{2,4,\ldots,d-1\}$.
  \item $S_i:=B(P_{i-1})$ for odd $i\in\{3,5,\ldots,d\}$.
\end{itemize}
We now show that these paths form a wheel.
The first condition of the wheel (Definition~\ref{def:wheel}) is trivially satisfied.
As none of the integral paths in $y$ contain multiple $P_i$'s, the second and third conditions are satisfied.
We can see that the fourth condition is satisfied as follows.
\begin{itemize}
  \item $S_1\circ H_1\circ S_2^{-1}=R\circ P_1'\circ F(P_2)^{-1}=P_1\circ F(P_2)^{-1}\in\F$.
  \item $S_i\circ H_i\circ S_{i+1}^{-1}=F(P_i)\circ P_i\circ B(P_i)^{-1}\in\F$ for even $i\in\{2,4,\ldots,d-1\}$.
  \item $S_i\circ H_i\circ S_{i+1}^{-1}=B(P_{i-1})\circ P_i\circ F(P_{i+1})^{-1}\in\F$ for odd $i\in\{3,5,\ldots,d-2\}$.
  \item $S_d\circ H_d\circ S_1^{-1}=B(P_{d-1})\circ P_d\circ Q_1'^{-1}\circ R^{-1}=T(P)\circ T(Q)^{-1}\in\F$.
\end{itemize}
Thus, we can obtain a basic $\F$-packing of size $|y|+\frac{1}{2}$ by removing the $(d-1)/2$ integral paths intersecting $P$ and by inserting the wheel of degree $d$.

Finally, we consider the case when $(P, Q)$ satisfies the condition~\ref{def:augment_pair:special:b}.
Note that $p$ must be odd from the condition~\ref{def:augment_pair:3} of $y$-augmenting pairs (Definition~\ref{def:augment_pair}).
Let $d:=p$.
We define paths $\{H_1,\ldots,H_d\}$ and $\{S_1,\ldots,S_d\}$ as follows (Figure~\ref{fig:pair_special} (b)).
\begin{itemize}
  \item $H_i:=P_{i+1}$ for $i\in \{1,\ldots,d-1\}$.
  \item $H_d:=Q_2^{-1}$.
  \item $S_1:=P_1$.
  \item $S_i:=B(P_i)$ for even $i\in\{2,4,\ldots,d-1\}$.
  \item $S_i:=F(P_{i+1})$ for odd $i\in\{3,5,\ldots,d-2\}$.
  \item $S_d:=B(Q_2)$.
\end{itemize}
We now show that these paths form a wheel.
The first condition of the wheel is trivially satisfied.
The second and third conditions are satisfied because none of the integral paths in $y$ contain multiple $P_i$'s.
We can see that the fourth condition is satisfied as follows.
\begin{itemize}
  \item $S_1\circ H_1\circ S_2^{-1}=P_1\circ P_2\circ B(P_2)^{-1}=P_1\circ (B(P_2)\circ P_2^{-1})^{-1}=Q_1\circ F(Q_2)^{-1}\in\F$.
  \item $S_i\circ H_i\circ S_{i+1}^{-1}=B(P_i)\circ P_{i+1}\circ F(P_{i+2})^{-1}\in\F$ for even $i\in\{2,4,\ldots,d-3\}$.
  \item $S_i\circ H_i\circ S_{i+1}^{-1}=F(P_{i+1})\circ P_{i+1}\circ B(P_{i+1})^{-1}\in\F$ for odd $i\in\{3,5,\ldots,d-2\}$.
  \item $S_{d-1}\circ H_{d-1}\circ S_d^{-1}=B(P_{d-1})\circ P_d\circ B(Q_2)^{-1}=T(P)\circ T(Q)^{-1}\in\F$.
  \item $S_d\circ H_d\circ S_1^{-1}=B(Q_2)\circ Q_2^{-1}\circ P_1^{-1}=F(P_2)\circ P_1^{-1}\in \F$.
\end{itemize}
Thus, we can obtain a basic $\F$-packing of size $|y|+\frac{1}{2}$ by removing the $(d-1)/2$ integral paths intersecting $P$ and by inserting the wheel of degree $d$.
\end{proof}


Next, we provide two lemmas for weakening the assumptions in the abovementioned lemma.
\begin{lemma}\label{lem:augment_pair:detour}
Given a basic $\F$-packing $y$ and a $y$-augmenting pair $(P=P_1\circ\cdots\circ P_p, Q=Q_1\circ\cdots\circ Q_q)$ satisfying all the following conditions,
a basic $\F$-packing of size $|y|+\frac{1}{2}$ can be constructed in linear time.
\begin{enumerate}
  \item None of the $P_i$'s are contained in the spokes.
\renewcommand{\theenumi}{\arabic{enumi}'}
  \item Any two segments of $P$ contained in the same integral path have the same direction.\label{def:augment_pair:detour:direction}
\renewcommand{\theenumi}{\arabic{enumi}}
  \item One of the following two conditions is satisfied:
  \begin{enumerate}
    \item $p\geq q=1$ or
    \item $p\geq q=2$, $P_1=Q_1$, and $P_2$ and $Q_2$ are contained in an integral path in the opposite direction.
  \end{enumerate}
\end{enumerate}
\end{lemma}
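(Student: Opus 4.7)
The plan is to reduce the hypothesis of Lemma~\ref{lem:augment_pair:detour} to that of Lemma~\ref{lem:augment_pair:special} by iteratively eliminating ``repeats'' in $P$ (pairs of segments sharing an integral path) while preserving the augmenting-pair structure with $Q$. I would proceed by induction on the number of pairs $(i,j)$ with $i<j$ both even such that $P_i$ and $P_j$ lie in the same integral path of $y$. In the base case (zero such pairs), the premise of Lemma~\ref{lem:augment_pair:special} is satisfied and we conclude directly.

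For the inductive step, pick such a pair with the smallest $i$ and then the smallest $j$. By hypothesis~\ref{def:augment_pair:detour:direction}, $P_i$ and $P_j$ traverse the shared integral path $I$ in the same direction, so applying condition~\ref{def:alternating:cycle_spoke} of Definition~\ref{def:alternating} twice yields two crucial structural statements: $P_j \subseteq F_y(P_i)$ (since condition~\ref{def:alternating:cycle} forbids $P_j \subseteq B_y(P_i)$), and consequently $B_y(P_{i-2}) \circ P_{i-1} \equiv B_y(P_i) \circ P_i^{-1}$ (otherwise $P_j \subseteq F_y(P_i)$ would be forbidden as well). The second equivalence is the key enabler: it says that the ``detour'' visited between $P_i$ and $P_j$ is, from the perspective of implicational values, equivalent to simply traversing $I$, which is exactly the setting exploited by the simplification of Lemma~\ref{lem:alternating}.

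I would then invoke Corollary~\ref{cor:alternating_sub} with $p' = i$ to compress $P_1, \ldots, P_i$ into a single initial implicational path $B' \equiv B_y(P_i)$, obtaining a new basic $\F$-packing $y'$ of the same size and a $y'$-alternating path $P' = (B' \circ P_{i+1}) \circ P_{i+2} \circ \cdots \circ P_p$ with strictly fewer segments. By the second statement of Corollary~\ref{cor:alternating_sub}, every even segment $P_m$ with $m \geq i + 2$ that sat on an integral path of $y$ still sits on an integral path of $y'$ with $F_{y'}(P_m) \equiv F_y(P_m)$ and $B_{y'}(P_m) \equiv B_y(P_m)$; in particular, the ``same direction'' property is preserved for the remaining repeats, and the repeat involving $P_i$ is eliminated. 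Applying the inductive hypothesis to $(y', P', Q)$ yields a basic $\F$-packing of size $|y'|+\tfrac{1}{2} = |y|+\tfrac{1}{2}$, and the overall construction is linear-time since there are at most $O(p)$ simplification steps, each taking linear time.

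The main obstacle will be showing that $(P', Q)$ actually remains a valid $y'$-augmenting pair satisfying the premises of Lemma~\ref{lem:augment_pair:detour}. The delicate point concerns $Q$: the common prefix $R$ of $(P, Q)$ lies inside $P_1$ (in both cases~(a) and~(b) of hypothesis~\ref{def:augment_pair:special:short_prefix}, $R$ is a prefix of a segment that is internally disjoint from $y$), and after compression $P_1$ is absorbed into a new integral path of $y'$, so $Q$ may no longer be internally disjoint from $y'$ along $R$. Handling this forces one to re-split $Q$ into alternating segments of $y'$, verifying conditions~\ref{def:alternating:distinct}--\ref{def:alternating:cycle_spoke} of Definition~\ref{def:alternating} for the rewritten $Q$ and conditions~\ref{def:augment_pair:1}--\ref{def:augment_pair:4} of Definition~\ref{def:augment_pair} for $(P', Q)$. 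These all follow, with careful bookkeeping, from the equivalences guaranteed by Lemma~\ref{lem:alternating}, but this step carries most of the technical weight of the proof.
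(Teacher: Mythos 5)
Your structural analysis of the repeat (that $P_j\subseteq F_y(P_i)$ and $B_y(P_{i-2})\circ P_{i-1}\equiv B_y(P_i)\circ P_i^{-1}$ must hold) is correct, but the reduction step itself has a fatal flaw, and it is exactly the point you flag as the ``main obstacle'': it does not follow ``with careful bookkeeping.'' Applying Corollary~\ref{cor:alternating_sub} with $p'=i$ absorbs the entire prefix $P_1\circ\cdots\circ P_i$ of $P$ into the new packing $y'$; in particular $P_1$ (hence the common prefix $R$ of $(P,Q)$ and the common start vertex $s(P)=s(Q)$) becomes part of an integral path of $y'$, so $s(Q)\in V_1(y')$. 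A $y'$-alternating path must, by Definition~\ref{def:alternating}, start at a vertex of $A\setminus V(y')$ and begin with a segment internally disjoint from $y'$, so no re-splitting of $Q$ can make it a $y'$-alternating path at all; moreover your compressed $P'$ starts at a different vertex (an endpoint of the old integral path containing $P_i$), so $(P',Q)$ cannot even satisfy condition~\ref{def:augment_pair:1} of Definition~\ref{def:augment_pair}, let alone the special shape of $Q$ required by condition~3 of the lemma (single segment, or $Q_1=P_1$ with $Q_2$ opposite to $P_2$ --- in case~(b) the integral path carrying $P_2$ and $Q_2$ is also mangled by the first simplification step). So the inductive hypothesis cannot be invoked on $(y',P',Q)$, and the induction never gets off the ground.

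The paper's proof avoids precisely this by never touching the prefix of $P$ before the first obstructive segment $P_a$: it deletes the integral path $I$ shared by $P_a$ and the next segment $P_b$ from the packing (freeing one endpoint of $I$, which lies in $A$), applies the full simplification (Corollary~\ref{cor:alternating_full}) only to the detour $B_y(P_a)\circ P_{a+1}\circ\cdots\circ P_{b-1}$, reinstalls a new integral path $P''\circ F_y(P_b)$ using $B_y(P_{b-2})\circ P_{b-1}\not\equiv F_y(P_b)$, and splices the now-unpacked middle portion $W$ of $I$ into $P$ in place of the detour, yielding $\bar P=P_1\circ\cdots\circ P_{a-2}\circ(P_{a-1}\circ W\circ P_{b+1})\circ\cdots\circ P_p$ with the original prefix, start vertex, and $Q$ untouched. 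If you want to salvage your plan, this ``remove $I$, simplify only the detour, reconnect through $W$'' surgery is the missing idea. As a secondary point, your running-time claim ($O(p)$ simplifications, each linear) only gives $O(pm)$; the paper obtains linear total time by arguing that across all repetitions each edge is traversed a bounded number of times.
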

\begin{proof}
Note that the conditions~\ref{def:augment_pair:special:no_spokes} and~\ref{def:augment_pair:special:short_prefix} are
the same as those in Lemma~\ref{lem:augment_pair:special}. Besides,
if the condition~\ref{def:augment_pair:special:b} is satisfied, from the condition~\ref{def:alternating:cycle} of
$y$-alternating paths (Definition~\ref{def:alternating}), the integral path containing $P_2$ and $Q_2$ never
contain any other segments.
We call a segment of $P$ \emph{obstructive} if it is contained in an integral path containing multiple segments of $P$.
If there exist no obstructive segments, the
condition~\ref{def:augment_pair:special:unique_cycles} of Lemma~\ref{lem:augment_pair:special} is satisfied.
Thus, we can obtain a basic $\F$-packing of size $|y|+\frac{1}{2}$ by applying
Lemma~\ref{lem:augment_pair:special}.
We repeat the following process while obstructive segments exist.

Let $P_a$ be the first obstructive segment and 
$P_b$ ($b>a$) be the next segment contained in the integral path containing $P_a$.
From the condition~\ref{def:alternating:cycle} of the $y$-alternating paths (Definition~\ref{def:alternating}),
$P_b$ is contained in $F(P_a)$, and $B(P_{a-2})\circ P_{a-1}\equiv B(P_a)\circ P_a^{-1}$ holds.
We construct a basic $\F$-packing $\bar{y}$ of the same size and a
$\bar{y}$-augmenting pair $(\bar{P},Q)$ such that the precondition of this lemma is satisfied and the number
of obstructive segments strictly decreases as follows (see Figure~\ref{fig:pair_detour}).

\begin{figure}[t]
  \centering
  \includegraphics[scale=0.8]{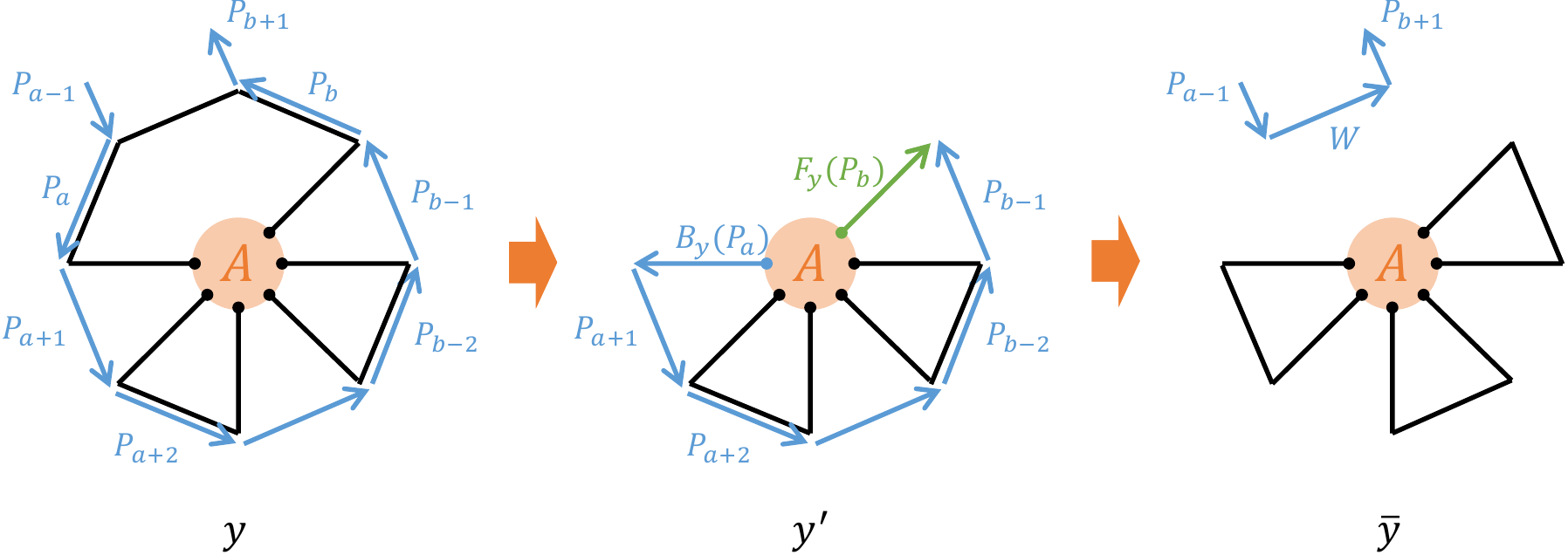}
  \caption{Applying Lemma~\ref{lem:augment_pair:detour}.}
  \label{fig:pair_detour}
\end{figure}

First, we construct a basic $\F$-packing $y'$ of size $|y|-1$ by removing the integral path containing $P_a$ and $P_b$
from $y$.
Observe that $P':=(B_y(P_a)\circ P_{a+1})\circ P_{a+2}\circ\cdots\circ P_{b-1}$ is a $y'$-alternating path.
Then, by applying Corollary~\ref{cor:alternating_full} against $y'$ and $P'$, we obtain a basic $\F$-packing $y''$ of
size $|y|-1$ and a single-segment $y''$-alternating path $P''$ satisfying $P''\equiv T_{y'}(P')=B_{y'}(P_{b-2})\circ
P_{b-1}=B_y(P_{b-2})\circ P_{b-1}$.
From the condition~\ref{def:alternating:cycle} of the $y$-alternating path $P$,
$B_y(P_{b-2})\circ P_{b-1}\not\equiv F_y(P_b)$ holds. From the choice of $P_b$, $F_y(P_b)$ is internally disjoint
from $P''$.
Thus, we can obtain a basic $\F$-packing $\bar{y}$ of size $|y|$ by introducing a new integral path
$P''\circ F_y(P_b)$.
Let $W$ be the path from $s(P_a)$ to $t(P_b)$ along the integral path and let
$\bar{P}:=P_1\circ\cdots\circ P_{a-2}\circ (P_{a-1}\circ W\circ P_{b+1})\circ P_{b+2}\circ\cdots\circ P_p$.

Finally, we prove that $(\bar{P}, Q)$ is a $\bar{y}$-augmenting pair satisfying the preconditions of this lemma.
From the construction of $\bar{y}$ and $\bar{P}$, the
conditions~\ref{def:alternating:distinct}--\ref{def:alternating:odd_even} of the $\bar{y}$-alternating paths
(Definition~\ref{def:alternating}) are clearly satisfied.
The condition~\ref{def:alternating:cycle} is satisfied because $B_{\bar{y}}(P_{a-2})\circ (P_{a-1}\circ W\circ P_{b+1})\equiv B_y(P_{a-2})\circ P_{a-1}\circ W\circ P_{b+1}\equiv B_y(P_a)\circ P_a^{-1}\circ W\circ P_{b+1}\equiv B_y(P_b)\circ P_{b+1}$ holds.
Thus, $\bar{P}$ is a $\bar{y}$-alternating path.
Because $T_{\bar{y}}(\bar{P})\equiv T_y(P)\neq T_y(Q)\equiv T_{\bar{y}}(Q)$ holds and $(\bar{P}, Q)$
satisfies the precondition~\ref{def:augment_pair:special:short_prefix} of the lemma, $(\bar{P}, Q)$ is a $\bar{y}$-augmenting pair.
Because no segments of $\bar{P}$ are newly contained in the spokes in $\bar{y}$, the condition~\ref{def:augment_pair:special:no_spokes} of the lemma is satisfied.
From the choice of $P_a$, for any even $i\in\{a+2,\ldots,b-2\}$, $B_{y'}(P_i)$ contains no segments from $\{P_2,\ldots,P_{a-2}\}$.
Therefore, no two segments of $\bar{P}$ are newly contained in a same integral path in $\bar{y}$.
Thus, the condition~\ref{def:augment_pair:detour:direction} is satisfied.
When $(P,Q)$ satisfies the condition~\ref{def:augment_pair:special:a}, $(\bar{P},Q)$ also satisfies the condition~\ref{def:augment_pair:special:a}.
When $(P,Q)$ satisfies the condition~\ref{def:augment_pair:special:b}, $(\bar{P},Q)$ also satisfies the condition~\ref{def:augment_pair:special:b}
because the integral path containing $P_2$ and $Q_2$ remains in $\bar{y}$.
Thus, all the conditions in the lemma are satisfied.

We can find the pair $(P_a, P_b)$ by gradually increasing an index $i$, which is not reset during the repetition, and
by searching for $P_j$ contained in $F(P_i)$ by traversing the integral path.
Therefore, each edge is traversed at most once through the whole process. Thus, the total running time is linear in
the graph size.
\end{proof}

\begin{lemma}\label{lem:augment_pair:shortcut}
Given a basic $\F$-packing $y$ and a $y$-augmenting pair $(P=P_1\circ\cdots\circ P_p,Q=Q_1\circ\cdots\circ Q_q)$,
either of a $y$-augmenting path or a $y$-augmenting pair $(\bar{P}, \bar{Q})$ satisfying all the following conditions
can be constructed in linear time.
\begin{enumerate}
  \item All the segments of $\bar{Q}$, except for the last one, are contained in
  $\bar{P}$.\label{lem:augment_pair:shortcut:1}
  \item $\bar{P}$ can be written as $\bar{P}=P\circ Q_q^{-1}\circ Q_{q-1}^{-1}\circ\cdots\circ Q_{q'}^{-1}$ for some
  $q'$.\label{lem:augment_pair:shortcut:2}
  \item The common prefix of $(\bar{P}, \bar{Q})$ contains the common prefix of $(P,
  Q)$.\label{lem:augment_pair:shortcut:3}
  \item The following two conditions are satisfied for the new segments $S:=\{Q_q^{-1},\ldots,Q_{q'}^{-1}\}$ of
  $\bar{P}$:\label{lem:augment_pair:shortcut:4}
  \begin{enumerate}
    \item no segments in $S$ are contained in the spokes, and
    \item any two segments in $S$ contained in the same integral path have the same
    direction.
  \end{enumerate}
\end{enumerate}
\end{lemma}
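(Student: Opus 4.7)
The plan is to build $\bar{P}$ incrementally from $P$ by peeling off segments of $Q$ one at a time from the end, moving them (reversed) to the tail of $\bar{P}$ via a single linear scan. I would maintain an index $q'$, initialized to $q+1$, together with $\bar{P} := P \circ Q_q^{-1} \circ Q_{q-1}^{-1} \circ \cdots \circ Q_{q'}^{-1}$ and $\bar{Q} := Q_1 \circ \cdots \circ Q_{q'-1}$. At each step I attempt to decrement $q'$ by one, i.e.\ to append $Q_{q'-1}^{-1}$ to $\bar{P}$, and distinguish three cases: (a) if the resulting endpoint witnesses one of the three conditions of Definition~\ref{def:augment_path}, return $\bar{P}$ extended by $Q_{q'-1}^{-1}$ as a $y$-augmenting path; (b) if $Q_{q'-1}^{-1}$ would be contained in a spoke, or would share an integral path in the opposite direction with a previously appended reversed segment, halt before appending and return $(\bar{P}, \bar{Q})$; (c) otherwise decrement $q'$ and continue. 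If $q'$ reaches a value at which $Q_{q'-1}$ lies entirely inside the common prefix $R$ of $(P, Q)$, we also halt.

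The invariants to maintain during the scan are that $\bar{P}$ is a $y$-alternating path and that every appended reversed segment respects property~\ref{lem:augment_pair:shortcut:4} of the statement. Since $Q$ is itself a $y$-alternating path, conditions~\ref{def:alternating:distinct}--\ref{def:alternating:odd_even} of Definition~\ref{def:alternating} transfer to $\bar{P}$ essentially by construction, using the fact that reversing a segment preserves its type (odd vs.\ even) provided it is not a segment lying inside a spoke in the ``wrong'' direction, which is exactly what the stopping rule in case (b) excludes. The delicate part is condition~\ref{def:alternating:cycle_spoke}: it must be re-verified for each appended reversed segment, and also for segments of $P$ whose integral-path containment is now shared with something coming from $Q$. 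Here I would use Corollary~\ref{cor:def:three} to transport the $\equiv$ and $\not\equiv$ relations valid within $Q$ into the ones needed for $\bar{P}$, using Lemma~\ref{lem:def:imp} to equate ``$P \circ Q^{-1}$ conflicting'' with inequivalence of implication values; the stopping rule rules out precisely the shortcut-inducing interactions that Definition~\ref{def:alternating} forbids.

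Once the scan halts, properties~\ref{lem:augment_pair:shortcut:1}--\ref{lem:augment_pair:shortcut:3} of the statement are immediate from the construction, and property~\ref{lem:augment_pair:shortcut:4} is exactly what the stopping rule guarantees (with the appended set $S$ being empty, hence trivially compliant, when the scan never advances). To conclude that $(\bar{P}, \bar{Q})$ is a $y$-augmenting pair, I would verify the four conditions of Definition~\ref{def:augment_pair}: condition~\ref{def:augment_pair:1} is inherited from $(P, Q)$ because the common prefix only grows; condition~\ref{def:augment_pair:2} reduces to $T_y(\bar{P}) \not\equiv T_y(\bar{Q})$, which follows from $T_y(P) \not\equiv T_y(Q)$ because each move of a segment from $\bar{Q}$ to $\bar{P}$ preserves the pair of tail implication values up to equivalence; condition~\ref{def:augment_pair:3} is settled by a short parity argument on $q - q' + 1$, combined with the fact that if both parities are odd, we must have $t(\bar{P}) \notin V(y)$ because otherwise case (a) would have fired; and condition~\ref{def:augment_pair:4} reduces to property~\ref{lem:augment_pair:shortcut:4} combined with the corresponding condition on $(P, Q)$. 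The main obstacle I expect is the careful bookkeeping around condition~\ref{def:alternating:cycle_spoke}, since the $F_y$ and $B_y$ values relative to an integral-path segment depend on which other segments of $\bar{P}$ share that integral path; the reason this ultimately works is that the stopping rule in (b) is exactly the minimal rule that prevents a later appended segment from inducing a shortcut on an earlier one. Linear running time is immediate, since we sweep through the segments of $Q$ once and each primal edge only a bounded number of times.
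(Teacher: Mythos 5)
There is a genuine gap, and it sits exactly at the heart of the lemma. Your construction only ever does two things: append a reversed segment of $Q$ to $\bar{P}$, or halt. But conclusion~\ref{lem:augment_pair:shortcut:1} requires that \emph{all} segments of $\bar{Q}$ except the last be contained in $\bar{P}$, and when you halt under your rule (b) — because $Q_{q'-1}$ lies in a spoke, or clashes in direction inside an integral path — the returned $\bar{Q}=Q_1\circ\cdots\circ Q_{q'-1}$ still consists of segments of the original $Q$, which outside the common prefix are in general not contained in $\bar{P}$ at all. So the pair you return fails condition~\ref{lem:augment_pair:shortcut:1} unless you happen to have already peeled down to the common prefix. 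The missing idea is the ``shortcut'' that gives the lemma its name: when the last remaining segment $Q_q$ is an even segment lying in a spoke (or, in the integral-path case, when the nearest segment inside $F(Q_q)$ has the right direction/equivalence type), one does \emph{not} stop with the current $\bar{Q}$; instead one finds the nearest segment $P_k$ of $P$ contained in $F(Q_q)$ and \emph{re-routes} $\bar{Q}:=P_1\circ\cdots\circ P_{k-1}\circ W$, where $W$ runs along the spoke or integral path from $s(P_k)$ to $t(P)$. Since $T(\bar{Q})=B(W)=B(Q_q)=T(Q)$, this is still an augmenting pair, and now every segment of $\bar{Q}$ except the last is literally a segment of $P\subseteq\bar{P}$, which is how condition~\ref{lem:augment_pair:shortcut:1} is actually achieved. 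Moreover, in the spoke case where no $P_k$ lies in $F(Q_q)$, the correct output is that $P$ \emph{itself} is a $y$-augmenting path (condition~\ref{def:augment_path:2} of Definition~\ref{def:augment_path}, since $t(P)$ is on the spoke and $T(P)\not\equiv T(Q)=B(t(P))$); your case (a), which only checks the endpoint of $\bar{P}$ extended by $Q_{q'-1}^{-1}$, cannot produce this, and appending a spoke segment in the spoke's own direction is not even permitted by condition~\ref{def:alternating:odd_even} of Definition~\ref{def:alternating}.

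Two further points. First, when $Q_q$ lies in an integral path, whether one may append $Q_q^{-1}$ or must re-route is decided not by direction alone but by an equivalence test on the nearest segment of $P$ or $Q$ contained in $F(Q_q)$ (whether $B(\cdot_{k-2})\circ \cdot_{k-1}\equiv B(\cdot_k)\circ \cdot_k^{-1}$ holds); your direction-based stopping rule neither performs this test nor verifies that the appended segment does not create a shortcut violating condition~\ref{def:alternating:cycle_spoke} — and in fact, for the appended segments themselves, condition~\ref{lem:augment_pair:shortcut:4} follows automatically from condition~\ref{def:alternating:cycle} applied to $Q$, so your rule (b) guards against a non-issue while missing the real one. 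Second, when both the last segment of the current $\bar{P}$ and the segment being appended are of odd type (internally disjoint from $y$), they must be merged into a single segment (as in the paper's Case 1, $P_p\circ Q_q^{-1}$), since Definition~\ref{def:alternating} forces segments to alternate between disjoint-from-$y$ and contained-in-$y$; a plain append breaks the parity structure on which $T_y$ and the subsequent arguments rely.
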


\begin{proof}
Initially, the conditions~\ref{lem:augment_pair:shortcut:2}--\ref{lem:augment_pair:shortcut:4} are trivially
satisfied.
We repeat the following process, which preserves these conditions.
The condition~\ref{lem:augment_pair:shortcut:1} is satisfied
when $q$ becomes one or $Q_{q-1}$ gets contained in $P$.

\paragraph{(Case 1)}
If both of $p$ and $q$ are odd,
we update $P'\gets P_1\circ\cdots\circ P_{p-1}\circ (P_p\circ Q_q^{-1})$ and $Q'\gets Q_1\circ\cdots\circ Q_{q-1}$.
Because $P_p$ and $Q_q$ share no edges and because
$T(P')\circ T(Q')^{-1}=(B(P_{p-1})\circ (P_p\circ Q_q^{-1}))\circ B(Q_{q-1})^{-1}=(B(P_{p-1})\circ P_p)\circ(B(Q_{q-1})\circ Q_q)^{-1}=T(P)\circ T(Q)^{-1}$ holds,
$(P',Q')$ is a $y$-augmenting pair.

\paragraph{(Case 2)}
If $p$ is even and $q$ is odd,
we update $P'\gets P_1\circ\cdots\circ P_{p-1}\circ P_p\circ Q_q^{-1}$ and $Q'\gets Q_1\circ\cdots\circ Q_{q-1}$.
Because $P_p$ and $Q_q$ share no edges and because
$T(P')\circ T(Q')^{-1}=(B(P_p)\circ Q_q^{-1})\circ B(Q_{q-1})^{-1}=B(P_p)\circ (B(Q_{q-1})\circ Q_q)^{-1}=T(P)\circ T(Q)^{-1}$ holds,
$(P',Q')$ is a $y$-augmenting pair.

\begin{figure}[t]
  \centering
  \includegraphics[scale=0.7]{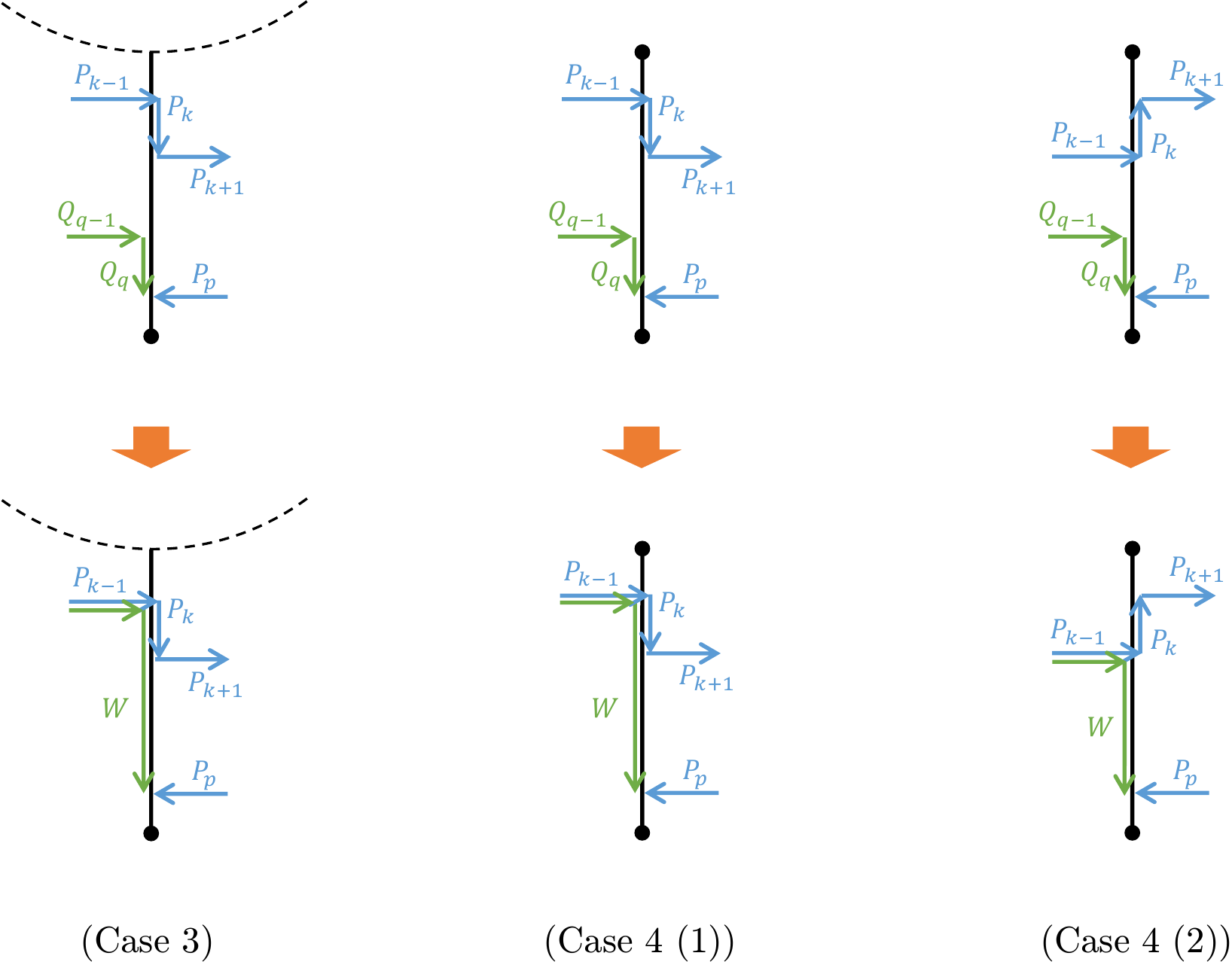}
  \caption{Applying Lemma~\ref{lem:augment_pair:shortcut}.}
  \label{fig:pair_shortcut}
\end{figure}

\paragraph{(Case 3)}
If $p$ is odd, $q$ is even, and $Q_q$ is contained in a spoke, then we search for the nearest segment of $P$ contained in
$F(Q_q)$ by traversing the spoke.
Note that none of the other segments of $Q$ are contained in $F(Q_q)$ from the condition~\ref{def:alternating:spoke} of
the $y$-alternating paths (Definition~\ref{def:alternating}). Moreover, each segment of $P$ is fully contained in $F(Q_q)$ or
internally disjoint from $F(Q_q)$ because $Q_q$ shares no edges with $P$. Thus, each edge is traversed at most once
through the entire process, and the total running time of this part is linear in the graph size.
If none of the $P_i$'s are contained in $F(Q_q)$,
then $P$ is a $y$-augmenting path because $t(P)$ is contained in the spoke, and $T(P)\not\equiv T(Q)=B(t(P))$ holds.
Otherwise, let $P_k$ be the nearest segment of $P$ contained in $F(Q_q)$; let $W$ be the path from $s(P_k)$ to $t(P)$
along the spoke; and let $\bar{Q}:=P_1\circ\cdots\circ P_{k-1}\circ W$ (Figure~\ref{fig:pair_shortcut} (left)).
Because $T(\bar{Q})=B(W)=B(Q_q)=T(Q)$ holds, $(P,\bar{Q})$ is a $y$-augmenting pair.
Note that this finishes the process.

\paragraph{(Case 4)}
If $p$ is odd, $q$ is even, and $Q_q$ is contained in an integral path, then we search for the
nearest segment of $P$ or $Q$ contained in $F(Q_q)$ by traversing the integral path.
Because each edge is traversed at most twice (in two directions) through the entire process, the total running time of this part is linear in the graph size.
Note that from the condition~\ref{def:alternating:cycle} of the $y$-alternating path $Q$
(Definition~\ref{def:alternating}), $F(Q_q)$ cannot contain a segment $Q_k$ such that $Q_k$ has the same direction as
$Q_q$ or has the opposite direction as $Q_q$ and $B(Q_{k-2})\circ Q_{k-1}\not\equiv B(Q_k)\circ Q_k^{-1}$ holds.

If the nearest segment is $P_k$ such that $P_k$ has the same direction as $Q_q$ or
has the opposite direction as $Q_q$ and $B(P_{k-2})\circ P_{k-1}\not\equiv B(P_k)\circ P_k^{-1}$ holds,
let $W$ be the path from $s(P_k)$ to $t(P)$ along the integral path and $\bar{Q}:=P_1\circ\cdots\circ P_{k-1}\circ W$ (see Figure~\ref{fig:pair_shortcut} (right)).
Because $T(\bar{Q})=B(W)=B(Q_q)=T(Q)$ holds, $(P,\bar{Q})$ is a $y$-augmenting pair, and we then finish the process.

Otherwise (i.e., (a) no segments are contained in $F(Q_q)$, (b) the nearest segment is $Q_k$ that has the opposite
direction as $Q_q$ and $B(Q_{k-2})\circ Q_{k-1}\equiv B(Q_k)\circ Q_k^{-1}$ holds, or
(c) the nearest segment is $P_k$ that has the opposite direction as $Q_q$ and
$B(P_{k-2})\circ P_{k-1}\equiv B(P_k)\circ P_k^{-1}$ holds),
we update $P'\gets P_1\circ\cdots\circ P_p\circ Q_q^{-1}$ and $Q'\gets Q_1\circ\cdots\circ Q_{q-1}$.
Note that in this case, from the condition~\ref{def:alternating:cycle} of the $y$-alternating paths
(Definition~\ref{def:alternating}) and the condition~\ref{def:augment_pair:4} of $y$-augmenting pairs
(Definition~\ref{def:augment_pair}), conditions (b) and (c) hold not only against the nearest segment $Q_k$ or $P_k$,
but also against any segments contained in $F(Q_q)$.

First, we prove that $P'$ is a $y$-alternating path.
The first four conditions of the $y$-alternating paths (Definition~\ref{def:alternating}) are clearly satisfied.
The condition~\ref{def:alternating:cycle_spoke} is satisfied for $i=p+1$
because $B(P_{p-1})\circ P_p=T(P)\not\equiv T(Q)=B(Q_q)=F(Q_q^{-1})$ holds.
For checking the condition~\ref{def:alternating:cycle_spoke} against the other $P_i$'s, it suffices to show that
for any segment $P_i$ contained in the same integral path as $Q_q$,
it holds that $B(P_{i-2})\circ P_{i-1}\equiv B(P_i)\circ P_i^{-1}$, and $Q_q$ is contained in $F(P_i)$.
Let $P_i$ be a segment contained in $B(Q_q)$.
From the condition~\ref{def:augment_pair:4} of the $y$-augmenting pairs (Definition~\ref{def:augment_pair}), $P_i$ has the same direction as $Q_q$.
Thus, $Q_q$ is contained in $F(P_i)$. Therefore, from the condition~\ref{def:augment_pair:4} again,
$B(P_{i-2})\circ P_{i-1}\equiv B(P_i)\circ P_i^{-1}$ holds.
Let $P_i$ be a segment contained in $F(Q_q)$.
As we have discussed earlier, $B(P_{i-2})\circ P_{i-1}\equiv B(P_i)\circ P_i^{-1}$ then holds, and $P_i$ has the
opposite direction as $Q_q$, implying that $Q_q$ is contained in $F(P_i)$.

Next, we prove that $(P',Q')$ is a $y$-augmenting pair.
Conditions~\ref{def:augment_pair:1} and~\ref{def:augment_pair:3} of the $y$-augmenting pairs (Definition~\ref{def:augment_pair}) are clearly satisfied.
Because $T(P')=B(Q_q^{-1})=F(Q_q)\not\equiv B(Q_{q-2})\circ Q_{q-1}=T(Q')$ holds, the condition~\ref{def:augment_pair:2}
is satisfied.
For checking the condition~\ref{def:augment_pair:4}, it suffices to show that
(1) none of the $Q_j$'s with $j<q$ are contained in $B(Q_q^{-1})=F(Q_q)$ in the same direction as $Q_q$;
(2) if $B(P_{p-1})\circ P_p\not\equiv B(Q_q^{-1})\circ (Q_q^{-1})^{-1}=F(Q_q)\circ Q_q$,
none of the $Q_j$'s with $j<q$ are contained in $F(Q_q^{-1})=B(Q_q)$ in the opposite direction as $Q_q$;
(3) for any $Q_j$ contained in an integral path, $Q_q$ is not contained in $B(Q_j)$ in the same direction as $Q_j$; and
(4) if $B(Q_{j-2})\circ Q_{j-1}\not\equiv B(Q_j)\circ Q_j^{-1}$ holds, $Q_q$ is not contained in $F(Q_j)$
in the opposite direction as $Q_j$.
All these conditions directly follow from the condition~\ref{def:alternating:cycle} of the $y$-alternating path
$Q$ (Definition~\ref{def:alternating}).

Finally, we prove that $(P', Q')$ satisfies the
conditions~\ref{lem:augment_pair:shortcut:2}--\ref{lem:augment_pair:shortcut:4} of this lemma.
Conditions~\ref{lem:augment_pair:shortcut:2} and~\ref{lem:augment_pair:shortcut:3} are clearly satisfied.
From the condition~\ref{def:alternating:cycle} of the initial $y$-alternating path $Q$, $B(Q_q)$ contains none of the new segments $S$.
Thus, if a segment in $S$ is contained in the same integral path as $Q_q$, it must be contained in $F(Q_q)$.
Therefore, it has the same direction as $Q_q^{-1}$.
\end{proof}

Finally, we prove Lemma~\ref{lem:augment_pair} by combining
Lemmas~\ref{lem:augment_pair:detour} and~\ref{lem:augment_pair:shortcut}.

\begin{proof}[Proof of Lemma~\ref{lem:augment_pair}]
First, we apply Lemma~\ref{lem:augment_pair:shortcut} against $(Q, P)$.
If we obtain a $y$-augmenting path, we obtain a basic $\F$-packing of size
$|y|+\frac{1}{2}$ by applying Lemma~\ref{lem:augment_path}.
Otherwise, we obtain an updated $y$-augmenting pair $(\bar{Q},\bar{P})$ such that
all the segments of $\bar{P}$, except for the last one, are contained in the common prefix.

Next, we apply Lemma~\ref{lem:augment_pair:shortcut} against $(\bar{P}, \bar{Q})$.
We finish if we obtain a $y$-augmenting path; otherwise, we obtain an updated $y$-augmenting pair $(\hat{P},\hat{Q})$.
Let $\hat{P}:=\hat{P}_1\circ\cdots\circ\hat{P}_{\hat{p}}$ and $\hat{Q}:=\hat{Q}_1\circ\cdots\circ\hat{Q}_{\hat{q}}$.
Then, $\hat{Q}_{\hat{q}-1}$ is contained in $\hat{P}_{\hat{q}-1}$.
Note that this implies that $\hat{P}_i=\hat{Q}_i$ holds for any $i\in\{1,\ldots,\hat{q}-2\}$. Moreover, $\hat{P}_{\hat{q}-1}=\hat{Q}_{\hat{q}-1}$ also holds if $\hat{q}$ is even.
We have $\hat{q}\geq \bar{p}-1$
because the common prefix of $(\hat{P},\hat{Q})$ contains the common prefix of $(\bar{P},\bar{Q})$
and because $\hat{P}_{\bar{p}-1}$ is contained in the common prefix of $(\bar{P},\bar{Q})$.
Let us assume that $\hat{q}=\bar{p}-1$ holds, and $\hat{q}$ is odd.
In this case, $\hat{P}_{\hat{q}}$ is contained in $\hat{Q}_{\hat{q}}$. Therefore, $\hat{P}_{\hat{q}}=\hat{Q}_{\hat{q}}$ holds.
Then, from the condition~\ref{def:augment_pair:3} of the $y$-augmenting pair $(\hat{P},\hat{Q})$ (Definition~\ref{def:augment_pair}),
$\hat{p}$ must be even, and $\hat{P}_{\hat{p}}$ must be contained in the same integral path as $\hat{P}_{\bar{p}}$.
However, this violates the condition~\ref{def:alternating:cycle} of the $y$-alternating path $\hat{P}$ because $B(\hat{P}_{\bar{p}-2})\circ\hat{P}_{\bar{p}-1}=T(\hat{Q})\not\equiv T(\hat{P})=B(\hat{P}_{\hat{p}})=B(\hat{P}_{\bar{p}})\circ \hat{P}_{\bar{p}}^{-1}$ holds.
Thus, $\hat{q}\geq \bar{p}$ holds if $\hat{q}$ is odd.
We consider two cases.

If $\hat{q}$ is odd or $\hat{P}_{\hat{q}}$ and $\hat{Q}_{\hat{q}}$ have the same direction,
let $r$ be the maximum even integer at most $\hat{q}$.
We apply Corollary~\ref{cor:alternating_sub} independently against $\hat{P}$ and $\hat{Q}$ and obtain
a basic $\F$-packing $y'$ of the same size and $y'$-alternating paths $P'$ and $Q'$.
Here, $\hat{P}_i=\hat{Q}_i$ holds for any $i\in\{1,\ldots,r-1\}$, and $F(\hat{P}_r)=F(\hat{Q}_r)$ holds.
Hence, the basic $\F$-packing obtained by these two applications are the same. We then, thus, use the same symbol $y'$.
Because $T_{y'}(P')\equiv T_y(\hat{P})\not\equiv T_y(\hat{Q})\equiv T_{y'}(Q')$ holds, $(P',Q')$ is a
$y'$-augmenting pair.
We now show that $(P',Q')$ satisfies the conditions in Lemma~\ref{lem:augment_pair:detour}.
Because $Q'$ is single-segment, the third condition is satisfied.
We have $r+2=\hat{q}+1>\bar{p}$ when $\hat{q}$ is odd, and we have $r+2=\hat{q}+2>\bar{p}$ when $\hat{q}$ is even.
Therefore, from the condition~\ref{lem:augment_pair:shortcut:4} of Lemma~\ref{lem:augment_pair:shortcut}, the first and the second conditions hold.
Thus, we obtain a basic $\F$-packing of size $|y|+\frac{1}{2}$ by applying Lemma~\ref{lem:augment_pair:detour} against $(P',Q')$.

If $\hat{q}$ is even and $\hat{P}_{\hat{q}}$ and $\hat{Q}_{\hat{q}}$ have the opposite direction,
let $r:=\hat{q}-2$.
We apply Corollary~\ref{cor:alternating_sub} against independently $\hat{P}$ and $\hat{Q}$ and obtain
a basic $\F$-packing $y'$ of the same size and $y'$-alternating paths $P'$ and $Q'$.
Here, $\hat{P}_i=\hat{Q}_i$ holds for any $i\in\{1,\ldots,r\}$.
Hence, the basic $\F$-packings obtained by these two applications are the same. We, thus, use the same symbol $y'$.
Because $T_{y'}(P')\equiv T_y(\hat{P})\not\equiv T_y(\hat{Q})\equiv T_{y'}(Q')$ holds, $(P',Q')$ is a
$y'$-augmenting pair.
We now show that $(P',Q')$ satisfies the conditions in Lemma~\ref{lem:augment_pair:detour}.
Because we set $r:=\hat{q}-2$, the number of segments of $Q'$ is two.
Because $\hat{P}_{\hat{q}-1}=\hat{Q}_{\hat{q}-1}$ holds, the first segments of $P'$ and $Q'$ are the same.
Because $\hat{P}_{\hat{q}}$ and $\hat{Q}_{\hat{q}}$ have the opposite direction, the second segments of $P'$ and $Q'$ have the opposite direction.
Thus, the third condition is satisfied.
Note that in this case, $\hat{P}_{\hat{q}}$ and $\hat{Q}_{\hat{q}}$ are contained in an integral path and,
from the condition~\ref{def:alternating:cycle} of the $y$-alternating paths (Definition~\ref{def:alternating}),
the integral path containing $\hat{P}_{\hat{q}}$ and $\hat{Q}_{\hat{q}}$ never contain any other segments.
Moreover, because $r+4=\hat{q}+2>\bar{p}$, the first and second conditions follow from the condition~\ref{lem:augment_pair:shortcut:4} of Lemma~\ref{lem:augment_pair:shortcut}.
Thus, we obtain a basic $\F$-packing of size $|y|+\frac{1}{2}$ by applying Lemma~\ref{lem:augment_pair:detour} against $(P',Q')$.
\end{proof}


\section{Farthest Cover}\label{sec:farthest}
This section provides an algorithm for computing a farthest minimum half-integral $\F$-cover and proves the following theorem.
\begin{theorem}\label{thm:farthest}
Let $C$ be a set of 0/1/all constraints on variables $V$ and $\varphi_A$ be a partial assignment for a subset $A\subseteq V$.
Given the primal graph of $C$, the set $A$, an incremental-test oracle for $(C,\varphi_A)$, and an integer $k$,
we can compute a pair of a farthest minimum half-integral $\F_{C,\varphi_A}$-cover $x$ and a maximum half-integral $\F_{C,\varphi_A}$-packing $y$ with $|x|=|y|\leq\frac{k}{2}$
or correctly conclude that the size of the minimum half-integral $\F_{C,\varphi_A}$-cover is at least $\frac{k+1}{2}$ in $O(kmT)$ time,
where $m$ is the number of constraints, and $T$ is the running time of the incremental-test oracle.
\end{theorem}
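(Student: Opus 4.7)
The plan is to bootstrap from Theorem~\ref{thm:hi}: first invoke it to obtain a pair $(x_0,y)$ of a minimum half-integral $\F$-cover and a maximum half-integral $\F$-packing with $|x_0|=|y|$ in $O(kmT)$ time; if $|x_0|>k/2$ report failure. Otherwise, I will iteratively replace $x_0$ by a minimum cover $x$ with strictly larger $R(x)$ until no such replacement exists, at which point $x$ is farthest by definition. Throughout this process the basic $\F$-packing $y$ will be kept fixed (its size is already optimal), and $x$ will always be the cover derived from $y$ via the construction of Section~\ref{sec:hi:cover}, using indices $a(I),b(I)$ for each integral path $I$ and $a(S)$ for each spoke $S$.

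To enlarge $R(x)$, I will push each boundary outward---that is, increase $a(I),a(S)$ or decrease $b(I)$---one step at a time. A single push succeeds exactly when the vertex just uncovered admits an implicational walk from $A$ whose interior avoids the remaining support of $x$; this is precisely the condition that Algorithm~\ref{alg:search} checks when it would next visit that vertex. Crucially, I will run the push searches not from scratch but as a continuation of Algorithm~\ref{alg:search}, retaining the visited set, the tables $\prev$ and $\tail$, and the boundary indices from the terminating state of the previous search. Since $|y|$ is already maximum, no genuine augmenting path or pair can be returned, so every successful push is a benign boundary advance that simply extends the visited set and moves the $\frac{1}{2}$-weight of $x$ one vertex outward along the corresponding integral path or spoke.

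The running time bound rests on the monotonicity of the indices $a(I),b(I),a(S)$ established in invariants~\ref{lem:alg:invariants:cycle} and~\ref{lem:alg:invariants:spoke} of Lemma~\ref{lem:alg:invariants}: once a boundary has advanced past a vertex it never retreats, so each vertex on each integral path or spoke is visited at most once across the whole sequence, and each edge is relaxed $O(1)$ times. Since the basic packing $y$ contains at most $|y|\le k/2$ integral paths and spokes and thus at most $k$ boundary positions, the total number of push calls is at most $2k$ and their combined cost, amortized through the shared state, is $O(kmT)$. Farthestness of the terminal $x$ follows because any dominating minimum cover $x'$ would exhibit an implicational walk certifying a further available push, contradicting termination.

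The main obstacle will be carrying out the amortization rigorously: I need to argue that continuing Algorithm~\ref{alg:search} across boundary pushes preserves all the invariants of Lemma~\ref{lem:alg:invariants} even though the current ``boundary vertex'' in $x$ changes, and that pushing one boundary never invalidates earlier progress on another. The disjointness of integral paths and wheels in a basic $\F$-packing, together with the fact that each push modifies $x$ only at a single boundary position, should isolate the updates so that the monotonicity argument goes through uniformly. I will also need to handle wheels carefully---pushing $a(S)$ may force a reconfiguration of a spoke or, in degenerate cases, a wheel restructuring---but the machinery of Section~\ref{sec:hi:augmentation}, in particular Corollaries~\ref{cor:alternating_sub} and~\ref{cor:alternating_full}, provides the tools to perform these local modifications in linear time without disturbing the remainder of $y$.
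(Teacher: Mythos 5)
There is a genuine gap, and it is the central mechanism of the paper's proof: the \emph{contraction} operation. In the paper's algorithm, to test whether the boundary $a(I)$ of an integral path (or $a(S)$ of a spoke) can be advanced, one does not merely continue the terminated search or check whether the newly uncovered vertex admits an implicational walk avoiding the support of $x$. Instead one modifies the instance by contracting the prefix $(v_0,\ldots,v_{a(I)+1})$: a new vertex $t'$ is created, joined to $v_{a(I)+1}$ by an equality constraint, added to $A$, and assigned $\imp((v_0,\ldots,v_{a(I)+1}))$; then Algorithm~\ref{alg:search} is restarted \emph{from the new source $t'$} with the old tables. In the contracted instance the set of conflicting walks grows to some $\F'\supseteq\F$, so an augmenting path or pair \emph{can} appear even though $y$ is maximum for $\F$ --- and finding one is precisely the certificate that no minimum cover pushes this boundary farther (by Lemma~\ref{lem:farthest:contract}, a minimum $\F$-cover that is zero on the prefix would be an $\F'$-cover of size $|y|$, contradicting the strictly larger $\F'$-packing), while failing to find one is what actually advances $a(I)$, since $v_i'v_i\not\equiv(v_\ell,\ldots,v_i)$ forces the boundary past $i$ and Lemma~\ref{lem:cover} then certifies the new tables as a minimum cover. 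Your proposal keeps the instance fixed, asserts that ``no genuine augmenting path or pair can be returned,'' and replaces the test by the existence of an implicational walk to the uncovered vertex avoiding the support of $x$. That criterion is not the right one (coverness after a push can fail through conflicting walks combining the uncovered prefix with other half-weight vertices, spokes, or half-integral cycles, and this is detected only by the full augmenting-path/pair machinery in the contracted instance), and with augmentation ruled out you have no stopping rule at all, hence no way to certify farthestness. Moreover, your ``continuation'' has no source to continue from: when Algorithm~\ref{alg:search} returns NO, every vertex of $A\setminus V(y)$ has been exhausted, so nothing will advance unless a new terminal is injected --- which is exactly what the contraction supplies.

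A second, related gap is the farthestness argument itself. Your one-sentence claim that a dominating minimum cover ``would exhibit an implicational walk certifying a further available push'' presupposes a characterization of dominating covers that you never establish. The paper first proves Lemma~\ref{lem:farthest:dual} (every minimum half-integral $\F$-cover places total weight exactly $1$ on each integral path and $\frac{1}{2}$ on each spoke of the fixed maximum basic packing), which is what makes the boundary indices $a_{x'},b_{x'}$ of an \emph{arbitrary} minimum cover well defined and yields Corollary~\ref{cor:farthest:dom}; only then does the contraction-based contradiction exclude dominators boundary by boundary, maintained through the invariants relating the original and contracted families of conflicting walks. Finally, note that no restructuring of $y$ occurs in this phase: when an augmenting path or pair is found the step is rewound, not applied, so your plan to invoke Corollaries~\ref{cor:alternating_sub} and~\ref{cor:alternating_full} for ``wheel reconfiguration'' addresses a non-issue while the real mechanism is missing. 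Your complexity accounting (about $2k$ boundary loops, each amortized as one $O(mT)$ search continuation) matches the paper's and would go through, but only after the contraction device and the accompanying lemmas are in place.
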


We use the following structure of a minimum half-integral $\F$-cover.
\begin{lemma}\label{lem:farthest:dual}
The following holds for any maximum basic $\F$-packing $y$ and any minimum half-integral $\F$-cover $x$.
\begin{enumerate}
  \item $x(V(I))=1$ for any integral path $I$ of $y$.
  \item $x(V(S))=\frac{1}{2}$ for any spoke $S$ of $y$.
\end{enumerate}
\end{lemma}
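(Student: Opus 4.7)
The plan is to view $x$ and $y$ as a primal--dual pair for the LP relaxation and apply complementary slackness. By Theorem~\ref{thm:hi}, any maximum basic $\F$-packing $y$ and any minimum half-integral $\F$-cover $x$ satisfy $|x| = |y|$; combined with LP duality this means both are LP-optimal, so complementary slackness gives two implications: (i) whenever $y(W) > 0$ the covering constraint at $W$ is tight, i.e., $\sum_v \mathbf{1}_{V(W)}(v)\,x(v) = 1$; and (ii) whenever $x(v) > 0$ the packing constraint at $v$ is tight, i.e., $\sum_W \mathbf{1}_{V(W)}(v)\,y(W) = 1$. The first claim of the lemma is then immediate: each integral path $I$ is a simple walk in $\F$ with $y(I) = 1$, so (i) directly yields $x(V(I)) = 1$ because all multiplicities equal one.

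For a wheel of degree $d \geq 3$ with spokes $S_1,\ldots,S_d$ and half-integral cycle $H_1\circ\cdots\circ H_d$, let $W_i := S_i \circ H_i \circ S_{i+1}^{-1} \in \F$. By the wheel conditions (the $S_i$ are pairwise vertex-disjoint and internally disjoint from the cycle, and the cycle itself is simple), each $W_i$ is a simple path whose vertex set decomposes disjointly as $V(S_i) \cup V_{\rm in}(H_i) \cup V(S_{i+1})$, so (i) gives
\[
x(V(S_i)) + x(V_{\rm in}(H_i)) + x(V(S_{i+1})) = 1.
\]
Any $v \in V_{\rm in}(H_i)$ is hit only by $W_i$ since $y$ is a basic packing, so the packing sum at $v$ is just $\tfrac{1}{2} < 1$, and (ii) forces $x(v) = 0$. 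Writing $a_i := x(V(S_i))$, we are left with the cyclic system $a_i + a_{i+1} = 1$ for all $i$ (indices modulo $d$); since $d$ is odd, the unique solution is $a_i = \tfrac{1}{2}$ for every $i$.

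The remaining case $d = 1$, which I expect to be the main technical subtlety, requires separate multiplicity bookkeeping: the sole wheel walk $W_1 = S_1 \circ C \circ S_1^{-1}$ is a closed walk traversing the spoke $S_1$ twice, so every vertex of $V(S_1)$ has multiplicity~$2$ and every vertex of $V(C) \setminus V(S_1)$ has multiplicity~$1$ in $V(W_1)$. Thus (i) becomes $2\,x(V(S_1)) + x(V(C)\setminus V(S_1)) = 1$, and exactly the same packing argument as above forces $x(V(C) \setminus V(S_1)) = 0$, yielding $x(V(S_1)) = \tfrac{1}{2}$ as required.
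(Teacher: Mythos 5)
Your proof is correct and takes essentially the same route as the paper: both arguments boil down to the fact that $|x|=|y|$ forces every covering constraint $x(V(S_i\circ H_i\circ S_{i+1}^{-1}))\geq 1$ used by the packing to be tight and kills $x$ on the cycle-interior vertices, after which the odd cyclic system $x(V(S_i))+x(V(S_{i+1}))=1$ yields $x(V(S_i))=\frac{1}{2}$. The only difference is presentational — you invoke complementary slackness as a black box (and treat degree $1$ separately via multiplicities), whereas the paper re-derives the same tightness facts by a self-contained inequality chain $|x|\geq\sum_I x(V(I))+\sum_W x(V(W))\geq |y|=|x|$ that handles all odd degrees uniformly.
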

\begin{proof}
For a wheel $W$, we denote the degree of $W$ by $d(W)$ and the set of vertices contained in $W$ by $V(W)$.
Note that $V(W)$ is not a multiset.
First, we prove that $x(V(W))$ is always at least $\frac{d(W)}{2}$ and if the equality holds, then $x(V(S))=\frac{1}{2}$ holds for any spoke $S$ of $W$.
Let $W$ be a wheel with a half-integral cycle $H_1\circ\cdots\circ H_d$ and spokes $\{S_1,\ldots,S_d\}$.
We define $H:=V(W)\setminus(V(S_1)\cup\ldots\cup V(S_d))$.
We then have
\begin{align*}
x(V(W))&=x(H)+\sum_{i=1}^d x(V(S_i))\\
&=\frac{1}{2}x(H)+\frac{1}{2}\sum_{i=1}^d x(V(S_i\circ H_i\circ S_{i+1}^{-1}))\\
&\geq \frac{1}{2}\sum_{i=1}^d 1\\
&=\frac{d}{2}.
\end{align*}
If $x(V(W))=\frac{d}{2}$ holds, we have $x(H)=0$ and $x(V(S_i))+x(V(S_{i+1}))=1$ for any $i\in\{1,\ldots,d\}$.
Because $d$ is odd, this implies that $x(V(S_i))=\frac{1}{2}$ holds for any $i\in\{1,\ldots,d\}$.

We now prove the lemma.
Because all the integral paths and wheels do not share any vertices, we have
\begin{align*}
|x|&\geq\sum_{I:\text{integral path}}x(V(I))+\sum_{W:\text{wheel}}x(V(W))\\
&\geq \sum_{I:\text{integral path}}1+\sum_{W:\text{wheel}}\frac{d(W)}{2}\\
&=|y|\\
&=|x|.
\end{align*}
Therefore, $x(V(I))=1$ and $x(V(W))=\frac{d(W)}{2}$ must hold for any integral path $I$ and any wheel $W$.
\end{proof}

Fix an arbitrary maximum basic $\F$-packing.
From the above lemma, for every minimum half-integral $\F$-cover $x$, we can construct unique indices $a_x$ and $b_x$ satisfying the following.
\begin{enumerate}
  \item For every integral path $I=(v_0,\ldots,v_\ell)$, $x(v_{a_x(I)})\geq\frac{1}{2}$, $x(v_{b_x(I)})\geq\frac{1}{2}$ and $a_x(I)\leq b_x(I)$ hold.
  \item For every spoke $S=(v_0,\ldots,v_\ell)$, $x(v_{a_x(S)})=\frac{1}{2}$ holds.
\end{enumerate}

Therefore, we obtain the following corollary.
\begin{corollary}\label{cor:farthest:dom}
A minimum half-integral $\F$-cover $x'$ dominates a minimum half-integral $\F$-cover $x$ if and only if the following conditions hold.
\begin{enumerate}
  \item For every integral path $I$, $a_x(I)\leq a_{x'}(I)\leq b_{x'}(I)\leq b_x(I)$ holds.
  \item For every spoke $S$, $a_x(S)\leq a_{x'}(S)$ holds.
\end{enumerate}
\end{corollary}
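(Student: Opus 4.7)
The plan is to reduce the corollary to a precise structural description of $R(x)$ as a function of the indices $(a_x,b_x)$. Specifically, I claim that for the fixed maximum basic $\F$-packing $y$ and any minimum half-integral $\F$-cover $x$, the following equalities hold:
\begin{itemize}
  \item for every integral path $I=(v_0,\ldots,v_\ell)$ of $y$, $R(x)\cap V(I)=\{v_0,\ldots,v_{a_x(I)-1}\}\cup\{v_{b_x(I)+1},\ldots,v_\ell\}$;
  \item for every spoke $S=(v_0,\ldots,v_\ell)$ (oriented so that $v_0\in A$), $R(x)\cap V(S)=\{v_0,\ldots,v_{a_x(S)-1}\}$;
  \item no non-spoke vertex of any half-integral cycle lies in $R(x)$.
\end{itemize}
The inclusions ``$\supseteq$'' are immediate: the prefix $(v_0,\ldots,v_j)$ of $I$ is implicational as a proper prefix of the conflicting walk $I\in\F$, and has zero $x$-weight precisely when $j<a_x(I)$; the reverse-suffix argument handles the other end, and analogous prefixes of $S_i\circ H_i\circ S_{i+1}^{-1}$ handle spokes.

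For the reverse inclusions I argue by contradiction. If some $v_j$ outside the claimed set lies in $R(x)$, witnessed by an implicational walk $W'$ with $x(V(W'))=0$, then $W'$ together with a canonical implicational prefix/suffix of the conflicting walk associated with $I$ or the wheel yields, via Lemma~\ref{lem:def:imp}, a conflicting walk whose total $x$-weight is at most $\frac{1}{2}$ (accounting for the single positive-$x$-weight vertex $v_{a_x(I)}$, $v_{b_x(I)}$, or $v_{a_x(S)}$ traversed). If $W'$ happens to be equivalent to the chosen walk, a third implicational walk reached from the opposite side of $I$, or via the next spoke of the wheel, completes a triangle to which Corollary~\ref{cor:def:three} is applied. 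In every case the resulting conflicting walk has $x$-weight strictly less than $1$, contradicting that $x$ is an $\F$-cover.

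With this description in hand, the corollary is direct. On each integral path $I$ the inclusion $R(x)\cap V(I)\subseteq R(x')\cap V(I)$ is equivalent to $\{v_0,\ldots,v_{a_x(I)-1}\}\subseteq\{v_0,\ldots,v_{a_{x'}(I)-1}\}$ and $\{v_{b_x(I)+1},\ldots,v_\ell\}\subseteq\{v_{b_{x'}(I)+1},\ldots,v_\ell\}$, which is precisely $a_x(I)\le a_{x'}(I)$ and $b_{x'}(I)\le b_x(I)$; the spoke case is analogous, and half-integral cycle vertices contribute nothing on either side. Vertices outside $V(y)$ propagate monotonically: any implicational walk witnessing $v\in R(x)$ uses only vertices with $x$-weight $0$, and under the stated inequalities such a walk can be re-routed through the looser barriers of $x'$ along each traversal of an integral path or spoke. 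Dominance, i.e.\ $R(x)\subsetneq R(x')$, then corresponds to the stated chain of inequalities holding with at least one strict, which by the uniqueness of $(a_x,b_x)$ is the same as $x\ne x'$. The main obstacle is that the conflicting walks built in the previous paragraph may have internal $A$-vertices and therefore not lie in $\F$ directly; this is handled by iteratively splitting at such an internal $A$-vertex $a$ and invoking Corollary~\ref{cor:def:three} with the trivial walk $(a)$ to pass to a smaller conflicting walk of no greater $x$-weight, until a genuine member of $\F$ with $x$-weight below $1$ is obtained.
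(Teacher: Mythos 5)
Your argument is correct, and it supplies exactly what the paper leaves implicit: the paper states this corollary with no proof (it is presented as an immediate consequence of Lemma~\ref{lem:farthest:dual} and the index construction), and the missing content is precisely your structural description of $R(x)$ in terms of $(a_x,b_x)$, proved via Corollary~\ref{cor:def:three} and the splitting-at-internal-$A$-vertices trick to land in $\F$; your handling of the degenerate case $x=x'$ (dominance iff the inequalities hold with at least one strict) is in fact more careful than the paper's literal ``if and only if.'' One remark: the ``re-routed through the looser barriers'' step is both unnecessary and, as phrased, would require justification (altering a walk changes its implications). It can be deleted: every vertex of a witness walk for $v\in R(x)$ lies in $R(x)$ (its prefixes are implicational of zero $x$-weight), and by your characterization together with $a_x(I)\leq a_{x'}(I)\leq b_{x'}(I)\leq b_x(I)$ and $a_x(S)\leq a_{x'}(S)$, the set $R(x)$ avoids every vertex of positive $x'$-weight, namely $v_{a_{x'}(I)}$, $v_{b_{x'}(I)}$, and $v_{a_{x'}(S)}$; hence the identical walk already certifies $v\in R(x')$. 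With that one sentence substituted, the proof is complete.
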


We use the following operation in our algorithm.
Let $P$ be an implicational walk ending at a vertex $t$.
First, we create a new vertex $t'$ and introduce a constraint $\varphi(t)=\varphi(t')$ (along with a new edge $tt'$).
We then insert $t'$ into $A$ and set $\varphi_A(t')=\imp(P)$.
We call this operation as \emph{contracting $P$}.

\begin{lemma}\label{lem:farthest:contract}
Let $P$ be an implicational walk ending at a vertex $t$ and let $\F'$ be the set of conflicting walks after contracting $P$.
Then, any half-integral $\F'$-cover $x$ is also a half-integral $\F$-cover,
and any half-integral $\F$-cover $x$ satisfying $x(V(P)\setminus \{t\})=0$ is also a half-integral $\F'$-cover.
\end{lemma}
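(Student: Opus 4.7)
For Direction~1, the plan is to observe that contracting $P$ is a purely additive operation: we add one new vertex $t'$ to $A$ and one new edge $tt'$ bearing the identity permutation. Hence any $W \in \F$ remains a walk in the enlarged graph, its endpoints stay in $A \subseteq A'$, its internal vertices avoid $A$ by hypothesis and cannot equal the fresh vertex $t'$, and its implication value is computed from exactly the same constraints. Therefore $W \in \F'$, and any half-integral $\F'$-cover automatically covers it.

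For Direction~2, the plan is to extend the cover $x$ by $x(t') := 0$ and relate each $W' \in \F'$ to a conflicting walk in $\F$ on which $x$ has the same measure. If $t' \notin V(W')$, then $W' \in \F$ by the argument of Direction~1. Otherwise, by Corollary~\ref{cor:def:rev} I may assume $t(W') = t'$ and write $W' = W'' \circ tt'$ with $t(W'') = t$; the symmetric sub-case $s(W') = t(W') = t'$ (which forces $t \notin A$) is handled identically by splitting $W'$ around its closed portion at $t$ and working with $P \circ W''' \circ P^{-1}$ in place of $P \circ W''^{-1}$. Since $tt'$ carries the identity permutation, $\imp_{\varphi_A}(W'') = \imp_{\varphi_{A'}}(W') \neq \varphi_{A'}(t') = \imp(P)$, so by Lemma~\ref{lem:def:imp} the walk $W^* := P \circ W''^{-1}$ is $\varphi_A$-conflicting. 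A direct multiset bookkeeping at the junction vertex $t$, combined with the unpacking of $x(V(P) \setminus \{t\}) = 0$ as $x(V(P)) = x(t)$, then gives $x(V(W^*)) = x(V(W'')) = x(V(W'))$; whenever $W^* \in \F$, the cover property applied to $W^*$ yields $x(V(W')) \geq 1$ as required.

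The main obstacle is that $W^*$ may fail to lie in $\F$ because of an internal $A$-vertex. The only possible source is $V_{\rm in}(P) \cap A$: $V_{\rm in}(W'')$ avoids $A$ because $W' \in \F'$, and the junction vertex $t$ is not in $A$ since $t$ is internal in $W'$. The plan is to reduce to the case $V_{\rm in}(P) \cap A = \emptyset$ by exploiting the fact that contracting $P$ depends on $P$ only through the pair $(t, \imp(P))$. Splitting $P$ at its last internal $A$-vertex $v$ into $P_1 \circ P_2$, the matching sub-case $\imp(P_1) = \varphi_A(v)$ lets us replace $P$ with the suffix $P_2$, which is implicational with the same endpoint, the same implication, and the same hypothesis on $x$; the mismatching sub-case produces a $\varphi_A$-conflicting prefix, from which iteratively extracting a sub-walk without internal $A$-vertices yields a walk in $\F$ whose vertex multiset sits inside $V(P) \setminus \{t\}$ (using that $x(t) = 0$ whenever $t$ appears more than once in $P$) and has zero $x$-measure, contradicting that $x$ covers $\F$. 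Finitely many such reductions leave $P$ with no internal $A$-vertex, at which point $W^* \in \F$ and the proof concludes.
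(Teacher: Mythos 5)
Your proof is correct and follows essentially the same route as the paper: observe $\F\subseteq\F'$ for the first claim, and for the second claim replace the pendant edge $t't$ in a violating walk of $\F'$ by $P$, invoke Lemma~\ref{lem:def:imp} to get a walk of $\F$, and use $x(V(P)\setminus\{t\})=0$ to transfer the $x$-measure. The only difference is that you additionally patch two corner cases the paper's terse proof glosses over (walks of $\F'$ with both endpoints at $t'$, and the possibility that $P$ has internal $A$-vertices, which never occurs for the paths contracted in Algorithm~\ref{alg:farthest}); both patches are sound.
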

\begin{proof}
The first claim is trivial because $\F\subseteq\F'$.
We now prove the second claim.
Suppose that $x$ is not a half-integral $\F'$-cover.
Then, there exists a walk $W\in\F'$ with $x(V(W))<1$.
Because $x$ is a half-integral $\F$-cover, $W$ must contain the edge $t't$.
Therefore, we can write $W=t't\circ Q^{-1}$ for some implicational walk $Q$ with $\imp(Q)\neq\imp(P)$.
We then have $x(V(P\circ Q^{-1}))=x(V(W))<1$ and $P\circ Q^{-1}\in\F$, which contradicts the fact that $x$ is a half-integral $\F$-cover.
\end{proof}

\begin{algorithm}[t!]
\caption{Algorithm for computing a farthest minimum half-integral $\F$-cover}
\label{alg:farthest}
\begin{algorithmic}[1]
\State Compute a maximum basic $\F$-packing $y$.
\For{each integral path $I=(v_0,\ldots,v_\ell)$ of $y$}
	\While{$a(I)<b(I)$}\label{line:farthest:cycle:begin1}
		\State Contract $(v_0,\ldots,v_{a(I)+1})$.
		\If{there exists a $y$-augmenting path or pair}
			\State Rewind the search and contraction of this iteration; \textbf{break}
		\EndIf
	\EndWhile\label{line:farthest:cycle:end1}
	\While{$a(I)<b(I)$}\label{line:farthest:cycle:begin2}
		\State Contract $(v_\ell,\ldots,v_{b(I)-1})$.
		\If{there exists a $y$-augmenting path or pair}
			\State Rewind the search and contraction of this iteration; \textbf{break}
		\EndIf
	\EndWhile\label{line:farthest:cycle:end2}
\EndFor
\For{each spoke $S=(v_0,\ldots,v_\ell)$ of $y$}
	\While{$a(S)<\ell$}\label{line:farthest:spoke:begin}
		\State Contract $(v_0,\ldots,v_{a(I)+1})$.
		\If{there exists a $y$-augmenting path or pair}
			\State Rewind the search and contraction of this iteration; \textbf{break}
		\EndIf
	\EndWhile\label{line:farthest:spoke:end}
\EndFor
\State \Return the minimum half-integral $\F$-cover constructed from the current tables.
\end{algorithmic}
\end{algorithm}

We now describe the algorithm for computing a farthest minimum half-integral $\F$-cover (see Algorithm~\ref{alg:farthest}).
We iteratively apply the contraction in the algorithm.
We denote the current set of the conflicting walks by $\F$ and the original set by $\F_\text{orig}$.
First, we compute a maximum basic $\F$-packing $y$ using the algorithm in Section~\ref{sec:hi}.
We keep and reuse the tables ($a$, $b$, $\prev$, and $\tail$) used in the last execution of Algorithm~\ref{alg:search}, which returned NO.
We process the integral paths and the spokes of $y$ one by one, whose detail will be described later, by preserving the following invariants.
\begin{lemma}
The following invariants hold at any step of Algorithm~\ref{alg:farthest}.
\begin{enumerate}
  \item $y$ is a maximum basic $\F$-packing.\label{lem:farthest:invariants:y}
  \item Let $x$ be the minimum half-integral $\F$-cover constructed from the current tables as described in Section~\ref{sec:hi:cover} (i.e., $a_x=a$ and $b_x=b$ hold).
  		Then, any minimum half-integral $\F_\text{orig}$-cover dominating $x$ is also a minimum half-integral $\F$-cover dominating $x$.\label{lem:farthest:invariants:cover}
  \item For any processed integral path $I$, there exists no minimum half-integral $\F$-cover $x'$ satisfying $a(I)<a_{x'}(I)\leq b_{x'}(I)\leq b(I)$ or $a(I)\leq a_{x'}(I)\leq b_{x'}(I)<b(I)$.\label{lem:farthest:invariants:cycle}
  \item For any processed spoke $S$, there exists no minimum half-integral $\F$-cover $x'$ satisfying $a(S)<a_{x'}(S)$.\label{lem:farthest:invariants:spoke}
\end{enumerate}
\end{lemma}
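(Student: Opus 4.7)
The plan is to prove all four invariants simultaneously by induction on the sequence of contractions and search invocations performed by Algorithm~\ref{alg:farthest}. The base case, immediately after line~1 computes the maximum basic $\F$-packing $y$, is clear: invariant~\ref{lem:farthest:invariants:y} holds by construction, invariants~\ref{lem:farthest:invariants:cycle} and~\ref{lem:farthest:invariants:spoke} hold vacuously, and invariant~\ref{lem:farthest:invariants:cover} is trivial because $\F = \F_{\mathrm{orig}}$ at this moment.

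For the inductive step, I would fix a single iteration of an inner while-loop, in which the algorithm contracts a walk $P$ (either $(v_0,\ldots,v_{a(I)+1})$, $(v_\ell,\ldots,v_{b(I)-1})$, or $(v_0,\ldots,v_{a(S)+1})$ depending on the loop) and then calls Algorithm~\ref{alg:search} on the updated family $\F'$. In sub-case A (the search returns NO and the contraction is kept), Lemma~\ref{lem:cover} combined with LP duality implies that $y$ remains a maximum basic $\F'$-packing of size equal to the new minimum half-integral $\F'$-cover, giving invariant~\ref{lem:farthest:invariants:y}. Tracing Algorithm~\ref{alg:search} starting from the new $A$-vertex $v'$ shows that the relevant index is tightened by at least one (e.g., $a(I)$ increases past its old value since, by Lemma~\ref{lem:def:imp}, $(v',v_{a(I)+1})\equiv P$ but $(v',v_{a(I)+1})\not\equiv (v_\ell,\ldots,v_{a(I)+1})$), so the updated $x'$ weakly dominates the previous $x$, preserving invariants~\ref{lem:farthest:invariants:cycle} and~\ref{lem:farthest:invariants:spoke} for previously processed structures. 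For invariant~\ref{lem:farthest:invariants:cover}, given any minimum $\F_{\mathrm{orig}}$-cover $x^{**}$ dominating $x'$, $x^{**}$ also dominates the old $x$, so by the induction hypothesis $x^{**}$ is a minimum $\F$-cover of size $|y|$. What remains is to check that $x^{**}$ covers $\F'\setminus\F$: every new walk uses the contracted edge $tt'$, and via Lemma~\ref{lem:def:imp} each such walk corresponds to a walk in $\F_{\mathrm{orig}}$ of the form $P\circ Q$ or $W\circ P^{-1}$ that $x^{**}$ already covers; because $x^{**}$ inherits $x^{**}(V(P)\setminus\{t\}) = 0$ from the tightened index (via Corollary~\ref{cor:farthest:dom}), Lemma~\ref{lem:farthest:contract} transfers coverage to the new walk.

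In sub-case B (the search returns an augmenting path/pair and the algorithm rewinds), the state reverts, so invariants~\ref{lem:farthest:invariants:y} and~\ref{lem:farthest:invariants:cover} hold unchanged. The new content is invariant~\ref{lem:farthest:invariants:cycle} (or~\ref{lem:farthest:invariants:spoke}) for the just-processed structure. Suppose for contradiction a minimum $\F$-cover $x^\star$ exists strictly improving the relevant index, e.g., $a(I) < a_{x^\star}(I) \le b_{x^\star}(I) \le b(I)$. Then $x^\star(V(P)\setminus\{t\}) = 0$, so by Lemma~\ref{lem:farthest:contract}, $x^\star$ is also an $\F'$-cover of size $|y|$. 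But Lemma~\ref{lem:augment_path} or Lemma~\ref{lem:augment_pair} applied to the augmenting path/pair yields a basic $\F'$-packing of size at least $|y| + \tfrac{1}{2}$, contradicting LP duality via Theorem~\ref{thm:hi}.

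I expect the main obstacle to be the verification in sub-case A that $x^{**}$ covers every new walk in $\F'\setminus\F$. The algebraic bookkeeping of implications along walks that traverse the contracted $P$ multiple times or in either direction, especially through two-fan constraints where reversing an implicational walk need not remain implicational, will need careful treatment through the equivalence relation $\equiv$ from Lemma~\ref{lem:def:imp} and Corollary~\ref{cor:def:three}; in particular, one must argue that splitting any new conflicting walk at an occurrence of $v'$ always yields a matching $\F_{\mathrm{orig}}$-conflicting walk with coverage transferable via the zero-weight portion of $x^{**}$ on $V(P)\setminus\{t\}$.
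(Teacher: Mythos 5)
Your overall skeleton matches the paper's argument: induction over the contraction/search steps, Lemma~\ref{lem:farthest:contract} to move covers between $\F$ and the contracted family, and the LP-duality contradiction with an augmenting path/pair to establish invariants~\ref{lem:farthest:invariants:cycle} and~\ref{lem:farthest:invariants:spoke}. However, there is one genuine missing ingredient: you treat each search invocation as if it were a fresh, complete run of Algorithm~\ref{alg:search}, but Algorithm~\ref{alg:farthest} \emph{restarts} the search from the newly created vertex $v_i'$ while reusing the tables $a$, $b$, $\prev$, $\tail$ accumulated over all earlier failed searches and contractions. Every tool you invoke is conditioned on a full run: Lemma~\ref{lem:cover} requires that Algorithm~\ref{alg:search} (run in its entirety) fails, so without a justification of the restart you cannot conclude, when the restarted search returns NO, that $y$ is still a maximum packing (invariant~\ref{lem:farthest:invariants:y}) or that the current tables define a minimum $\F$-cover with $a_x=a$, $b_x=b$ (the very premise of invariant~\ref{lem:farthest:invariants:cover}); likewise, in your sub-case~B you need soundness of the restarted search (i.e., that the invariants of Lemma~\ref{lem:alg:invariants} still hold for the reused tables on the contracted instance) before Lemmas~\ref{lem:augment_path}/\ref{lem:augment_pair} give a packing of size $|y|+\frac{1}{2}$. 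The paper closes this hole with a dedicated claim: since the order of $A\setminus V(y)$ in Algorithm~\ref{alg:search} is arbitrary, one may virtually rerun the full search in exactly the order of the last failed search; because $v_i$ was never visited there (invariant~\ref{lem:alg:invariants:cycle:visited} of Lemma~\ref{lem:alg:invariants}), inserting the edge $v_i'v_i$ does not perturb that execution, so the full run reproduces the current tables and then coincides with the restarted search. Your proposal contains no substitute for this argument, and without it the inductive step does not go through.

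Two smaller points. The step you flag as the ``main obstacle'' --- verifying that $x^{**}$ covers the new walks in $\F'\setminus\F$ --- is exactly what Lemma~\ref{lem:farthest:contract} already provides (every such walk uses the edge $tt'$ and splits into a walk of the pre-contraction family covered by $x^{**}$); no further bookkeeping with Lemma~\ref{lem:def:imp} is needed, only the observation, which you do make via Corollary~\ref{cor:farthest:dom} and Lemma~\ref{lem:farthest:dual}, that domination forces $x^{**}(V(P)\setminus\{t\})=0$. Conversely, your preservation of invariants~\ref{lem:farthest:invariants:cycle} and~\ref{lem:farthest:invariants:spoke} for previously processed structures needs, besides the shrinking windows, the remark that any minimum cover of the enlarged family $\F'$ is also a minimum cover of $\F$ (as $\F\subseteq\F'$ and the optimum value stays $|y|$), so that the induction hypothesis can be applied to it.
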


When all the integral paths and the spokes are processed, we return the minimum half-integral $\F$-cover $x$ constructed from the current tables.
We can easily prove the correctness of the algorithm from these invariants.
\begin{lemma}
When all the integral paths and the spokes are processed, the minimum half-integral $\F$-cover $x$ of $G$ constructed from the current
tables is a farthest minimum half-integral $\F_\text{orig}$-cover.
\end{lemma}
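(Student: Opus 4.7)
The plan is to read off the result from the four invariants maintained by Algorithm~\ref{alg:farthest} together with Corollary~\ref{cor:farthest:dom}. First, I would argue that $x$ is actually a minimum half-integral $\F_\text{orig}$-cover. Since every contraction only adds walks to $\F$, we have $\F_\text{orig}\subseteq\F$ at termination, so $x$ is an $\F_\text{orig}$-cover. By invariant~\ref{lem:farthest:invariants:y}, $y$ is always a maximum basic $\F$-packing, and since a contraction is kept only when no $y$-augmenting path or pair appears, $|y|$ never changes and hence equals its initial value, namely the size of the maximum basic $\F_\text{orig}$-packing. By LP duality (Theorem~\ref{thm:hi}) this equals the size of the minimum $\F_\text{orig}$-cover, and since $|x|=|y|$ by construction (Section~\ref{sec:hi:cover}), $x$ is indeed a minimum $\F_\text{orig}$-cover.

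For the farthest property, suppose toward a contradiction that some minimum half-integral $\F_\text{orig}$-cover $x''$ satisfies $R(x)\subsetneq R(x'')$. Invariant~\ref{lem:farthest:invariants:cover} promotes $x''$ to a minimum half-integral $\F$-cover that still dominates $x$ with respect to the contracted instance. Corollary~\ref{cor:farthest:dom} applied to $(x,x'')$ then yields either an integral path $I$ with $a(I)\le a_{x''}(I)\le b_{x''}(I)\le b(I)$ and strict inequality on at least one side, or a spoke $S$ with $a(S)<a_{x''}(S)$. However, at termination every integral path and every spoke has been processed, so invariants~\ref{lem:farthest:invariants:cycle} and~\ref{lem:farthest:invariants:spoke} forbid each of these possibilities, giving the desired contradiction.

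The main obstacle does not lie in this concluding wrap-up but in establishing invariant~\ref{lem:farthest:invariants:cover} itself---that dominance survives each kept contraction. A contraction modifies not only $\F$ but also the vertex and constraint sets of the instance, so one must verify both that a minimum $\F_\text{orig}$-cover dominating $x$ remains an $\F$-cover of minimum size (here Lemma~\ref{lem:farthest:contract} is crucial, in combination with the equal-size argument above) and that the strict $R$-inclusion is preserved under the change of underlying instance. For the statement of this lemma, however, invariant~\ref{lem:farthest:invariants:cover} may be cited as a black box, and the argument reduces cleanly to the chain of implications above.
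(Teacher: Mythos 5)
Your proposal is correct and follows essentially the same route as the paper: establish that $x$ is a minimum $\F_\text{orig}$-cover via $\F_\text{orig}\subseteq\F$ and the unchanged maximum packing $y$, then rule out a dominating minimum $\F_\text{orig}$-cover by promoting it to a minimum $\F$-cover with invariant~2 and contradicting invariants~3 and~4 through Corollary~\ref{cor:farthest:dom}. The only difference is cosmetic ordering (the paper first concludes farthestness with respect to $\F$ and then invokes invariant~2), so no further changes are needed.
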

\begin{proof}
From the invariants~\ref{lem:farthest:invariants:cycle} and~\ref{lem:farthest:invariants:spoke} and Corollary~\ref{cor:farthest:dom}, $x$ is a farthest minimum half-integral $\F$-cover.
Because $\F_\text{orig}\subseteq \F$ and from the invariant~\ref{lem:farthest:invariants:y}, $x$ is also a minimum half-integral $\F_\text{orig}$-cover.
Suppose that there exists a minimum half-integral $\F_\text{orig}$-cover $x'$ dominating~$x$.
Then, from the invariant~\ref{lem:farthest:invariants:cover}, $x'$ is a minimum half-integral $\F$-cover dominating $x$, which contradicts the fact that $x$ is a farthest minimum half-integral $\F$-cover.
\end{proof}

We now describe how to process the integral paths and spokes.
For each integral path $I=(v_0,\ldots,v_\ell)$, we first repeat the following while $a(I)<b(I)$ holds
(lines~\ref{line:farthest:cycle:begin1}--\ref{line:farthest:cycle:end1}).
Let $i:=a(I)+1$.
We first contract the implicational path $(v_0,\ldots,v_i)$.
When using the incremental-test oracle, this operation can be done in a constant time by setting $\mathcal{I}(v_i')=\mathcal{A^*}((v_0,\ldots,v_i))$, which has been precomputed.
We then search for a $y$-augmenting path or pair using Algorithm~\ref{alg:search}.
Instead of searching for a $y$-augmenting path/pair from scratch by initializing the tables,
we restart the search from line~\ref{line:search:pickA} of Algorithm~\ref{alg:search} by setting $s\gets v_i'$ and reusing the current tables.
If the restarted search returns NO, we keep the tables updated by the restarted search and continue the repetition.
Because $v_i'v_i\not\equiv (v_\ell,\ldots,v_i)$ holds, we have $a(I)\geq i$ after the search.
If the restarted search finds a $y$-augmenting path or pair, we rewind all the changes in this step
(i.e., we restore the tables to the state before the search and remove the edge $v_i'v_i$ inserted by the contraction),
and then exit the repetition.
Note that we do not rewind the changes in the previous steps where the restarted searches returned NO.

\begin{claim}
The restarted search can correctly compute a $y$-augmenting path or pair if exists.
\end{claim}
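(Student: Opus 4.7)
The plan is to show that the restarted search is equivalent, in terms of the final state of the tables, to executing Algorithm~\ref{alg:search} once from scratch on the contracted graph, with the new terminal $v_i'$ processed as the very last outer-loop start. Once that equivalence is in place, the claim reduces to the soundness and completeness guarantees already developed for Algorithm~\ref{alg:search} in Section~\ref{sec:hi}, in particular Lemmas~\ref{lem:alg:soundness} and~\ref{lem:cover}.

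First I would write $\F$ for the set of conflicting walks just before the contraction of $(v_0, \ldots, v_i)$ and $\F'$ for the set just after. By Lemma~\ref{lem:farthest:contract}, $\F \subseteq \F'$, and every additional walk in $\F' \setminus \F$ uses the new edge $v_i' v_i$. The tables at the moment of the restart are the cumulative result of preceding executions of Algorithm~\ref{alg:search} and of earlier restarts within the same execution of Algorithm~\ref{alg:farthest}, each of which returned NO. Every alternating path $P(v)$ currently recorded in the tables avoids $v_i'$, since $v_i'$ did not exist when that path was constructed, and every equivalence test performed by those earlier computations compared walks that avoid $v_i'$. Because the conflicting status of such walks is unchanged from $\F$ to $\F'$, every invariant of Lemma~\ref{lem:alg:invariants} established before the restart continues to hold verbatim in the contracted graph, provided we interpret $v_i'$ as a yet-unprocessed outer-loop start in $A \setminus V(y)$.

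Next I would argue that the restarted search, which sets $P(v_i') \gets (v_i')$ and $X \gets \{v_i'\}$ and then resumes at line~\ref{line:search:pickA}, behaves exactly as the final outer-loop iteration would in a fresh execution of Algorithm~\ref{alg:search} on the contracted graph with $v_i'$ scheduled last. Hence the soundness argument (Lemma~\ref{lem:alg:soundness}) applies verbatim, and any path or pair returned is a genuine $y$-augmenting path or pair. For completeness, suppose a $y$-augmenting path or pair with respect to $\F'$ exists but the restart returns NO. Then the aggregate computation amounts to a complete execution of Algorithm~\ref{alg:search} on the contracted graph that returned NO, and by Lemma~\ref{lem:cover} the function $x$ constructed from the final tables would be a half-integral $\F'$-cover of size $|y|$. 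LP duality would then force $y$ to be a maximum $\F'$-packing, contradicting the assumed existence of a $y$-augmenting structure.

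The main obstacle is the equivalence argument of the second paragraph---that the cumulative state inherited from the earlier runs is exactly the state a single fresh execution of Algorithm~\ref{alg:search} on the contracted graph would have reached after processing every original outer-loop start in the same order. The critical point is that no previous equivalence test could have had its outcome flipped by the contraction, because each such test involves a pair of walks disjoint from $v_i'$, and the relation $\equiv$ restricted to such walks is preserved when $v_i'$ is appended to $A$ (the only new conflicting walks are those passing through $v_i'$). Once this invariance is in hand, the rest is a direct appeal to the machinery already developed for Algorithm~\ref{alg:search}.
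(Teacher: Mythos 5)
Your overall strategy coincides with the paper's: view the restarted search as the final outer-loop iteration of a virtual fresh execution of Algorithm~\ref{alg:search} on the contracted instance, with the original starts processed in the same order and $v_i'$ scheduled last, and then invoke the soundness and completeness machinery already established (Lemma~\ref{lem:alg:soundness}, Lemma~\ref{lem:cover}). However, your justification of the replay equivalence---the step you yourself flag as the main obstacle---is incomplete. Showing that the outcomes of the equivalence tests that were actually performed are unchanged (because they involve only walks avoiding $v_i'$) is necessary but not sufficient: you must also show that a genuine fresh execution on the contracted graph would not perform any \emph{additional} steps before its final iteration, i.e., that it never examines the new edge $v_iv_i'$ at line~\ref{line:search:loop} and never reaches $v_i'$ while replaying the earlier iterations. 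The paper closes exactly this hole with invariant~\ref{lem:alg:invariants:cycle:visited} of Lemma~\ref{lem:alg:invariants} (and its spoke analogue, invariant~\ref{lem:alg:invariants:spoke}): in every case the contracted index satisfies $a(I)\leq i\leq b(I)$ (resp.\ $a(S)\leq i$), so $v_i$ was never visited in the search that produced the current tables, hence is never popped from $X$, and therefore the only edge incident to $v_i'$ is never explored before the restart with $s=v_i'$.

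Without that fact the equivalence can genuinely break: if $v_i$ had been visited, say from the other end of the integral path with $T(P(v_i))=(v_\ell,\ldots,v_i)\not\equiv(v_0,\ldots,v_i)$, then a fresh execution on the contracted graph would, upon popping $v_i$, examine $v_iv_i'$ and return an augmenting path at line~\ref{line:search:path0} during an earlier iteration; its final tables would then differ from the inherited ones, and neither the invariants of Lemma~\ref{lem:alg:invariants} nor Lemma~\ref{lem:cover} could be applied to the aggregate computation as you do in your completeness argument. So the missing ingredient is small but essential: add the observation that $v_i$ (and hence $v_i'$) remains unvisited throughout the earlier searches, which is what licenses treating the inherited tables as the state of a single fresh run. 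The remainder of your argument---preservation of $\equiv$ on walks avoiding $v_i'$, soundness via Lemma~\ref{lem:alg:soundness}, and completeness via Lemma~\ref{lem:cover} together with LP duality---matches the paper's intent.
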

\begin{proof}
We can virtually think as follows.
Because the order of $A\setminus V(y)$ at line~\ref{line:search:outer} of Algorithm~\ref{alg:search} is arbitrary, we can choose completely the
same order as the one used in the last failed search that has constructed the current tables.
Note that, from the invariant~\ref{lem:alg:invariants:cycle:visited} in Lemma~\ref{lem:alg:invariants},
$v_i$ was not visited in the last execution. Therefore, the insertion of the edge $v_i'v_i$ does not affect the search at all.
In the end, the execution reaches to the final iteration of the while loop with $s=v_i'$, and all the tables are completely the same as the current tables.
Thus, instead of running Algorithm~\ref{alg:search} from scratch, we can use the restarted search.
\end{proof}

\begin{claim}
Lines~\ref{line:farthest:cycle:begin1}--\ref{line:farthest:cycle:end1} preserve all the invariants.
\end{claim}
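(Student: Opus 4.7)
The plan is to analyze one complete pass of the first while loop, iteration by iteration. Each iteration contracts $(v_0,\ldots,v_{a(I)+1})$ and reruns Algorithm~\ref{alg:search}: either it returns NO and we keep the updated tables, or it returns a $y$-augmenting path/pair, in which case we rewind and break. Invariants~\ref{lem:farthest:invariants:cycle} and~\ref{lem:farthest:invariants:spoke} concern only previously-processed integral paths and spokes, whose tables are untouched; moreover, since contraction enlarges $\F$ while leaving $|y|$ unchanged (once invariant~\ref{lem:farthest:invariants:y} is verified), every minimum $\F$-cover after contraction was a minimum $\F$-cover before, and the forbidden configurations inherited from earlier iterations persist automatically.

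For the NO branch, invariant~\ref{lem:farthest:invariants:y} is immediate: by the correctness of Algorithm~\ref{alg:search} (soundness via Lemma~\ref{lem:alg:soundness} and completeness via Lemma~\ref{lem:cover}) applied to the enlarged $\F'$, the updated tables certify that $y$ is still a maximum basic $\F'$-packing. For invariant~\ref{lem:farthest:invariants:cover}, let $x'$ be a minimum $\F_\text{orig}$-cover dominating the updated $x_\text{new}$. Corollary~\ref{cor:farthest:dom} forces $a_{x'}(I)\geq a_{x_\text{new}}(I)\geq i$, where $i$ denotes the value $a(I)+1$ taken at the start of this iteration; Lemma~\ref{lem:farthest:dual} then yields $x'(v_j)=0$ for every $j<i$, so $x'$ vanishes on the contracted path except at its endpoint. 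Lemma~\ref{lem:farthest:contract} therefore certifies $x'$ as an $\F'$-cover, and since $|x'|$ equals the common minimum cover size, $x'$ is in fact a minimum $\F'$-cover. Dominance transfers from $\F_\text{orig}$ to $\F'$ because the only vertices newly reachable by zero-weight implicational walks after contraction are those reachable from the fresh terminal $t'$, and these coincide for $x_\text{new}$ and $x'$ since both assign weight zero to every vertex on the contracted path. For the augmenting branch, the rewind restores the tables, the contracted edge, $\F$, $y$, and $x$, so all invariants hold by hypothesis; the essential extra content is that exiting with $a(I)<b(I)$ must not ruin the eventual invariant~\ref{lem:farthest:invariants:cycle} for $I$ (finalized only after the second while loop): by Lemmas~\ref{lem:augment_path} and~\ref{lem:augment_pair}, a $y$-augmenting path/pair yields a basic $\F'$-packing of size greater than $|y|$, so by LP-duality the minimum $\F'$-cover size exceeds $|y|$; any putative minimum $\F$-cover $x'$ with $a(I)<a_{x'}(I)\leq b_{x'}(I)\leq b(I)$ would then (by Lemmas~\ref{lem:farthest:dual} and~\ref{lem:farthest:contract}) extend to an $\F'$-cover of size $|y|$, a contradiction.

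The main obstacle is invariant~\ref{lem:farthest:invariants:cover} in the NO branch. Although Corollary~\ref{cor:farthest:dom} and Lemma~\ref{lem:farthest:dual} pin down the support of $x'$, one must still verify that ``dominance'' survives the enlargement of the implicational-walk set induced by contraction; the uniform action of this enlargement on $R(x_\text{new})$ and $R(x')$ is what makes the step go through, and it is the subtlety most easily overlooked.
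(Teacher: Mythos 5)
Your proof follows the same route as the paper's own (far terser) argument: invariant~\ref{lem:farthest:invariants:y} holds because changes are kept only when the restarted search fails, invariant~\ref{lem:farthest:invariants:cover} is obtained from Lemma~\ref{lem:farthest:contract} applied to the contracted prefix, and invariants~\ref{lem:farthest:invariants:cycle} and~\ref{lem:farthest:invariants:spoke} are unaffected because they concern only processed paths/spokes while $I$ is still under processing. Your expansion of the second step --- Corollary~\ref{cor:farthest:dom} to force $a_{x'}(I)\geq i$, Lemma~\ref{lem:farthest:dual} to conclude $x'(v_j)=0$ for $j<i$, and the size argument for minimality --- is exactly the reasoning the paper leaves implicit. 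The LP-duality discussion about exiting with $a(I)<b(I)$ is correct but is the content of the \emph{next} claim (which is stated separately precisely because invariant~\ref{lem:farthest:invariants:cycle} does not yet apply to $I$); it is not needed here.

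One step needs tightening. Lemma~\ref{lem:farthest:contract} turns covers of the \emph{current} family $\F$ (which already reflects all contractions kept in earlier iterations and earlier paths/spokes) into covers of $\F'$, but your $x'$ is only assumed to cover $\F_\text{orig}$; likewise your dominance-transfer remark (``the only newly reachable vertices come from the fresh terminal $t'$'') compares the pre- and post-contraction systems, whereas your hypothesis on $x'$ is stated against $\F_\text{orig}$. The fix uses tools you already have: since the tables only advance, the same index argument shows $x'$ vanishes on every previously contracted prefix (except possibly its endpoint), so one can iterate Lemma~\ref{lem:farthest:contract} over all contractions performed so far --- or, equivalently, observe that $x'$ also dominates the previous $x$ and invoke the pre-iteration invariant~\ref{lem:farthest:invariants:cover} --- to conclude that $x'$ covers the current $\F$; after that, applying Corollary~\ref{cor:farthest:dom} with respect to the new family and the same packing $y$ gives the dominance conclusion directly, avoiding the need to reason about $R(\cdot)$ by hand. (Also, ``both assign weight zero to every vertex on the contracted path'' may fail at the endpoint $v_i$, though this does not harm the argument since every walk through $t'$ passes $v_i$ in any case.)
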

\begin{proof}
Because we keep the changes only when the search fails, the invariant~\ref{lem:farthest:invariants:y} is satisfied.
The invariant~\ref{lem:farthest:invariants:cover} follows from Lemma~\ref{lem:farthest:contract} against $P:=(v_0,\ldots,v_i)$.
The invariants~\ref{lem:farthest:invariants:cycle} and~\ref{lem:farthest:invariants:spoke} are trivial.
Note that the current integral path $I$ is still under processing.
\end{proof}

\begin{claim}
When the repetition of lines~\ref{line:farthest:cycle:begin1}--\ref{line:farthest:cycle:end1} ends, there exists no minimum half-integral $\F$-cover $x'$
satisfying $a(I)<a_{x'}(I)\leq b_{x'}(I)\leq b(I)$.
\end{claim}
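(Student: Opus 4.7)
The plan is to split on why the loop terminated. If the condition $a(I) < b(I)$ simply became false, then the hypothesized $x'$ would need to satisfy $a(I) < a_{x'}(I) \leq b_{x'}(I) \leq b(I) = a(I)$, which is vacuously unsatisfiable. So the only substantive case is when the loop exited via \textbf{break} after finding a $y$-augmenting path or pair in the graph obtained by contracting $P := (v_0, \ldots, v_{a(I)+1})$ at the current value of $a(I)$; the subsequent rewind restored $a(I)$ and $\F$ to their pre-iteration values, but the existence of the augmenting object in that contracted graph is the fact I would exploit.

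In this case I would proceed by contradiction: assume such an $x'$ exists. By Lemma~\ref{lem:farthest:dual}, $x'$ places total mass exactly $1$ on $I$, concentrated at $v_{a_{x'}(I)}$ and $v_{b_{x'}(I)}$; consequently $a_{x'}(I) > a(I)$ forces $x'(v_j) = 0$ for every $j \leq a(I)$. Setting $t := v_{a(I)+1}$, this is exactly the vanishing hypothesis $x'(V(P) \setminus \{t\}) = 0$ required to apply Lemma~\ref{lem:farthest:contract}, whose conclusion is that $x'$ is also a half-integral $\F'$-cover of size $|x'| = |y|$, where $\F'$ denotes the conflicting walks after contracting $P$.

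The contradiction then comes from the augmenting object: by Lemma~\ref{lem:augment_path} or Lemma~\ref{lem:augment_pair}, the $y$-augmenting path or pair in $\F'$ upgrades $y$ into a basic $\F'$-packing of size at least $|y| + \frac{1}{2}$, and LP duality forces every half-integral $\F'$-cover to have size at least $|y| + \frac{1}{2}$, contradicting $|x'| = |y|$. The main conceptual step is bridging the hypothesis $a_{x'}(I) > a(I)$ to the vanishing condition required by Lemma~\ref{lem:farthest:contract}, which depends crucially on the uniqueness of $a_{x'}(I)$ coming out of Lemma~\ref{lem:farthest:dual}; once that is in place, the LP-duality half of the argument is entirely routine.
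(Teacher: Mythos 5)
Your proposal is correct and follows essentially the same route as the paper: case split on how the loop terminates, transfer $x'$ to a half-integral $\F'$-cover via Lemma~\ref{lem:farthest:contract}, and contradict via the strictly larger $\F'$-packing and LP duality. The only difference is that you explicitly verify the vanishing hypothesis $x'(V(P)\setminus\{t\})=0$ using Lemma~\ref{lem:farthest:dual}, a detail the paper leaves implicit.
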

\begin{proof}
The repetition ends when $a(I)$ becomes equal to $b(I)$ or when a $y$-augmenting path or pair is found.
The former case is trivial.
In the latter case, let $\F'$ be the set of the conflicting walks after contracting $(v_0,\ldots,v_{a(I)+1})$.
Suppose that there exists a minimum half-integral $\F$-cover $x'$ satisfying the condition in the lemma.
Then, from Lemma~\ref{lem:farthest:contract}, $x'$ is also a half-integral $\F'$-cover.
In contrast, the size of the maximum half-integral $\F'$-packing $y'$ is strictly larger than~$|y|$ because a $y$-augmenting path or pair is found.
Thus, we have $|y'|>|y|=|x'|\geq |y'|$, which is a contradiction.
\end{proof}

After repeating lines~\ref{line:farthest:cycle:begin1}--\ref{line:farthest:cycle:end1}, we repeat the following while $a(I)<b(I)$ holds
(lines~\ref{line:farthest:cycle:begin2}--\ref{line:farthest:cycle:end2}).
Let $i:=b(I)-1$.
We first contract the implicational path $(v_\ell,\ldots,v_i)$, and then we restart the search by setting $s\gets v_i'$.
If the restarted search fails, we continue the repetition,
and if the restarted search succeeds, we rewind all the changes in this step and exit the repetition.
By the same argument as in the case of lines~\ref{line:farthest:cycle:begin1}--\ref{line:farthest:cycle:end1},
all the invariants are preserved and, when the repetition ends, there exists no minimum half-integral $\F$-cover $x'$ satisfying
$a(I)\leq a_{x'}(I)\leq b_{x'}(I)<b(I)$.
Thus, the invariant~\ref{lem:farthest:invariants:cycle} is satisfied for $I$ when we have finished processing $I$.

Next, we repeat the following for each spoke $S=(v_0,\ldots,v_\ell)$ while $a(S)<\ell$ holds
(lines~\ref{line:farthest:spoke:begin}--\ref{line:farthest:spoke:end}).
Let $i:=a(S)+1$.
We first contract the implicational path $(v_0,\ldots,v_i)$, and then restart the search by setting $s\gets v_i'$.
We continue the repetition if the restarted search fails,
and, otherwise, we rewind all the changes in this step and exit the repetition.
By the same argument as in the case of integral paths,
all the invariants are preserved and, when the repetition ends, there exists no minimum half-integral $\F$-cover $x'$ satisfying
$a(S)<a_{x'}(S)$.
Thus, the invariant~\ref{lem:farthest:invariants:spoke} is satisfied for $S$ when we have finished processing $S$.

Finally, we analyze the running time of Algorithm~\ref{alg:farthest}.
The number of inserted edges is $O(n)$, and the number of while loops (lines~\ref{line:farthest:cycle:begin1}--\ref{line:farthest:cycle:end1},
\ref{line:farthest:cycle:begin2}--\ref{line:farthest:cycle:end2}, or~\ref{line:farthest:spoke:begin}--\ref{line:farthest:spoke:end}) is
$2k$.
For each while loop, a series of the restarted searches can be regarded as a single execution of Algorithm~\ref{alg:search}, which takes
$O(m T)$ time.
Therefore, the total running time is $O(kmT)$.
Thus, we obtain Theorem~\ref{thm:farthest}.


\section{Linear-Time FPT Algorithms}\label{sec:linear}

\subsection{Algorithm for \INTCOVER}\label{sec:fpt:algorithm}

This section proposes an $O(d^{2k}km)$-time algorithm for \INTCOVER.
Our algorithm is based on the branch-and-bound algorithm in~\cite{Iwata:2016ja}.
We exploit the farthest minimum half-integral $\F$-cover and \emph{parallel unit-propagation}
to obtain a linear-time FPT algorithm.

Let $I=(C,\varphi_A)$ be a pair of 0/1/all constraints $C$ on a variable set $V$ and a partial-assignment $\varphi_A$ on a subset $A\subseteq V$.
We denote $\F_{C,\varphi_A}$ by $\F_I$.
For a variable $u\in V\setminus A$ and an element $a\in D(u)$, we denote by $I[u\gets a]$ a pair $(C,\varphi_{A\cup\{u\}})$ such that $\varphi_{A\cup\{u\}}(u)=a$.
We denote by $I-u$ a pair $(C[V\setminus\{u\}],\varphi_{A\setminus\{u\}})$ for a variable $u\in V$, where $\varphi_{A\setminus\{u\}}$ is the restriction of $\varphi_A$ to $A\setminus\{u\}$.
We call this operation \emph{deleting $u$}.
Let $N_u:=\{v\in A\mid uv\in E,\ \varphi_A\text{ does not satisfy }C_{uv}\}$ for a variable $u\in A$.
We denote by $I/u$ a pair $(C[V\setminus(\{u\}\cup N_u)],\varphi_{A'})$ such that $A'=(A\cup\{v\mid C_{uv}(\varphi_A(u))\neq\all\})\setminus(\{u\}\cup N_u)$ and $\varphi_{A'}(v)=C_{uv}(\varphi_A(u))$ for $v\in A'\setminus A$.
We call this operation \emph{fixing $u$}.
We can obtain an incremental-test oracle for $I/u$ by setting $\mathcal{I}(v)=\mathcal{A}(\mathcal{I}(u), uv)$ for each $v\in A'\setminus A$.
We can observe the following.
\begin{lemma}\label{lem:fpt:delete}
The following holds for any pair $I=(C,\varphi_A)$ and $u\in V$.
\begin{enumerate}
  \item $I$ admits a deletion set of size $k$ containing $u$ if and only if $I-u$ admits a deletion set of size $k-1$.
  \item For any half-integral $\F_{I-u}$-cover $x'$, the following function $x$ is a half-integral $\F_I$-cover: $x(u)=1$ and $x(v)=x'(v)$ for $v\in V\setminus\{u\}$.
\end{enumerate}
\end{lemma}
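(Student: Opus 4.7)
The plan is to prove the two claims separately; both follow directly from the definitions of $I-u$, $\F_I$, and a half-integral $\F$-cover, with no real technical obstacle.

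For the first claim, I would argue each direction by restriction/extension of the deletion set. For the ``only if'' direction, suppose $X\subseteq V$ with $u\in X$ and $|X|=k$ is a deletion set for $I$, witnessed by an assignment $\varphi$ for $V\setminus X$. Let $X':=X\setminus\{u\}$, so $|X'|=k-1$, and note that $V\setminus X = (V\setminus\{u\})\setminus X'$. The same $\varphi$ is an assignment for this set, and $C[V\setminus X]$ is exactly the constraint set of $I-u$ restricted to $(V\setminus\{u\})\setminus X'$. Since $A\cap(V\setminus X) = (A\setminus\{u\})\cap((V\setminus\{u\})\setminus X')$ and $\varphi_{A\setminus\{u\}}$ is the restriction of $\varphi_A$, the two conditions (1) and (2) in the definition of \INTCOVER are inherited. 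The ``if'' direction is symmetric: given a deletion set $X'$ of size $k-1$ for $I-u$, the set $X'\cup\{u\}$ is a deletion set of size $k$ for $I$ containing $u$.

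For the second claim, I would take any $W\in\F_I$ and split into two cases depending on whether $u\in V(W)$. If $u\in V(W)$, then $\mathbf{1}_{V(W)}(u)\geq 1$, so $x(V(W))\geq x(u)\cdot\mathbf{1}_{V(W)}(u)\geq 1$. If $u\notin V(W)$, then $W$ is entirely contained in the primal graph of $I-u$: its endpoints lie in $A\setminus\{u\}$ (the terminal set of $I-u$), its internal vertices avoid $A\supseteq A\setminus\{u\}$, and none of its edges are incident to $u$. Consequently, the sequence of unit propagations along $W$ with respect to $\varphi_{A\setminus\{u\}}$ coincides with that with respect to $\varphi_A$, so $W$ is $\varphi_{A\setminus\{u\}}$-conflicting, i.e., $W\in\F_{I-u}$. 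Therefore $x(V(W))=x'(V(W))\geq 1$, since $x$ agrees with $x'$ on $V\setminus\{u\}\supseteq V(W)$.

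The main ``obstacle'' is purely a bookkeeping one: I need to verify that restricting the variable set from $V$ to $V\setminus\{u\}$ does not create any new conflicting walks that a cover on the remaining vertices might fail to hit. This is taken care of by the observation in the previous paragraph that $\F_{I-u}$ is precisely the subset of $\F_I$ consisting of walks avoiding $u$, so the mapping $x'\mapsto x$ faithfully preserves the covering property. With these two verifications in hand, the lemma follows with no further work.
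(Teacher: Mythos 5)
Your proposal is correct and follows essentially the same argument as the paper: the first claim by removing/adding $u$ to the deletion set, and the second by noting that any $W\in\F_I$ either visits $u$ (so $x(V(W))\geq x(u)=1$) or avoids $u$ and hence lies in $\F_{I-u}$ (so $x(V(W))=x'(V(W))\geq 1$). The extra bookkeeping you carry out about $\F_{I-u}$ being the $u$-avoiding walks of $\F_I$ is exactly the observation the paper uses implicitly.
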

\begin{proof}
First, we prove the first claim.
For a deletion set $X$ for $I$ containing $u$, $X\setminus\{u\}$ is a deletion set for $I-u$.
For a deletion set $X'$ for $I-u$, $X'\cup\{u\}$ is a deletion set for $I$.

Next, we prove the second claim.
Let $W$ be a walk in $\F_I$.
We have $x(V(W))\geq x'(V(W))\geq 1$ if $W\in\F_{I-u}$.
Otherwise, $W$ visits $u$; therefore, we have $x(V(W))\geq x(u)=1$.
\end{proof}
\begin{lemma}\label{lem:fpt:fix}
The following holds for any pair $I=(C,\varphi_A)$ and $u\in A$.
\begin{enumerate}
  \item $I$ admits a deletion set of size $k$ not containing $u$ if and only if $I/u$ admits a deletion set of size $k-|N_u|$.
  \item For any half-integral $\F_{I/u}$-cover $x'$, the following function $x$ is a half-integral $\F_I$-cover: $x(u)=0$, $x(v)=1$ for $v\in N_u$, and $x(v)=x'(v)$ for $v\in V\setminus(\{u\}\cup N_u)$.
\end{enumerate}
\end{lemma}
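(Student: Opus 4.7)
My plan is to prove the two claims in turn. Claim 1 is a correspondence of deletion sets; Claim 2 needs a walk-decomposition argument around the single propagation step from $u$ that creates $A' \setminus A$.

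For Claim 1, I handle both directions of the equivalence. For the forward direction, start with a deletion set $X$ for $I$ of size $k$ with $u \notin X$ and a satisfying assignment $\varphi$ for $V \setminus X$. Since $\varphi(u) = \varphi_A(u)$, every $v \in N_u$ must belong to $X$, for otherwise $\varphi(u) = \varphi_A(u)$ and $\varphi(v) = \varphi_A(v)$ would jointly violate $C_{uv}$. Hence $X' := X \setminus N_u$ has size $k - |N_u|$, and the restriction of $\varphi$ to $V \setminus (\{u\} \cup N_u) \setminus X'$ witnesses $X'$ as a deletion set for $I/u$ after checking agreement with $\varphi_{A'}$: on $A \setminus (\{u\} \cup N_u)$ this is immediate, and on $A' \setminus A$ it follows because $C_{uv}$ together with $\varphi(u) = \varphi_A(u)$ forces $\varphi(v) = C_{uv}(\varphi_A(u)) = \varphi_{A'}(v)$. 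For the backward direction, given a deletion set $X'$ for $I/u$ of size $k - |N_u|$ with corresponding $\varphi'$, set $X := X' \cup N_u$ (so $|X| = k$ and $u \notin X$) and extend $\varphi'$ by $\varphi(u) := \varphi_A(u)$; each remaining $C_{uv}$ is verified by a case split on whether $v \in A' \setminus A$ (the extension value is exactly what the constraint demands), $v \notin A'$ with $C_{uv}(\varphi_A(u)) = \mathbf{all}$ (trivially satisfied), or $v \in A \setminus N_u$ (satisfied by $\varphi_A$ by definition of $N_u$).

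For Claim 2, let $W \in \F_I$. If $V(W)$ meets $N_u$ then $x(V(W)) \geq 1$ because $x \equiv 1$ on $N_u$, so assume otherwise. If $u \in V(W)$, then since internal vertices of $W$ avoid $A$ while $u \in A$, the vertex $u$ must be an endpoint; writing $W = (u, v_1) \circ \tilde{W}$, the implicationality of $W$ combined with $v_1 \notin N_u$ forces $C_{uv_1}(\varphi_A(u)) \neq \mathbf{all}$, so $v_1 \in A' \setminus A$ with $\varphi_{A'}(v_1) = \imp_{\varphi_A}((u, v_1))$, and the suffix $\tilde{W}$ becomes $\varphi_{A'}$-conflicting via $\imp_{\varphi_{A'}}(\tilde{W}) = \imp_{\varphi_A}(W) \neq \varphi_A(t(W)) = \varphi_{A'}(t(\tilde{W}))$. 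If $u \notin V(W)$, take $\tilde{W} := W$; the same conclusion holds since $\varphi_{A'}$ agrees with $\varphi_A$ on $A \setminus (\{u\} \cup N_u)$. In either case, $\tilde{W}$ lives in $G[V \setminus (\{u\} \cup N_u)]$, has endpoints in $A'$, and is $\varphi_{A'}$-conflicting.

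Finally, I split $\tilde{W}$ at each of its internal visits to $A'$ into maximal subwalks $W_1 \circ \cdots \circ W_r$ whose internal vertices all lie outside $A'$, so each $W_i$ has both endpoints in $A'$. The key step is showing that some $W_i$ is itself $\varphi_{A'}$-conflicting: otherwise every $W_i$ would be implicational with $\imp_{\varphi_{A'}}(W_i) = \varphi_{A'}(t(W_i))$, and chaining along the concatenation would give $\imp_{\varphi_{A'}}(\tilde{W}) = \varphi_{A'}(t(\tilde{W}))$, contradicting conflictingness of $\tilde{W}$. Such a $W_i$ belongs to $\F_{I/u}$, yielding $x(V(W)) \geq x'(V(W_i)) \geq 1$. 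The main obstacle is the bookkeeping around $A' \setminus A$: verifying that implicationality and conflictingness transport cleanly between $\varphi_A$ and $\varphi_{A'}$ across the propagation step from $u$, and that the decomposition indeed yields walks with both endpoints in $A'$ and no internal vertex in $A'$, which is precisely the defining condition of $\F_{I/u}$.
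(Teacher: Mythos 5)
Your Claim 1 is correct and is the same argument as the paper's (the paper states it in two lines; your verification that $N_u\subseteq X$, the agreement check on $A'\setminus A$, and the case split for constraints incident to $u$ are exactly the details being elided). Your Claim 2 also follows the paper's route, and your final decomposition of $\tilde{W}$ at internal visits to $A'$ is in fact \emph{more} careful than the paper's one-line assertion, since a conflicting walk avoiding $\{u\}\cup N_u$ may pass internally through the newly implicated terminals $A'\setminus A$ and thus not itself lie in $\F_{I/u}$; your chaining argument (processed from the left, using that prefixes of implicational walks are implicational) correctly produces a conflicting subwalk in $\F_{I/u}$.

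There is, however, a genuine gap in Claim 2: you never handle walks $W\in\F_I$ that both start \emph{and} end at $u$. Such closed conflicting walks are allowed (internal vertices must avoid $A$, but both endpoints may equal $u$; indeed when $A=\{u\}$ every member of $\F_I$ is of this form), and for them your stated facts fail: after stripping only the first edge you have $t(\tilde{W})=u$, so $\tilde{W}$ does not live in $G[V\setminus(\{u\}\cup N_u)]$, and ``$\varphi_A(t(W))=\varphi_{A'}(t(\tilde{W}))$'' is meaningless because $u$ is deleted in $I/u$. The repair is to strip the last edge as well: write $W=uv_1\circ W''\circ v_{\ell-1}u$. Both $v_1$ and $v_{\ell-1}$ lie in $A'\setminus A$ (they are internal, hence not in $A\supseteq N_u$, and the corresponding one-edge prefixes of $W$ and of $W^{-1}$ are implicational), and applying Lemma~\ref{lem:def:imp} to $P=uv_1\circ W''$ and $Q=(u,v_{\ell-1})$, whose concatenation $P\circ Q^{-1}=W$ is conflicting, gives $\imp_{\varphi_{A'}}(W'')=\imp_{\varphi_A}(P)\neq\imp_{\varphi_A}(Q)=\varphi_{A'}(v_{\ell-1})$, i.e., $W''$ is $\varphi_{A'}$-conflicting and lies in $G[V\setminus(\{u\}\cup N_u)]$; your decomposition then applies to $W''$ verbatim and yields $x(V(W))\geq x'(V(W_i))\geq 1$. (In the non-closed case you also tacitly assume $u=s(W)$ rather than $u=t(W)$; that is harmless by Corollary~\ref{cor:def:rev}, but should be said.)
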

\begin{proof}
First, we prove the first claim.
Let $X$ be a deletion set for $I$ not containing $u$.
Because $X$ must contain all of $N_u$, $X\setminus N_u$ is a deletion set for $I/u$ of size $|X|-|N_u|$.
For a deletion set $X'$ for $I/u$, $X'\cup N_u$ is a deletion set for $I$.

Next, we prove the second claim.
Let $W$ be a walk in $\F_I$.
If $W\in\F_{I/u}$, we have $x(V(W))\geq x'(V(W))\geq 1$.
If $W$ visits a vertex in $N_u$, we have $x(V(W))\geq 1$.
Otherwise, we can write $W=uv\circ W'$.
Then, we have $W'\in\F_{I/u}$, and thus, we have $x(V(W))\geq x'(V(W'))\geq 1$.
\end{proof}

For a minimum half-integral $\F_I$-cover $x$, we define an operation called a \emph{persistency reduction} as follows.
We first delete every vertex in $x^{-1}(1)$ in an arbitrary order, and then fix every vertex in $R(x)$ in an arbitrary order\footnote{We pick an arbitrary vertex $u\in R(x)\cap A$ and fix $u$. This changes $R(x)$ and $A$, and we repeat the process until $R(x)$ becomes the empty set.}.
Because $x$ is a half-integral $\F_I$-cover, for any implicational walk $W$ with $x(V(W)\setminus\{t(W)\})=0$ and $x(t(W))\leq\frac{1}{2}$, the value $\imp(W)$ depends only on $t(W)$.
Therefore, the ordering does not affect the result and $N_u=\emptyset$ for every fixing.
We denote the obtained pair $(C[V\setminus (x^{-1}(1)\cup R(x))], \varphi_{A'})$ by $I/x$.

\begin{lemma}\label{lem:fpt:reduction1}
$I$ admits a deletion set of size $k$ if and only if $I/x$ admits a deletion set of size $k-|x^{-1}(1)|$.
\end{lemma}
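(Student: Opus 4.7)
The plan is to decompose $I/x$ into the two sequences of elementary operations that define it---deleting the vertices in $x^{-1}(1)$ one by one, then fixing the vertices in $R(x)$ one by one---and to propagate deletion sets through these operations using Lemma~\ref{lem:fpt:delete}, Lemma~\ref{lem:fpt:fix}, and the persistency property (Theorem~\ref{thm:persistency}). Both directions should then follow by a straightforward induction on the number of operations performed.

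For the forward direction, suppose $I$ has a deletion set of size at most $k$. I would invoke Theorem~\ref{thm:persistency} on the minimum half-integral $\F_I$-cover $x$ to obtain a minimum deletion set $X^{*}$ with $x^{-1}(1) \subseteq X^{*}$ and $X^{*} \cap R(x) = \emptyset$, so in particular $|X^{*}| \leq k$. I would then iterate the ``only if'' direction of Lemma~\ref{lem:fpt:delete} over the vertices of $x^{-1}(1)$: each deletion removes one vertex of $X^{*}$ and one unit of budget, leaving an intermediate instance $I_1$ with a deletion set $X^{*} \setminus x^{-1}(1)$ of size at most $k - |x^{-1}(1)|$ that still avoids $R(x)$. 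Next, I would iterate the ``only if'' direction of Lemma~\ref{lem:fpt:fix} over the vertices of $R(x)$; because $N_u = \emptyset$ at every step (as explained in the paragraph introducing $I/x$, the value $\imp(W)$ depends only on $t(W)$ whenever $x(V(W) \setminus \{t(W)\}) = 0$), each fixing preserves both the instance's feasibility and the deletion-set size. The result is a deletion set for $I/x$ of size at most $k - |x^{-1}(1)|$.

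For the backward direction, I would simply run the same operations in reverse, using the ``if'' halves of the two lemmas. Starting from a deletion set of size $k - |x^{-1}(1)|$ for $I/x$, reversing each fixing leaves the size unchanged (again because $N_u = \emptyset$), and reversing each deletion in $x^{-1}(1)$ adds one vertex to the deletion set, yielding finally a deletion set of size $k$ for $I$.

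The main obstacle---and really the only nontrivial point---is justifying that the iterated application of Lemma~\ref{lem:fpt:fix} is legitimate, i.e., that $N_u = \emptyset$ continues to hold after the preceding deletions and fixings. This requires that the partial assignments forced along the implicational walks from $R(x)$ are mutually consistent, which in turn rests on the half-integral cover property of $x$: any two implicational walks from $A$ that meet at a vertex $v \in R(x)$ with $x(V(\cdot) \setminus \{v\}) = 0$ must agree on $\imp$, for otherwise their concatenation would be a conflicting walk with $x$-weight $x(v) \leq \tfrac{1}{2} < 1$, contradicting that $x$ is an $\F_I$-cover. This observation also ensures that the order in which the elements of $R(x)$ are fixed is immaterial, so $I/x$ is well-defined and the reversals in the backward direction are unambiguous. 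Once this point is settled, both directions collapse to direct inductions on $|x^{-1}(1)|$ and $|R(x)|$.
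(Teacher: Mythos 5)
Your proposal is correct and follows essentially the same route as the paper: invoke Theorem~\ref{thm:persistency} to obtain a minimum deletion set $X$ with $x^{-1}(1)\subseteq X\subseteq V\setminus R(x)$, then propagate through the iterated applications of Lemmas~\ref{lem:fpt:delete} and~\ref{lem:fpt:fix} (with $N_u=\emptyset$ for every fixing, a fact the paper already establishes when defining the persistency reduction $I/x$). Your write-up merely spells out the induction and the $N_u=\emptyset$ justification that the paper leaves implicit.
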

\begin{proof}
From Theorem~\ref{thm:persistency}, there exists a minimum deletion set $X$ for $I$ such that $x^{-1}(1)\subseteq X\subseteq V\setminus R(x)$.
Therefore, from Lemma~\ref{lem:fpt:delete} and~\ref{lem:fpt:fix}, the claim holds.
\end{proof}

\begin{lemma}\label{lem:fpt:reduction2}
For a farthest minimum half-integral $\F_I$-cover $x$,
the restriction of $x$ to $V\setminus (x^{-1}(1)\cup R(x))$ is the unique minimum half-integral $\F_{I/x}$-cover.
\end{lemma}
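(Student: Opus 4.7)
The plan is to establish three properties of $x' := x|_{V \setminus (x^{-1}(1) \cup R(x))}$: it is a half-integral $\F_{I/x}$-cover (validity), it has minimum size (optimality), and no other minimum cover exists (uniqueness). Validity and optimality will follow routinely from Lemmas~\ref{lem:fpt:delete} and~\ref{lem:fpt:fix}; uniqueness is the heart of the matter and will crucially use the farthest assumption on $x$.

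For validity I would iteratively apply Lemmas~\ref{lem:fpt:delete} and~\ref{lem:fpt:fix}: removing each $u \in x^{-1}(1)$ trivially preserves the cover property, and fixing each $v \in R(x) \cap A$ (which has $N_v = \emptyset$) preserves it because any walk in $\F_{I/v}$ lifts to a walk in $\F_I$ by prepending the removed vertex $v$, which carries weight $0$ under $x$. For optimality I would take any half-integral $\F_{I/x}$-cover $x''$ and construct a lift $\tilde{x}$ on $V$ via $\tilde{x} = 1$ on $x^{-1}(1)$, $\tilde{x} = 0$ on $R(x)$, and $\tilde{x} = x''$ on $V' := V \setminus (x^{-1}(1) \cup R(x))$. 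Iterated application of Lemmas~\ref{lem:fpt:delete}(2) and~\ref{lem:fpt:fix}(2) (the latter contributing no weight since $N_u = \emptyset$ throughout) yields that $\tilde{x}$ is a half-integral $\F_I$-cover, so $|\tilde{x}| \geq |x|$ and hence $|x''| \geq |x| - |x^{-1}(1)| = |x'|$.

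For uniqueness I would argue by contradiction: suppose $x''$ is a minimum half-integral $\F_{I/x}$-cover distinct from $x'$, and let $\tilde{x}$ be its lift; then $\tilde{x}$ is a minimum $\F_I$-cover. I first observe $R(x) \subseteq R(\tilde{x})$: any implicational walk $W$ certifying $v \in R(x)$ has $x(V(W)) = 0$, so every vertex on $W$ lies in $R(x)$ (via the corresponding prefix of $W$), where $\tilde{x}$ vanishes, and thus $W$ certifies $v \in R(\tilde{x})$. Since $x$ is farthest, this forces $R(\tilde{x}) = R(x)$. Because $|x''| = |x'|$ and $x'' \neq x'$, there must exist $v_0 \in V'$ with $x''(v_0) < x'(v_0) \leq \frac{1}{2}$, forcing $x(v_0) = \frac{1}{2}$ and $x''(v_0) = 0$.

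The hard part will be exhibiting an implicational walk in $\F_I$ with $\tilde{x}$-weight zero that ends at $v_0$, thereby contradicting $R(\tilde{x}) = R(x)$. I would fix a maximum basic $\F_I$-packing $y$ of the same size as $x$, for instance the one produced alongside $x$ by Algorithm~\ref{alg:farthest}. By Lemma~\ref{lem:farthest:dual}, the vertex $v_0$ must sit either at position $v_{a_x(I)}$ or $v_{b_x(I)}$ on an integral path $(v_0^I, \dots, v_\ell^I)$ of $y$ with $a_x(I) < b_x(I)$, or at position $s_{a_x(S)}$ on a spoke $(s_0, \dots, s_\ell)$. In each such case, the sub-walk along the integral path or spoke from the $A$-endpoint up to but not including $v_0$ is implicational and has total $x$-weight zero, so its vertices lie in $R(x)$ whenever this sub-walk is nonempty and the persistency reduction propagates a value to $v_0$ through the adjoining edge; thus $v_0 \in A'$ in $I/x$, also directly when the sub-walk is empty because then $v_0$ is itself an $A$-vertex of weight $\frac{1}{2}$, which is neither deleted nor fixed. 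Together with $x''(v_0) = 0$, the zero-length walk at $v_0$ places $v_0$ in $R(x'')$ inside $\F_{I/x}$; prepending the zero-weighted implicational sub-walk through $R(x)$ then lifts this to a walk in $\F_I$ certifying $v_0 \in R(\tilde{x}) \setminus R(x)$, the required contradiction.
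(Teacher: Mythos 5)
Your proof is correct and follows essentially the same route as the paper's: show the restriction $x'$ is an $\F_{I/x}$-cover, lift any competing minimum $\F_{I/x}$-cover back to a minimum $\F_I$-cover that is $1$ on $x^{-1}(1)$ and $0$ on $R(x)$, and contradict the farthest property of $x$. The only difference is that you explicitly justify the strict inclusion $R(x)\subsetneq R(\tilde{x})$ by placing the vertex $v_0$ with $x(v_0)=\frac{1}{2}$, $x''(v_0)=0$ at a boundary position of a maximum basic packing via Lemma~\ref{lem:farthest:dual} and prepending the zero-weight prefix through $R(x)$ --- a step the paper compresses into the single assertion that the lifted cover ``dominates $x$'' --- and your justification of it is sound.
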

\begin{proof}
Let $x'$ be the restriction of $x$. 
Then, $x'$ is a half-integral $\F_{I/x}$-cover because $x'(u)=\frac{1}{2}$ for every $u\in A'$.
Suppose that there exists a half-integral $\F_{I/x}$-cover $z'$ with $|z'|\leq |x'|$ and $z'\neq x'$.
From Lemmas~\ref{lem:fpt:delete} and~\ref{lem:fpt:fix}, the following function $z$ is a half-integral $\F$-cover:
$z(u)=z'(u)$ for $u\in V\setminus (x^{-1}(1)\cup R(x))$, $z(u)=1$ for $u\in x^{-1}(1)$, $z(u)=0$ for $u\in R(x)$.
Then, we have $|z|=|z'|+|x^{-1}(1)|\leq |x'|+|x^{-1}(1)|=|x|$.
Therefore, $z$ is a minimum half-integral $\F$-cover dominating $x$, which is a contradiction.
\end{proof}

Let $I=(C,\varphi_\emptyset)$ be a pair with $A=\emptyset$.
A set of pairs $B\subseteq\{(u,a)\mid u\in V,\ a\in D(u)\}$ is called a \emph{branching set} for $I$ if it has the following property:
any deletion set for $I$ is a deletion set for at least one of $I[u\gets a]$ with $(u,a)\in B$.
Note that any deletion set for $I[u\gets a]$ is a deletion set for~$I$.
The running time of our algorithm depends on the choice of branching sets.
In general, we can use the following standard choice:
pick a vertex $u\in V$ and set $B:=\{(u,a)\mid a\in D(u)\}$.
In the next section, we choose different branching sets for problem-specific improvements.
\begin{lemma}\label{lem:fpt:branching_general}
$B:=\{(u,a)\mid a\in D(u)\}$ for some $u\in V$ is a branching set.
\end{lemma}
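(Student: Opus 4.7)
The plan is to verify the definition of a branching set directly by a short case analysis on whether the fixed variable $u$ belongs to a given deletion set $X$ for $I=(C,\varphi_\emptyset)$. Since $A=\emptyset$, the defining property of $X$ says merely that there exists an assignment $\varphi$ for $V\setminus X$ satisfying every constraint in $C[V\setminus X]$. We must show that for some $a\in D(u)$, the same $X$ serves as a deletion set for $I[u\gets a]=(C,\varphi_{\{u\}})$ with $\varphi_{\{u\}}(u)=a$.

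First, suppose $u\in X$. Then in $I[u\gets a]$ the only terminal $\{u\}$ is contained in $X$, so $A\setminus X=\emptyset$, and the requirement (1) in the definition of \INTCOVER is vacuous. The assignment $\varphi$ that witnesses $X$ being a deletion set for $I$ still satisfies all of $C[V\setminus X]$, so $X$ is a deletion set for $I[u\gets a]$ for every $a\in D(u)$. Next, suppose $u\notin X$, and let $\varphi$ be any assignment for $V\setminus X$ satisfying $C[V\setminus X]$, which exists because $X$ is a deletion set for $I$. Set $a:=\varphi(u)\in D(u)$. Then $\varphi(u)=a=\varphi_{\{u\}}(u)$, so condition (1) holds, and condition (2) holds by the choice of $\varphi$. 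Hence $X$ is a deletion set for $I[u\gets a]$.

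In either case, we have found some $a\in D(u)$ such that $X$ is a deletion set for $I[u\gets a]$. Since $X$ was an arbitrary deletion set for $I$, the set $B=\{(u,a)\mid a\in D(u)\}$ satisfies the defining property of a branching set. The proof is entirely routine; there is no real obstacle, the only thing to be careful about is that when $u\in X$ the chosen value of $a$ is irrelevant because $u\in A\setminus X$ becomes empty in $I[u\gets a]$, so we do not need to worry about whether $\varphi$ would have assigned a particular value to $u$.
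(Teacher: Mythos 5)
Your proof is correct and follows exactly the same case analysis as the paper: if $u\in X$ then $X$ works for every $I[u\gets a]$, and otherwise take $a=\varphi(u)$ for a satisfying assignment $\varphi$ of $C[V\setminus X]$. The extra care you take about condition (1) being vacuous when $u\in X$ is fine but not needed beyond the paper's one-line treatment.
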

\begin{proof}
Let $X$ be a deletion set for $I$, and let $\varphi$ be a satisfying assignment for $C[V\setminus X]$.
If $u\in X$, $X$ is a deletion set for every $I[u\gets a]$.
Otherwise, $X$ is a deletion set for $I[u\gets a]$ with $a=\varphi(u)$.
\end{proof}

\begin{algorithm}[t!]
\caption{A linear-time FPT algorithm for \INTCOVER}
\label{alg:bb}
\begin{algorithmic}[1]
\Procedure{Solve}{$I=(C,\varphi_A),k$}
\State Compute a farthest minimum half-integral $\F_I$-cover $x$.
\If{$|x|>k$}
	\Return $\mathbf{false}$
\EndIf
\State $I\gets I/x$, $k\gets k-|x^{-1}(1)|$.
\If{$V=\emptyset$}\label{line:bb:outer}
	\Return $\mathbf{true}$
\EndIf
\If{$A\neq\emptyset$}
	\State Pick a vertex $u\in A$.
	\State \Return $\Call{Solve}{I-u,k-1}\vee\Call{Solve}{I/u,k-|N_u|}$
\Else
	\State Choose a branching set $B$.
	\If{there exists $(u,a)\in B$ with $\F_{I[u\gets a]}=\emptyset$}
		\State $I\gets I-R$ and {\bf goto} line~\ref{line:bb:outer}, where $R$ is the set of the implicated variables.
	\Else
		\State \Return $\bigvee_{(u,a)\in B}\Call{Solve}{I[u\gets a], k}$
	\EndIf
\EndIf
\EndProcedure
\end{algorithmic}
\end{algorithm}

We now provide a linear-time FPT algorithm for \INTCOVER (Algorithm~\ref{alg:bb}).
We denote the size of the minimum half-integral $\F_I$-cover for a pair $I=(C,\varphi_A)$ by $c(I)$.
$\Call{Solve}{I,k}$ is a procedure that returns $\mathbf{true}$ if and only if $I$ admits a deletion set of size at most $k$.
We prove the following.
\begin{theorem}
Let $I=(C,\varphi_A)$ be a pair of a set $C$ of 0/1/all constraints on a variable set $V$ and a partial assignment $\varphi_A$ for a subset $A\subseteq V$.
We are given the primal graph of $C$, the set $A$, an $O(T)$-time incremental-test oracle for $(C,\varphi_A)$, and an integer $k$.
Under the following assumptions, Algorithm~\ref{alg:bb} correctly answers whether $(C,\varphi_A)$ admits a deletion set of size at most $k$ or not in $O(\max(2,b)^{2(k-c(I))} kmT)$ time,
where $b$ is the integer in the assumption and $m$ is the number of constraints.
\begin{enumerate}
  \item For any $V'\subseteq V$, we can choose a branching set for $(C[V'],\varphi_\emptyset)$ of size at most $b$.
  \item For any $V'\subseteq V$ and any $(u,a)\in B$ for a possible branching set $B$ for $(C[V'],\varphi_\emptyset)$, we have an $O(T)$-time incremental-test oracle for $(C[V'],\varphi_{\{u\}})$ with $\varphi_{\{u\}}(u)=a$.
\end{enumerate}
\end{theorem}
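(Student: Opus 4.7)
The plan is to first establish correctness by a straightforward induction from the preceding lemmas, and then bound the recursion tree by tracking the LP gap $g(I) := k - c(I)$ and showing it decreases by at least $\frac{1}{2}$ at every recursive call. Correctness is routine: the base case $V = \emptyset$ returns $\mathbf{true}$ with the empty deletion set; the cutoff $|x| > k$ is a valid rejection since $|x| = c(I)$ lower-bounds any deletion set size; the persistency reduction $I \gets I/x$ preserves the answer by Lemma~\ref{lem:fpt:reduction1}; the two calls in the $A \neq \emptyset$ branch exhaust the two possibilities $u \in X$ versus $u \notin X$ by Lemmas~\ref{lem:fpt:delete} and~\ref{lem:fpt:fix}; the $A = \emptyset$ branching exhausts the cases by the definition of a branching set; and the $\F_{I[u \gets a]} = \emptyset$ reduction is valid because the propagated consistent assignment on the implicated set $R$ witnesses that $R$ need not intersect any deletion set.

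The core of the time analysis is the claim that every recursive call strictly decreases $g$ by at least $\frac{1}{2}$. By Lemma~\ref{lem:fpt:reduction2}, after the top-of-procedure reduction $I \gets I/x$, the restriction of $x$ is the \emph{unique} minimum half-integral $\F_I$-cover; in particular $x^{-1}(1) = \emptyset$ and $R(x) = \emptyset$, which forces $x(u) = \frac{1}{2}$ for every $u \in A$, because the length-zero walk $(u)$ is implicational and would place $u$ in $R(x)$ if $x(u) = 0$. Consider the $A \neq \emptyset$ case with chosen $u \in A$. For the delete branch, the restriction of $x$ to $V \setminus \{u\}$ is an $\F_{I-u}$-cover of size $c(I) - \frac{1}{2}$, so $c(I-u) \leq c(I) - \frac{1}{2}$; if we had $c(I-u) \leq c(I) - 1$, then extending any minimum $\F_{I-u}$-cover by value $1$ at $u$ would yield a second minimum $\F_I$-cover assigning $1$ rather than $\frac{1}{2}$ to $u$, contradicting uniqueness. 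Hence $c(I-u) = c(I) - \frac{1}{2}$ and the delete-branch gap equals $g - \frac{1}{2}$. For the fix branch, Lemma~\ref{lem:fpt:fix} gives $c(I/u) \geq c(I) - |N_u|$; if equality held, the extension of a minimum $\F_{I/u}$-cover by $0$ at $u$ and $1$ on $N_u$ would produce a second minimum $\F_I$-cover assigning $0$ rather than $\frac{1}{2}$ to $u$, again contradicting uniqueness. Thus $c(I/u) \geq c(I) - |N_u| + \frac{1}{2}$ and the fix-branch gap is at most $g - \frac{1}{2}$.

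In the $A = \emptyset$ case we have $\F_I = \emptyset$ and $c(I) = 0$, so $g = k$. The rule handling $\F_{I[u \gets a]} = \emptyset$ treats such a branch as a reduction rather than a recursive leaf, so in the branching subcase every $(u, a) \in B$ satisfies $c(I[u \gets a]) \geq \frac{1}{2}$ and hence decreases the gap by at least $\frac{1}{2}$. Writing $L(g)$ for the number of leaves in a recursion tree rooted at a call with gap $g$, we obtain the recurrence $L(g) \leq \max(2, b) \cdot L(g - \frac{1}{2})$, yielding $L(g) = O(\max(2, b)^{2g})$. Each call performs one invocation of the farthest-cover algorithm of Theorem~\ref{thm:farthest} at cost $O(kmT)$, and the incremental-test oracle updates for $I - u$ and $I/u$ fit in the same budget (using $\mathcal{I}(v) := \mathcal{A}(\mathcal{I}(u), uv)$ for the newly implicated variables, as noted before Lemma~\ref{lem:fpt:delete}); the $\F_{I[u \gets a]} = \emptyset$ reductions strictly shrink the variable set, so their total cost inside a single outer call amortizes into the same bound. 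Multiplying the number of leaves by the per-call cost yields the claimed $O(\max(2, b)^{2(k - c(I))} k m T)$ running time.

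The main obstacle is the delete-branch half-integer decrease: the naive bound $c(I-u) \geq c(I) - 1$ would leave the gap unchanged in the worst case, and only the uniqueness guarantee of Lemma~\ref{lem:fpt:reduction2} --- which in turn requires starting from a \emph{farthest} minimum cover rather than an arbitrary one --- rules out the bad case $c(I-u) = c(I) - 1$. Exactly the same uniqueness forces the strict inequality in the fix branch, so once both branches are shown to make half-integral LP progress, the remainder of the argument is routine bookkeeping.
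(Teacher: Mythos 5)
Your core argument is the same as the paper's: compute a farthest minimum half-integral cover, apply the persistency reduction so that (by Lemma~\ref{lem:fpt:reduction2}) the restricted cover with value $\frac{1}{2}$ on $A$ and $0$ elsewhere is the \emph{unique} minimum cover, and then derive $c(I-u)\geq c(I)-\frac{1}{2}$ and $c(I/u)\geq c(I)-|N_u|+\frac{1}{2}$ by contradiction with uniqueness (via the cover extensions of Lemmas~\ref{lem:fpt:delete} and~\ref{lem:fpt:fix}), so that $\Delta=k-c(I)$ drops by at least $\frac{1}{2}$ in every branch and the recursion tree has $O(\max(2,b)^{2\Delta})$ nodes, each costing $O(kmT)$ by Theorem~\ref{thm:farthest}. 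The correctness bookkeeping (branching-set exhaustiveness, validity of deleting the implicated set $R$, oracle construction for $I/u$) also matches the paper, and these parts of your write-up are fine.

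There is, however, one under-justified step in the time analysis: the cost of the $\F_{I[u\gets a]}=\emptyset$ reductions in the $A=\emptyset$ case. Your justification --- that these reductions ``strictly shrink the variable set'' and therefore amortize into the per-call budget --- is not enough. Checking a single pair $(u,a)$ by unit propagation can cost $\Theta(mT)$ even when the set $R$ eventually deleted is tiny (a propagation that does hit a conflict may explore almost the whole graph first), and there can be $\Theta(n)$ such reduction rounds inside one call, so a sequential implementation could cost $\Theta(bnmT)$ per call and break the claimed bound. The paper closes this by running the $b$ unit propagations \emph{in parallel} (simulated round-robin) and stopping at the first one that certifies $\F_{I[u\gets a]}=\emptyset$, which effectively selects a pair whose implicated set $R$ minimizes the number $m'$ of incident edges; each round then costs $O(bm'T)$, and since the $m'$ edges are removed, the rounds telescope to $O(bmT)$ per call, which fits the recurrence $\mathcal{T}(\Delta)\leq b\,\mathcal{T}(\Delta-\frac{1}{2})+O(kmT+bmT)$. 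You need this (or an equivalent charging mechanism) to make the amortization claim go through; with it added, your proof coincides with the paper's.
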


Note that for \INTCOVER, we assume that each constraint is given as a table of size $O(d)$.
Therefore, the naive implementation of the incremental-test oracle runs in a constant time.
When using the naive implementation of the incremental-test oracle, the second assumption trivially holds.
Thus, by using the standard choice of branching sets, the algorithm runs in $O(d^{2(k-c(I))}km)=O(d^{2k}km)$ time.

First, we compute a farthest minimum half-integral $\F_I$-cover $x$.
Because the size of $x$ is a lower bound on the size of the minimum deletion set, if $|x|>k$ holds, there exists no deletion set of size at most $k$.
Otherwise, we have $k-c(I)\geq 0$.
From Lemma~\ref{lem:fpt:reduction1}, we can apply the persistency reduction and decrease $k$ by $|x^{-1}(1)|$.
This does not change the difference $k-c(I)$ because $c(I)$ also decreases by $|x^{-1}(1)|$.
We rename the reduced instance to $I=(C,\varphi_A)$ for simplicity of the notation, and let $G=(V,E)$ be the primal graph of $C$.
We rename the restriction of $x$ to $V$ to $x$.
From Lemma~\ref{lem:fpt:reduction2}, $x$ is the unique minimum half-integral $\F_I$-cover that maps every vertex in $A$ to $\frac{1}{2}$ and every other vertex to $0$.
This part can be done in $O(kmT)$ time and we can construct an incremental-test oracle for the new $I$ from the incremental-test oracle for the old $I$.

If $A\neq\emptyset$, we pick a vertex $u\in A$ and return $\Call{Solve}{I-u,k-1}\vee \Call{Solve}{I/u,k-|N_u|}$.
\begin{claim}
$I$ admits a deletion set of size at most $k$ if and only if $I-u$ admits a deletion set of size at most $k-1$ or $I/u$ admits a deletion set of size at most $k-|N_u|$.
Moreover, both of $k-1-c(I-u)<k-c(I)$ and $k-|N_u|-c(I/u)<k-c(I)$ hold.
\end{claim}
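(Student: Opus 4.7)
The plan is to establish the biconditional directly from Lemmas~\ref{lem:fpt:delete} and~\ref{lem:fpt:fix} via a case analysis on whether $u$ belongs to a hypothetical deletion set, and then to derive the two strict gap inequalities by exploiting the uniqueness of the current minimum half-integral $\F_I$-cover $x$ guaranteed by Lemma~\ref{lem:fpt:reduction2}. Recall that we have just reduced by $x$, so $x(u)=\tfrac12$ for every $u\in A$ (in particular for the picked $u$), and $x$ is the \emph{unique} minimum half-integral $\F_I$-cover.

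For the biconditional, one direction is immediate from the ``converse'' halves of Lemmas~\ref{lem:fpt:delete}(1) and~\ref{lem:fpt:fix}(1): any deletion set witnessing $I-u$ or $I/u$ extends to one for $I$ of the required size. For the other direction, given a deletion set $X$ for $I$ with $|X|\le k$, the case $u\in X$ is handled by Lemma~\ref{lem:fpt:delete}(1). If instead $u\notin X$, I would first observe that every $v\in N_u$ must also lie in $X$: by definition of $N_u$ we have $u,v\in A$, $uv\in E$, and $\varphi_A$ already violates $C_{uv}$, so any satisfying extension must delete at least one of $u,v$, forcing $v\in X$. With $N_u\subseteq X$, Lemma~\ref{lem:fpt:fix}(1) produces a deletion set $X\setminus N_u$ for $I/u$ of size at most $k-|N_u|$.

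For the two strict inequalities, I would use the uniqueness of $x$ as follows. Take any minimum half-integral $\F_{I-u}$-cover $x'$. Lemma~\ref{lem:fpt:delete}(2) extends it to a half-integral $\F_I$-cover $\tilde x$ with $\tilde x(u)=1$. Since $\tilde x(u)=1\ne\tfrac12=x(u)$, $\tilde x$ is \emph{not} the unique minimum cover $x$, so $|\tilde x|>c(I)$; rearranging $|\tilde x|=|x'|+1$ yields $c(I-u)>c(I)-1$, i.e., $k-1-c(I-u)<k-c(I)$. The identical template, now applied to a minimum half-integral $\F_{I/u}$-cover $x'$ via Lemma~\ref{lem:fpt:fix}(2), produces an $\F_I$-cover $\tilde x$ with $\tilde x(u)=0\ne\tfrac12$ and $|\tilde x|=|x'|+|N_u|$; uniqueness again gives $|\tilde x|>c(I)$, hence $c(I/u)>c(I)-|N_u|$, which rearranges to $k-|N_u|-c(I/u)<k-c(I)$.

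The main obstacle is precisely the \emph{strictness} of these gap inequalities. The extension lemmas by themselves only yield $c(I-u)\ge c(I)-1$ and $c(I/u)\ge c(I)-|N_u|$, which would leave the recursion depth unbounded in terms of $k-c(I)$. The extra slack needed to convert $\ge$ into $>$ comes entirely from Lemma~\ref{lem:fpt:reduction2}: without the uniqueness of $x$, one could not rule out that the extended cover $\tilde x$ accidentally attains the minimum value $c(I)$, and the gap argument driving the overall running-time bound would collapse.
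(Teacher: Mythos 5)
Your proof is correct and follows essentially the same route as the paper: the equivalence comes from Lemmas~\ref{lem:fpt:delete} and~\ref{lem:fpt:fix} via the case split on whether $u$ lies in the deletion set, and the strict inequalities come from extending a minimum $\F_{I-u}$- or $\F_{I/u}$-cover to an $\F_I$-cover with value $1$ (resp.\ $0$) at $u$ and invoking the uniqueness of $x$ (which assigns $\tfrac12$ to $u\in A$) from Lemma~\ref{lem:fpt:reduction2}. The paper phrases the strictness step as a contradiction rather than a direct rearrangement, but the content is identical.
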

\begin{proof}
The first claim follows from Lemma~\ref{lem:fpt:delete} and~\ref{lem:fpt:fix}.
We now prove the second claim.

Suppose that $I-u$ admits a half-integral $\F_{I-u}$-cover $z'$ of size at most $|x|-1=c(I)-1$.
Then, from Lemma~\ref{lem:fpt:delete}, we can construct a half-integral $\F_I$-cover $z$ of size at most $|x|$ with $z(u)=1$.
Therefore, $z$ is a minimum half-integral $\F_I$-cover with $x\neq z$, which contradicts the uniqueness of~$x$.

Suppose that $I/u$ admits a half-integral $\F_{I/u}$-cover $z'$ of size at most $|x|-|N_u|=c(I)-|N_u|$.
Then, from Lemma~\ref{lem:fpt:fix}, we can construct a half-integral $\F_I$-cover $z$ of size at most $|x|$ with $z(u)=0$.
Therefore, $z$ is a minimum half-integral $\F_I$-cover with $x\neq z$, which contradicts the uniqueness of~$x$.
\end{proof}

If $A=\emptyset$, we choose a branching set $B$ of size at most $b$.
For each $(u,a)\in B$, we check whether $\F_{I[u\gets a]}=\emptyset$ or not by the unit-propagation
(i.e., by applying Algorithm~\ref{alg:search} against the empty packing).
If there exists such $(u,a)$, let $R$ be the set of the implicated variables (i.e., $R(\mathbf{0})$ for the empty cover $\mathbf{0}$ that always returns $0$).
Let $m'$ be the number of edges incident to a vertex in $R$, and suppose that $(u,a)$ is the pair minimizing $m'$.
The unit-propagation for $(u,a)$ takes only $O(m'T)$ time.
Hence, by running the unit-propagation for every $(u,a)\in B$ in parallel, which is simulated in a single processor in a round-robin fashion,
we can find such $(u,a)$ in $O(bm'T)$ time.
We then set $I\gets I-R$, where $I-R$ is the pair obtained by deleting every vertex $v\in R$, and go back to line~\ref{line:bb:outer}.
\begin{claim}
$I$ admits a deletion set of size at most $k$ if and only if $I-R$ admits a deletion set of size at most $k$.
\end{claim}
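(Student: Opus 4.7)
The plan is to establish the equivalence by moving between satisfying assignments of $C[V\setminus X]$ and of $C[(V\setminus R)\setminus X']$, using the implicated values on $R$ as the bridge. As a preliminary step, I would observe that the hypothesis $\F_{I[u\gets a]}=\emptyset$ implies that for every $v\in R$, all implicational walks from $u$ to $v$ yield the same implicated value: two walks $W_1,W_2$ with $\imp(W_1)\neq\imp(W_2)$ would give $W_1\circ W_2^{-1}\in\F_{I[u\gets a]}$ by Lemma~\ref{lem:def:imp}. Thus there is a well-defined partial assignment $\psi$ on $R$ with $\psi(u)=a$ and $\psi(v)=\imp(W)$ for any implicational walk $W$ from $u$ to $v$ (using $A=\{u\}$, $\varphi_A(u)=a$).

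The forward direction is immediate. Given a deletion set $X$ for $I$ of size at most $k$ with satisfying assignment $\varphi$ for $C[V\setminus X]$, the set $X':=X\setminus R$ satisfies $|X'|\leq|X|\leq k$, and the remaining variable set is $V\setminus R\setminus X'=V\setminus(R\cup X)\subseteq V\setminus X$. The restriction of $\varphi$ therefore satisfies every constraint in $C[(V\setminus R)\setminus X']$, so $X'$ is a deletion set for $I-R$.

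For the reverse direction, I would take a deletion set $X'$ for $I-R$ of size at most $k$ with satisfying assignment $\varphi'$ for $C[(V\setminus R)\setminus X']$. Since $X'\subseteq V\setminus R$, we have $R\cap X'=\emptyset$, and I define $\varphi$ on $V\setminus X'$ by $\varphi(v):=\psi(v)$ for $v\in R$ and $\varphi(v):=\varphi'(v)$ for $v\in(V\setminus R)\setminus X'$. Verifying that $\varphi$ satisfies every constraint $C_{vw}\in C[V\setminus X']$ splits into three cases: both endpoints in $V\setminus R$ is handled by $\varphi'$; both endpoints in $R$ is handled by $\psi$ (a violation would, by Lemma~\ref{lem:def:imp} applied to implicational walks reaching $v$ and $w$, produce a conflicting walk in $\F_{I[u\gets a]}$, contradicting the hypothesis); and the crossing case requires a separate argument.

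The main obstacle---where the 0/1/all structure is essential---is the crossing case $v\in R$, $w\in(V\setminus R)\setminus X'$. I would argue that $C_{vw}(\psi(v))=\all$, so the constraint is satisfied regardless of $\varphi'(w)$: for any implicational walk $W$ from $u$ to $v$, if $C_{vw}(\imp(W))\neq\all$ then $W\circ vw$ is implicational and ends at $w$, forcing $w\in R$, a contradiction. Combining the three cases shows that $\varphi$ satisfies $C[V\setminus X']$, so $X'$ is a deletion set for $I$ of size at most $k$, completing the equivalence.
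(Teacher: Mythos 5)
Your proposal is correct and follows essentially the same route as the paper: the forward direction by restricting the assignment, and the reverse direction by extending via the well-defined implicated values on $R$ (the paper's $\varphi_R$, your $\psi$), with the crossing constraints satisfied because $C_{vw}(\psi(v))=\all$ — which is exactly the paper's observation that such constraints are two-fans $(\varphi(v)=a)\vee(\varphi(w)=b)$ with $a=\varphi_R(v)$. No substantive difference.
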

\begin{proof}
Any deletion set for $I$ is also a deletion set for $I-R$.
Let $X$ be a deletion set for $I-R$, and let $\varphi_{V\setminus(R\cup X)}$ be a satisfying assignment for $C[V\setminus (R\cup X)]$.
From the construction of $R$, $C[R]$ admits a satisfying assignment $\varphi_R$, and every constraint $C_{vw}$ with $v\in R$ and $w\not\in R$ is a two-fan $(\varphi(v)=a)\vee(\varphi(w)=b)$ with $a=\varphi_R(v)$.
Therefore, an assignment $\varphi_{V\setminus X}$ such that $\varphi_{V\setminus X}(v)=\varphi_{R}(v)$ for $v\in R$ and $\varphi_{V\setminus X}(v)=\varphi_{V\setminus (R\cup X)}(v)$ for $v\in V\setminus (R\cup X)$ satisfies $C[V\setminus X]$.
Thus, $X$ is also a deletion set for~$I$.
\end{proof}

If $\F_{I[u\gets a]}\neq\emptyset$ for all $(u,a)\in B$, we return $\bigvee_{(u,a)\in B}\Call{Solve}{I[u\gets a]}$.
\begin{claim}
$I$ admits a deletion set of size at most $k$ if and only if at least one of $I[u\gets a]$ admits a deletion set of size $k$.
Moreover, $k-c(I[u\gets a])<k-c(I)$ for every $(u,a)\in B$.
\end{claim}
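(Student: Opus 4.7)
The first part of the claim is essentially a tautology from the definition of a branching set. For the forward direction, I will take any deletion set $X$ for $I$ of size at most $k$ and invoke the defining property of $B$ to conclude that $X$ is a deletion set for at least one $I[u\gets a]$ with $(u,a)\in B$, clearly of the same size. For the converse, I appeal to the observation already recorded in the paragraph preceding Lemma~\ref{lem:fpt:branching_general}: any deletion set for $I[u\gets a]$ satisfies all constraints of $I$ plus the extra pinning $\varphi(u)=a$, so it is in particular a deletion set for $I$ of the same size. Combined, these two implications give the equivalence.

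For the strict decrease $k-c(I[u\gets a])<k-c(I)$, which reduces to $c(I[u\gets a])>c(I)$, my plan is to exploit the context in which this lemma is invoked inside Algorithm~\ref{alg:bb}. The branching-set step is only reached inside the \textbf{else}-branch of the test ``$A\neq\emptyset$'', so at that point the (reduced) instance $I$ satisfies $A=\emptyset$. By definition, every $\varphi_A$-conflicting walk has both endpoints in $A$, hence $\F_I=\emptyset$; the all-zero function is then a feasible half-integral $\F_I$-cover, and we obtain $c(I)=0$. On the other hand, we are in the inner \textbf{else}-sub-branch where $\F_{I[u\gets a]}\neq\emptyset$ for every $(u,a)\in B$, so picking any walk $W\in\F_{I[u\gets a]}$ forces every half-integral $\F_{I[u\gets a]}$-cover $z$ to satisfy $\sum_{v\in V(W)}z(v)\geq 1$; since $z$ takes values in $\{0,\tfrac12,1\}$, at least one vertex of $W$ must receive weight $\geq\tfrac12$, which gives $c(I[u\gets a])\geq\tfrac12>0=c(I)$.

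The argument presents no serious obstacle; the only care needed is to keep track of the point in the pseudocode where the lemma applies, in order to legitimately assume both $A=\emptyset$ (guaranteed by the outer \textbf{if}) and $\F_{I[u\gets a]}\neq\emptyset$ for all $(u,a)\in B$ (guaranteed by the inner \textbf{if}). Both assumptions are immediate from Algorithm~\ref{alg:bb}, so the proof consists of stitching together the definition of a branching set with these two structural facts.
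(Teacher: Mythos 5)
Your proof is correct and follows essentially the same route as the paper: the equivalence is read off from the definition of a branching set (together with the paper's remark that any deletion set for $I[u\gets a]$ is one for $I$), and the strict decrease comes from noting that $A=\emptyset$ at this point forces $\F_I=\emptyset$, hence $c(I)=0$, while $\F_{I[u\gets a]}\neq\emptyset$ forces $c(I[u\gets a])\geq\frac{1}{2}$. The paper's proof is just a terser version of the same argument.
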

\begin{proof}
The first claim follows from the definition of the branching set.
Because $\F_{I[u\gets a]}\neq\emptyset$, we have $k-c(I[u\gets a])\leq k-\frac{1}{2}<k=k-c(I)$.
\end{proof}

We now have proved the correctness of the algorithm.
Finally, we analyze the running time.
Let $\mathcal{T}(\Delta)$ be the running time of $\Call{Solve}{I,k}$ when $\Delta:=k-c(I)$.
We can compute a farthest minimum half-integral $\F_I$-cover in $O(k m T)$ time.
If $A\neq\emptyset$, we branch into two cases and $\Delta$ decreases by at least $\frac{1}{2}$ for each case.
Therefore, we have $\mathcal{T}(\Delta)\leq 2\mathcal{T}(\Delta-\frac{1}{2})+O(kmT)$.
If $A=\emptyset$, we search for $(u,a)\in B$ with $\F_{I[u\gets a]}=\emptyset$ by the parallel unit-propagation.
If there exists such $(u,a)$, the parallel unit-propagation takes $O(bm'T)$ time, and $m$ decreases by $m'$.
Therefore, in $O(bmT)$ time, we reach to the state that either $V=\emptyset$ or there exists no such $(u,a)$.
In the latter case, we branch into at most $b$ cases, and $\Delta$ decreases by at least $\frac{1}{2}$ for each case.
Therefore, we have $\mathcal{T}(\Delta)\leq b\mathcal{T}(\Delta-\frac{1}{2})+O(kmT+bmT)$.
Thus, we have $\mathcal{T}(\Delta)=O(\max(2,b)^{2\Delta}kmT)$.

\subsection{Applications to Other Problems}\label{sec:fpt:formulation}

Finally, we show that various NP-hard problems can be expressed as a special case of \INTCOVER.
Note that we use $A=\emptyset$ for every problem other than \textsc{Node Multiway Cut}.
We obtain a linear-time FPT algorithm for each problem by giving an incremental-test oracle and a specialized choice of branching sets.

\parameterizedproblem{\textsc{Node Unique Label Cover}}
{A finite alphabet $\Sigma$, a graph $G=(V,E)$, a permutation $\pi_e$ of $\Sigma$ for every edge $e\in \hat{E}$ such that $\pi_{uv} = \pi_{vu}^{-1}$ given as a table of size $|\Sigma|$, and an integer $k$.}
{$k,|\Sigma|$}
{Is there a pair of set $X \subseteq V$ of at most $k$ vertices and assignment $\varphi : V \setminus X \to \Sigma$ such that $\pi_{uv}(\varphi(u))=\varphi(v)$ for every $uv \in E[V\setminus X]$?}

\parameterizedproblem{\textsc{Two-fan Deletion}}
{A set of variables $V$, a set of two-fan constraints on $V$ of the form $(\varphi(u)=a)\vee (\varphi(v)=b)$ given as a pair $(a,b)$, and an integer $k$.}
{$k$}
{Is there a pair of set $X \subseteq V$ of at most $k$ variables and assignment $\varphi$ satisfying every constraint $C_{uv} \in C[V \setminus X]$?}

These two problems are special cases of \INTCOVER such that the set of constraints is limited to permutation or two-fan constraints.
Note that the size of the domain is not a parameter for \textsc{Two-fan Deletion}.
The naive implementation of the incremental-test oracle for these problems runs in a constant time.
Thus, we can solve \textsc{Node Unique Label Cover} in $O(|\Sigma|^{2k}km)$ time.
We can use the following choice of a branching set for \textsc{Two-fan Deletion}:
pick a two-fan constraint $(\varphi(u)=a)\vee(\varphi(v)=b)$ and set $B:=\{(u,a),(v,b)\}$.
Thus, we can solve \textsc{Two-fan Deletion} in $O(2^{2k}km)=O(4^kkm)$ time.
\begin{lemma}
$B:=\{(u,a),(v,b)\}$ for a two-fan constraint $(\varphi(u)=a)\vee(\varphi(v)=b)$ is a branching set.
\end{lemma}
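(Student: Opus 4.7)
The plan is to verify the defining property of a branching set directly, by a three-way case analysis on how a given deletion set intersects $\{u,v\}$. Recall that by definition we must show: every deletion set $X$ for $I=(C,\varphi_\emptyset)$ is a deletion set for at least one of $I[u \gets a]$ or $I[v \gets b]$. So I would fix an arbitrary deletion set $X$ for $I$, together with a witnessing assignment $\varphi$ on $V \setminus X$ that satisfies every constraint in $C[V \setminus X]$, and then exhibit which of the two branched instances $X$ certifies.

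First I would handle the easy cases $u \in X$ and $v \in X$. If $u \in X$, then in $I[u \gets a]$ the fixed variable $u$ lies in $X$, so $A \setminus X$ is unchanged compared to $I$ and the condition ``$\varphi(w)=\varphi_A(w)$ for every $w \in A \setminus X$'' is the same as for $I$; hence the same pair $(X,\varphi)$ witnesses that $X$ is a deletion set for $I[u \gets a]$. The case $v \in X$ is symmetric with $I[v \gets b]$.

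The remaining case is $\{u,v\}\cap X = \emptyset$. Then the constraint $C_{uv}$ lies in $C[V\setminus X]$, so $\varphi$ satisfies the two-fan $(\varphi(u)=a) \vee (\varphi(v)=b)$; that is, $\varphi(u)=a$ or $\varphi(v)=b$. In the first subcase, the witness $\varphi$ agrees with the partial assignment pinning $u$ to $a$, so $X$ is a deletion set for $I[u \gets a]$; in the second subcase, $X$ is a deletion set for $I[v \gets b]$ for the analogous reason. Together the three cases cover all possibilities, establishing the branching-set property.

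I do not expect any real obstacle: the argument is a short case split that only uses the semantics of two-fan constraints and the trivial observation that pinning a variable that will be deleted anyway imposes no extra restriction. The only point to be mindful of is to keep the bookkeeping on $A \setminus X$ correct in the write-up, since $I$ starts with $A=\emptyset$ while the branched instances $I[u \gets a]$ and $I[v \gets b]$ each introduce a single element into $A$ whose pinned value is by construction compatible with the $\varphi$ produced in each case.
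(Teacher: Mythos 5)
Your proposal is correct and matches the paper's proof essentially verbatim: the same case split on whether $u\in X$, $v\in X$, or neither, with the two-fan semantics handling the last case. The extra remark about the bookkeeping of $A\setminus X$ is fine but not needed beyond what the paper already asserts.
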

\begin{proof}
Let $X$ be a deletion set for $I$ and let $\varphi$ be an assignment for $V \setminus X$ satisfying $C[V\setminus X]$.
If $u\in X$, $X$ is a deletion set for $I[u\gets a]$, and if $v\in X$, $X$ is a deletion set for $I[v\gets b]$.
Otherwise, at least one of $\varphi(u)=a$ or $\varphi(v)=b$ holds.
$X$ is a deletion set for $I[u\gets a]$ in the former case, while $X$ is a deletion set for $I[v\gets b]$ in the latter case.
\end{proof}

The next two problems generalize \textsc{Pseudoforest Deletion} \cite{bodlaender2017faster}
in different directions,
where a \emph{pseudoforest} is a graph in which
the number of edges is at most the number of vertices for every connected component.

A graph is called \emph{monochromatically orientable} if there exists an edge orientation such that, for every vertex, all the incoming edges are monochromatic.
It is known that a graph is a pseudoforest if and only if there exists an edge orientation such that, for every vertex, the number of incoming edges is at most one.
Therefore, when every edge has a distinct color, a graph is monochromatically orientable if and only if it is a pseudoforest.
Thus, the following problem is a generalization of \textsc{Pseudoforest Deletion}.

\parameterizedproblem{\textsc{Monochromatically Orientable Deletion}}
{An edge-colored graph $G=(V,E)$ and an integer $k$.}
{$k$}
{Is there a set $X\subseteq V$ of at most $k$ vertices such that $G-X$ is monochromatically orientable?}

We can solve \textsc{Monochromatically Orientable Deletion} in $O(4^kkm)$ time by the following reduction to \textsc{Two-fan Deletion}.
Let $L$ be the set of colors.
For each vertex $v\in V$, we create a variable $v$ with domain $D(v)=L$, which represents the color of the incoming edges.
We create a two-fan constraint $(\varphi(u)=c)\vee (\varphi(v)=c)$ for each edge $uv\in E$ of color $c\in L$.
\begin{lemma}\label{lem:fpt:mono-ori}
A graph $G=(V,E)$ is monochromatically orientable if and only if the corresponding set $C$ of two-fan constraints is satisfiable.
\end{lemma}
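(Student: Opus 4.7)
The plan is to show both directions of the equivalence by exhibiting an explicit bijection-like correspondence between monochromatic orientations of $G$ and satisfying assignments of $C$. The guiding intuition is that the value $\varphi(v) \in L$ should be interpreted as ``the color that is allowed for edges oriented into $v$'', so that the two-fan constraint $(\varphi(u)=c) \vee (\varphi(v)=c)$ on an edge $uv$ of color $c$ just says ``at least one of the two endpoints is willing to receive a $c$-colored edge'', which is exactly what is needed to orient $uv$ consistently.

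For the forward direction, I would fix a monochromatic orientation of $G$ and, for each vertex $v$, define $\varphi(v)$ to be the common color of the edges oriented into $v$ (choosing $\varphi(v)$ arbitrarily if $v$ has no incoming edges). Then for any edge $uv \in E$ of color $c$, the orientation sends it either as $u \to v$, forcing $\varphi(v)=c$, or as $v \to u$, forcing $\varphi(u)=c$; either way the two-fan constraint is satisfied, so $\varphi$ satisfies all of $C$.

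For the backward direction, I would start from a satisfying assignment $\varphi : V \to L$ and orient each edge $uv \in E$ of color $c$ as follows: if $\varphi(v) = c$, orient it as $u \to v$; otherwise the constraint forces $\varphi(u) = c$, and I orient it as $v \to u$ (with ties broken arbitrarily, say always toward $v$, when both endpoints have value $c$). By construction, every edge oriented into a vertex $w$ has color $\varphi(w)$, so all incoming edges at $w$ share the color $\varphi(w)$, proving that $G$ is monochromatically orientable.

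I do not foresee any real obstacle; the lemma is essentially a definition-unfolding once one recognizes the role of $\varphi(v)$ as ``the allowed incoming color at $v$''. The only minor care needed is in the backward direction when both endpoints of an edge happen to carry the edge's color, where any deterministic tie-breaking rule preserves the monochromaticity property at both endpoints.
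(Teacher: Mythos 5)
Your proof is correct and follows essentially the same argument as the paper: interpret $\varphi(v)$ as the color of edges allowed into $v$, read off the assignment from an orientation in one direction, and orient each edge toward whichever endpoint carries its color in the other. The only difference is the (immaterial) tie-breaking convention in the backward direction.
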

\begin{proof}
From a monochromatic orientation of $G$, we can construct a satisfying assignment $\varphi$ as follows:
we set $\varphi(v):=c$ for each vertex $v\in V$, where $c$ is the unique color of the incoming edges or an arbitrary color if $v$ has no incoming edges.
From a satisfying assignment $\varphi$, we can construct a monochromatic orientation
by orienting each edge $uv\in E$ of color $c$ so that
the edge is directed toward $u$ if $\varphi(u)=c$, and toward $v$ otherwise.
\end{proof}

Another generalization of \textsc{Pseudoforest Deletion} is presented as follows.
For a graph $G=(V,E)$ and an edge $e=uv\in E$, \emph{contracting $e$} is an operation deleting the edge $e$ and merging $u$ and $v$ into a new vertex $e$.
Note that this operation may create parallel edges (edges $uw$ and $vw$ become parallel edges) and self-loops.
If $G$ has parallel edges connecting $u$ and $v$, the operation only removes one of them, and the rest becomes self-loops.

Let $S$ be a subset of edges.
A subset $X\subseteq V$ is called a \emph{subset feedback vertex set} if $G-X$ has no simple cycle passing through an edge of $S$, or equivalently, the graph obtained from $G-X$ by contracting every edge $e\in E[V \setminus X]\setminus S$ is a forest.
Similarly, we call $X$ a \emph{subset pseudoforest deletion set} if the graph obtained from $G-X$ by contracting every edge $e\in E[V \setminus X]\setminus S$ is a pseudoforest.

\parameterizedproblem{\textsc{Subset Pseudoforest Deletion}}
{A graph $G=(V,E)$, a set $S\subseteq E$, and an integer $k$.}
{$k$}
{Is there a set $X\subseteq V$ of at most $k$ vertices such that the graph obtained from $G-X$ by contracting every edge $e\in E[V\setminus X]\setminus S$ is a pseudoforest?}

This problem can be expressed as \INTCOVER as follows.
Every vertex $v$ has the same domain $D(v)=S$.
We introduce a two-fan constraint $(\varphi(u)=e)\vee (\varphi(v)=e)$ for every edge $e=uv\in S$.
We also introduce an equality (identity permutation) constraint $\varphi(u)=\varphi(v)$ for every edge $uv\not\in S$.
Let $C$ be the obtained set of constraints.
If $C[V']$ has no two-fan constraints for some $V'\subseteq V$, $(C[V'],\varphi_\emptyset)$ has a deletion set of size zero.
Therefore, we can use the same choice of branching sets as for \textsc{Two-fan Deletion}.
Thus, the algorithm runs in $O(4^k km)$ time.
The correctness of the expression follows from the following lemma.

\begin{lemma}
Let $G=(V,E)$ be a graph with a subset $S\subseteq E$ and let $C$ be the corresponding set of 0/1/all constraints.
Then, the graph obtained by contracting every edge $e\in E\setminus S$ is a pseudoforest if and only if $C$ is satisfiable.
\end{lemma}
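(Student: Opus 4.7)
The plan is to reduce the claim to Lemma~\ref{lem:fpt:mono-ori} after first using the equality constraints to collapse $G$ to a smaller multigraph. Specifically, I would let $G' = (V', E')$ be the multigraph obtained from $G$ by contracting every edge $e \in E \setminus S$. The vertex set $V'$ is in bijection with the connected components of $(V, E \setminus S)$, and $E'$ is in natural bijection with $S$, although some edges of $S$ may become self-loops (when both endpoints lie in the same component) or parallel edges.

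Next, I would observe that the equality constraints $\varphi(u) = \varphi(v)$ for $uv \in E \setminus S$ force any satisfying assignment $\varphi$ of $C$ to be constant on each component of $(V, E \setminus S)$. Therefore satisfying assignments of $C$ biject with functions $\varphi' \colon V' \to S$ such that for every edge $e \in E'$ with endpoints $x, y$ in $G'$ (possibly $x = y$), at least one of $\varphi'(x) = e$ or $\varphi'(y) = e$ holds. Now color each edge $e \in E'$ with the distinct color ``$e$'' itself. With this edge-coloring, the constraints above are precisely the two-fan constraints produced from $G'$ by the construction used in Lemma~\ref{lem:fpt:mono-ori}, so by that lemma they are satisfiable if and only if $G'$ is monochromatically orientable.

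Since every edge of $G'$ has its own unique color, a monochromatic orientation of $G'$ is exactly an orientation in which every vertex has at most one incoming edge (two incoming edges would be forced to share a color, which is impossible). By the classical fact already noted in the paper just before \textsc{Monochromatically Orientable Deletion} — a graph admits an orientation with in-degree at most one everywhere iff it is a pseudoforest — this is equivalent to $G'$ being a pseudoforest, which is exactly the graph obtained from $G$ by contracting all edges in $E \setminus S$. Chaining these equivalences gives the desired statement.

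The main point to take care of is that contraction can create self-loops and parallel edges, so the orientation-versus-pseudoforest equivalence must be invoked for multigraphs rather than simple graphs. This is routine, but worth checking: a self-loop at $x$ forces $\varphi'(x)$ to equal that specific edge, so two distinct self-loops at $x$ are unsatisfiable and simultaneously violate the pseudoforest condition (a one-vertex component with two edges); parallel edges between $x$ and $y$ must be split between the two endpoints under the in-degree constraint, which is consistent exactly when the corresponding component of $G'$ has at most as many edges as vertices. I expect this bookkeeping to be the only real obstacle, and for the rest of the argument to be a direct application of Lemma~\ref{lem:fpt:mono-ori} combined with the orientation characterization of pseudoforests.
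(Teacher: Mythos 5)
Your proposal is correct and follows essentially the same route as the paper: contract the equality edges of $E\setminus S$, observe that the surviving two-fan constraints are exactly the monochromatic-orientability constraints for $G'$ with each edge given its own color, and then invoke Lemma~\ref{lem:fpt:mono-ori} together with the in-degree-one characterization of pseudoforests. Your extra bookkeeping about self-loops and parallel edges is a sound (and slightly more careful) treatment of a point the paper leaves implicit via its general convention on multigraphs.
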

\begin{proof}
We modify $C$ for each contraction of an edge $e=uv$ by removing the constraint $C_{uv}$ and replacing every occurrence of $u$ and $v$ by $e$.
When we obtain the graph $G'$ by contracting every edge $e\in E\setminus S$,
we also obtain the set $C'$ of constraints corresponding to $G'$.
Because every contracted edge $e=uv$ has the equality constraint, $C$ is satisfiable if and only if $C'$ is satisfiable.
From Lemma~\ref{lem:fpt:mono-ori}, $C'$ is satisfiable if and only if $G'$ is a pseudoforest.
Therefore, $C$ is satisfiable if and only if $G'$ is a pseudoforest.
\end{proof}

As mentioned in Section~\ref{sec:comparison}, \textsc{Node Multiway Cut} is also a special case of our problem.

\parameterizedproblem{\textsc{Node Multiway Cut}}
{A graph $G=(V,E)$, a set of terminals $T \subseteq V$, and an integer $k$.}
{$k$}
{Is there a set $X \subseteq V \setminus T$ of size at most $k$ such that every terminal in $T$ lies in a different connected component of $G - X$?}

This problem can be expressed as \INTCOVER as follows.
First, we split each terminal $s\in T$ as follows to make $s$ undeletable:
for each edge $sv\in\delta(s)$, we create a new vertex $s_v$ and replace the edge $sv$ with $s_vv$.
Let $G'=(V',E')$ be the obtained graph.
We introduce a variable $v$ with $D(v):=T$ for each vertex $v\in V'$
and an equality (identity permutation) constraint $\varphi(u)=\varphi(v)$ for each edge $uv\in E'$.
Finally, we set $A=\{s_v\mid s\in T,~sv\in\delta(s)\}$ and $\varphi_A(s_v):=s$.
We can observe that a minimum deletion set $X$ avoiding every $s_v$ always exists because each vertex $s_v$ has degree one,
and such $X$ is actually a minimum multiway cut.

Let $I=(C,\varphi_A)$ be the obtained instance.
We can construct a multiway cut of size at most $2|x|$ by rounding up a half-integral $\F_I$-cover $x$.
Therefore, we have $k-c(I)\leq \frac{1}{2}k$.
Because the domain size is $|T|=O(n)$, the naive implementation of the incremental-test oracle runs in a constant time.
Because any $I=(C[V'],\varphi_\emptyset)$ has a deletion set of size zero, we do not need branching sets.
Thus, we can solve \textsc{Node Multiway Cut} in $O(2^{2\cdot\frac{1}{2}k}km)=O(2^kkm)$ time.

\medskip
We finally show \textsc{Group Feedback Vertex Set} and its applications.

\parameterizedproblem{\textsc{Group Feedback Vertex Set}}
{A group $\Gamma=(D,\cdot)$ given as an $O(T_\Gamma)$-time oracle performing the group operation~$(\cdot)$, a $\Gamma$-labeled graph $G=(V,E)$ with labeling $\lambda:\hat{E} \to D$ with $\lambda(uv)\cdot\lambda(vu) = 1_\Gamma$ for every $uv \in \hat{E}$, where $1_\Gamma$ is the unity of $\Gamma$, and an integer $k$.}
{$k$}
{Is there a set $X \subseteq V$ of at most $k$ vertices such that $G-X$ has a consistent labeling? That is, is there a labeling $\varphi : V \setminus X \to D$ such that $\varphi(u)\cdot\lambda(uv) = \varphi(v)$ for every $uv \in E[V\setminus X]$?}

This problem can be expressed as \INTCOVER because a function $\pi_e(a):=a\cdot\lambda(e)$ is a permutation.
Note that $G-X$ has a consistent labeling if and only if it admits no non-zero cycles
(i.e., it admits no cycle $(v_0,\ldots,v_\ell)$ with $\lambda(v_0v_1)\cdot\lambda(v_1v_2)\cdots\lambda(v_{\ell-1}v_\ell)\neq 1_\Gamma$).
In contrast to \textsc{Node Unique Label Cover}, the domain size is not a parameter, and each permutation is given not as a table of size $|D|$, but as an $O(T_\Gamma)$-time oracle answering $a\cdot b$ for given $a,b\in D$.
Therefore, the naive implementation of the incremental-test oracle runs in $O(T_\Gamma)$ time.
We can use the following choice of a branching set:
pick a vertex $s$ and set $B:=\{(s,1_\Gamma)\}$.
Thus, we can solve \textsc{Group Feedback Vertex Set} in $O(2^{2k}km)=O(4^kkm)$ time.
\begin{lemma}
$B:=\{(s,1_\Gamma)\}$ is a branching set for \textsc{Group Feedback Vertex Set}.
\end{lemma}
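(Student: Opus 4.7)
The plan is to verify the branching-set property directly from its definition: every deletion set $X$ for $I$ must also be a deletion set for $I[s \gets 1_\Gamma]$. Unpacking the \INTCOVER formulation, the permutation constraint attached to each edge $uv$ is $\pi_{uv}(a) = a \cdot \lambda(uv)$, so a satisfying assignment for $C[V \setminus X]$ is precisely a consistent labeling $\varphi : V \setminus X \to D$ of $G - X$. Hence the lemma reduces to the claim that if $G - X$ admits any consistent labeling, then it admits one that sends $s$ to $1_\Gamma$.

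The key ingredient I would invoke is the left-translation symmetry of consistent labelings: for any consistent $\varphi_0 : V \setminus X \to D$ and any $g \in D$, the translated map $\varphi(v) := g \cdot \varphi_0(v)$ is again consistent, because
\[
\varphi(u) \cdot \lambda(uv) = g \cdot \varphi_0(u) \cdot \lambda(uv) = g \cdot \varphi_0(v) = \varphi(v)
\]
holds for every $uv \in E[V \setminus X]$.

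With this observation the proof splits into two easy cases. If $s \in X$, the required condition $\varphi(s) = 1_\Gamma$ is vacuous, since the \INTCOVER constraint ``$\varphi(v) = \varphi_A(v)$ for $v \in A \setminus X$'' excludes $s$; thus any consistent labeling of $G - X$ witnesses $X$ as a deletion set for $I[s \gets 1_\Gamma]$. If $s \notin X$, pick any consistent $\varphi_0$ guaranteed by $X$ being a deletion set for $I$, set $g := \varphi_0(s)^{-1}$, and take $\varphi := g \cdot \varphi_0$; this is still consistent and satisfies $\varphi(s) = 1_\Gamma$.

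I do not expect a real obstacle here: the whole argument is a direct consequence of the group-theoretic homogeneity of the permutation constraints, and it uses no property of $X$ beyond the existence of a consistent labeling on $G - X$.
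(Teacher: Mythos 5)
Your proof is correct and follows essentially the same route as the paper: split on whether $s\in X$, and otherwise translate the consistent labeling by $\varphi(s)^{-1}$ so that $s$ receives $1_\Gamma$. One small point in your favor: with the constraint written as $\varphi(u)\cdot\lambda(uv)=\varphi(v)$, the \emph{left} translation $\varphi'(v):=\varphi(s)^{-1}\cdot\varphi(v)$ that you use is the one that preserves consistency for non-abelian $\Gamma$, whereas the paper's displayed formula multiplies by $\varphi(s)^{-1}$ on the right.
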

\begin{proof}
Let $X$ be a deletion set for $I$, and let $\varphi$ be an assignment for $V \setminus X$ satisfying $C[V\setminus X]$.
If $s\in X$, $X$ is a deletion set for $I[s\gets 1_\Gamma]$;
otherwise, $\varphi'$ such that $\varphi'(v):=\varphi(v)\cdot\varphi(s)^{-1}$ is a satisfying assignment for $C[V\setminus X]$ with $\varphi'(s)=1_\Gamma$.
\end{proof}

\parameterizedproblem{\textsc{Subset Feedback Vertex Set}}
{A graph $G=(V,E)$, a set $S \subseteq E$, and an integer $k$.}
{$k$}
{Is there a set $X \subseteq V$ of at most $k$ vertices such that no cycle passes through an edge of $S$ in $G-X$?}

This problem can be expressed as \textsc{Group Feedback Vertex Set} as follows.
We use a group $\Gamma=(2^S,\oplus)$, where $\oplus$ is the XOR operator ($X\oplus Y=(X\setminus Y)\cup (Y\setminus X)$).
We set $\lambda(e)=\{e\}$ for each edge $e\in S$ and $\lambda(e)=\emptyset$ for each edge $e\in E\setminus S$.
Then, a cycle is non-zero if and only if it contains an edge in $S$.

The group operation takes $O(|S|)=O(m)$ time. Therefore, the naive implementation of the incremental-test oracle takes $O(m)$ time.
We now provide a constant-time incremental-test oracle for $(C[V'],\varphi_{\{s\}})$ with $\varphi_{\{s\}}(s)=\emptyset$.
We use the following implementation.
\[
U:=S\cup\{\epsilon\}.\quad
\mathcal{I}(s)=\epsilon.\quad
\mathcal{A}(a,e)=\begin{cases}
  	e&(e\in S)\\
  	a&(e\not\in S).
  \end{cases}\quad
\mathcal{T}(a,b)=\begin{cases}
  	\mathbf{true}&(a\neq b)\\
  	\mathbf{false}&(a=b).
  \end{cases}
\]

For a walk $W$, the function $\mathcal{A}^*(W)$ returns the last edge of $W$ contained in $S$ or $\epsilon$ if $W$ contains no edges in $S$.
Then, for a single-branching pair $(P,Q)$, we have $P\not\equiv Q\iff$ the simple cycle contained in $P\circ Q^{-1}$ contains an edge in $S$
$\iff\mathcal{A}^*(P)\neq\mathcal{A}^*(Q)$.
Therefore, the abovementioned implementation is correct.
Thus, we can solve \textsc{Subset Feedback Vertex Set} in $O(4^k km)$ time.

\parameterizedproblem{\textsc{Non-monochromatic Cycle Transversal}}
{An edge-colored graph $G=(V,E)$ and an integer $k$.}
{$k$}
{Is there a set $X\subseteq V$ of at most $k$ vertices such that $G-X$ contains no non-monochromatic cycles?}

This problem can be expressed as \textsc{Group Feedback Vertex Set} as follows.
Let $L$ be the set of colors, and let $c(e)\in L$ denote the color of an edge $e$.
We use the group $\Gamma=(2^{V\times L},\oplus)$.
We set $\lambda(e)=\{(u,c(e)),(v,c(e))\}$ for each edge $e=uv\in E$.
Then, a cycle is non-zero if and only if it is non-monochromatic.

The naive implementation of the incremental-test oracle takes $O(n|L|)=O(nm)$ time.
We now provide a constant-time incremental-test oracle for $(C[V'],\varphi_{\{s\}})$ with $\varphi_A(s)=\emptyset$.
We use the following implementation.
\[
U:=(V\times L)\cup\{\epsilon,*\}.\quad
\mathcal{I}(s)=\epsilon.\quad
\mathcal{T}(a,b)=\begin{cases}
\mathbf{true}&(*\neq a\neq b\neq *)\\
\mathbf{false}&\text{otherwise}.
\end{cases}
\]
\[
\mathcal{A}(\epsilon,sv)=(s,c(sv)).\quad
\mathcal{A}((w,c),uv)=\begin{cases}
*&(w=v\wedge c=c(uv))\\
(w,c)&(w\neq v\wedge c=c(uv))\\
(u,c(uv))&(c\neq c(uv)).
\end{cases}
\]

Let $W=(v_0,\ldots,v_\ell)$ be a walk with $\ell>0$, and let $c=c(v_{\ell-1}v_\ell)$.
A suffix $(v_i,\ldots,v_\ell)$ is called the \emph{longest monochromatic suffix} of $W$ if $c(v_jv_{j+1})=c$ for every $j\geq i$ and $c(v_{i-1}v_i)\neq c$ or $i=0$ holds.
We can observe that the longest monochromatic suffix of $W$ starts from $v_i$ and has the color~$c$ if $\mathcal{A}^*(W)=(v_i,c)$,
and the longest monochromatic suffix of $W$ forms a monochromatic cycle if $\mathcal{A}^*(W)=*$.
Then, for a single-branching pair $(P,Q)$, we have $P\not\equiv Q\iff$
the simple cycle contained in $P\circ Q^{-1}$ is non-monochromatic
$\iff$ none of $P$ and $Q$ induces a monochromatic cycle and the longest monochromatic suffixes of $P$ and $Q$ start from different vertices or have different colors
$\iff$ $*\neq \mathcal{A}^*(P)\neq\mathcal{A}^*(Q)\neq *$.
Therefore, the abovementioned implementation is correct.
Thus, we can solve \textsc{Non-monochromatic Cycle Transversal} in $O(4^k km)$ time.

\bibliographystyle{abbrvurl}
\bibliography{tex/paper}

\newpage
\appendix


\section{Proof of Persistency}\label{sec:persistency}
First, we review the results of \cite{Iwata:2016ja}.
Let $D$ be a set containing a special element $\perp$, and let $D_I:=D\setminus\{\perp\}$.
Consider a poset on $D$ such that $\perp\ <a$ for every $a\in D_I$ and all the other pairs $(a,b)$ with $a\neq b$ are not comparable.
Let $a\sqcap b$ be a binary operator that returns the minimum of $a$ and $b$ if they are comparable and returns $\perp$ otherwise.
Similarly, let $a\sqcup b$ be a binary operator that returns the maximum of $a$ and $b$ if they are comparable and returns $\perp$ otherwise.
We define $\bm{a}\cdot\bm{b}:=(a_1\cdot b_1,\ldots,a_n\cdot b_n)$
for a binary operator $(\cdot)$ and tuples of elements $\bm{a}=(a_1,\ldots,a_n)$, $\bm{b}=(b_1,\ldots,b_n)\in D^n$.
Each index $i$ herein might have a distinct domain $D_i$ containing $\perp$, and $D^n$ is an abbreviation for $D_1\times D_2\times\ldots\times D_n$.
A function $f:D^n\to\mathbb{R}$ is called \emph{$k$-submodular} if the following inequality holds for any pair of inputs $\bm{a},\bm{b}\in D^n$:
\[
f(\bm{a})+f(\bm{b})\geq f(\bm{a}\sqcap\bm{b})+f(\bm{a}\sqcup\bm{b}).
\]
We denote the restriction of a function $f:D^n\to\mathbb{R}$ to the domain $D_I^n$ by $f_I$.

\begin{lemma}[\cite{Iwata:2016ja}]\label{lem:review:persistency}
For a $k$-submodular function $f:D^n\to\mathbb{R}$ and a minimizer $\bm{b}\in D^n$ of $f$,
there exists a minimizer $\bm{a}\in D_I^n$ of $f_I$ such that $a_i = b_i$ for every $i$ with $b_i \neq {\perp}$.
\end{lemma}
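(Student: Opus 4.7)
The plan is to start from any minimizer $\bm{a}^{*} \in D_I^n$ of $f_I$ (which exists by finiteness of $D_I^n$) and ``rotate'' it toward $\bm{b}$ via two successive applications of $k$-submodularity, arriving at the natural candidate $\bm{a}$ defined by $a_i := b_i$ when $b_i \neq {\perp}$ and $a_i := a^{*}_i$ otherwise. Note that $\bm{a} \in D_I^n$ since each remaining coordinate is filled with $a^{*}_i \in D_I$. The goal is to establish $f(\bm{a}) \leq f(\bm{a}^{*})$; combined with $f(\bm{a}^{*}) = \min f_I$ this shows $\bm{a}$ is a minimizer of $f_I$ agreeing with $\bm{b}$ on every non-$\perp$ coordinate.

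The first $k$-submodular inequality is applied to $(\bm{a}^{*}, \bm{b})$. Set $\bm{c} := \bm{a}^{*} \sqcup \bm{b}$ and $\bm{d} := \bm{a}^{*} \sqcap \bm{b}$. A coordinate-wise case analysis (on the three cases $b_i = {\perp}$; $b_i \in D_I$ with $a^{*}_i = b_i$; and $b_i \in D_I$ with $a^{*}_i \neq b_i$) shows that $d_i \in \{{\perp}, b_i\}$, so $\bm{d}$ is pointwise below $\bm{b}$ in the partial order. Since $\bm{b}$ minimizes $f$ over all of $D^n$, we obtain $f(\bm{d}) \geq f(\bm{b})$. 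Plugging into $f(\bm{a}^{*}) + f(\bm{b}) \geq f(\bm{d}) + f(\bm{c})$ gives $f(\bm{c}) \leq f(\bm{a}^{*})$. The remaining obstacle is that $\bm{c}$ may still have $\perp$ entries, precisely at the coordinates where $b_i \neq {\perp}$ and $a^{*}_i \neq b_i$.

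The second $k$-submodular inequality, applied to $(\bm{c}, \bm{b})$, repairs exactly those entries. Another coordinate check gives the key identity $\bm{c} \sqcup \bm{b} = \bm{a}$: at a conflict coordinate $c_i = {\perp}$ while $b_i \in D_I$, so the join is $b_i$; at coordinates with $b_i = {\perp}$ we have $c_i = a^{*}_i$ and the join is $a^{*}_i$; and agreeing coordinates already carry $b_i$. Meanwhile $\bm{c} \sqcap \bm{b}$ is again pointwise in $\{{\perp}, b_i\}$, hence below $\bm{b}$, so $f(\bm{c} \sqcap \bm{b}) \geq f(\bm{b})$. The inequality $f(\bm{c}) + f(\bm{b}) \geq f(\bm{c}\sqcap\bm{b}) + f(\bm{a})$ then yields $f(\bm{a}) \leq f(\bm{c}) \leq f(\bm{a}^{*})$, completing the argument. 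The only non-routine step is the bookkeeping of the three coordinate cases, needed to verify simultaneously that $\bm{d}$ and $\bm{c}\sqcap\bm{b}$ are both dominated by $\bm{b}$ and that $\bm{c}\sqcup\bm{b}$ coincides with the target vector $\bm{a}$; once these identifications are in hand, the lemma follows from a clean double invocation of $k$-submodularity together with the global minimality of $\bm{b}$.
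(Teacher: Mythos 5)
Your proof is correct. The paper does not prove this lemma itself---it is imported from \cite{Iwata:2016ja}---but your argument (take any minimizer $\bm{a}^*$ of $f_I$, observe that the target vector is $\bm{a}=(\bm{a}^*\sqcup\bm{b})\sqcup\bm{b}$, and apply $k$-submodularity twice, each time discarding the meet term via the global minimality of $\bm{b}$) is exactly the standard persistency argument from that reference; the only implicit point worth flagging is that the existence of a minimizer of $f_I$ relies on the finiteness of the domains, which holds in the paper's setting.
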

\begin{lemma}[\cite{Iwata:2016ja}]\label{lem:review:ksub}
The following three functions are $k$-submodular.
\begin{itemize}
  \item For a permutation $\pi$ on $D_I$, a function $p_\pi:D\times D\to\{0,\infty\}$ defined as
	\[
		p_\pi(x,y)=\begin{cases}
		0&\text{if }(x=y=\,\perp)\vee(x,y\in D_I\wedge \pi(x)=y),\\
		\infty&\text{otherwise}.
		\end{cases}
	\]
  \item For elements $a,b\in D_I$, a function $t_{a,b}:D\times D\to\{0,\infty\}$ defined as
  	\[
		t_{a,b}(x,y)=\begin{cases}
		0&\text{if }(x=y=\,\perp)\vee (x=a)\vee (y=b),\\
		\infty&\text{otherwise}.
		\end{cases}
  	\]
  \item A function $e:D^r\to\{0,\frac{1}{2},1\}$ defined as
  	\[
  		e(\bm{x})=\begin{cases}
  		0&\text{if }x_1=x_2=\cdots=x_r,\\
  		1&\text{if } \exists i,j \text{ such that } \perp\,\neq x_i\neq x_j\neq\, \perp,\\
  		\frac{1}{2}&\text{otherwise}.\\
  		\end{cases}
  	\]
\end{itemize}
\end{lemma}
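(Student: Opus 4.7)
My plan is to verify each of the three functions is $k$-submodular by direct case analysis from the coordinatewise definitions of $\sqcap$ and $\sqcup$. The first two functions take values in $\{0, \infty\}$, so the inequality $f(\bm{a}) + f(\bm{b}) \geq f(\bm{a} \sqcap \bm{b}) + f(\bm{a} \sqcup \bm{b})$ is automatic unless $f(\bm{a}) = f(\bm{b}) = 0$, and in this case it suffices to show that their zero sets are closed under componentwise $\sqcap$ and $\sqcup$.

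For $p_\pi$: the zero set is $\{(\perp, \perp)\} \cup \{(p, \pi(p)) : p \in D_I\}$. If one of $\bm{a}, \bm{b}$ equals $(\perp, \perp)$, then $\bm{a} \sqcap \bm{b} = (\perp, \perp)$ and $\bm{a} \sqcup \bm{b}$ equals the other argument, both in the zero set. If $\bm{a} = (p, \pi(p))$ and $\bm{b} = (q, \pi(q))$ with $p \neq q$, bijectivity of $\pi$ forces $\pi(p) \neq \pi(q)$, so coordinatewise neither pair is comparable and both meet and join equal $(\perp, \perp)$. For $t_{a,b}$: the zero set consists of $(\perp, \perp)$ together with pairs whose first coordinate is $a$ or whose second coordinate is $b$. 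The $(\perp, \perp)$ case is handled as before. If both $\bm{a}, \bm{b}$ share the same witness type, that witness is preserved by both $\sqcap$ and $\sqcup$ (since $a \sqcap a = a \sqcup a = a$). For mixed witnesses, say $\bm{a} = (a, x_2)$ and $\bm{b} = (y_1, b)$, a brief enumeration of the subcases for $(y_1, x_2)$ according to whether each lies in $\{a, \perp\}$ or $\{b, \perp\}$ or outside shows that the meet has first coordinate $\perp$ and second coordinate in $\{b, \perp\}$, while the join has first coordinate in $\{a, \perp\}$ and, whenever that first coordinate is $\perp$, second coordinate in $\{b, \perp\}$; in every subcase both meet and join remain in the zero set.

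The main obstacle is the third function $e$, which takes three values. Let $S(\bm{x}) := \{x_i : x_i \neq \perp\}$ denote the support, and note that $S(\bm{x} \sqcap \bm{y}) \subseteq S(\bm{x}) \cap S(\bm{y})$, while each non-$\perp$ coordinate of $\bm{x} \sqcup \bm{y}$ equals either $x_i$ (when $y_i = \perp$), or $y_i$ (when $x_i = \perp$), or their common value. When $e(\bm{x}) = e(\bm{y}) = 1$ the inequality is trivial since RHS $\leq 2$. The remaining cases I would handle (up to symmetry) are: (i) $e(\bm{x}) = 0$, i.e., $\bm{x}$ is constant, and (ii) $e(\bm{x}) = e(\bm{y}) = 1/2$. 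The cross-case $e(\bm{x}) = 1/2, e(\bm{y}) = 1$ follows directly from $S(\bm{x} \sqcap \bm{y}) \subseteq S(\bm{x}) = \{c\}$, giving $e(\bm{x} \sqcap \bm{y}) \leq 1/2$ and $e(\bm{x} \sqcup \bm{y}) \leq 1$. In case (i), if $\bm{x}$ is all-$\perp$ both sides equal $e(\bm{y})$; if $\bm{x}$ is all-$c$, then $S(\bm{x} \sqcap \bm{y}), S(\bm{x} \sqcup \bm{y}) \subseteq \{c\}$, so both $e$-values lie in $\{0, 1/2\}$, and a short subcase check on $e(\bm{y}) \in \{0, 1/2, 1\}$ finishes the bound. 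In case (ii) with $S(\bm{x}) = \{c\}$ and $S(\bm{y}) = \{d\}$: when $c = d$ every coordinate of the meet and join lies in $\{c, \perp\}$ so both $e$-values stay at most $1/2$; when $c \neq d$, the meet is the all-$\perp$ vector ($e = 0$) while the join has support contained in $\{c, d\}$ ($e \leq 1$), giving RHS $\leq 1 = $ LHS.

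I expect the hardest part to be organizing the $e$ analysis cleanly, but no single subcase is deep; the support inclusions above keep the case tree small and make each subcase a short computation.
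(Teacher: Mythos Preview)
Your proposal is correct. The paper does not give its own proof of this lemma; it is quoted verbatim from~\cite{Iwata:2016ja} as background for the persistency argument in Appendix~\ref{sec:persistency}, so there is nothing to compare against. Your direct verification via the zero-set closure for $p_\pi$ and $t_{a,b}$ and the support-based case analysis for $e$ is exactly the kind of elementary check one would expect, and each case goes through as you describe.

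One small wording issue in the $t_{a,b}$ paragraph: when you write ``the meet has first coordinate~$\perp$'', this is only literally true once you have already excluded $y_1=a$ (and symmetrically $x_2=b$) by absorbing those subcases into the ``same witness type'' case. Under that reading your claim is correct; you may want to make that exclusion explicit. Alternatively, the cleaner statement that covers all subcases at once is that in the mixed case both $\bm{a}\sqcap\bm{b}$ and $\bm{a}\sqcup\bm{b}$ lie in $\{a,\perp\}\times\{b,\perp\}$, every element of which is in the zero set of $t_{a,b}$.
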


\begin{proof}[Proof of Theorem~\ref{thm:persistency}]
Let $f:D^{\hat{E}}\to\mathbb{R}$ be a function obtained by taking the sum of the following functions.
Each variable $uv\in\hat{E}$ here has a domain $D(u)\cup\{\perp\}$.
We add the function $p_\pi$ on $(uv,vu)$ for each permutation constraint $\pi(\varphi(u))=\varphi(v)$ of $C$.
We add the function $t_{a,b}$ on $(uv,vu)$ for each two-fan constraint $(\varphi(u)=a)\vee(\varphi(v)=b)$ of $C$.
Meanwhile, we add the $d$-ary function $e$ on $\delta(v)$ for each variable $v\in V \setminus A$, where $d=|\delta(v)|$ is the degree of $v$.
Similarly for each variable $v \in A$, we add the $(d+1)$-ary function $e$ on $\delta(v)$ with one argument fixed to $\varphi(v)$ (hence it acts as a $d$-ary function), where $d = |\delta(v)|$.
Then, from Lemma~\ref{lem:review:ksub}, $f$ is $k$-submodular.

We first observe that we can convert an input $\bm{b} \in D^{\hat{E}}$ of $f$ with $f(\bm{b}) < \infty$ to a half-integral $\F_{C,\varphi_A}$-cover $x$ with $x(V)=f(\bm{b})$ and
convert a half-integral $\F_{C,\varphi_A}$-cover $x$ to an input $\bm{b}\in D^{\hat{E}}$ with $f(\bm{b})\leq x(V)$.
For each vertex $v$, we denote by $f_v(\bm{b})$ the value of the function $e$ for $v$, that is,
for $v\in V\setminus A$, $f_v(\bm{b}):=e(\bm{b}|_{\delta(v)})$, where $\bm{b}|_{\delta(v)}$ is the restriction of $\bm{b}$ to $\delta(v)$,
and for $v\in A$, $f_v(\bm{b}):=e(\bm{b}|_{\delta(v)},\varphi_A(v))$.
We have $f(\bm{b})=\sum_{v\in V}f_v(\bm{b})$ when $f(\bm{b})<\infty$.

Let $\bm{b} \in D^{\hat{E}}$ be an input of $f$ with $f(\bm{b}) < \infty$.
We construct $x$ by setting $x(v)=f_v(\bm{b})$ for each $v\in V$.
To see that $x$ is a (half-integral) $\F_{C,\varphi_A}$-cover, take a $\varphi_A$-conflicting walk $W = (v_0,\ldots,v_\ell)$ with $v_0,v_\ell \in A$.
Now, consider a sequence of variables for $f$ along the walk $W$:
\[
  v_0v_1, v_1v_0, v_1v_2, v_2v_1,\ldots,v_{\ell-2}v_{\ell-1},v_{\ell-1}v_{\ell-2}, v_{\ell-1}v_\ell,v_\ell v_{\ell-1}.
\]
Let $b_{v_0v_{-1}}=\varphi_A(v_0)$ and $b_{v_\ell v_{\ell+1}}=\varphi_A(v_\ell)$.
As the walk $P$ is $\varphi_A$-conflicting, we must have an index $0 \leq i \leq \ell$ such that $\perp\, \neq b_{v_iv_{i-1}} \neq b_{v_iv_{i+1}} \neq\, \perp$
or two indices $0\leq i<j\leq \ell$ such that $b_{v_iv_{i-1}}\neq\,\perp$, $b_{v_iv_{i+1}}=\,\perp$, $b_{v_{j-1}v_j}=\,\perp$, and $b_{v_jv_{j+1}}\neq\,\perp$.
In both cases, $x(V(P))\geq 1$.

Next, let $x$ be a half-integral $\F_{C,\varphi_A}$-cover.
Let $\varphi_{R(x)}$ be the satisfying assignment for $C[R(x)]$.
We then define an input $\bm{b}$ for $f$ as follows:
\[
  b_{uv} = \begin{cases}
  \varphi_{R(x)}(u) & \text{if }u \in R(x),\\
  C_{vu}(\varphi_{R(x)}(v)) & \text{if }u \not \in R(x), v \in R(x), \text{ and } C_{vu}(\varphi_{R(x)}(v))\neq\all, \\
  \perp & \text{otherwise.}
  \end{cases}
\]
We have $f(\bm{b})<\infty$ from the construction.
For every vertex $u\in V$ with $x(u)=0$, we have $b_{uv}=\varphi_{R(x)}(u)$ for every $uv\in \delta(u)$ if $u\in R(x)$.
Otherwise, we have $b_{uv}=\,\perp$ for every $uv\in\delta(u)$.
We have $f_v(\bm{b})=0$ in both cases.
For any vertex $u\in V$ with $x(u)=\frac{1}{2}$, there exist no edges $uv_1,uv_2\in\delta(u)$ such that $\perp\,\neq b_{uv_1}\neq b_{uv_2}\neq\,\perp$ because, otherwise,
there exists a $\varphi_A$-conflicting walk $W$ with $x(V(W))=x(u)=\frac{1}{2}<1$, which is a contradiction.
Thus, we have $f(\bm{b})\leq x(V)$.

Now, we prove the claim.
Let $x$ be a minimum half-integral $\F_{C,\varphi_A}$-cover.
We then construct a minimizer $\bm{b} \in D^{\hat{E}}$ of $f$ with $f(\bm{b})=x(V)$.
From Lemma~\ref{lem:review:persistency}, there is a minimizer $\bm{a} \in D_I^{\hat{E}}$ of $f_I$ such that $a_{uv} = b_{uv}$ for every $uv \in \hat{E}$ with $b_{uv} \neq \bot$.
From the construction above, this means the existence of the desired deletion set $X:=\{v\in V\mid f_v(\bm{a})=1\}$.
\end{proof}

\section{Axiomatic Model}\label{sec:axiomatic}
We introduce an equivalent formulation of conflicting/implicational walks.
The merit of this formulation is that checking whether a set of walks satisfies the conditions below is often easier than finding an explicit expression as 0/1/all constraints.

\begin{definition}\label{def:nice}
A pair $(\F, \F^*)$ of (possibly infinite) sets of walks in the same graph is called \emph{nice} if it satisfies the following conditions.
\begin{enumerate}
  \item $\F\subseteq \F^*$.
  \item $\F^*$ is closed under taking a prefix (i.e., for any walk $W\in\F^*$ and any prefix-subwalk $P$ of $W$, $P\in\F^*$ holds).
  \item For two walks $P,Q\in\F^*$ ending at the same vertex, we write $P\equiv Q$ if and only if $P\circ Q^{-1}\not\in \F$.
  		The relation $(\equiv)$ then becomes an equivalence relation, that is,
  		(1) $P\equiv P$,
  		(2) $P\equiv Q\iff Q\equiv P$, and
  		(3) $P\equiv Q\wedge Q\equiv R\implies P\equiv R$ hold for every walks $P,Q,R\in\F^*$ ending at the same vertex.
  \item For any equivalent walks $P,Q\in\F^*$ ending at $u$ and any edge $uv\in E$, $P\circ uv\in\F^*$ if and only if $Q\circ uv\in\F^*$.
\end{enumerate}
\end{definition}

For two walks $P,Q\in\F^*$ ending at the same vertex, we write $P\not\equiv Q$ if and only if $P\circ Q^{-1}\in \F$.
The notations $P\equiv Q$ or $P\not\equiv Q$ implicitly imply that $P,Q\in\F^*$ and $t(P)=t(Q)$.
Note that, from the definition, $P\in \F \iff P^{-1}\in \F$ always holds. However, $P\in\F^*\iff P^{-1}\in\F^*$ may not hold.
When $\F^*$ is the set of all walks starting from a set of vertices $A$,
$(\F,\F^*)$ is nice if and only if $\F$ is the set of non-returning $A$-walks (cf. \cite[pp.~109--111]{pap2006constructive}).

\begin{lemma}
Let $\F^*$ be the set of all implicational walks and let $\F$ be the set of all conflicting walks.
Then $(\F,\F^*)$ is nice.
\end{lemma}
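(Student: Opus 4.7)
The plan is to verify the four conditions of Definition~\ref{def:nice} one by one; each one reduces to a short observation about the operator $C_W$ and the $\imp_{\varphi_A}$ values.

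For condition 1 ($\F \subseteq \F^*$), this is immediate from the definitions given in Section~\ref{sec:intcover}: a $\varphi_A$-conflicting walk is by fiat a $\varphi_A$-implicational walk satisfying the extra endpoint condition.

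For condition 2 (closure of $\F^*$ under taking prefixes), let $W \in \F^*$ and let $P$ be a prefix of $W$, so $s(P)=s(W) \in A$. The key point is that the inductive clause $C_{W' \circ e}(p) := \all$ whenever $C_{W'}(p) = \all$ makes the status ``$=\all$'' monotone under extension. Hence if $P$ were not implicational (so $\imp_{\varphi_A}(P) = \all$), then $W$ would not be implicational either, contradicting $W \in \F^*$. Thus $P \in \F^*$.

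For condition 3 (that $(\equiv)$ is an equivalence relation on $\F^*$-walks ending at a common vertex), I would appeal directly to Lemma~\ref{lem:def:imp}, which states that for implicational $P, Q$ ending at the same vertex, $P \circ Q^{-1}$ is conflicting if and only if $\imp_{\varphi_A}(P) \neq \imp_{\varphi_A}(Q)$. Consequently, the relation $P \equiv Q$ on $\F^*$ coincides with the equality $\imp_{\varphi_A}(P)=\imp_{\varphi_A}(Q)$ of elements in the ambient domain, and so reflexivity, symmetry, and transitivity of $(\equiv)$ follow from the corresponding properties of ordinary equality. (Symmetry could alternatively be derived from Corollary~\ref{cor:def:rev}.)

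For condition 4 (preservation of equivalence under edge extension), suppose $P \equiv Q$ both end at $u$, so by the previous paragraph $\imp_{\varphi_A}(P) = \imp_{\varphi_A}(Q) =: p \neq \all$. By the definition of $C_{W \circ e}$, we then have $\imp_{\varphi_A}(P \circ uv) = C_{uv}(p) = \imp_{\varphi_A}(Q \circ uv)$; in particular one of these equals $\all$ iff the other does, so $P \circ uv \in \F^*$ iff $Q \circ uv \in \F^*$. The only step that requires any actual work is condition 3, and that work has already been done inside Lemma~\ref{lem:def:imp}, so the whole proof is essentially a two-line appeal to that lemma together with the monotone-$\all$ observation.
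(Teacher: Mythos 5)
Your proof is correct and follows essentially the same route as the paper: the paper dismisses conditions 1, 2, and 4 as immediate from the definitions (which you simply spell out via the monotone-$\all$ observation and the fact that $\imp$ of an extension depends only on $\imp$ of the prefix) and, exactly as you do, derives condition 3 from Lemma~\ref{lem:def:imp} by identifying $P\equiv Q$ with $\imp_{\varphi_A}(P)=\imp_{\varphi_A}(Q)$ so that the equivalence-relation axioms reduce to those of equality.
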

\begin{proof}
Conditions 1, 2, and 4 are trivial from the definition, and we only prove Condition 3.
From Lemma~\ref{lem:def:imp}, for any two walks $P,Q\in\F^*$ ending at the same vertex, $P\equiv Q$ if and only if $\imp(P)=\imp(Q)$ holds.
Therefore, $(\equiv)$ is an equivalence relation.
\end{proof}

\begin{lemma}
For any nice pair $(\F,\F^*)$ for a graph $G=(V,E)$, there exist domains for $V$, 0/1/all constraints for $E$, and a partial assignment $\varphi_A$ for some $A \subseteq V$ such that
$\F$ is exactly the set of all $\varphi_A$-conflicting walks and $\F^*$ is exactly the set of all $\varphi_A$-implicational walks.
\end{lemma}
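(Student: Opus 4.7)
The plan is to construct the domains, 0/1/all constraints, and partial assignment directly from the combinatorial data of $(\F,\F^*)$, using equivalence classes of $\equiv$ as the fundamental building blocks. For each vertex $v \in V$, let $\mathcal{E}_v$ be the set of $\equiv$-equivalence classes of walks in $\F^*$ ending at $v$, and take $D(v) := \mathcal{E}_v \cup \{d_v^{(1)},d_v^{(2)},\ldots\}$ padded with enough fresh dummy elements. Define $A := \{v \in V \mid (v) \in \F^*\}$ and, for each $v \in A$, set $\varphi_A(v) := [(v)]$. Condition~4 immediately tells us that ``$P \circ uv \in \F^*$'' depends only on $[P]$; a short chase using condition~3 (transitivity of $\equiv$) together with condition~4 applied in both directions along $uv$ also shows that the class $[P \circ uv] \in \mathcal{E}_v$ depends only on $[P]$ when defined, yielding well-defined propagation functions $\pi_{uv} : \mathcal{E}_u \to \mathcal{E}_v \cup \{\all\}$ for every oriented edge.

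For each edge $uv \in E$, I then choose the constraint $C_{uv}$ by dichotomy. In the \emph{permutation regime}, both $\pi_{uv}$ and $\pi_{vu}$ avoid $\all$ and are mutually inverse injections; I extend the partial injection $\pi_{uv}$ to a bijection $\sigma : D(u) \to D(v)$ using the dummy elements and set $C_{uv}$ to be the permutation constraint $\sigma(\varphi(u)) = \varphi(v)$. In the \emph{two-fan regime}, $\pi_{uv}^{-1}(\all)$ is a singleton $\{a\}$, $\pi_{uv}$ is constant with some value $b \in \mathcal{E}_v$ on $\mathcal{E}_u \setminus \{a\}$, and symmetrically $\pi_{vu}^{-1}(\all) = \{b\}$ with $\pi_{vu}$ constant $a$ on $\mathcal{E}_v \setminus \{b\}$; I set $C_{uv}$ to be the two-fan $(\varphi(u) = a) \vee (\varphi(v) = b)$. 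The crux of the argument---and the step I expect to be the main obstacle---is a structural lemma showing that, in any nice pair, every edge falls into exactly one of these two regimes. The argument analyses the $\all$-preimages of $\pi_{uv}$ and $\pi_{vu}$ together with the round-trip walks $P \circ uv \circ vu$, invoking condition~4 along $uv$ and $vu$ alternately and the transitivity of $\equiv$ to rule out ``hybrid'' configurations in which two distinct non-$\all$ images would coexist with an $\all$-preimage.

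Once all the pieces are assembled, the conclusion is verified by induction on the length of a walk $W$ starting at a vertex in $A$. The base case $W = (s(W))$ is immediate from $\varphi_A(s(W)) = [(s(W))]$. For the inductive step on $W \circ uv$, the behavior of the chosen constraint $C_{uv}$ exactly matches $\pi_{uv}$ by construction, so that $W \circ uv$ is $\varphi_A$-implicational in the constructed instance if and only if $W \circ uv \in \F^*$, and in that case $\mathcal{A}^*(W \circ uv)$ identifies with $[W \circ uv]$ in $\mathcal{E}_{t(W \circ uv)}$. The matching of $\F$ with the set of $\varphi_A$-conflicting walks then follows directly from Lemma~\ref{lem:def:imp} applied to the constructed instance combined with the defining relation $P \circ Q^{-1} \in \F \iff P \not\equiv Q$ supplied by condition~3.
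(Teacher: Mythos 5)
Your proposal is correct and follows essentially the same route as the paper's proof: domains given by the $\equiv$-classes, $A$ and $\varphi_A$ read off from the zero-length walks in $\F^*$, propagation maps shown well-defined from conditions~3 and~4, and the permutation/two-fan dichotomy established by analyzing the round-trip walks $P\circ uv\circ vu$ (together with prefix-closure of $\F^*$ and reflexivity of $\equiv$), which is precisely the content of the two claims in the paper's proof. The only deviations are cosmetic: the dummy padding is unnecessary, since in the permutation regime the class sets at the two endpoints are already in bijection (and it remains harmless only because propagation from the real initial values never produces a dummy value), and your concluding induction combined with Lemma~\ref{lem:def:imp} merely makes explicit a verification that the paper leaves implicit.
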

\begin{proof}
For each vertex $v\in V$, let $D(v)$ be the classes of equivalent walks in $\F^*$ ending at $v$.
We denote the class containing $W$ by $[W]\in D(v)$.
Let $A:=\{v\mid (v)\in\F^*\}$ and $\varphi_A(v):=[(v)]$.
For each edge $uv\in \hat{E}$, we define a subdomain $D(uv)\subseteq D(u)$ and a function $f_{uv}:D(uv)\to D(v)$ as follows.
\begin{enumerate}
  \item For any walk $W\in\F^*$ ending at $u$, $[W]\in D(uv)$ if and only if $W\circ uv\in\F^*$.
  \item For any walk $W\circ uv\in\F^*$, $f_{uv}([W])=[W\circ uv]$ holds.
\end{enumerate}

\begin{claim}
$D(uv)$ and $f_{uv}$ are well-defined.
\end{claim}
\begin{proof}
Let $W'\in [W]$. We have $W'\circ uv\in\F^*$ if $W\circ uv\in\F^*$ (by Condition 4).

Suppose that $W'\circ uv\not\equiv W\circ uv$, which implies $W'\not\equiv W\circ uv\circ vu$.
Meanwhile, by Condition 3-(1),
$W \circ uv \equiv W \circ uv$ holds and, hence, $W \circ uv \circ vu \equiv W$.
Thus, we have $W'\not\equiv W\circ uv\circ vu\equiv W$, which is a contradiction.
\end{proof}

\begin{claim}\label{claim:def:permutation}
If $f_{uv}([W])\in D(vu)$ holds for some class $[W]\in D(uv)$, we have $D(u)=D(uv)$, $D(v)=D(vu)$, and $f_{uv}=f_{vu}^{-1}$.
\end{claim}
\begin{proof}
We have $f_{vu}(f_{uv}([W]))=[W\circ uv\circ vu]=[W]$ because $W\circ uv\circ vu\equiv W$ holds.
Suppose that there exists a class $[W']\in D(u)\setminus D(uv)$ (i.e., $W' \circ uv \not\in \F^*$).

Since $[W'] \neq [W] \in D(uv)$,
we have $W'\not\equiv W \equiv W\circ uv\circ vu$,
implying $W'\circ uv\in\F^*$, which is a contradiction.
Therefore, we have $D(u)=D(uv)$.
By the same argument, we have $D(v)=D(vu)$.
Thus, the claim holds.
\end{proof}

\begin{claim}\label{claim:def:two-fan}
If $f_{uv}([W])\not\in D(vu)$ holds for some class $[W]\in D(uv)$, there exist classes $a\in D(u)$ and $b\in D(v)$ such that
$D(uv)=D(u)\setminus\{a\}$, $D(vu)=D(v)\setminus\{b\}$, $f_{uv}(a')=b$ for every class $a'\in D(uv)$, and $f_{vu}(b')=a$ for every class $b'\in D(vu)$.
\end{claim}
\begin{proof}
Let $b:=f_{uv}([W])\not\in D(vu)$.
Suppose that there exists a class $[W']\in D(uv)$ such that $f_{uv}([W'])\neq f_{uv}([W])$.
Because $W\circ uv\not\equiv W'\circ uv$ holds, we have $W\circ uv\circ vu\in \F^*$, which is a contradiction.
Therefore, we have $f_{uv}(a')=b$ for every class $a'\in D(uv)$.
Let $[Q]\in D(v)\setminus\{b\}$.
Because $Q\not\equiv W\circ uv$ holds, we have $Q\circ vu\in \F^*$.
Therefore, we have $D(vu)=D(v)\setminus\{b\}$.

Pick an arbitrary class $b'\in D(vu)$.
If $a:=f_{vu}(b')\in D(uv)$, from Claim~\ref{claim:def:permutation}, we have $D(vu)=D(v)$, which is a contradiction.
Therefore, by the same argument, we have $D(uv)=D(u)\setminus\{a\}$ and $f_{vu}(b')=a$ for every class $b'\in D(vu)$.
\end{proof}

Now, we prove the lemma.
We introduce a constraint for each edge $uv\in E$ as follows:
Pick an arbitrary class $c\in D(uv)$.
If $f_{uv}(c)\in D(vu)$ holds, from Claim~\ref{claim:def:permutation}, the edge $uv$ can be expressed as $f_{uv}(\varphi(u))=\varphi(v)$ for the permutation $f_{uv}$.
If $f_{uv}(c)\not\in D(vu)$ holds, from Claim~\ref{claim:def:two-fan}, the edge $uv$ can be expressed as $(\varphi(u)=a)\vee (\varphi(v)=b)$ for some classes $a\in D(u)$ and $b\in D(v)$.
\end{proof}

\end{document}